\documentclass[12pt]{article}
\usepackage[a4paper, margin=0.75in]{geometry}
\usepackage[utf8]{inputenc}         
\usepackage[english]{babel}
\usepackage{lmodern}
\usepackage{graphicx}

\usepackage[colorinlistoftodos]{todonotes}

\usepackage[colorlinks=true]{hyperref}

\usepackage[dvipsnames]{xcolor}

\usepackage[]{color-edits}
\addauthor{UF}{blue}

\addauthor{VG}{red}
\newcommand{\vgc}[1]{{\VGcomment{#1}}}
\newcommand{\vge}[1]{{\VGedit{#1}}}

\usepackage{amsthm}
\usepackage{amsmath}
\usepackage{amssymb}
\usepackage{amsfonts}
\usepackage{mathrsfs}
\usepackage{mathtools}
\usepackage{verbatim}
\usepackage{footnote}
\usepackage{lineno}
\usepackage{caption}
\usepackage{tikzsymbols}
\usepackage[capitalize, nameinlink]{cleveref}
\usepackage{tikz}
\usetikzlibrary{quantikz2}

\usepackage{icomma}
\usepackage{enumitem}
\usepackage{array}
\usepackage{multirow}
\usepackage{setspace}
\usepackage{euscript}
\usepackage{indentfirst}
\usepackage{epigraph}
\usepackage{fancybox, fancyhdr}
\usepackage{titlesec}
\usepackage{dsfont}
\usepackage{csquotes}

\usepackage{algorithm}              
\usepackage{algorithmicx}           
\usepackage[noend]{algpseudocode}   
\usepackage{listings}

\floatname{algorithm}{Algorithm}

\newtheorem{definition}{Definition}[section]

\newtheorem{theorem}{Theorem}[section]

\newtheorem{proposition}{Proposition}[section]

\newtheorem{corollary}{Corollary}[section]
\newtheorem{lemma}{Lemma}[section]
\newtheorem{observation}{Observation}[section]

\newtheorem{claim}{Claim}[section]

\newtheorem{remark}{Remark}[section]

\crefname{conjecture}{\textbf{Conjecture}}{}
\crefname{lemma}{\textbf{Lemma}}{}
\crefname{claim}{\textbf{Claim}}{}
\crefname{theorem}{\textbf{Theorem}}{}
\crefname{corollary}{\textbf{Corollary}}{}
\crefname{observation}{\textbf{Observation}}{}
\crefname{proposition}{\textbf{Proposition}}{}
\crefname{assumption}{\textbf{A.}}{}

\renewcommand{\S}{\mathbb{S}}

\usepackage{xparse}

\DeclarePairedDelimiterX\brackets[1]{(}{)}{
	
	#1
}
\DeclarePairedDelimiterX\squarebrackets[1]{[}{]}{
	
	#1
}
\DeclarePairedDelimiterX\figbrackets[1]{\{}{\}}{
	
	#1
}

\DeclarePairedDelimiterX\ntrs[1]{\langle}{\rangle}{
	
	#1
}
\DeclarePairedDelimiterX\kets[1]{\lvert}{\rangle}{
	
	#1
}
\DeclarePairedDelimiterX\bras[1]{\langle}{\rvert}{
	
	#1
}
\DeclarePairedDelimiterX\brakets[2]{\langle}{\rangle}{
	
	#1
    \delimsize\vert
    
    #2
}

\DeclarePairedDelimiterX\dprs[2]{\langle}{\rangle}{
	
	#1
    ,
    
    #2
}

\DeclarePairedDelimiterX\ketbras[2]{\lvert}{\rvert}{
	
	#1
    \delimsize\rangle\delimsize\langle
    
    #2
}
\DeclarePairedDelimiterX\qvals[3]{\langle}{\rangle}{
	
	#1
    \delimsize\vert
    #2
    \delimsize\vert
    
    #3
}
\DeclarePairedDelimiterX\qexps[2]{\langle}{\rangle}{
	
	#1
    \delimsize\vert
    #2
    \delimsize\vert
    
    #1
}

\let\P\undefined
\let\o\undefined
\let\O\undefined

\NewDocumentCommand{\P}{o g}{
	\IfNoValueTF{#1}{
		\IfNoValueTF{#2}{
			\mathop{\mathds{P}}
		}{
			\mathop{\mathds{P}}\squarebrackets*{#2}
		}
	}{
		\IfNoValueTF{#2}{
			\mathop{\mathds{P}}_{#1}
		}{
			\mathop{\mathds{P}}_{#1}\squarebrackets*{#2}
		}
	}
}

\NewDocumentCommand{\E}{o g}{
	\IfNoValueTF{#1}{
		\IfNoValueTF{#2}{
			\mathop{\mathds{E}}
		}{
			\mathop{\mathds{E}}\squarebrackets*{#2}
		}
	}{
		\IfNoValueTF{#2}{
			\mathop{\mathds{E}}_{#1}
		}{
			\mathop{\mathds{E}}_{#1}\squarebrackets*{#2}
		}
	}
}

\NewDocumentCommand{\D}{o g}{
	\IfNoValueTF{#1}{
		\IfNoValueTF{#2}{
			\mathop{\mathds{D}}
		}{
			\mathop{\mathds{D}}\squarebrackets*{#2}
		}
	}{
		\IfNoValueTF{#2}{
			\mathop{\mathds{D}}_{#1}
		}{
			\mathop{\mathds{D}}_{#1}\squarebrackets*{#2}
		}
	}
}

\NewDocumentCommand{\cov}{o g g}{
	\IfNoValueTF{#1}{
		\IfNoValueTF{#2}{
			\mathop{\mathrm{cov}}
		}{
			\IfNoValueTF{#3}{
				\mathop{\mathrm{cov}}
			}{
				\mathop{\mathrm{cov}}\brackets*{#2, #3}
			}
		}
	}{
		\IfNoValueTF{#2}{
			\mathop{\mathrm{cov}}_{#1}
		}{
			\IfNoValueTF{#3}{
				\mathop{\mathrm{cov}}_{#1}
			}{
				\mathop{\mathrm{cov}}_{#1}\brackets*{#2, #3}
			}
		}
	}
}

\NewDocumentCommand{\I}{g}{
	\IfNoValueTF{#1}{
		\mathop{\mathds{I}}
	}{
		\mathop{\mathds{I}}\figbrackets*{#1}
	}
}

\NewDocumentCommand{\V}{o g}{
	\IfNoValueTF{#1}{
		\IfNoValueTF{#2}{
			\mathop{\mathds{V}}
		}{
			\mathop{\mathds{V}}\squarebrackets*{#2}
		}
	}{
		\IfNoValueTF{#2}{
			\mathop{\mathds{V}}_{#1}
		}{
			\mathop{\mathds{V}}_{#1}\squarebrackets*{#2}
		}
	}
}

\NewDocumentCommand{\Tr}{o g}{
	\IfNoValueTF{#1}{
		\IfNoValueTF{#2}{
			\mathop{\mathrm{Tr}}
		}{
			\mathop{\mathrm{Tr}}\squarebrackets*{#2}
		}
	}{
		\IfNoValueTF{#2}{
			\mathop{\mathrm{Tr}}_{#1}
		}{
			\mathop{\mathrm{Tr}}_{#1}\squarebrackets*{#2}
		}
	}
}

\NewDocumentCommand{\o}{g}{\operatorname{\mathrm{o}}\IfNoValueTF{#1}{}{\brackets*{#1}}}
\NewDocumentCommand{\O}{g}{\operatorname{\mathrm{O}}\IfNoValueTF{#1}{}{\brackets*{#1}}}
\NewDocumentCommand{\Th}{g}{\operatorname{\Theta}\IfNoValueTF{#1}{}{\brackets*{#1}}}
\NewDocumentCommand{\om}{g}{\operatorname{\omega}\IfNoValueTF{#1}{}{\brackets*{#1}}}
\NewDocumentCommand{\Om}{g}{\operatorname{\Omega}\IfNoValueTF{#1}{}{\brackets*{#1}}}

\NewDocumentCommand{\ntr}{g}{\IfNoValueTF{#1}{}{\ntrs*{#1}}}
\let\ket\undefined
\NewDocumentCommand{\ket}{g}{\IfNoValueTF{#1}{}{\kets*{#1}}}
\let\bra\undefined
\NewDocumentCommand{\bra}{g}{\IfNoValueTF{#1}{}{\bras*{#1}}}
\NewDocumentCommand{\dpr}{g g}{\IfNoValueTF{#1}{}{\IfNoValueTF{#2}{}{\dprs*{#1}{#2}}}}
\let\braket\undefined
\NewDocumentCommand{\braket}{g g}{\IfNoValueTF{#1}{}{\IfNoValueTF{#2}{}{\brakets*{#1}{#2}}}}
\NewDocumentCommand{\ketbra}{g g}{\IfNoValueTF{#1}{}{\IfNoValueTF{#2}{}{\ketbras*{#1}{#2}}}}
\NewDocumentCommand{\qval}{g g g}{\IfNoValueTF{#1}{}{\IfNoValueTF{#2}{}{\IfNoValueTF{#3}{}{\qvals*{#1}{#2}{#3}}}}}
\NewDocumentCommand{\qexp}{g g}{\IfNoValueTF{#1}{}{\IfNoValueTF{#2}{}{\qexps*{#1}{#2}}}}

\newcommand{\polylog}{\mathop{\mathrm{polylog}}}

\newcommand{\minimize}{\mathop{\mathrm{minimize}}}
\newcommand{\subto}{\mathop{\mathrm{subject\ to}}}

\newcommand{\const}{\mathrm{const}}

\newcommand{\mms}{\mathrm{MMS}}

\newcommand{\aps}{\mathrm{APS}}

\newcommand{\eps}{\varepsilon}

\newcommand{\Ff}{\mathcal{F}}

\newcommand{\Aa}{\mathcal{A}}
\newcommand{\Bb}{\mathcal{B}}
\newcommand{\Ss}{\mathcal{S}}

\newcommand{\Pp}{\mathcal{P}}

\newcommand{\Mm}{\mathcal{M}}

\newcommand{\Ee}{\mathcal{E}}

\newcommand{\Nn}{\mathcal{N}}

\newcommand{\OPT}{\mathrm{OPT}}

\newcommand{\oW}{\overline{W}}

\newcommand{\hp}{\hat{p}}

\newcommand{\oU}{\overline{U}}
\newcommand{\Hss}{\hat{\mathcal{S}}}
\newcommand{\hSs}{\hat{\mathcal{S}}}
\newcommand{\hv}{\hat{v}}
\newcommand{\hlam}{\hat{\lambda}}
\newcommand{\hS}{\hat{S}}
\newcommand{\hbeta}{\hat{\beta}}

\newcommand{\hgam}{\hat{\gamma}}

\newcommand{\wgamma}{\tilde{\gamma}}

\newcommand{\wv}{\tilde{v}}

\newcommand{\items}{\mathcal{M}} 
\newcommand{\agents}{\mathcal{N}}

\date{}
\title{On MMS, APS and XOS}

\author{Uriel Feige\thanks{Weizmann Institute of Science, Israel. {\tt uriel.feige@weizmann.ac.il}} \and Vadim Grinberg\thanks{Weizmann Institute of Science, Israel. {\tt vadim.grinberg@weizmann.ac.il}}}

\begin{document}

\maketitle
\begin{abstract}
    We consider allocations of a set of $m$ indivisible goods to $n$ agents of equal entitlements that have valuations from the class XOS. A previous sequence of works showed allocations that obtain an $\alpha$-approximation for the maximin share (MMS), for values of $\alpha$ that gradually approach $\frac{1}{4}$ from below (the currently known ratio is $\frac{4}{17}$). In this work we attempt to obtain ratios better than $\frac{1}{4}$, and manage to do so for sufficiently large $n$. Our methodology is to first investigate the gap between the anyprice share (APS) and the MMS when all agents have the same XOS valuations, for which we design an allocation algorithm and prove that each agent receives at least $\alpha > \frac{11}{40}$ times the APS. Then, we derive inspiration from this algorithm, and modify it so that it applies also when agents have different XOS valuations. Using this modified version, we show that for some sufficiently large $n_0$, there is an $\alpha$-MMS allocation (in fact, an $\alpha$-APS allocation) for every $n \geq n_0$.
\end{abstract}

\tableofcontents

\newpage
\section{Introduction}
\label{sec:intro}

We consider fair allocations of $m$ indivisible goods to $n$ agents.
Let $\Mm$ denote the set of $m$ items to allocate, and let the set of agents be $\Nn := [n]$.
Every agent $i \in \Nn$ has a \textit{valuation function} $v_i$ defined on the subsets of $\Mm$.
We assume that $v_i$ satisfies $v_i(\varnothing) = 0$, and for every $S_1 \subseteq S_2$ it holds $v_i(S_1) \leq v_i(S_2)$, i.e $v_i$ is monotone and non-decreasing. 
Every agent $i$ may have different entitlement $b_i > 0$ to the goods $\Mm$, and all entitlements $b_i$ for $i \in [n]$ satisfy $\sum_{i = 1}^nb_i = 1$.
In this work, we shall always have $b_i = 1/n$ for all agents $i \in \Nn$.

An \textit{allocation} of items $\Mm$ to agents $\Nn$ is a partition $(P_1, \ldots, P_n)$ of $\Mm$ into $n$ disjoint sets, where agent $i$ ``receives'' the set $P_i$.
We would like this allocation to be \textit{fair}, for a certain definition of fairness.
In this work, we will focus on a share-based notion of fairness called the \textit{maximin share}, or MMS for short.
\begin{definition}
    Let $\Pp_n$ be the set of all partitions $(P_1, \ldots, P_n)$ of items $\Mm$ into $n$ disjoint sets.
    For an agent $i$ with valuation $v_i$ and entitlement $b_i = \frac{1}{n}$, her maximin share is
    \[
        \mms(\Mm, v_i, 1/n) = \max_{(P_1, \ldots, P_n)\in \Pp_n}\min_{j \in [n]}v_i(P_j).
    \]
\end{definition}
This notion of fairness was first considered in \cite{Budish11}.
A partition $(P_1, \ldots, P_n)$ of $\Mm$ is called an \textit{MMS-allocation} 
if for every agent $i \in \Nn$ it holds $v_i(P_i)\geq \mms(\Mm, v_i, 1/n)$.
That is, each agent $i$ values the bundle $P_i$ given to her at least as high as her maximin share.
While in general one may study MMS allocations for general monotone valuations $v_i$, it is common practice to classify $v_i$ according to the \textit{complement free hierarchy}, first introduced in \cite{LLN}.
\begin{definition}
    Let $v$ be an arbitrary monotone non-decreasing valuation, satisfying $v(\varnothing) = 0$.
    \begin{itemize}
        \item $v$ is \textbf{additive} if $v(S) = \sum_{e \in S}v(e)$ for every set $S \subseteq \Mm$.

        \item $v$ is \textbf{submodular} if $v(S_1\cup \{e\}) - v(S_1) \geq v(S_2\cup \{e\}) - v(S_2)$ for every item $e \in \Mm$ and sets $S_1\subseteq S_2 \subseteq \Mm$.

        \item $v$ is \textbf{XOS} if there exist $t$ additive valuations $v^1,\ldots, v^t$, such that $v(S) = \max_{j \in [t]}v^j(S)$ for every set $S\subseteq \Mm$.
        This class is equivalent to fractionally subadditive valuations (\cite{Feige09}).

        \item $v$ is \textbf{subadditive} if $v(S_1) + v(S_2) \geq v(S_1\cup S_2)$ for all sets $S_1, S_2 \subseteq \Mm$.
    \end{itemize}
\end{definition}
As was shown in \cite{LLN}, each of these valuation classes is strictly contained in the one that follows it.
In this work, we will consider only valuations $v_i$ that are XOS.

In general, MMS-allocations need not exist, even {if} the valuations are additive  \cite{KPW18-1}.
Because of this, it is reasonable to consider allocations in which every agent receives a fraction of her MMS.
In particular, for value $\alpha \in (0, 1]$, we say that allocation $(P_1, \ldots, P_n)$ is an $\alpha$-MMS allocation, if for every agent $i \in \Nn$ it holds $v_i(P_i) \geq \alpha\cdot \mms(\Mm, v_i, 1/n)$.
{In this work, we} focus on showing the existence of $\alpha$-MMS allocations for XOS agents, for certain values of $\alpha$.
{Below, we give} a brief overview of known results on MMS for XOS agents.
As for works on other classes of valuations, we refer the reader to \cref{sec:related}.

The allocations achieving $\alpha$-MMS for XOS valuations were first considered in \cite{GhodsiHSSY22}, who showed the existence of $\frac{1}{5}$-MMS allocations.
In the same work, the authors proved that for XOS agents it is impossible to guarantee allocations that achieve better than $\frac{1}{2}$-MMS.
Subsequently, the authors of \cite{SS24} improved the approximation constant to $\frac{1}{4.6}$, which was later increased to $\frac{3}{13}$ in \cite{AkramiMSS23}.
The ratio was then improved to $\frac{4}{17}$-MMS in \cite{FG25}, which is the currently known lower bound.

Notably, the existence proofs for the ratios $\frac{1}{5}, \frac{3}{13}$ and $\frac{4}{17}$ of respectively \cite{GhodsiHSSY22}, \cite{AkramiMSS23} and \cite{FG25} are  built on the same approach, first introduced in \cite{GhodsiHSSY22} and extended in subsequent works.
On a high level, the approach is as follows: first give away carefully chosen sets of at most $k$ items of total value at least $\alpha$ to some agents, and remove them from the game, and then proceed to allocate the remaining items among the remaining agents.
Taking $k = 1$ gives $\frac{1}{5}$-MMS of \cite{GhodsiHSSY22}, while taking $k =3, 4$ gives $\frac{3}{13}$-MMS of \cite{AkramiMSS23} and $\frac{4}{17}$-MMS of \cite{FG25} respectively.
A similar procedure for $k =2$ would give us a ratio of $\frac{2}{9}$, suggesting that it may be possible to obtain $\frac{k}{4k + 1}$-MMS allocations for larger values of $k$, using a similar approach.
However, even if there are $\frac{k}{4k + 1}$-MMS allocations for larger $k$, these techniques seem to have a natural barrier of $\frac{1}{4}$.
Therefore, if one wishes to obtain an approximation ratio of $\alpha$-MMS for $\alpha > \frac{1}{4}$, one should seek for an alternative approach to achieving such an allocation.

\subsection{Our results}

We present a completely new approach to finding $\alpha$-MMS allocations for XOS agents, which allows us to improve over the previously known bounds as long as the number $n$ of agents is sufficiently large.
Specifically, we prove the following theorem.
\begin{theorem}\label{mainresultmms}
    There exists a constant $\alpha > \frac{11}{40}$ and number $n_\alpha$, such that for every $n\geq n_\alpha$ the following holds.
    Consider a fair allocation instance with items $\Mm$ and agents $\Nn$, $|\Nn| = n$, where for every $i \in [n]$ valuation $v_i$ is XOS.
    Then, there exists an allocation $(P_1, \ldots, P_n)$ of $\Mm$ to $\Nn$ such that every agent $i \in \Nn$ receives at least $\alpha$-fraction of her MMS, {i.e $v_i(P_i) \geq \alpha\cdot \mms(\Mm, v_i, 1/n)$.}
\end{theorem}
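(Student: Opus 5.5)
We will prove the stronger statement that there is an $\alpha$-APS allocation with $\alpha > \frac{11}{40}$ for all $n \ge n_\alpha$. Since $\aps(\Mm, v_i, 1/n) \ge \mms(\Mm, v_i, 1/n)$ for every valuation, this suffices.

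\textbf{Normalization.} Scale each $v_i$ so that its APS is $1$. We use the dual (price) characterization of the APS for XOS valuations. For every agent $i$ and every price vector $p$ with $\sum_{e} p(e) = 1$, there is a set $S$ with $p(S) \le 1/n$ and $v_i(S) \ge 1$. Equivalently, there is a fractional packing $\lambda$ of sets $S$ with $v_i(S) \ge 1$, each item covered at most once and $\sum_S \lambda_S = n$. For each such $S$ fix an additive clause of $v_i$ certifying $v_i(S) \ge 1$. This gives agent $i$ a fractional $n$-partition into "good" additive bundles.

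\textbf{Stage 1: large items.} Call an item large for $i$ if some supporting clause gives it value at least $\alpha$. Run a matching/bag-filling phase in the style of prior work. Repeatedly give a single large item, or a pair or small set of items worth at least $\alpha$, to an agent who values it, chosen so that the remaining agents lose little. The key invariant is that after removing $k$ agents and their items, each remaining agent still has APS at least roughly $1$ with respect to $n-k$ agents, up to a loss quantified via the fractional packing. Under the APS dual, the removed items can be charged to the price vector, so we keep track of prices rather than partitions.

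\textbf{Stage 2: small items.} After Stage 1, all remaining items are small, with value below $\alpha$ (or below a smaller threshold $\delta$) for every agent. We then run a randomized or greedy bag-filling procedure. To design it, first study the identical-valuations case, where the guarantee $\frac{11}{40}$ is obtained. Agents receive bundles built from pieces of their fractional APS packing, and we analyze how much value survives when bundles of other agents are removed. Random rounding of the fractional packing is well suited here. Pick, for each agent, a random set from her packing (normalized by $n$), then resolve conflicts. With $n$ large, concentration (Chernoff-type bounds over items of small value) ensures that each agent's bundle loses only about the expected fraction of value to collisions. Losses of order $o(1)$ come from the finite-$n$ deviations; this is why we need $n \ge n_\alpha$, with $\alpha$ a hair below the ideal constant from the identical case.

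\textbf{Main obstacle.} The hardest step is the conflict-resolution analysis with heterogeneous valuations. In the identical case one can argue that items taken by others are worth little relative to the whole. With different XOS functions, an item cheap for one agent may be essential to another. I would first try a "value-proportional" contention resolution: each contested item goes to one of its requesters at random, and we show via the XOS clause that each agent keeps at least a constant fraction (targeting $\frac{11}{40}$ after combining with Stage 1) of her clause value in expectation. Then a Lovász Local Lemma or concentration argument yields a single allocation good for all agents simultaneously when $n$ is large. Making the Stage 1 threshold and the Stage 2 contention loss balance to exceed $\frac{11}{40}$ is the delicate calculation.
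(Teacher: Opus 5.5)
There is a genuine gap, and it sits in Stage~2, which carries the whole argument. One-shot rounding (every agent draws a bundle from her fractional packing, and contested items are split at random) gives guarantees only in expectation. Neither Chernoff bounds nor the Local Lemma turn these into a bound that holds for all agents at once, for three reasons. An agent's bundle may consist of a constant number of items of value just below $\alpha$. A single other agent's draw may cover most of it. And each agent's failure event has constant probability while depending on essentially all other agents' draws.

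The scheme already fails with identical valuations whose APS partition is a genuine partition into $n$ bundles of tiny items. With probability about $1-\frac{5}{2e}\approx 0.08$ an agent shares her bundle with at least three others and keeps about $\frac14<\alpha$. Moreover, some bundle is drawn by $\Theta(\log n/\log\log n)$ agents.

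Nor does $\frac{11}{40}$ come out of anything in your plan; the ``delicate calculation'' you defer is the theorem itself. What is missing is the paper's sequential potential argument:
\begin{itemize}
\item Track $\beta_i=\sum_S\lambda_i(S)\,v_i(S\cap\text{remaining items})$. This is a sum of contributions of weight at most $1/n$, and this aggregate concentrates even though the value of a single bundle does not.
\item In each round, give the active agent with the largest $\beta_i$ a bundle sampled from a distribution over acceptable sub-bundles of her APS bundles, in which every item has probability at most $p(\beta_i)$. This distribution is obtained by splitting bundles of remaining value in $[2\alpha,3\alpha)$ into three doubly covering acceptable pieces, or in $[3\alpha,1]$ into two disjoint ones (using that every item is worth less than $\alpha$), and bounding $p$ via a small LP.
\item Then every other $\beta_j$ shrinks in expectation by a factor at least $1-p$. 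The resulting recursion yields $\alpha^*\approx 0.2768$, the root of $2(12\alpha-3)\ln(3\alpha)=(1-3\alpha)(3\ln 3-4)$.
\end{itemize}
With heterogeneous valuations, martingale concentration also requires every per-round drop of every $\beta_j$ to be small. The paper enforces this by discarding ``dangerous'' bundles, by removing items from $j$'s partition only up to a loss $c$ per round, and by letting later agents steal items back, at a cost of $O(D/(c^2 n))$ per agent.

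Stage~1 is also insufficient. The APS invariant is exact for single items (\cref{giveitem}) but fails for pairs under XOS. Suppose $v_j(T)=\max_\ell |T\cap B_\ell|/4$ for a partition of $\Mm$ into $n$ four-item bundles $B_\ell$. Removing one item from each of $B_1$ and $B_2$ lowers $j$'s APS for $n-1$ agents to $\frac34$. So ``the remaining agents lose little'' needs an argument; reductions of this kind underlie the earlier $\frac{k}{4k+1}$ bounds, which seem stuck below $\frac14$.

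More seriously, even with single items, the agents left after matching may be a constant number $n'<n_\alpha$ (for example, when $n-n'$ items are large for everyone). Your large-$n$ Stage~2 then says nothing about them. The paper cannot shrink $n$ this way. Instead it chooses \emph{which} agents receive large items, through the four-case structural lemma on the large-item bipartite graph (\cref{bipartite}). In the hardest case it proceeds in three steps:
\begin{itemize}
\item run the randomized algorithm for the small group $U_3$ on $W_3$, while keeping the average potential of $U_2$ high;
\item run a greedy-average phase serving $|U_2|-|W_2|$ agents of $U_2$ from $W_3$;
\item match the remaining agents of $U_2$ to $W_2$.
\end{itemize}
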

{As $n_\alpha$ grows}, the value of the constant $\alpha$ {converges to} a solution to the equation $2(12\alpha- 3)\ln(3\alpha) = (1 - 3\alpha)(3\ln3 - 4)$, which gives $\alpha \approx 0.2767738$.
{We alert the reader that to achieve $\alpha \geq 1/4$, our proof uses $n_\alpha > 100000$.} 


Interestingly enough, we prove an approximation ratio of $\alpha > \frac{11}{40}$ not just for MMS-allocations, but also for \textit{APS} allocations. {The APS has two equivalent definitions~\cite{BEF21APS}, one based on prices, and the other based on fractional partitions. We present the latter definition, as this is the definition used by our proofs.}

\begin{definition}
\label{def:APS}
    {Let $\Bb \subseteq 2^{\Mm}$ be a collection of subsets of $\Mm$.}
    For $\rho \in (0, 1)$, we call a collection $\{(B, \lambda_B)\}_{B \in \Bb}$ a \textbf{fractional $\rho$-partition}, if 
    \begin{itemize}
        \item for all $B \in \Bb$, $\lambda_B \geq 0$, and $\sum_{B \in \Bb}\lambda_B = 1$;
        \item for every item $e\in \Mm$ it holds $\sum_{B \in \Bb: B\ni e}\lambda_B \leq \rho$.
    \end{itemize}
    Let $\Ff\Pp_\rho$ be all fractional $\rho$-partitions of $\Mm$.
    For agent $i$ with valuation $v_i$ and entitlement $b_i$, her anyprice share (APS) is
    \[
        \aps(\Mm, v_i, b_i) = \max_{\{(B, \lambda_B)\}_{B \in \Bb} \in \Ff\Pp_{b_i}}\min_{B \in \Bb}v_i(B).
    \]
\end{definition}
For agent $i \in [n]$, a fractional $b_i$-partition $\{(B, \lambda_B)\}_{B \in \Bb}$ is called an \textit{APS-partition} for agent $i$, if 
$\min_{B \in \Bb}v_i(B) = \aps(\Mm, v_i, b_i)$.
Similarly, a partition $(P_1, \ldots, P_n)$ is called an \textit{APS-allocation}, if for every agent $i \in [n]$ we have $v_i(P_i) \geq \aps(\Mm, v_i, b_i)$.

In our case, for every agent $i \in [n]$ we have $b_i = 1/n$.
Then, one can think of fractional $\frac{1}{n}$-partitions as follows: instead of a partition $(P_1, \ldots, P_n)$ where each item is contained in exactly one of the sets $P_j$, we consider a \textit{distribution} $\{\lambda_B\}_{B \in \Bb}$ over sets $B \in \Bb$, such that every item $e \in \Mm$ is contained in at most $1/n$-fraction of all bundles {($\frac{1}{n}$-fraction in terms of the weights $\lambda_B$).}
The caveat with APS is that sets $B \in \Bb$ are \textbf{not} necessarily disjoint, and can have arbitrary intersections.
It is easy to see that in the event of $\{B\}_{B \in \Bb}$ all being disjoint, the fractional $\frac{1}{n}$-partition $\{(B, \lambda_B)\}_{B \in \Bb}$ corresponds to some partition $(P_1',\ldots, P_n')$ of $\Mm$.
Therefore, any MMS-partition of agent $i$ with $b_i = 1/n$ is also a {fractional $1/n$-partition}, hence $\aps(\Mm, v_i, 1/n) \geq \mms(\Mm, v_i, 1/n)$ for every $i$.

APS-allocations were first introduced in \cite{BEF21APS}, and in general consider arbitrary entitlements $b_i$ of agents $i \in \Nn$.
When it comes to XOS valuations, \cite{FG25} extend their result from MMS to APS allocations for agents with equal entitlements, thus giving $\frac{4}{17}$-APS allocation, which is currently the best known approximation.
The reader can further explore other results on APS-allocations in \cref{sec:related}.
Our approach for \cref{mainresultmms} extends from MMS to APS, thus giving $\alpha$-APS allocations for XOS agents with equal entitlements and $\alpha > \frac{11}{40}$, provided that the number of agents $n$ is sufficiently large.

\begin{theorem}\label{mainresultaps}
    Let $\alpha, n_\alpha$ be as in \cref{mainresultmms}.
    For every $n \geq n_\alpha$, the following holds.
    Consider a fair allocation instance with a set $\Mm$ of items, and a set $\Nn$ of $n$ agents with equal entitlements, where for every $i \in [n]$, her valuation $v_i$ is XOS.
    Then, there exists an allocation $(P_1, \ldots, P_n)$ of $\Mm$ to $\Nn$ such that every agent $i \in \Nn$ receives at least $\alpha$-fraction of her APS, {i.e $v_i(P_i) \geq \alpha\cdot \aps(\Mm, v_i, 1/n)$.}
\end{theorem}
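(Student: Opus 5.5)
Our plan is to prove \cref{mainresultaps} directly; \cref{mainresultmms} then follows because $\aps(\Mm, v_i, 1/n) \geq \mms(\Mm, v_i, 1/n)$.

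We first normalize. Scale each $v_i$ so that $\aps(\Mm, v_i, 1/n) = 1$, and fix for each agent an APS-partition $\{(B, \lambda^i_B)\}_{B \in \Bb_i}$. Since $v_i$ is XOS, each $B \in \Bb_i$ has a supporting additive function $a^i_B$ with $a^i_B(B) = v_i(B) \geq 1$ and $a^i_B \leq v_i$ on subsets of $B$. Averaging these with weights $\lambda^i_B$ gives a fractional certificate. For any set $S$ of items, the quantity $\sum_B \lambda^i_B\, a^i_B(B \setminus S)$ is at least $1 - \sum_{e \in S} w_i(e)$, where $w_i(e) := \sum_{B \ni e} \lambda^i_B a^i_B(e)$, and $\sum_e w_i(e) \geq 1$ holds with per-item "load" at most $1/n$. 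This weight function $w_i$ measures how much an item (or a set of items) can destroy agent $i$'s share.

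The algorithm has two phases.

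In phase one, we handle \emph{large items}. We repeatedly give agents single items, and then pairs and small sets, of value at least $\alpha$. This is the standard reduction: after it, for every remaining agent every item has value below $\alpha$, and the remaining agents still have APS (in the fractional-partition sense) at least roughly $1$ on the remaining items. To verify this, we remove the allocated items from each bundle $B$ and use the averaged-weight bound above. Each removed agent takes items whose total $w_i$-weight is controlled, since single items carry load at most $1/n$.

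In phase two, we allocate the remaining \emph{small items}. Here we follow the identical-valuation algorithm suggested by the abstract. Agents arrive in order. Each remaining agent picks, via her XOS clause, a minimal bundle of value at least $\alpha$ from the items still available, and we then remove it. The key step is to show, for the last agents to arrive, that enough of their fractional partition survives. We model the process continuously: with $t$ agents served out of $n$, each served bundle has value less than $2\alpha$ to the server, because items are small and bundles are minimal. We then bound how much of agent $i$'s fractional mass each removed bundle can destroy. The removed bundle has $v_i$-value below $\alpha$ for agent $i$ if she did not want it earlier. Combining these facts via the averaged additive certificates gives a differential inequality for the surviving mass. Integrating it yields a logarithmic expression, and we expect the threshold equation $2(12\alpha-3)\ln(3\alpha) = (1-3\alpha)(3\ln 3 - 4)$ to arise from this integration, after splitting items into those with value in $[\alpha/3, \alpha)$ and smaller ones.

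The main obstacle is extending the identical-valuation analysis to different XOS valuations. When valuations differ, a bundle cheap for the server may be expensive for others. We would try a bidding or "cheapest-for-others" selection rule: among bundles of value at least $\alpha$ to some remaining agent, choose one minimizing total $w$-damage. We then use concentration or averaging over the large number $n$ of agents to show the error terms are $\O{1/n}$ or $\O{\log n / n}$, absorbed by taking $n \geq n_\alpha$. Rounding losses from discretizing the continuous process likewise vanish as $n$ grows, which is why the statement holds only for large $n$ with $\alpha$ slightly above $11/40$.
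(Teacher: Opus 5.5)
Your proposal has genuine gaps in both phases and in the extension to different valuations.

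\textbf{Phase one (large items).} This phase does not do what you need. \cref{giveitem} only lets you remove one item together with one agent. Removing a pair or a larger set can lower the remaining agents' APS: rescaling the surviving bundles gives per-item load $\frac{1}{n-2} > \frac{1}{n-1}$. Moreover, giving away small sets of value $\alpha$ is precisely the technique of \cite{GhodsiHSSY22} and its successors that stalls below $1/4$. More seriously, even handing out only single large items can leave a bounded number of agents (e.g.\ when $n-10$ items are large for everybody). Your phase two needs $n$ large, so it says nothing about that residual instance; the paper explicitly notes it cannot afford to shrink $n$. The paper's fix is to choose \emph{which} agents get large items. \cref{bipartite} classifies the agent/large-item graph into four cases. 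In the hardest case, the agents $U_3$, for whom no remaining item is large, are served first. This uses the randomized algorithm run inside the instance of all $|U_2\cup U_3|$ agents but restricted to items of $W_3$, so selection probabilities stay $O(1/n)$. Then $|U_2|-|W_2|$ agents of $U_2$ are served greedily from what is left of $W_3$. Only then are the remaining agents of $U_2$ matched to the items of $W_2$. Nothing in your plan plays this role.

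\textbf{Phase two (small items).} This phase lacks its key lemma. Your claim that a removed bundle $A$ has $v_i(A)<\alpha$ for a not-yet-served agent $i$ has no basis in an arrival-order process. Even if it held, it would not bound the damage $\sum_B\lambda_B\,a_B(B\cap A)$ for XOS $v_i$. That damage is only bounded by $\frac1n\sum_{e\in A}v_i(e)$, which can exceed $v_i(A)/n$ by a factor as large as $|A|$. The paper proceeds differently:
\begin{itemize}
\item From the surviving parts of the APS bundles it builds a distribution over acceptable sub-bundles in which every item has probability at most $p(\beta)$ (\cref{partition}). A surviving bundle of value in $[\alpha,2\alpha)$, $[2\alpha,3\alpha)$, or $[3\alpha,1]$ supports $1$, $3/2$, or $2$ units of acceptable sub-bundles respectively.
\item It bounds $p(\beta)$ by a small LP (\cref{valbound}).
\item It uses \cref{probval} to get $\beta'\ge(1-p(\beta))\beta$.
\end{itemize}
The threshold equation comes from integrating this recursion separately for $\beta\ge 3\alpha$ and $\beta<3\alpha$, not from splitting item values at $\alpha/3$.

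\textbf{Different valuations.} Your deterministic ``cheapest-for-others'' rule gives nothing to concentrate. Keeping the \emph{average} damage small still lets a few agents' potentials collapse. Concentration also fails whenever one bundle wipes out a large part of a single agent's partition. The paper needs all of the following:
\begin{itemize}
\item random sampling by the agent with the currently largest potential;
\item a supermartingale comparison of each agent's potential with a deterministic process;
\item banning ``dangerous'' bundles whose total damage exceeds $D$;
\item filtering, so that each agent loses at most $c$ per round;
\item a stealing step whose cost to the original owner is at most $D/(c(cn-1))$ (\cref{stolenvalue}).
\end{itemize}
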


{In addition, we show that for every $n$ and any XOS valuation $v$, the multiplicative gap between the values of the APS and the MMS is less than $40/11$.}

\begin{theorem}\label{apsmmsgap}
    Let $\alpha$ be as in \cref{mainresultmms} and \cref{mainresultaps}.
    For any $n$, $\Mm$ and any XOS valuation $v$ it holds that $\mms(\Mm, v, 1/n)\geq \alpha\cdot  \aps(\Mm, v, 1/n)$.
\end{theorem}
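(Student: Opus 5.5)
The plan is to reduce \cref{apsmmsgap} to an allocation problem among \emph{identical} agents. Fix $n$, $\Mm$ and an XOS valuation $v$, and let $A = \aps(\Mm, v, 1/n)$. By definition, $\mms(\Mm, v, 1/n) \geq \alpha A$ holds exactly when there is a partition $(P_1,\ldots,P_n)$ of $\Mm$ with $v(P_j)\geq \alpha A$ for every $j$. That is the same as an $\alpha$-APS allocation in the instance where all $n$ agents have the same valuation $v$. So it suffices to design, for identical XOS agents, an allocation algorithm that gives every agent value at least $\alpha A$, and to do so for every $n$, not only for $n\geq n_\alpha$. This is the ``same XOS valuation'' algorithm announced in the abstract.

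\textbf{Normalization and large items.} Scale so that $A=1$, and fix an APS-partition $\{(B,\lambda_B)\}_{B\in\Bb}$ with $v(B)\geq 1$ for all $B$. For each $B$, fix the additive clause $a_B$ of $v$ attaining $v(B)$; we may restrict it to $B$, so that $a_B(B)\ge 1$.
\begin{itemize}
\item Any single item with $v(e)\geq\alpha$ is given alone to one agent. Removing it together with one agent preserves APS $\geq 1$ for the remaining $n-1$ agents: we discard the bundles containing $e$, which have weight at most $1/n$, and renormalize. This needs the standard monotonicity lemma for APS under removing one agent and one item.
\item Likewise, pairs of items with total value at least $\alpha$ are handled by greedy matching, using the analogous reduction that removes two items and one agent.
\end{itemize}
After these reductions, every item has value less than $\alpha$ (in fact smaller after pair-matching), and the APS is still at least $1$.

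\textbf{Main phase.} Run a bag-filling procedure: add items to a bag until its $v$-value reaches $\alpha$, hand the bag to an agent, and repeat. The bound to aim for is a potential inequality. Define the fractional value $\sum_B \lambda_B\, a_B(S)$ of the remaining items $S$, which starts at least $1\cdot n\cdot(1/n)\cdot n$, i.e.\ total weight $n$ after scaling by $n$. Then show that each completed bag destroys at most $1/n$ of this weight per agent served. Because items are small, a bag is overfilled by less than one item's value, so each bag removes at most about $\alpha+\text{(item size)}$ of "fractional" value.

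\textbf{Main obstacle.} The difficulty is that XOS values are not additive: a bag worth $\alpha$ under $v$ may remove much more than $\alpha$ of the $a_B$-mass. To control this, I would order the filling so that items are chosen according to one clause at a time, and average over $B\sim\lambda$ to bound the removed mass in expectation. This averaging, combined with the case analysis over item sizes (large, medium pairs, small), is where the specific constant arises. The threshold equation $2(12\alpha-3)\ln(3\alpha)=(1-3\alpha)(3\ln 3-4)$ suggests a continuous integration over item-size thresholds between $1/3$ and $\alpha$-scale. I expect this accounting to be the hardest step, and would first try an LP/fractional charging argument that integrates the loss over item sizes.

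Finally, the procedure involves no asymptotic loss for identical agents, so it works for every $n$, which yields $\mms\ge\alpha\,\aps$.
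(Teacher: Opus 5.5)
\textbf{There is a genuine gap: the main phase, which is essentially the whole theorem, is left open, and bag-filling does not supply it.} Your reduction to identical agents, the scaling to APS $=1$, the removal of single items of value at least $\alpha$, and your potential $\sum_B\lambda_B a_B(S)$ are all sound. (The potential is essentially the paper's $\beta(S)=\sum_B\lambda_B v(B\cap S)$, with $v$ taken additive on each APS bundle.)

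\textbf{Why bag-filling fails, and what replaces it.} Removing a bag $T$ lowers the potential by $\sum_{e\in T}\sum_{B\ni e}\lambda_B a_B(e)$. This is controlled only by $\frac1n\sum_{e\in T}v(e)$, which for XOS can far exceed $\frac1n v(T)$, and filling up to value $\alpha$ does nothing to bound it. A uniform ``at most $1/n$ per bag'' bound is also the wrong target. The loss that can actually be guaranteed is multiplicative, $\beta\mapsto(1-p)\beta$, and the per-item probability $p$ you can guarantee degrades sharply as the potential approaches $\alpha$.

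The missing idea is to take each allocated bundle to be a sub-bundle of a remaining APS bundle $B\cap\Mm^i$. You need a distribution over these candidates under which every item is selected with probability at most $p$. By linearity over the clauses (\cref{probval}), the expected loss is then at most $p\beta$, and the greedy choice does at least as well. Your remark about averaging over $B\sim\lambda$ points this way, but the quantitative ingredient is the splitting lemma (\cref{partition}). After merging items whose pairwise value is below $\alpha$:
\begin{itemize}
\item a remaining bundle worth in $[\alpha,2\alpha)$ yields one acceptable set;
\item one worth in $[2\alpha,3\alpha)$ yields three acceptable sets covering each item at most twice;
\item one worth at least $3\alpha$ yields two disjoint acceptable sets.
\end{itemize}
This gives $p\le 1/(n\Lambda)$ with $\Lambda=\Lambda_{2\alpha}+\frac32\Lambda_{3\alpha}+2\Lambda_1$. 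An LP over these weight classes then bounds $\Lambda$ from below in terms of the current potential (\cref{valbound}).

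\textbf{Where the constant comes from.} It comes from the recursion $\gamma^{i+1}=(1-p(\gamma^i))\gamma^i$. One counts the steps spent in each small interval of \emph{potential} values, which amounts to integrating $1/(n\,p(x)\,x)$ over the potential $x$. The equation $2(12\alpha-3)\ln(3\alpha)=(1-3\alpha)(3\ln3-4)$ comes from this integral, with a breakpoint at potential $3\alpha$ where the LP bound changes form. It does not come from thresholds on item sizes, so the accounting you plan would not produce it.

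\textbf{Two further unsupported claims.}
\begin{itemize}
\item The pair reduction does not follow from the single-item argument. Discarding the APS bundles that meet either of two items removes weight up to $2/n$, and renormalizing leaves item loads up to $1/(n-2)>1/(n-1)$. The step is also unnecessary.
\item ``No asymptotic loss, so it works for every $n$'' is not automatic. The recursion is analyzed in a continuum limit valid for large $n$. Covering all $n$ needs an extra argument; the paper uses the doubling monotonicity $\gamma_{2n}^{2n}\le\gamma_n^n$ (\cref{doubling}) plus direct numerical checks for small $n$.
\end{itemize}
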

Our approach for \cref{mainresultmms} and \cref{mainresultaps} originated from the work on the proof for \cref{apsmmsgap}.
We provide an overview of the proofs of our results in Section~\ref{sec:overview}.
{For convenience, in non-rigorous parts of our exposition, we may describe the approach either in terms of MMS or in terms of APS. However, all the rigorous technical proofs all apply to APS.} 

The allocations for \cref{mainresultmms}, \cref{mainresultaps} and \cref{apsmmsgap} can be obtained in polynomial time, as long as the MMS/APS 
partitions of all agents are provided in advance.
{Computing the values of the MMS and the APS (and corresponding partitions) for XOS valuations is APX-hard. This hardness result holds even for the more restricted class of submodular valuations (see~\cite{BUF23}).} {Nevertheless,  the value of the APS and an associated fractional partition can be computed in polynomial time for any monotone  (including XOS) valuation $v$, if the algorithm has access to the valuations not just via value queries, but also via what we refer to as {\em anyprice queries}. The APS has two equivalent definitions. One is Definition~\ref{def:APS}. The other is price-based, and postulates that when items have non-negative prices (scaled so that they sum up to 1), then an agent with entitlement $b$ ($b = \frac{1}{n}$ in our case) is entitled to any set of items of her choice, of price at most $b$. In this context, the following anyprice queries are natural. The query specifies nonnegative prices to the items, summing up to 1. The reply is the set $S$ of items that the agent desires, namely, the one maximizing 
$v(S)$, subject to the sum of prices of items in $S$ not exceeding $b$. If we allow both value queries and anyprice queries, the value of the APS can be computed using polynomially many queries (using the ellipsoid algorithm), and so can an associated fractional partition (using techniques involving linear programming duality). Consequently, given access to value queries and anyprice queries, our randomized allocation algorithm can be implemented so that its expected running time is polynomial. Further details on this aspect are omitted.}


\section{Overview of our proof}
\label{sec:overview}

Throughout this overview all valuations are assumed to be XOS, and $\rho$ serves as a generic name representing some unspecified constant, strictly larger than $\frac{1}{4}$. 

Our goal is to show the existence of $\rho$-MMS allocations. We break this goal into two main steps. One is the design of an allocation algorithm for which we have reasons to hope that it produces $\rho$-MMS allocations. (At this stage, we do not require this algorithm to be efficient, as we are mostly interested in the existence of $\rho$-MMS allocations, and treat the complexity of computing them as a secondary issue.) The other is to analyse the designed algorithm and prove that indeed it produces $\rho$-MMS allocations.

Our methodology for achieving the above two steps is to first change the problem as follows. Recall that the currently best approximation ratio with respect to the MMS, namely $\frac{4}{17}$, holds also with respect to the APS. Hence it seems reasonable to expect that if there are $\rho$-MMS allocations, then there also are $\rho$-APS allocations. Now we greatly {simplify} the goal of designing $\rho$-APS allocations, by considering only allocation instances in which all agents have the same XOS valuation function. We refer to this setting as that of {\em identical valuations}. Observe that identical valuations makes no sense if we wish to design $\rho$-MMS allocations, because the MMS partition with respect to the valuation function gives an MMS allocation, showing that $\rho = 1$. However, for APS, identical valuations do make sense, because the APS fractional partition need not be a legal allocation (bundles might overlap). Technically, the identical valuation setting is equivalent to proving that for XOS valuations, the multiplicative gap between the APS and the MMS is $\frac{1}{\rho} < 4$. (Prior to our work, it was only known that this gap is at most $\frac{17}{4} > 4$.) 

Handling the identical valuations setting already requires the development of new techniques. The techniques developed in this process help overcome some of the difficulties also in the general case in which valuations need not be identical, at least at the level of providing inspiration on how to address the general case. This allows us to achieve the task of handling the general case using two ``leaps" of intermediate difficulty (one is that of developing techniques that provably handle identical valuations, the other is developing only the additional techniques that are needed in order to extend the results to the general case), rather than one very big leap.

\subsection{$\rho$-APS allocations with identical valuations}
\label{sec:identical}

Here we consider the setting in which there is an XOS valuation $v$ with $APS(\items, v, {1/n}) = 1$, and we need to prove the existence of a partition $A_1, \ldots, A_n$ of $\items$, with $v(A_j) \ge \rho$ for every $j$. (This is a reformulation of the problem of producing $\rho$-APS allocations when agents have identical XOS valuations.) We refer to bundles of value at least $\rho$ as {\em acceptable}. As is standard in related settings, we may assume that no single item has value above $\rho$, because otherwise we can give this item to one of the agents without decreasing the APS value for the remaining agents.

We propose to use a greedy algorithm for the above problem. First, based on $v$, we set up a {\em potential set-function} $\beta$, with the following properties. 

\begin{enumerate}
    \item $\beta$ is monotone, namely, $\beta(S) \le \beta(S')$ if $S \subset S'$.
    \item $\beta(\items) \ge 1$.
    \item If $\beta(S) \ge \rho$, {then $S$ is an acceptable bundle.}
    \item Uniform and monotone lower bounds on transitions. {There is some (non-increasing) function $f : [\rho, \infty] \rightarrow (0, 1]$ such that} for every acceptable set $S$, 
    there is an acceptable  subset $T \subseteq S$ {satisfying $\beta(S \setminus T) \ge (1 - f(\beta(S)))\beta(S)$. That is,} the value $\beta(S \setminus T)$ can be lower bounded only as a function of $\beta(S)$ (and beyond that, independently of the contents of $S$ itself). 
\end{enumerate}

Our greedy algorithm has $n$ rounds. In each round $r$, a set $\items_r$ of yet unallocated items is still available (with $\items_1 = \items$). In round $r$ we greedily select an acceptable $A_r \subset \items_r$ that maximizes $\beta(\items_r \setminus A_r)$. Then we update $\items_{r+1} = \items_r \setminus A_r$, and proceed to the next round.

For the greedy algorithm to work, we need to select the function $\beta$ such that we can trace how it decreases throughout the greedy algorithm, and prove that even at the last round of the algorithm, $\beta(\items_n) \ge \rho$. Using such a $\beta$, the algorithm produces $n$ disjoint bundles, each of value at least $\rho$.

{We intend to apply $\beta$ to sets $S$ of the form $\items_r$, for some round $r$. Thus, to gain some intuition of how the definition of $\beta$ and its properties are used, the reader may think of each set $S$ referred to below as being one the sets $\items_r$.}

Our $\beta$ is defined as follows. Let $\{\lambda_B, B\}_{B \in \Bb}$ denote the APS partition for $v$. That is, $\lambda_B > 0$ for every $B \in \Bb$, $\sum_{B \in \Bb} \lambda_B = 1$, $v(B) \ge 1$ for every $B \in \Bb$ (as we assume that the APS value is~1), and $\sum_{B  :e \in B} \lambda_B \le \frac{1}{n}$ for every $e \in \items$. We define 

$$\beta(S) = \sum_B \lambda_B \cdot v(B \cap S).$$

\noindent The exact definition and the way we use it is given in Section~\ref{sec:greedalgo}.
In a sense, $\beta(S)$ quantifies the amount of value that the items of $S$ contribute to the APS fractional partition. Technically, it is the expected value of a bundle $B \cap S$ sampled with probability $\lambda_B$. One can easily verify that properties~1, 2 and~3 above indeed hold. We now discuss property~4.

Inspired by arguments used in \cite{GhodsiHSSY22} and \cite{FG25}, for every $S$, we define $D_S$ to be the distribution over acceptable bundles $T \subseteq S$ that minimizes $p_S$, where $p_S$ is defined {as follows}.
Let $p(e)$ for item $e$ denote $
\P[T \sim D_S]{e \in T}$. 
That is, $p(e)$ is the probability that item $e$ is included in a set selected at random from $D_S$. 
We choose $D_S$ to be the distribution minimizing (over all distributions) the maximum (over all items) $p(e)$. 
We refer to the resulting maximum value of $p(e)$ as $p_S$. 

For a given value $t \ge \rho$, we define $p_t$ to be the maximum $p_S$ over all those $S$ that satisfy $\beta(S) \ge t$. 
Observe that $p_t$ above is monotone non-increasing (the larger $t$ is, the fewer $S$ there are that we can maximize over). That is, as $\beta$ gets smaller, the respective $p_t$ grows.

For a given $S$ with $\beta(S) = t$, if we select at random an acceptable set $T$ from the distribution $D_S$, then every item $e \in S$ has probability {at} most $p_S \le p_t$ of belonging to $T$. 
{We make the following observation, which we prove rigorously in \cref{probval}.
\begin{observation}\label{processobs}
    Consider the potential function $\beta$, as defined earlier, w.r.t XOS valuation $v$.
    Then, for any $S\subseteq \Mm$ with $\beta(S) = t$, and a random acceptable set $T \sim D_S$, the expected value of $\beta(S\setminus T)$ is at least $(1 - p_t)\cdot t$.
    As a corollary, there exists a choice of an acceptable $T\subseteq S$ for which the potential is no less than $\beta(S\setminus T) \geq (1 - p_t)\cdot \beta(S)$.
\end{observation}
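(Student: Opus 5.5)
Fix, for each $B \in \Bb$, an additive clause $a_B$ of the XOS valuation $v$ that attains the value on $B \cap S$. That is, $a_B(B\cap S) = v(B \cap S)$, and $a_B(X) \le v(X)$ for every $X$. The proof has three steps: an XOS lower bound on $\beta(S \setminus T)$ by a linear expression, linearity of expectation over $T \sim D_S$, and an averaging argument.

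\textbf{Step 1: linear lower bound.} For any acceptable $T \subseteq S$ we have $(B \cap S)\setminus T = B \cap (S\setminus T)$. Using the clause $a_B$,
\[
v\brackets*{B \cap (S\setminus T)} \;\ge\; a_B\brackets*{(B\cap S)\setminus T} \;=\; \sum_{e \in B \cap S} a_B(e) \;-\; \sum_{e \in B\cap S\cap T} a_B(e).
\]
Multiplying by $\lambda_B$ and summing over $B \in \Bb$ gives
\[
\beta(S\setminus T) \;\ge\; \beta(S) \;-\; \sum_{B}\lambda_B\sum_{e \in B\cap S} a_B(e)\,\I{e \in T}.
\]

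\textbf{Step 2: expectation over $T$.} Take expectation over $T \sim D_S$. By the definition of $D_S$, $\P[T\sim D_S]{e \in T} = p(e) \le p_S$ for every $e \in S$. The coefficients $\lambda_B a_B(e)$ are nonnegative, and $p_S \le p_t$ because $\beta(S) = t$. Hence
\[
\E[T\sim D_S]{\beta(S\setminus T)} \;\ge\; t \;-\; p_t \sum_B \lambda_B\, a_B(B\cap S) \;=\; t - p_t\, t \;=\; (1-p_t)\,t.
\]

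\textbf{Step 3: the corollary.} $D_S$ is supported on acceptable subsets of $S$. Some $T$ in its support therefore attains at least the expectation, which gives $\beta(S\setminus T) \ge (1-p_t)\beta(S)$.

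\textbf{Expected obstacle.} The only delicate point is Step 1. The XOS structure of $v$ applied to $B \cap S$ is what turns the monotone, non-additive loss into a linear expression in the indicators $\I{e\in T}$. The clause must be chosen for $B \cap S$, not for $B$, so that the subtracted total collapses exactly to $\beta(S)$.

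Two remaining points are routine. The existence of $D_S$ requires that $S$ contain some acceptable subset. This holds because $\beta(S) \ge \rho$ implies $S$ is acceptable (property 3). The minimum over distributions is attained because the set of distributions over the finitely many subsets of $S$ is a compact polytope.
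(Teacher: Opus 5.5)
Your proof is correct and is essentially the paper's argument (Lemma~\ref{probval}): you lower-bound $\beta(S\setminus T)$ by a linear expression using an additive XOS clause, bound the expected loss item by item via $p(e)\le p_S\le p_t$, and average over the support of $D_S$. The one difference is the clause: the paper fixes, via Lemma~\ref{wlog}, a clause $v_S$ attaining the value of the whole bundle and asserts $v(S^i)=v_S(S^i)$, whereas you take the clause attaining $v(B\cap S)$, which makes $\sum_B\lambda_B a_B(B\cap S)=\beta(S)$ exact without needing that assertion.
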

This serves as property~4 for our choice of $\beta$.}

To make use of our $\beta$, we need also to have an upper bound on $p_t$ as a function of $t$. To get a sense of what kind of upper bound we can expect, suppose that for a desired value of $\rho$, it happens that in every round of the algorithm, every bundle in the support of the APS fractional partition maintains a value of at least $\rho$. Then, picking such a bundle at random, we have that $p_t \le \frac{1}{n}$. 
In this case, after $n-1$ steps the value of $\beta$ will be $(1 - \frac{1}{n})^{n-1} > \frac{1}{e}$, implying that we can take $\rho = \frac{1}{e} \ge 0.36$.

However, the value of $p_t$ depends both on $\rho$ and on $t$. 
We emphasize here that the dependence on $\rho$ is in two respects. One is that $\rho$ serves as a lower bound for the values that bundles in the distribution must have. The other is that it serves as an upper bound on the value of every single item, a fact that we use when arguing that there must be distributions certifying that the value of $p_t$ is ``small".
As for the dependence on $t$, in the initial steps of the algorithm, when $t$ is large, the value of $p_t$ is smaller than $1/n$. For example, for $t=1$ we have that $p_t \le \frac{1}{2n}$, because every APS bundle can be partitioned into two disjoint bundles, each of value at least $\rho$ (this holds for $\rho \le \frac{1}{3}$, when no single item has value above $\rho$), and we can pick one such sub-bundle at random, rather than the whole APS bundle. On the other hand, as $t$ becomes smaller, $p_t$ increases. For example, the value of some APS bundles might drop below $\rho$, then they can no longer be chosen, and this increases the probability of being selected for items in other APS bundles. 
For all $t$ bounded away from $\rho$, this increase is by constant factors (compared to $\frac{1}{2n}$) which increase as we approach $\rho$. Towards $\rho$ the increase accelerates, and for $t = \rho$ we have that $p_t = 1$.

Hence, the question of whether we get a value of $\rho$ that is larger than $\frac{1}{4}$ depends on the exact shape of the function $p_t$, and specifically, on the rate in which $p_t$ increases as $t$ decreases. Fortunately, we are able to give strong enough upper bounds on $p_t$. This can be done using a relatively straightforward analysis. The main principles behind this analysis are as follows.  For every value $t' \ge \rho$, using the fact that no single item has value above $\rho$, we can determine distributions over acceptable bundles associated with a single hypothetical bundle $B^* = B \cap S$ of value $t'$. {Here $B$ is some APS bundle, and one can think of $S$ as the remaining items $\Mm_r$ for some round $r$.}
For example, if $\frac{3}{2}\rho \le t' < 2\rho$ then it can be shown that there must be three acceptable sub-bundles of $B^*$ such that each item is included in at most two of them, giving rise to a distribution over items of $B^*$ in which each item is selected with probability at most $\frac{2}{3}$. Thus, we afford to multiply $\lambda_B$ by $\frac{3}{2}$ without violating the constraint that each item is sampled with probability at most $\rho$. Doing so separately for every bundle in the APS fractional partition, the sum of new values of $\lambda_B$ need not be one. Normalizing it to~1 normalizes $\frac{1}{n}$ to become $p_S$. 
For more details, see Lemma~\ref{partition} and Lemma~\ref{distribution}.

Finally, to upper bound $p_t$ we maximize $p_S$ over all hypothetical profiles of values of $v(B \cap S)$ (where $B$ ranges over all bundles of the APS partition $\Bb$) that are consistent with $t$ (have average value $t$, where average is computed with respect to the weights $\lambda_B$). {It is not difficult to show that this maximum is attained when each value $v(B \cap S)$ is just slightly below a half integral multiple of $\rho$. Hence, for each $t$, providing an upper bound on $\rho_t$ can be modeled as the optimal solution for a linear program (LP) with finitely many variables. The value of these optimal solutions is characterized exactly, by exhibiting matching primal and dual solutions to the corresponding LPs.}

{Proposition~\ref{pro:valbound} states our upper bounds on the probability $p$. (For a proof see Section~\ref{sec:pBound}, where this proposition is rephrased as Corollary~\ref{valbound}.)

\begin{proposition}\label{pro:valbound}
    Suppose that no item has value larger than $\alpha$, and the value of the potential function is $\beta$, for some $\beta \ge\alpha$. Then there is a distribution over acceptable bundles in which no item has probability larger than $p$, for $p$ as listed below.
    \begin{enumerate}
        \item if $\alpha \leq 3/11$, then
        \[p \le \frac{1 - \alpha}{2(\beta - \alpha)n}\]
        \item if $\alpha \geq 3/11$ and $\beta \geq 3\alpha$, then
        \[p \le \frac{2(1 - 3\alpha)}{(\beta - 12\alpha + 3)n}\]
        \item if $\alpha \geq 3/11$ and $\beta < 3\alpha$, then
        \[p \le \frac{4\alpha}{3(\beta - \alpha)n}\]
    \end{enumerate}
    (When $\alpha = 3/11$, the three bounds coincide.)
\end{proposition}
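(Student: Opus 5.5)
Normalize so that the APS equals $1$, with APS fractional partition $\{(B,\lambda_B)\}$, $v(B)\ge 1$, and no item of value above $\alpha$. Fix $S$ with $\sum_B \lambda_B v(B\cap S)=\beta$, and write $x_B = v(B\cap S)$. The plan has three steps: build, for each bundle, a local distribution over acceptable sub-bundles whose covering efficiency depends only on $x_B$; glue these into a global distribution; and bound the worst case over profiles $(x_B)$ by a small LP.

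\emph{Local step (the content of Lemma~\ref{partition} and Lemma~\ref{distribution}).} For a single set $B^*=B\cap S$ of value $x\ge\alpha$ in which every item has value at most $\alpha$, I will exhibit $k(x)$ acceptable subsets of $B^*$ such that each item lies in at most $c(x)$ of them. Set $g(x)=k(x)/c(x)$.
\begin{itemize}
\item The disjoint case uses XOS. Take the additive clause attaining $v(B^*)$ and split greedily by that clause. With item values at most $\alpha$, this yields $\lfloor x/(2\alpha)\rfloor$ disjoint bundles of value at least $\alpha$ once $x\ge 2\alpha$ (roughly), since each greedy bundle has value less than $2\alpha$.
\item The overlapping case handles $x\in[\tfrac{j}{2}\cdot 2\alpha,\ldots)$, that is, half-integral multiples. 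Order the items along a line by the clause and take cyclic intervals. This gives, for example, three bundles each item lies in at most two of when $x\ge\tfrac32\cdot$(threshold), so $g=\tfrac32$.
\end{itemize}
In general $g(x)$ is a step function jumping at half-integer multiples of $\alpha$ (in the $3\alpha$ scale).

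\emph{Gluing.} Choose bundle $B$ with weight proportional to $\lambda_B g(x_B)$, then one of its $k$ sub-bundles uniformly. Item $e$ then has probability at most
\[
\frac{\sum_{B\ni e}\lambda_B/n'}{\sum_B \lambda_B g(x_B)}\quad\text{with effective bound}\quad \frac{1}{n\sum_B\lambda_B g(x_B)},
\]
because item $e$ receives mass $\lambda_B g(x_B)\cdot c/k=\lambda_B$ from each bundle containing it, and $\sum_{B\ni e}\lambda_B\le 1/n$. It therefore suffices to lower bound $\sum_B\lambda_B g(x_B)$ subject to $\sum_B\lambda_B x_B=\beta$, $\sum_B\lambda_B=1$, $0\le x_B$, and $x_B\le v(B)$. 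We cannot assume $x_B\ge1$, since items of $B$ were removed. Bundles with $x_B<\alpha$ get $g=0$.

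\emph{LP step, the expected main obstacle.} Because $g$ is a nondecreasing step function, the adversary places each $x_B$ just below a breakpoint (or at $0$). We get a finite LP over weights $\mu_j$ on breakpoints $y_j$ (that is, $0$, $\alpha$, $2\alpha$, $3\alpha$, \ldots, or the relevant half-integral ones) with $\sum_j\mu_j=1$ and $\sum_j\mu_j y_j=\beta$, minimizing $\sum_j\mu_j g_j$. The minimum is the lower convex envelope of the points $(y_j,g_j)$ evaluated at $\beta$, so $p\le 1/(n\cdot\text{envelope}(\beta))$. The three cases of the proposition should correspond to which chord of the envelope is active.
\begin{itemize}
\item For small $\beta<3\alpha$ (case 3), the chord runs from $(\alpha^-,0)$ to near $3\alpha$ with slope $\tfrac{3}{4\alpha}$, giving $\tfrac{3(\beta-\alpha)}{4\alpha}$.
\item For $\alpha\le 3/11$ (case 1), the chord runs from $(\alpha,0)$ to a point involving the cap $x_B\le\ldots$ at value $1$. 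The upper constraint matters because $v(B)\ge1$ alone does not bound $x_B$. Instead, the useful fact is that an entire bundle $B$ with $x_B$ large gives large $g$, and the relevant tangent point is $y=1$, $g=2$, giving $\tfrac{2(\beta-\alpha)}{1-\alpha}$.
\item Case 2 uses the chord between $(3\alpha,\cdot)$ and $(1,2)$.
\end{itemize}
I will verify each case by exhibiting a primal point (a two-point distribution) and a dual line $a+by$ lying below all $(y_j,g_j)$. The delicate part is checking dual feasibility at all breakpoints simultaneously. That check is exactly where the threshold $\alpha=3/11$ appears, as the value where the line through $(\alpha,0)$ and $(1,2)$ passes through the $3\alpha$ breakpoint; this is consistent with the three bounds coinciding there.
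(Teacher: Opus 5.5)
Your architecture is the paper's: local families of sub-bundles (Lemma~\ref{partition}), gluing with weights $\lambda_B g(x_B)$ so that each item has probability at most $1/(n\sum_B\lambda_B g(x_B))$ (Corollary~\ref{choice}), and the four-class LP of Claim~\ref{lambdabound}. Reading the LP optimum off the lower convex envelope of $(\alpha,0),(2\alpha,1),(3\alpha,3/2),(1,2)$ is a tidier version of the exchange arguments in Lemma~\ref{wclass}, and your chords and your explanation of $3/11$ are correct.

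The gap is the right endpoint $(1,2)$. It is the rightmost breakpoint only if $x_B\le 1$ for every bundle. You explicitly decline to assume this, and propose instead that ``a bundle with $x_B$ large gives large $g$.'' That does not work. No family of sub-bundles of a single bundle can beat $g(x)\approx x/(2\alpha)$: if all items have value just below $\alpha$, every acceptable sub-bundle needs two items. So once $v(B)>1$ is allowed, the points $(t\alpha^-,t/2)$ for larger $t$ enter your LP. For every $\alpha>1/5$ they eventually fall below the line through $(\alpha,0)$ and $(1,2)$. At $\alpha=1/4$, already $t=5$ does: put weight $1/4$ at $(\alpha^-,0)$ and $3/4$ at $(5\alpha^-,5/2)$. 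This gives $\beta=1$ but $\sum_B\lambda_B g(x_B)=15/8<2$. So your dual line is infeasible, and your method certifies only $p\le 8/(15n)$ instead of $1/(2n)$.

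The missing ingredient is the normalization of Lemma~\ref{wlog}(2). For each support bundle $B$, take the additive clause of $v$ that attains $v(B)$, restrict it to $B$, and scale it to have value $1$ on $B$; let $\tilde v$ be the maximum of these clauses. Then:
\begin{itemize}
\item $\tilde v\le v$;
\item the APS is still $1$, via the same fractional partition;
\item no item exceeds $\alpha$;
\item every set has $\tilde v$-value at most $1$;
\item $\tilde v$ is additive on each $B$, which your expected-loss and gluing step also needs.
\end{itemize}
The paper runs everything with $\tilde v$, including the potential in this proposition. This is harmless, since an allocation that is good for $\tilde v$ is good for $v$. With $x_B\le 1$ in place, your LP argument is complete.

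A smaller correction to your local step. The count $\lfloor x/(2\alpha)\rfloor$ undercounts: on $[3\alpha,4\alpha)$ it gives one disjoint bundle. Your own greedy argument actually yields more than $(x-\alpha)/(2\alpha)$ pieces, hence two from $3\alpha$ on. What you need is $g=1,\,3/2,\,2$ on $[\alpha,2\alpha)$, $[2\alpha,3\alpha)$, $[3\alpha,1]$ respectively, so the jumps are at integer multiples of $\alpha$, not half-integer ones. The paper gets this cleanly:
\begin{itemize}
\item repeatedly merge two items whose combined value is below $\alpha$;
\item afterwards any two merged items form an acceptable pair;
\item a set of value at least $(t-1)\alpha$ then contains at least $t$ merged items;
\item $t$ cyclic pairs cover each item at most twice, giving $g=t/2$.
\end{itemize}
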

}

\begin{remark}
\label{rem:smallItems}
    {The parameter $\alpha$ serves two roles in the above proposition, being both the minimum value of an acceptable bundle, and an upper bound on the value of individual items. If $\alpha$ serves only the first role, then as the upper bound on the value of individual items tends to~0, the upper bound on $p$ converges from above to $\frac{\alpha(1 - \alpha)}{(\beta - \alpha)n}$, for all $0 < \alpha < \beta \le 1$.}  
\end{remark}

For a plot of $p_t$ when {no item has value larger than} $\rho = 0.25$ and $n = 100$, see {Figure~\ref{fig:pt}}.
\begin{figure}[h]
\includegraphics[width=17cm]{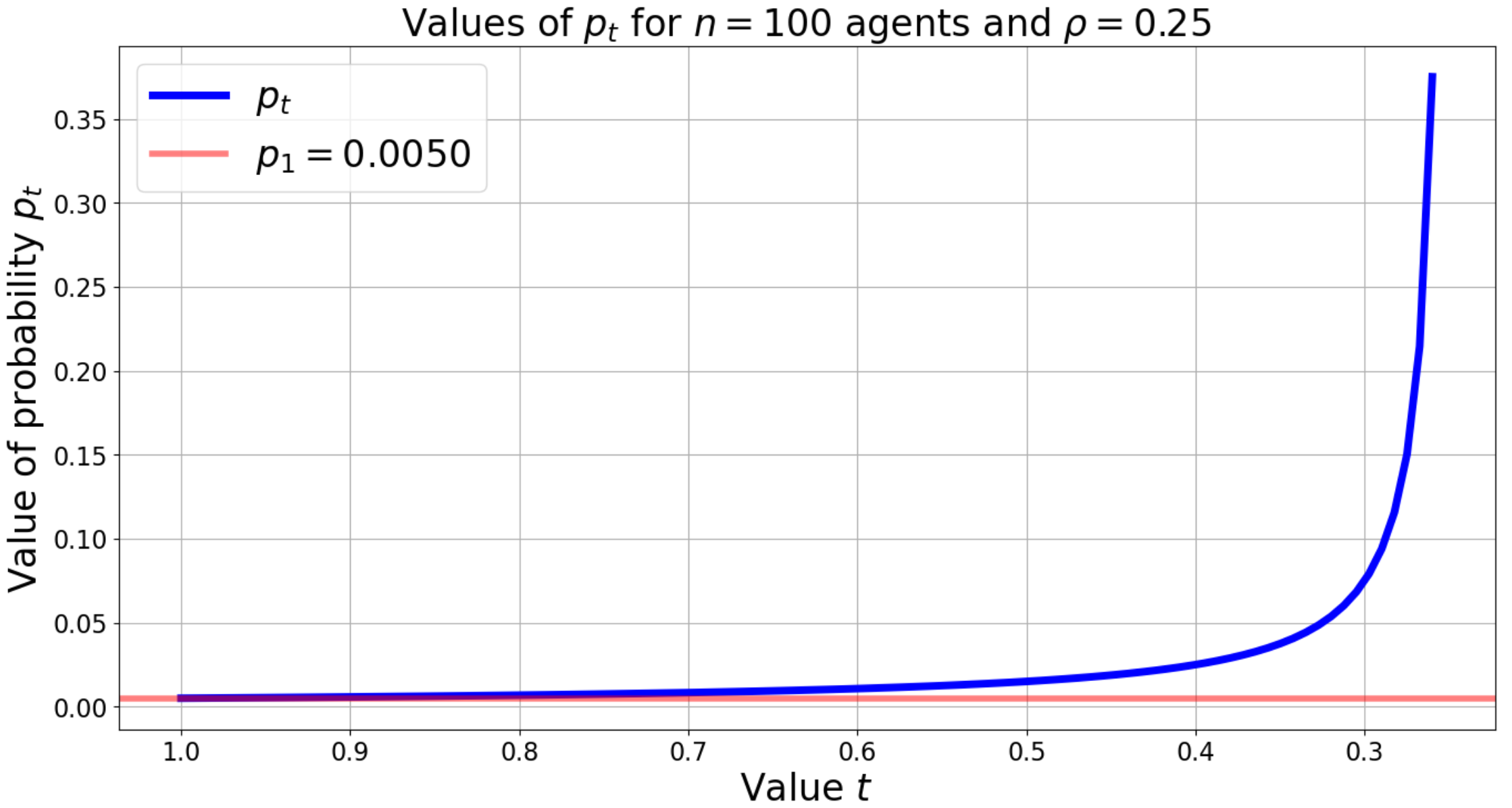}
\label{fig:pt}
\caption{The function $p_t$ depends on the value of $\rho$, and also has a weak dependency on the value on $n$. Hence, the above plot applies for specific values of $\rho$ and $n$, but its general form is the same for all $\rho$ and large $n$.}
\label{fig:pt}
\end{figure}
Observe that $p_1 = \frac{1}{2n}$, so the process starts of with very small $p_t$, but as $\beta$ approaches $\rho$, $p_t$ approaches~1, greatly accelerating towards the end ($p_t$ is a convex function of $t$).

For a given value of $n$, we say that a proposed value of $\rho$ is {\em eligible} if tracing the value of $\beta$ for $n-1$ steps (where in each step the previous value $t$ is updated to the new value $(1 - p_t)\cdot t$), the value of $\beta$ remains at least $\rho$. The largest eligible $\rho$ is the guaranteed MMS approximation for the greedy allocation algorithm.
{The fact that $\rho$ being eligible guarantees the existence of $\rho$-MMS allocations follows from Observation~\ref{processobs}.
So, to estimate the performance of the algorithm it suffices to trace the value of $\beta$ throughout this update process.}

It remains to lower bound the value of the largest eligible $\rho$. For constant values of $n$, this can easily be done (up to rounding errors) using a computer program that simply computes the outcome of $n$ steps of the update process. For very large values of $n$, this can be done with great accuracy by approximating the discrete updating process by a continuous process (the approximation error decreases as $n$ grows), and obtaining an analytic expression for the continuous process. 

{\begin{proposition}
\label{pro:bigrhoinf}
As $n$ tends to infinity, the approximation $\rho$ approaches a value that is at least as high as the solution to the equality $2(12\rho - 3)\ln(3\rho) = (1 - 3\rho)(3\ln 3 - 4)$. In particular, $\rho > 0.276 > \frac{11}{40}$.    
\end{proposition}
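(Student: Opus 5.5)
We analyse the potential update process in the limit $n\to\infty$ via a continuous approximation. Normalize the APS to $1$, fix a candidate $\rho\ge 3/11$, and write $t_r$ for the value of $\beta$ after $r$ rounds, with $t_0=1$ and $t_{r+1}=(1-p_{t_r})t_r$. By Proposition~\ref{pro:valbound}, $p_t\le g(t)/n$, where
\[
g(t)=\frac{2(1-3\rho)}{t-12\rho+3}\ \text{ for } t\ge 3\rho, \qquad g(t)=\frac{4\rho}{3(t-\rho)}\ \text{ for } \rho\le t<3\rho .
\]
Both pieces are non-increasing in $t$, so $g$ is monotone. The recursion is therefore dominated by $t_{r+1}\ge t_r(1-g(t_r)/n)$. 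Rescale time as $s=r/n\in[0,1]$. As $n\to\infty$ this becomes the ODE $\frac{dt}{ds}=-t\,g(t)$ with $t(0)=1$, and $\rho$ is eligible in the limit exactly when the time to reach $t=\rho$ is at least $1$. That is, we need
\[
\int_\rho^1 \frac{dt}{t\,g(t)}\ \ge 1 .
\]

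We evaluate the integral in two pieces. On $[3\rho,1]$ we have $\frac{1}{tg(t)}=\frac{t-12\rho+3}{2(1-3\rho)t}$, whose integral is
\[
\frac{1}{2(1-3\rho)}\Bigl[(1-3\rho)+(3-12\rho)\ln\tfrac{1}{3\rho}\Bigr]=\frac12+\frac{(12\rho-3)\ln(3\rho)}{2(1-3\rho)} .
\]
On $[\rho,3\rho]$ we have $\frac{3(t-\rho)}{4\rho t}$, whose integral is $\frac{3}{4\rho}(2\rho-\rho\ln 3)=\frac{3}{2}-\frac34\ln 3$.

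Setting the sum equal to $1$ gives
\[
\frac{(12\rho-3)\ln(3\rho)}{2(1-3\rho)}=\frac{3\ln3-4}{4},
\]
which is equivalent to the stated equation $2(12\rho-3)\ln(3\rho)=(1-3\rho)(3\ln3-4)$ up to the constant factor, which should be checked carefully. Since the total integral is monotone decreasing in $\rho$ (each integrand grows as $g$ shrinks), every $\rho$ below the root is eligible in the limit. A numerical check at $\rho=11/40$ confirms that the integral exceeds $1$ there, and that the root is about $0.27677$.

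To make this rigorous for large $n$, we compare the discrete process with the ODE. Use $\ln(1-x)\ge -x-x^2$ for small $x$. While $t$ is bounded away from $\rho$, say $t\ge\rho+\delta$, each step decreases $\ln t$ by at most $g(t_r)/n+O(1/n^2)$. Monotonicity of $g$ gives a Riemann-sum comparison, so the number of steps to go from $1$ to $\rho+\delta$ is at least $n\bigl(\int_{\rho+\delta}^1\frac{dt}{tg(t)}-o(1)\bigr)$. Choose $\rho$ strictly below the root, and choose $\delta$ so small that the truncated integral still exceeds $1+\epsilon$. Then for $n$ large, $n-1$ steps never bring $t$ below $\rho+\delta\ge\rho$, so $\rho$ is eligible.

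The main obstacle is the error control near $t=\rho$, where $g$ blows up and per-step changes stop being small. Truncating at $\rho+\delta$, as above, avoids this, because $g$ is bounded there and the $O(1/n^2)$ errors sum to $O(1/n)$.
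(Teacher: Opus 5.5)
Your proposal is correct and follows essentially the paper's route, with the paper's $\alpha$ playing the role of your $\rho$. The limiting condition $\int_\rho^1 \frac{dt}{t\,g(t)}\ge 1$, split at $t=3\rho$, gives exactly the stated equation with no stray constant factor, and your truncation at $\rho+\delta$ matches the paper's use of a target value $\alpha+x$ with $x>0$ in the proof of \cref{diffgamma2}. The only technical difference is that the paper compares with the integral by cutting the value range into $d=\Theta(\sqrt n)$ epochs (\cref{lowergammaiters}, \cref{lowergamma2}) rather than by your per-step comparison of $\ln t$. One small fix: $g$ is monotone across $t=3\rho$ because both pieces equal $2/3$ there, not merely because each piece is non-increasing.
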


For the proof of Proposition~\ref{pro:bigrhoinf}, see Section~\ref{sec:littlealpha} (Theorem~\ref{rhoinfinity}) and Section~\ref{sec:bigAlpha} (Theorem~\ref{bigrhoinf}).}

\begin{remark}
    {If the upper bound on the value of individual items converges to~0 (as in Remark~\ref{rem:smallItems}), the optimal $\rho$ converges to a solution to $1 -\rho + \rho\ln\rho = \rho(1 - \rho)$, giving $\rho \approx 0.394$. Under the same condition, the approach initiated by~\cite{GhodsiHSSY22} (discussed in Section~\ref{sec:intro}) converges (from below) to $\frac{1}{4}$-MMS.}
\end{remark}

For intermediate values of $n$, we can get a nearly tight lower bound using a {\em doubling} procedure, as follows. Let $\rho_\epsilon(n)$ be the highest value of $\rho$ such that when there are $n$ agents, after $n$ steps the process has value $\rho + \epsilon$. (As $\epsilon$ tends to~0, $\rho_\epsilon(n)$ tends to the optimal $\rho(n)$.) 
We show that for every $\epsilon$, there is some $n_{\epsilon}$ such that for every $n > n_{\epsilon}$, $\rho_{\epsilon}(n) \ge \rho_{\epsilon}(2n)$.  This implies that beyond some $n_{\epsilon}$, the analytic bounds that can be derived for $\rho_{\epsilon}(n)$ when $n$ tends to infinity holds for all $n > n_{\epsilon}$.
The precise description of this process is given in Lemma~\ref{doublingineq} and Corollary~\ref{doubling}.

Using the principles above, we determine that some $\rho$ larger than $\frac{11}{40}$ is eligible for all $n$.

\begin{theorem}
\label{thm:APSMMSgap}
    For every XOS valuation and every value of $n$, the ratio between the APS and the MMS is less than $\frac{40}{11}$.
\end{theorem}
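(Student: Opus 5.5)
We prove the statement by running the greedy algorithm of Section~\ref{sec:identical} on the identical-valuation instance. Fix an XOS valuation $v$ and $n$, and normalize so that $\aps(\Mm, v, 1/n) = 1$. Fix a target $\rho > \frac{11}{40}$, taken to be the value $\alpha$ from Proposition~\ref{pro:bigrhoinf} minus a small slack. It suffices to exhibit a partition $A_1, \ldots, A_n$ of $\Mm$ with $v(A_j) \ge \rho$ for all $j$, since then $\mms(\Mm, v, 1/n) \ge \rho$.

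\textbf{Reduction to small items.} If some item $e$ has $v(e) > \rho$, give it to one agent and remove it. The remaining collection $\{B\setminus\{e\}\}$, with the bundles containing $e$ discarded (they carry total weight at most $1/n$) and the rest rescaled, is a fractional $\frac{1}{n-1}$-partition of $\Mm \setminus \{e\}$ with all values at least $1$. Iterating, we may assume no item exceeds $\rho$ and we still have $n$ agents with APS value at least $1$.

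\textbf{Greedy phase.} Define $\beta(S) = \sum_B \lambda_B v(B\cap S)$ from the APS partition. Properties 1--3 are immediate: monotonicity of $v$ gives 1, $v(B)\ge 1$ gives 2, and since $\beta(S)$ is an average of values $v(B\cap S)\le v(S)$, we get $v(S)\ge\beta(S)$, which gives 3. At each round $r$ with $\beta(\Mm_r) = t \ge \rho$, Observation~\ref{processobs} together with Proposition~\ref{pro:valbound} supplies an acceptable $A_r \subseteq \Mm_r$ with $\beta(\Mm_{r+1}) \ge (1-p(t))\,t$, where $p(t)$ is the explicit bound of Proposition~\ref{pro:valbound} with $\alpha=\rho$. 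Because $t \mapsto (1-p(t))t$ is monotone increasing on the relevant range, $\beta(\Mm_r)$ dominates the deterministic sequence $t_1 = 1$, $t_{r+1} = (1-p(t_r))t_r$. After $n-1$ rounds we hand the remaining set $\Mm_n$ to the last agent. This succeeds provided $\rho$ is eligible for $n$, i.e.\ $t_n \ge \rho$.

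\textbf{Eligibility for every $n$.} This is the main obstacle, and we split it by the size of $n$.
\begin{itemize}
\item For large $n$, Proposition~\ref{pro:bigrhoinf} (via Theorems~\ref{rhoinfinity} and \ref{bigrhoinf}) gives eligibility in the limit. The doubling argument (Lemma~\ref{doublingineq}, Corollary~\ref{doubling}) shows $\rho_\epsilon(n) \ge \rho_\epsilon(2n)$ beyond some $n_\epsilon$, so the limiting bound transfers to every $n > n_\epsilon$ once $\epsilon$ is chosen below the slack between $\rho$ and the limiting value.
\item For the finitely many $n \le n_\epsilon$, we verify $t_n \ge \rho$ directly by running the recurrence numerically, with rounding controlled by downward rounding at every step. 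As a sanity check, $p_t \le \frac{1}{n}\cdot O(1)$ only fails near $t=\rho$, and small $n$ gives fewer steps, so small cases should be easier.
\item For tiny $n$ (say $n \le 3$), a direct argument may be simpler. When $n=1$, MMS equals APS. When $n = 2$, $\beta=1$ allows one half-split, giving $p \le \frac{1}{2n}$, and the recurrence is trivially satisfied.
\end{itemize}

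The delicate points are, first, confirming that the monotonicity of the update map is valid across the case boundary $\alpha = 3/11$ in Proposition~\ref{pro:valbound}. Second, $n_\epsilon$ must be explicit enough that the finite computation is actually feasible. If that computation is too large, the fallback is to bound the discrete recurrence against the continuous ODE $\dot t = -n\,p(t)\,t$ with an explicit $O(1/n)$ error term, using convexity of $p$.
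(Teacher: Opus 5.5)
Your proposal is correct and follows essentially the same route as the paper. The paper reduces to small items via Lemma~\ref{giveitem}, runs the greedy algorithm with the potential $\beta$, and dominates $\beta$ by the deterministic process of Lemma~\ref{algogamma2}; it then shows $\gamma_n^n \ge \alpha$ for $\alpha = \frac{11}{40}+\frac{1}{1000}$ using the asymptotic bound plus doubling (valid for $n \ge 54$) and a numerical check of $n \le 100$.

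A few small slips. Since $\rho > 3/11$, the relevant breakpoint of the update map is $t = 3\rho$, not $\alpha = 3/11$; the two formulas for $p$ agree there at $\frac{2}{3n}$, so monotonicity is unaffected. The doubling lemma to cite in this range is Lemma~\ref{doubling2}. Finally, removing big items leaves fewer than $n$ agents, which is harmless because you prove the claim for every $n$.
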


\subsection{Different valuations, but large $n$}
\label{sec:different}

We now consider settings with $n$ agents in which each agent $i$ has a potentially different XOS valuation $v_i$. For simplicity of the presentation, we also return to considering $\rho$-MMS allocations rather than $\rho$-APS allocations, though we remark that all our results extent to $\rho$-APS allocations. For each $i \in \agents$, the MMS partition for agent $i$ will be denoted by $B_1^i, \ldots, B_n^i$, and we assume without loss of generality that $v_i({B_j^i}) = 1$ for all $i$ and $j$ (all MMS bundles have value~1). 

To adapt the greedy algorithm to a setting in which the valuations of agents need not be identical, we need to define a new potential function that guides the greedy algorithm. Inspired by the identical valuations case, we wish to base this new potential function on $\beta$ that worked well for the identical valuation case. However, the definition of $\beta$ depends on the valuation function $v$. In the general case, there are $n$ different function $v_i$, and hence $n$ different $\beta_i$, where $\beta_i(S) = \sum_j \frac{1}{n} \cdot v_i(B_j^i \cap S)$. We could not find a way of aggregating the entries of the $n$-dimensional vector $(\beta_1, \ldots, \beta_n)$ into one scalar $\beta$ that will serve as a useful potential function for the greedy algorithm.

As an example, consider the following approach, that we refer to as the {\em greedy-average algorithm}. In round $r$, let $\agents_r$ denote the set of $n - r + 1$ active agents that did not yet receive a bundle. Select the agent $i \in \agents_r$ with currently highest $\beta_i$ value.  Then, greedily select a bundle $A_r \subset \items_r$ that is acceptable for $i$ and maximizes $\sum_{j \in \agents_r \setminus \{i\}} \beta_j(\items_r \setminus A_r)$. This version of the greedy algorithm aims to keep the average $\beta$ value high. However, it does not prevent the situation in which for a small number of agents, the value of their respective $\beta$ function will drop below $\rho$, and once this happens, the respective $\items_r$ might not contain any bundle that is acceptable for them.

Due to examples such as those above, we abandon the greedy approach, and replace it by a randomized approach. The basic structure of the randomized approach is similar to that of the greedy-average algorithm, except for a key difference: agent $i$ does not choose $A_r$ greedily, but rather by random sampling of an acceptable bundle from her associated distribution $D_{\items_r}$ at the time (a distribution that also depends on $v_i$, though we omit $i$ from the notation $D_{\items_r}$). The hope is that for every possible input allocation instance, the random process will offer positive probability of succeeding, implying that a $\rho$-APS allocation exists.

Unfortunately, we do not know how to analyse the randomized algorithm when the number of agents $n$ is small. However, we propose a concrete plan (which we will successfully implement) for how to analyse it when $n$ is sufficiently large. Our plan is based on the known fact that a sum of a collection of many random variables that are independent and bounded is concentrated around its expectation. When $n$ is large, the changes to the potential $\beta_i$ of agent $i$ along the run of the randomized algorithm are basically a sum of many small random changes, one change per round of the algorithm. Hence we may expect at every given round that the potential functions of all active agents will have roughly the same value. If these values were exactly the same, this would mimic the analysis for the case of identical valuations, for which we could prove that $\rho > \frac{11}{40}$. The hope is that being roughly the same rather than exactly the same does not make a big difference to the analysis. 

Guided with the above intuition, our plan has three steps. But before presenting these steps, we present two simplifying assumptions that we make in the first step, and remove in the other two steps.

\begin{enumerate}
    \item {\em No large intersections.} For every two agents $i$ and $j$, every MMS bundle $B_i^k$ for agent $i$ has only a small effect on $\beta_j$. That is, for some $c$ that will depend on $n$ (as $1/\polylog  n$), $\beta_j(\items \setminus B_i^k) \ge 1 - c$. Observe that $\beta_i(\items) = 1$, and {that} the small effect condition also implies that  $\beta_j(S \setminus B_i^k) \ge \beta_j(S) - c$ for every $S \subset \items$. The no large intersection assumption is useful for arguing that as the algorithm progresses, $\beta_j$ decreases in small steps (a property needed for concentration results), rather than in large jumps.

    \item {\em No large items}. For every agent $i$ and item $e$ we assume that $v_i(e) \le \rho$. This assumption is needed in order to establish sufficiently small values for $p_t$ as a function of $t$ (as explained in Section~\ref{sec:identical}). 
\end{enumerate}

Armed with the above assumptions we present our three steps. Later, we shall present a more detailed overview of how we implement them. {(In the full proof provided in the appendix, step (2) below is performed before step (1), so as to avoid duplications in proofs. In our overview we present step (1) first, because if it does not work by itself, there is no point in performing step (2).)}

\begin{enumerate}
    \item {\em Handling an ``easy" special case.} We start by analysing the randomized algorithm (for large $n$) while assuming the above two assumptions. This is done by establishing that the path followed by the $\beta$ values in the greedy process with identical valuations is an ``attractor" for the path followed {by $\beta_i$} in the randomized process. {In other words,} small fluctuations from the former path \textit{do not} start off a ripple effect that causes the latter {path} to diverge greatly from the former path. The need to prove the attractor property places an upper bound on how large $c$ can be in the {\em no large intersections} property. 

\item {\em Removing the assumption that there are no large intersections.} Without this assumption, we lose the concentration results, and the analysis completely breaks down. To overcome this difficulty, we modify the randomized algorithm, so that it gives special treatment to sets that violate the large intersection property. Note that all sets in an MMS partition of an agent might violate this property, each with respect to a different other agent. {Consequently, the treatment is  ``special" not in the sense of being applied only rarely, but rather in the sense of differing from the treatment given to sets that do not violate the large intersection property.} 
Our modifications work only if $c$ in the {\em no large intersections} property is sufficiently large, as otherwise their effect on $\rho$ becomes larger than what we can tolerate. Luckily, there are ``middle ground" values for $c$ that allow us to both handle the ``easy" special cases, and to remove the assumption of no large intersections.

\item {\em Removing the assumption that there are no large items.} In most algorithms for finding approximate MMS allocations {(including \cite{GhodsiHSSY22})}, this is the easiest step. If $v_i(e) \ge \rho$ for some agent $i$ and item $e$, simply give item $e$ to agent $i$, remove $i$ (as item $e$ suffices from her MMS approximation), and the MMS of the remaining agents in the remaining instance does not decrease. Hence the allocation problem with $n$ agents reduces to a seemingly easier one with $n-1$ agents. However, in our case, we cannot afford to reduce the number of agents, as our analysis requires large $n$. This seems problematic, because when there are many large items, each of value 1 to all agents, it is without loss of generality that each large item is allocated to a single agent, and the number of agents is reduced dramatically. So is the situation hopeless? Not really, because we may choose the identity of the agents that receive the large items. This way, even though we remain with an instance with few agents, their valuations are not arbitrary, but rather chosen carefully among the original $n$ valuations. We show that such a choice exists that allows us to give each one of them at least $\rho$-MMS.
    
\end{enumerate}

Now we present more details on each of the steps.

\subsubsection{Handling an ``easy" special case}

As explained above, we prove for each {agent $i$ that the path of $\beta_i$ values} that it follows along rounds of the randomized algorithm stays close to the ``ideal" path of $\beta$ values that is followed by the greedy allocation algorithm for the case of identical valuations. 
The proofs that the paths stay close is highly technical. It ``obviously" needs to use martingale concentration results, but the details of how they are used are quite complicated, due to positive correlations between deviations for different agents. For example, if the agent who selects an acceptable bundle in a certain round is below the ideal path in that round, then her associated $p$ value at that round is larger than that of the ideal path, amplifying the deviation of other agents from the ideal path.  

Due to these positive correlations, closeness of paths is shown to hold only if $c$ in the {\em no large intersections} assumption is sufficiently small, and only for the main part of the path --- once the value of $\beta$ gets very close to {$\rho$}, 
the proof breaks down. Nevertheless, this suffices in order to obtain a value of $\rho > \frac{11}{40}$. 

{
\begin{proposition}
\label{pro:difftheorem}
    Let $\rho$ be as in Proposition~\ref{pro:bigrhoinf}, and consider any $\alpha < \rho$. Then there exists $n_\alpha$ such that for all $n \geq n_\alpha$, if the {\em no large intersections} property holds (with $c$ sufficiently small), then the randomized algorithm (with high probability) produces an $\alpha$-MMS (and $\alpha$-APS) allocation. 
\end{proposition}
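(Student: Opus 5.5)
Throughout, normalize so that every agent has APS value $1$, and fix for each agent $i$ her APS partition $\{(B,\lambda^i_B)\}$ and potential $\beta_i(S)=\sum_B\lambda^i_B v_i(B\cap S)$. Also assume no large items: $v_i(e)\le\alpha$ for all $i,e$. Fix a target $\alpha'$ with $\alpha<\alpha'<\rho$. Let $g$ be the ideal deterministic trajectory. It is the continuous limit of $t\mapsto(1-p_t)t$, with $p_t$ bounded as in Proposition~\ref{pro:valbound} with acceptability threshold $\alpha$. Written in rescaled time $s=r/n$, it solves
\[
g'(s)=-n\,p_{g(s)}\,g(s),\qquad g(0)=1 .
\]
By Proposition~\ref{pro:bigrhoinf} and the choice $\alpha'<\rho$, this trajectory stays above $\alpha'$ up to time $s=1$ with a constant margin.

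The plan is to show that, with high probability, every active agent's potential $\beta_i(\items_r)$ tracks $g(r/n)$ within some $\delta=o(1)$ for all rounds. In the last round the remaining agent then has $\beta_i\ge\alpha'-\delta>\alpha$, so an acceptable bundle exists.

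\textbf{Step 1: one-round drift.} In round $r$ the selecting agent $k$ draws $T\sim D^k_{\items_r}$. Every item lies in $T$ with probability at most $p^{(k)}:=p_{\beta_k(\items_r)}$. By the XOS argument behind Observation~\ref{processobs}, applied to another agent $j$'s additive supporting clauses on each $B\cap\items_r$, we get
\[
\E{\beta_j(\items_r\setminus T)}\ge (1-p^{(k)})\,\beta_j(\items_r).
\]
So the drift of $\beta_j$ is controlled by the selecting agent's potential, not by agent $j$'s own. Under the no large intersections assumption, the one-step decrement of $\beta_j$ is at most $c$ almost surely. Moreover, $\sum_r p^{(k_r)}=O(1)$ along the main part of the path.

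\textbf{Step 2: attractor argument.} Let $m_r=\min_{i\text{ active}}\beta_i(\items_r)$ and $\delta_r=\max(0,g(r/n)-m_r)$. Since $p_t$ is decreasing in $t$, a selecting agent that is above the ideal path induces a smaller drift. The worst case is a selecting agent at $m_r$, giving drift $p_{m_r}$. Write $\beta_j(\items_r)$ as a product of multiplicative factors $\prod(1-\text{step})$. Its logarithm equals a predictable sum plus a martingale with increments $O(c)$. The predictable sum is at least $-\sum p_{m_r}$, and the Lipschitz constant $L$ of $t\mapsto p_t$ (bounded on $[\alpha'-\delta,1]$) converts lag into extra drift. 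Azuma's inequality, with variance proxy $\sum_r (c\,p)\lesssim c$, gives deviation $O(\sqrt{c\log n})$ uniformly over all $n$ agents and rounds. A union bound over $n^2$ events costs only a $\log n$ factor. Combining these with a discrete Grönwall inequality gives
\[
\delta_r\le e^{L'}\bigl(O(\sqrt{c\log n})+O(1/n)\bigr).
\]
Here $O(1/n)$ is the discretization error against $g$, handled as in the doubling analysis of Corollary~\ref{doubling}. With $c=1/\polylog n$ this tends to $0$.

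\textbf{Step 3: the endgame, the main obstacle.} Near the end, $p_t$ grows toward $1$ and $L$ blows up, so Grönwall fails there. My first attempt is to stop the coupling at the last round at which $g$ exceeds $\alpha'+\eta$. Up to that point $L$ is bounded, and the margin $\rho-\alpha'$ should leave $\eta$ constant. For the remaining rounds, the surviving agents all have potential at least $\alpha'+\eta-\delta$, and I would argue with a crude bound. The difficulty is that a constant fraction of rounds may remain. If so, I would instead choose $\alpha'$ close enough to $\alpha$ that $p$ stays bounded away from $1$ (uniformly Lipschitz) on the whole range $[\alpha'-\delta,1]$ that is actually traversed. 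Then Step 2 runs to the final round. Verifying this uniform Lipschitz bound from the explicit formulas in Proposition~\ref{pro:valbound} is where the difficulty concentrates.
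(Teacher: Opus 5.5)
Your strategy works and is a genuinely different stability argument from the paper's. Step~3, however, chases an obstacle that your own setup has already removed.

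\textbf{Step~3 is not an obstacle.} As a function of the potential $t$, each bound in Proposition~\ref{pro:valbound} is at most $C_\alpha/\bigl((t-\alpha)n\bigr)$. So $p$ blows up only as the selector's potential approaches the acceptability threshold $\alpha$, not as time runs out. Your ideal path uses threshold $\alpha$ and ends above $\alpha'>\alpha$ with a constant margin. On the traversed range $[\alpha'-\delta,1]$ you therefore have $n\,p=O(1)$ and Lipschitz constant $L=O\bigl((\alpha'-\delta-\alpha)^{--2}\bigr)$, as you in fact assert in Step~2. Step~2 thus runs through all $n$ rounds, and there is no endgame. The picture of $p_t\to 1$ comes from the identical-valuation discussion, where one number is both the threshold and the target. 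Your fallback of moving $\alpha'$ toward $\alpha$ goes the wrong way, since it pushes the traversed range toward the singularity.

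\textbf{Step~2 needs three repairs, none of which changes the outcome.}
\begin{itemize}
\item Plain Azuma, with increments of size $c$ over $n$ rounds, only gives deviations of order $c\sqrt{n\log n}$. The $O(\sqrt{c\log n})$ you state requires a variance-sensitive (Freedman/Bernstein-type) inequality, which is what the paper uses (Theorem~\ref{pinelis}).
\item The variance proxy $\sum_r c\,p^{(k_r)}=O(c)$ presupposes that $m_r$ stays a constant above $\alpha$, which is what you are proving. Stop all martingales at the first round where some active agent falls more than a fixed $\delta^*$ below $g$, then bootstrap. The paper does the analogous thing by conditioning successively on the events $\mathcal{E}_i^s(\varepsilon)$.
\item The compensator of $\log\beta_j$ is only $\ge -p^{(k)}(1+O(c))$, because of the quadratic term of $\log(1-x)$. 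Also, the $O(1/n)$ gap between the discrete path and $g$ is ordinary Euler error; Corollary~\ref{doubling} compares $n$ with $2n$ and does not provide it. You can instead skip the ODE and compare directly with the discrete recursion bounded in Theorem~\ref{bigrhoinf}.
\end{itemize}

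\textbf{Comparison with the paper.} The paper never uses a Lipschitz constant.
\begin{itemize}
\item It tracks a random expected process $\tilde\gamma_j^{k+1}=(1-p_{i_k})\tilde\gamma_j^k$, shows that $\tilde\gamma_j-\beta_j$ is a supermartingale while nonnegative, and bounds its last nonnegative excursion (Theorem~\ref{martingale}).
\item On the event that $j$ lags by at most $\varepsilon$, the max-$\beta$ selection rule gives $p_{i_k}\le p(\beta_j^k)\le p(\tilde\gamma_j^k-\varepsilon)$.
\item Monotonicity of $p$ alone then yields $\tilde\gamma_j^k\ge\gamma_n^k(\beta_j^1,\varepsilon)$, a deterministic recursion with the lag built into each step.
\item The cost is redoing the explicit epoch/integral estimates with $\varepsilon$ inside (Theorems~\ref{diffgamma} and~\ref{diffgamma2}).
\end{itemize}
Your Doob decomposition of $\log\beta_j$ with a discrete Gr\"onwall step reuses the unperturbed bound as a black box. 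Because it bounds the drift by $p(m_r)$ at the global minimum, it does not need the max-$\beta$ rule at all. The price is the Lipschitz/discretization step and simultaneous control of all agents.
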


{We provide an overview of how Proposition~\ref{pro:difftheorem} is proved.  

\begin{enumerate}
\item Fix $\alpha < \rho$, for $\rho$ as in Proposition~\ref{pro:bigrhoinf}, and fix a sufficiently large $n$.
    \item Because $\alpha < \rho$, if we were in the case of identical valuations, then for our choice of $\alpha$, we would have an ``ideal" curve $\beta_{\alpha}$ with $\beta_{\alpha}(1) = 1$ and $\beta_{\alpha}(n) > \rho >  \alpha$. In the curve $\beta_{\alpha}$, for each round $r$, $\beta_{\alpha}(r+1) = (1 - p_r)\beta_{\alpha}(r)$, where $p_r$ (the maximum probability that an item is selected in round $r$) is computed as in Section~\ref{sec:identical}, based on the value $\beta_{\alpha}(r)$, and on the assumption that no item has value larger than $\alpha$. 
    \item For an appropriate choice of $\varepsilon < \beta_{\alpha}(n) - \alpha$, we consider a new curve $\beta_{\alpha,\varepsilon}$. It too starts at~1, but differs from $\beta_{\alpha}$ in the way that $p_r$ is computed in each round $r$. Rather than using the value $\beta_{\alpha,\varepsilon}(r)$ for this computation, we use $\beta_{\alpha, \varepsilon}(r) - \varepsilon$. Due to this, $\beta_{\alpha,\varepsilon}$ decreases at a faster rate than $\beta_{\alpha}$.
    However, for our choice of $\varepsilon$, we show that $\beta_{\alpha,\varepsilon}(n) > \alpha + \varepsilon$ holds. 
    \item Now we consider $n^2$ possible bad events, where bad event $\mathcal{E}_{i,r}$ (for $i,r \in [n]$) is the event that at round $r$, the {value $\beta_i$} for agent $i$ was smaller than $\beta_{\alpha, \varepsilon}(r) - \varepsilon$.  Conditioned on the assumption that those bad events associated with $r' < r$ did not happen (and hence each $p_{r'}$ is not larger than that used in order to compute the curve $\beta_{\alpha, \varepsilon}$), the {expectation of $\beta_i$} for agent $i$ in round $r$ is at least $\beta_{\alpha, \varepsilon}(r) - \eps$. Under this conditioning, a martingale concentration result (that uses the {\em no large intersection} property) shows that $\mathcal{E}_{i,r}$ happens with probability smaller than $\frac{1}{n^2}$. Taking a union bound, with positive probability no bad event happens. The absence of bad events implies (among other things) that every $\beta_i$ remains above $\alpha$ in every round $r \in [n]$, as desired.
\end{enumerate}}

In Section~\ref{sec:algorithm} we state Theorem~\ref{difftheorem}, which is stronger than Proposition~\ref{pro:difftheorem}, because it does not assume the {\em no large intersection} property. The proof of Proposition~\ref{pro:difftheorem} is implicit as part of the proof of Theorem~\ref{difftheorem}. }
Section~\ref{sec:martingale} is dedicated to the proof of martingale concentration.

\subsubsection{Allowing for large intersections}
\label{sec:largeIntersect}

We define the {\em damage} that a bundle $B$ acceptable for an agent $i$ can do to the other agents as the sum of decrease in $\beta$ values of all other agents caused by the removal of all items of $B$. 
Using an appropriate choice of parameters, we classify bundles into three classes, those doing little damage {(at most $c$)}, those doing medium damage, and those doing {large} damage {(greater than $D$)}.
For our choice of parameters {$c$ and $D$}, those bundles that do little damage do not form a problem, as they satisfy the {\em no large intersection} assumption. Those bundles that do {large} damage also do not form a problem -- it can be shown that they are rare, and simply not allowing the agent to choose them does not significantly affect the run of the randomized algorithm. 

The main difficulty is those bundles that do medium damage. There might be too many of them, and so we cannot simply forbid choosing them. But, if they are chosen, some other agents suffer a large drop in their $\beta$ value, and we no longer have the martingale concentration results that we need for our analysis.

Our fix for this is to depart from the greedy paradigm which states that once an agent $i$ gets an acceptable bundle $B$, this is an irreversible decision. Instead, we allow agents who suffered large damage by the allocation of $B$ to ``steal" back items from $B$, if the need arises. To make sure that this stealing keeps the remaining part of $B$ acceptable for agent $i$, we show how to carefully select a subset $S\subset B$ of items that can be stolen, and for each item $e \in S$, a subset of agents that are allowed to steal $e$. This selection is done in such a way that effectively makes $B$  into a bundle that makes only little damage, while ensuring that {after stealing, what remains of $B$} is still acceptable for $i$ (for a new approximation ratio $\rho'$ that is smaller from $\rho$ by a term that tends to~0 as $n$ grows).

We now explain the stealing principle in more detail. We do so on a simplified example, so as to avoid clutter that obscures the main idea. Also, we set the parameters associated with stealing not exactly to the same values as they have in our proofs.
For the complete description of the algorithm with the stealing procedure, see Section~\ref{sec:algorithm}, and the analysis of the stealing is given in Section~\ref{sec:stealinganalysis}.

Consider the first step of the algorithm. Agent $i$ has $n$ disjoint bundles in her MMS partition $(B^i_1, \ldots, B^i_n)$, and suppose that each $B^i_h$ is broken into two disjoint acceptable bundles that we rename as $B^i_{h}$ and $B^i_{h+n}$, each of value slightly larger than the required minimum of $\rho$-MMS. For example, for every {$k \in [2n]$} we may have $v_i(B^i_k) \ge \rho + n^{-0.4}$. We choose $k\in \{1, \ldots 2n\}$ uniformly at random and give $B^i_k$ to agent $i$. For every item $e$, the event that $e\in B^i_{k}$  has probability $p=\frac{1}{2n}$. 

Recall that any other agent $j$ has an MMS partition $(B^j_1, \ldots, B^j_n)$, where for every $\ell$, $v_j(B^j_{\ell}) = 1$. Associating a weight of $\frac{1}{n}$ with every bundle in the MMS partition, the associated value of $\beta_j(\items)$ is $\sum_{\ell} \frac{1}{n} \cdot v_j(B^j_{\ell}) = 1$. 

The {\em damage} that a bundle $B^i_{k}$ does to agent $j$ is defined as $\frac{1}{n} \cdot \sum_{\ell} v_j(B^j_{\ell} \cap B^i_{k})$. The {\em no large intersection} property requires that the damage be at most $c$. For concreteness here, let us fix $c = n^{-0.1}$. 
We classify the bundles $B^i_k$ according to the total damage that they do to all other agents $j \not= i$. 

One class contains those bundles that do a total damage of at least $n^{0.4}$ (large damage). As the total damage of all bundles is $n-1$ (there are $n-1$ other agents, and with maximum damage of~1), there are at most $n^{0.6}$ bundles in this class. We do not allow $i$ to choose any bundle from this class. As $n^{0.6}$ out of the $2n$ bundles in $i$'s partition are discarded, the value of $p$ now might slightly increase, by a factor of at most $\frac{2n}{2n - n^{0.6}} < (1 + n^{-0.4})$. This slight increase has only negligible effect of the analysis and the final value of $\rho$.

Another class contains those bundles $B^i_k$ that do not do a damage larger than $c$ to any other agent. These bundles do not violate the {\em no large intersection} property, and hence they do not need any special treatment.

The last class contains the remaining bundles. Consider one bundle $B^i_k$ from this class. It inflicts a damage larger than $c$ to some of the other agents. As the total damage of $B^i_k$ is at most $n^{0.4}$ and $c = n^{-0.1}$, the number of agents suffering damage larger than $c$ is at most $n^{0.5}$. Let $j$ be one such agent, and refer to the set of bundles in her MMS partition as $\Bb_j$. 

Order the bundles of $\Bb_j$ according to the $v_i$ value of their intersection with $B_k^i$, from largest to smallest. Let $\Bb_j^{i,k}(c)$ denote the bundles in the prefix of this order, where the prefix is maximal subject to the condition that the total damage inflicted by $B_k^i$ to all bundles in $\Bb_j^{i,k}(c)$ is at most $c$. 
As $c = n^{-0.1}$ and each bundle in $\Bb_j$ contributes at most $\frac{1}{n}$ to $\beta_j(\items)$, 
it means that $\Bb_j^{i,k}(c)$ contains 
at least $n^{0.9}$ different bundles. 
As $v_i(B_k^i) \le 1$,  there must be a bundle $B^j_{\ell^*} \in \Bb_j^{i,k}(c)$ for which $v_i(B^j_{\ell^*} \cap B^i_k) \le n^{-0.9}$. Consequently, it holds for every bundle $B^j_{\ell} \in \Bb_j \setminus \Bb_j^{i,k}(c)$ that $v_i(B^j_{\ell} \cap B^i_k) \le n^{-0.9}$.

{We can now explain the stealing operation. We limit the damage done by $B^i_k$ to agent $j$ to be the intersection of $B^i_k$ only with the bundles of $\Bb_j^{i,k}(c)$. As such, it is upper bounded by $n^{-0.1} = c$ as desired. All remaining items of $B^i_k$ are not considered to be excluded for agent $j$. That is, if in a future round, the acceptable bundle $A_j$ selected by agent $j$ is a subset of a bundle $B^j_{\ell}\not\in \Bb_j^{i,k}(c)$, then the items in $A_j \cap B^i_k$ are ``stolen" from agent $i$ and given to agent $j$. Note that by this stealing, the value of $B^i_k$ decreases by at most $n^{-0.9}$. Finally, recall that there are at most $n^{0.5}$ potential agents $j$ that might steal items from $i$. Hence, the combination of all stealing operations can reduce the value of $B^i_k$ by at most $n^{-0.4}$. As $B^i_k$ started with a value of at least $\rho + n^{-0.4}$, what remains of $B^i_k$ after the stealing is still acceptable for agent $i$.}

\subsubsection{Allowing for large items}

In this section we allow the instance to have large items (items $e$ of value at least $\rho$ to some of the agents), and show how to adapt our algorithm to such instances. At a high level, we classify such instances into four classes, and handle each class separately. The classes are defined using a key lemma that we formulate and prove about matchings in bipartite graphs.

The bipartite graph $G(U,W,E)$ that we have in mind is one in which the set $U$ of vertices on one side {represents} the $n$ agents (so $U = \agents$), the set $W$ of agents on the other side {represents} the $m$ items
(so $W = \items$), and there is an edge between agent $i \in U$ and item $j \in W$ if $v_i(j) \ge \rho$, namely, item $j$ is large for agent $i$.

\begin{lemma}\label{bipartite0}
    Let $0 < \eps \leq 1/240$ and $n$ be arbitrary.
    For every bipartite graph $G = G(U, W, E)$ with $|U| = n$ and $|W| \geq n$, one of the following must hold.
    \begin{enumerate}
        \item $G$ has a matching in which all of $U$ is matched.

        \item $G$ has a maximal matching with at most $(1 - \eps)n$ edges.

        \item There are partitions $(U_1, U_2)$ of $U$ and $(W_1, W_2)$ of $W$ with the following properties:
        \begin{enumerate}
            \item $|U_1| = |W_1|$, and $(U_1, W_1)$ form a perfect matching which is maximal in $G$;
            
            \item for every vertex $u \in U_2$ we have $|N(u)| \leq n - 3|U_2|$.
        \end{enumerate}

        \item There are partitions $(U_1, U_2, U_3)$ of $U$ and $(W_1, W_2, W_3)$ of $W$ with the following properties:
        \begin{enumerate}
            \item $|U_1|= |W_1|$, and $(U_1, W_1)$ form a perfect matching;

            \item there are no edges between $U_2$ and $W_3$ and between $U_3$ and $W_2 \cup W_3$;

            \item 
            $(1 - 60\eps)n \leq |W_2| < |U_2|$;

            \item for every subset $U_2'\subseteq U_2$ of size $|W_2|$ there is a perfect matching between $U_2'$ and $W_2$.
        \end{enumerate}
    \end{enumerate}
\end{lemma}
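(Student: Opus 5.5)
We argue by elimination: we assume that alternatives~1 and~2 both fail and show that~3 or~4 must then hold. The failure of~2 is a strong hypothesis: \emph{every} maximal matching of $G$ has more than $(1-\eps)n$ edges. We will use it repeatedly. Whenever a construction below would yield a matching that is maximal yet leaves $\eps n$ or more vertices of $U$ unmatched, we stop, because alternative~2 holds. Since~1 fails, a maximum matching $M$ has $|M| = n-d$ with $1 \le d < \eps n$.

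The first step is the Dulmage--Mendelsohn (alternating-path) decomposition with respect to $M$. Let $U_0 \subseteq U$ be the $d$ unmatched vertices of $U$. Let $U_A$ be the set of vertices of $U$ reachable from $U_0$ by $M$-alternating paths, and let $W_A = N(U_A)$. By maximality of $M$ (no augmenting path), every vertex of $W_A$ is matched by $M$ into $U_A \setminus U_0$, so $|W_A| = |U_A| - d$ and $U_A$ has no edges leaving $W_A$. Set $U^* = U \setminus U_A$ and $W^* = M(U^*)$. The proof then splits according to the size of $U_A$.

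In the case where $U_A$ is small, say $|U_A| \le n/4$ up to $O(\eps n)$ corrections, we aim for alternative~3. We want $U_2 \subseteq U_A$ and a perfect matching $(U_1,W_1)$ that is maximal, so that every $u \in U_2$ has all its neighbours inside $W_1$. The natural attempt is to start from $M$ and remove the unmatched remainder. Concretely, choose $U_2$ as the set of $U$-vertices left unmatched by a matching $M'$ obtained from $M$ by local re-routing inside $U_A \cup W_A$; by the no-augmenting-path property, $M'$ still saturates $W_A$. For such $u \in U_2$ we have $N(u) \subseteq W_A \subseteq W_1$, so $|N(u)| \le |W_A| < |U_A|$. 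The condition $|N(u)| \le n - 3|U_2|$ then follows from $|U_2| \le d$ together with $|U_A| \le n - 3d$. If re-routing is not needed, simply $U_2 = U_0$ works. The numeric threshold will be tuned so that this inequality holds whenever $U_A$ is not large.

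In the case where $U_A$ is large, so that $|W_A| \ge (1-60\eps)n$, we aim for alternative~4 with $U_1 = U^*$, $W_1 = W^*$, $U_2 \cup U_3 = U_A$, $W_2 \cup W_3 \supseteq W_A$, and all remaining items placed in $W_3$. Conditions (a) and (b) for $U_2$ are immediate from $N(U_A) = W_A$. The difficulty is (d): in the decomposition, each single $u \in U_A$ can be left out by some maximum matching, but an arbitrary subset of size $|W_2|$ need not be perfectly matchable to $W_2$. The plan is an iterative refinement. If some $U_2' \subseteq U_2$ with $|U_2'| = |W_2|$ has no perfect matching to $W_2$, Hall's theorem gives $X \subseteq U_2'$ with $|N(X) \cap W_2| < |X|$. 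We then peel off the low-degree part. We move vertices of $U_2$ whose neighbourhoods are confined to a small piece of $W_2$ into $U_3$, and move the corresponding items into $W_3$, which is exactly allowed by the edge restrictions in (b) once we check that $U_3$ has no neighbours in $W_2 \cup W_3$. Each peeling step produces a maximal matching that misses a number of $U$-vertices proportional to the amount peeled. By the failure of alternative~2, the total amount peeled is $O(\eps n)$, which is where the constant $60$ comes from. We also need $|W_2| < |U_2|$ to be maintained, which follows from the $d \ge 1$ deficiency persisting in the core.

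I expect this last refinement, securing property (d) while keeping $|W_2| \ge (1-60\eps)n$ and the edge restrictions, to be the main obstacle. I would first try a potential argument. Each Hall violation found lets us exhibit a maximal matching that is shorter by at least the number of vertices moved into $U_3$. Comparing with the $(1-\eps)n$ lower bound then caps the number of iterations and the total loss.
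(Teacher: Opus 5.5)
Your small-$U_A$ case is fine: with the maximum matching $M$ and $U_2=U_0$, alternative~3 holds whenever $|U_A|\le n-2d$. The large-$U_A$ case has a genuine gap. There you try to derive alternative~4 from the failure of~1 and~2 alone, and that implication is false.

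Take $n$ even with $\eps n>2$. Let $U=A\cup B$ and $W=X\cup Y\cup\{z_1,z_2\}$, with $|A|=|B|=n/2$ and $|X|=|Y|=n/2-1$. Make $A$ complete to $X$ and $B$ complete to $Y$, and leave $z_1,z_2$ isolated.

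\begin{itemize}
\item Every maximal matching saturates $X$ and $Y$, so it has $n-2>(1-\eps)n$ edges. Hence 1 and 2 fail and $d=2$.
\item For every maximum matching, $U_A=U$ and $|W_A|=n-2$, so you are in your large case.
\item Alternative~4 fails. By (d), $z_1,z_2\notin W_2$. By (c), $|W_2|\ge(1-60\eps)n\ge 3n/4>|X|$. Choose $U_2'$ to contain as many vertices of $U_2\cap A$ as possible; these can only be matched into $X$, so (d) gives $|U_2\cap A|\le|W_2\cap X|$. Likewise $|U_2\cap B|\le|W_2\cap Y|$. Then $|U_2|\le|W_2|$, contradicting (c).
\item The lemma holds here only through alternative~3: the two unmatched vertices have degree $n/2-1\le n-6$. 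Your large case never looks for~3.
\end{itemize}

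The underlying problem is that for $d\ge 2$ the Dulmage--Mendelsohn structure only says each single vertex of $U_A$ is avoided by some maximum matching. Property (d) needs any $|U_2|-|W_2|$ of them to be avoidable simultaneously. Your choice $U_2=U_A$, $W_2=W_A$ works only when $d=1$.

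The missing idea is to use the failure of alternative~3 for \emph{every} maximal matching, not just $M$, to extract degree information. The paper does this as follows.
\begin{itemize}
\item It repeatedly extends the current set $U_0$ to a maximal matching covering it. The failure of~3 then supplies an unmatched vertex of degree $>(1-3\eps)n$ to add to $U_0$. This yields at least $(1-3\eps)n$ vertices, each of degree at least $(1-3\eps)n$.
\item It uses the failure of~3 again to show that $U_0$ cannot be matched into $N(U_0)$, which gives a Hall violator inside this high-degree core.
\item It greedily matches the few vertices outside the core, and the items of degree below $3n/4$, into $(U_1,W_1)$. Unmatched leftovers go to $U_3$ and $W_3$.
\item It proves (d) from minimum-degree bounds. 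Every vertex of $U_2$ has at least $(1-69\eps)n$ neighbours in $W_2$, and every item of $W_2$ has at least $n/2$ neighbours in any $U_2'$. A $k\times k$ bipartite graph of minimum degree $k/2$ has a perfect matching.
\end{itemize}

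Your Hall-violator peeling has no such degree control. As described, it also violates (b). Since $W=W_1\sqcup W_2\sqcup W_3$, condition (b) forces $N(U_3)\subseteq W_1$. Moving low-degree vertices into $U_3$ while sending their neighbourhoods to $W_3$ therefore creates exactly the forbidden $U_3$--$W_3$ edges. Moreover, with $W_1=W^*$ disjoint from $W_A\supseteq N(U_A)$, your $U_3$ could contain only isolated vertices unless you also enlarge $(U_1,W_1)$.

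Finally, you assert that each peeling step produces a maximal matching missing proportionally many vertices, but give no mechanism for it. So the bound $|W_2|\ge(1-60\eps)n$ is never actually derived.
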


The proof of {Lemma~\ref{bipartite0} appears in \cref{sec:bigitems} as Lemma~\ref{bipartite}.} 

Now we explain how we get a $\rho$-MMS allocation in each of the first three cases of Lemma~\ref{bipartite0}, in the order in which they appear in the lemma.

\begin{enumerate}
    \item We allocate to each agent the large item that is matched to it.
    \item We consider a maximal matching of size at most $(1 - \eps)n$ and give each matched agent the large item to which it is matched. We remain with an instance with $n' \ge \epsilon n$ agents, no large items, and each agent $i$ has an MMS partition (of $W_2$) into at least $n'$ bundles, each of value at least~1. As $\epsilon$ is a fixed constant, if $n$ is sufficiently large, then so is $\epsilon n$. So, this case reduces to the case with no large items, handled in previous sections.
    \item We match the agents of $U_1$ with the large items in $W_1$. As to agents in $U_2$, condition 3(b) can be shown to imply that each of them has a partition of the remaining items (namely, $W_2$) into at least $2|U_2|$ bundles, each of value at least $1 - \rho$. As each of these partitions has at least twice as many bundles as the total number of agents, known techniques of~\cite{GhodsiHSSY22}  imply the existence of $\frac{1 - \rho}{2}$-MMS allocations. As our $\rho$ satisfies $\rho < \frac{1}{3}$, this approximation ratio is even better than $\rho$.
\end{enumerate}

The fourth case of Lemma~\ref{bipartite0} requires the most complicated allocation algorithm (which is given in Section~\ref{sec:bigcase4}). For this case we use the notation $n' = |U_2| + |U_3|$  and $k = n' - |W_2|$. {Lemma~\ref{bipartite0} implies that $n' \ge (1 - O(\eps))n$ and $k \le O(\eps n)$.} 
Our allocation algorithm has the following steps.

\begin{enumerate}
    \item For an arbitrary perfect matching between $U_1$ and $W_1$, give each agent in $U_1$ the large item to which it is matched.  
    \item Observe that $|U_3| \le k \le O(\eps n)$, and none of the remaining items (those of $W_2 \cup W_3$) is large for any agent in $U_3$. Disregarding the $|W_2|$ small items of $W_2$, it can be shown that each agent in $U_3$ has at least roughly $\frac{n}{2}$ disjoint bundles contained in $W_3$, each of value at least $1 - \rho$. Run the randomized allocation algorithm implicit in Section~\ref{sec:largeIntersect} on the agents of $U_3$ and items of $W_3$. Previously, we proved that the algorithm works if the number of agents is sufficiently large. Though we do not know whether $|U_3|$ is sufficiently large, we show that the algorithm does work for $U_3$, because it is run with favorable parameters. 
    Specifically, the associated values of $p$ (the probabilities of choosing items) are $O(\frac{1}{n})$ (as the number of disjoint bundles to choose from is $\Omega(n)$), whereas we need to allocate bundles to only $|U_3| \le O(\eps n)$ agents. 
    \item It remains to allocate bundles to agents of $U_2$. Prior to allocating bundles to agents of $U_3$, each agent in $U_2$ had at least $k$ disjoint bundles contained in $W_3$, each of value at least $1$. As the $|U_3|$ bundles allocated to agents of $U_3$ each contained each item with probability $O(\frac{1}{n})$, it follows that after allocating bundles to $U_3$, each agent of $U_2$ maintains in expectation a $(1 - \epsilon)$ fraction of the value of her associated $k$ disjoint bundles. Run on $U_2$ and the items remaining in $W_3$ the greedy-average algorithm mentioned in Section~\ref{sec:different}. Though in general, this algorithm need not allocate acceptable bundles to all agents, here we stop it after a very small number of steps, namely $k - |U_3|$, whereas the total number of agents in $U_2$ is very much larger. It is not difficult to prove that the algorithm does manage to allocate acceptable bundles to such a small number of agents.
    
    \item At this point, a set $U'_2 \subset U_2$ containing exactly $n'-k = |W_2|$ agents remain. By item 4(d) in Lemma~\ref{bipartite0}, there is a perfect matching between $U'_2$ and $W_2$, so each of these agents gets a large item. 
\end{enumerate}

This completes our overview for how we address allocation instances that do have large items (if $n$ is sufficiently large).

\section{Related work}
\label{sec:related}
The topic of fair division has been extensively studied in various different settings.
While in our work we focus only on allocations of indivisible goods, there are multiple results that also consider divisible goods, i.e when parts of the same item can be given to different agents.
In terms of notions of fairness, we consider only MMS and APS, which are so-called \textit{share based} notions.
There are many other share based notions, such as proportional share, MES and WMMS.
Among non-share based notions of fairness, there are comparison based ones, such as envy-freeness, e.g EF1, EFX, which studies allocations where each agent receives a bundle at least as good (or almost as good) as the others.
For a general overview on fairness notions for indivisible goods, a reader can check surveys \cite{AzLiMoWu22} and \cite{AmanatidisSurvey23}, while for divisible goods there is plenty of much earlier works, see for example a book \cite{Brams_Taylor_1996}.
When it comes to classes of valuation functions, our work considers only XOS valuations.
A significant work has been done to explore other valuation classes, including additive, submodular and subadditive valuations.
We will go into more detail on corresponding results below.

We will now move to discussing results more directly related to our work, specifically on the indivisible goods case and share-based fairness.
We start with describing what is known for additive valuations and MMS.
Following a sequence of works, the \cite{akrami2023breaking} showed that $(\frac{3}{4} + \frac{3}{3836})$-MMS allocations exist when all valuations are additive.
The approximation ratio has been recently improved in \cite{heidari2025}, where the authors managed to obtain a constant $\frac{10}{13}$, and \cite{huang2025} achieved a further improvement of $\frac{7}{9}$-MMS, which is currently best known for additive valuations.
As for to nonexistence results, \cite{KPW18-1, KPW18-2} were first to prove that MMS ($1$-MMS) allocations need not exist, even when the valuations are additive, for every $n \ge 3$.
Later, \cite{FST21} designed instances for for three agents with additive valuations in which no allocation gives more than $\frac{39}{40}$-MMS to all agents. For general $n$, \cite{FST21} showed an upper bound of $1 - \frac{1}{n^4}$. 

APS-allocations were introduced in \cite{BEF21APS}, where the authors showed that $\frac{3}{5}$-APS allocations exist if valuations are additive.

When it comes to allocations for submodular valuations, one of the first major steps was the work of \cite{BarKrish20} that proved existence of $0.21$-MMS.
The constant was later improved to $\frac{1}{3}$ by \cite{GhodsiHSSY22}, and further to $\frac{10}{27}$ in \cite{BUF23}.
On the upper bound side, \cite{GhodsiHSSY22} showed in the same work that $\frac{3}{4}$-MMS allocations do not exist for submodular valuations.
Later, \cite{KulkKulkMeh23} designed examples with three agents in which no allocation gives every agent more than $\frac{2}{3}$-MMS. 
The APS approximations for submodular valuations were first considered in \cite{BUF23}, where the authors achieve an approximation of $\frac{1}{3}$-APS. 

{
For XOS valuations, following a sequence of works~\cite{GhodsiHSSY22}, \cite{SS24}, \cite{AkramiMSS23}, \cite{FG25}, it is known that there always are $\frac{4}{17}$-MMS (and also $\frac{4}{17}$-APS) allocations. As mentioned in Section~\ref{sec:intro},  
a ratio of $\frac{1}{4}$ seems to be a natural barrier for the techniques used in these works. Our approach, based on different techniques, manages to pass the $\frac{1}{4}$-barrier and obtain an $\alpha$-MMS ($\alpha$-APS) allocation for a constant $\alpha > \frac{11}{40}$, provided that the number $n$ of agents is sufficiently large. }

Finally, there has been work on subadditive valuation, a class that contains all of the previously mentioned ones.
It was shown in \cite{GhodsiHSSY22} that it is impossible to achieve an approximation ratio greater than $\frac{1}{2}$-MMS.
Unlike for the other classes, in the subadditive case there are no known existential results for $c$-MMS allocations for any constant $c > 0$.
However, there are  works that obtain approximation ratios in terms of the number of agents $n$ or number of items $m$.
First works focused on approximations in terms of $m$, and the earlier mentioned work of \cite{GhodsiHSSY22} obtained $\Omega(\frac{1}{\log m})$-MMS allocations.
The approximation ratio was later improved to $\frac{(1 - o(1))\log\log m}{\log m}$ in the work of \cite{FG25}.
Moving to approximations in terms of $n$, \cite{SS24} achieved a major improvement of $\Omega(\frac{1}{\log n \log\log n})$-MMS, and the subsequent work of \cite{FH25} obtained a ratio of $\frac{1}{14\log n}$.
Recently, the authors of \cite{SS25} further pushed the approximation ratio to $\Omega(\frac{1}{(\log\log n)^2})$-MMS.
The work of \cite{Fei25} developed techniques that allow moving from \textit{multi-allocations} to allocations, and applied them to the result of \cite{SS25}, thus improving the ratio to the currently best-known $\frac{1}{8\log\log n}$-MMS.
When it comes to APS-allocations for subadditive valuations, the only approximation ratio we are aware of is $\frac{(1 - o(1))\log\log m}{\log m}$-APS from \cite{FG25}.

There has been additional work on MMS allocations for the instances when the number of agents $n$ is small.
For example, for the case of $n = 3$ agents, it was shown in \cite{AMNS17} that $\frac{7}{8}$-MMS allocations exist when all valuations are additive, which was further improved to $\frac{8}{9}$-MMS in \cite{GOURVES2019} and to $\frac{11}{12}$-MMS in \cite{FeNo22}.
For $n = 4$ additive agents, \cite{GhodsiHSSY22} proved an approximation ratio of $\frac{4}{5}$.
If one considers the same number $n = 4$ but allows the agents to be \textit{subadditive}, \cite{CCMS25} proved the existence of $\frac{1}{2}$-MMS allocations.
Our results, on the contrary, consider instances when number of agents $n$ is large.

In addition to the allocations of goods, there are also many works that consider allocations of \textit{chores}, i.e items of negative value that agents do not wish to get, but nevertheless have to be allocated.
The authors of \cite{ARSW17} were first to extend the definition of MMS to chores and showed an approximation ratio of $2$ for the case of additive valuations.
The constant was further improved to $\frac{4}{3}$ in \cite{BarKrish20}, and later to $\frac{11}{9}$ by \cite{HL21}, and finally to $\frac{13}{11}$ by \cite{HSH23}.

\section{Technical details}

An overview of our proofs appears in Section~\ref{sec:overview}. As the rigorous proofs are technical and long, they appear in the appendix. 
{In Appendix~\ref{identical}, we provide a complete description of the algorithm for identical valuation case, mentioned in Section~\ref{sec:identical}, together with analysis of the algorithm and proofs of correctness for all values of $n$.
In Appendix~\ref{sec:diffvalsintro}, we present the algorithm for the case of different valuations and large $n$, mentioned in Section~\ref{sec:different}, under the assumptions of \textit{no large intersections} and \textit{no large items}, and also show how to remove the assumption on large intersections.
Next, in Appendix~\ref{sec:martingale} we give a martingale-based proof that for large values of $n$, under the aforementioned assumptions, with high probability the individual $\beta$-paths of the agents do not deviate much from the greedy $\beta$-path of the identical valuations case.
The idea of the proof was briefly covered in Proposition~\ref{pro:difftheorem}.
After that, in Appendix~\ref{sec:bigitems}, we show how to remove the assumption {of} no large items, proving \cref{bipartite0} and presenting the allocation algorithms for each of its cases.
Appendices~\ref{sec:smallalpha} and~\ref{sec:bigalpha} contain technical proofs for the lower bounds for the $\beta$-processes.
Appendix~\ref{sec:doubling} contains the analysis of the doubling procedure.
} 

\subsection*{Acknowledgements}

This research was supported in part by the Israel Science Foundation (grant No. 1122/22).

\bibliographystyle{alpha}
\bibliography{main}

\appendix

\section{Gap between APS and MMS}
\label{identical}

In this section, we prove \cref{mainresultaps} for the case where all agents $i \in [n]$ have the same valuation function $v_i = v$.
This result immediately implies \cref{apsmmsgap}.
Consider some XOS valuation $v$, and let $\{(S, \lambda_S)\}_{S \in \Ss}$ be the APS partition of $\Mm$ with entitlement $1/n$.
{The following lemma is well-known, but for completeness we provide a proof.}
\begin{lemma}\label{giveitem}
    For any item $e \in \Mm$ and any valuation $v$, $\aps(\Mm\setminus\{e\}, v, \frac{1}{n - 1}) \geq \aps(\Mm, v, \frac{1}{n})$.
\end{lemma}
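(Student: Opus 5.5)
Given an APS partition $\{(S,\lambda_S)\}_{S\in\Ss}$ of $\Mm$ with entitlement $\frac1n$, we will build a fractional $\frac{1}{n-1}$-partition of $\Mm\setminus\{e\}$ whose bundles all have value at least $\aps(\Mm,v,\frac1n)$. The idea is to discard the bundles that contain $e$ and renormalize the rest.

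Let $\Ss_e=\{S\in\Ss: e\in S\}$ and $q=\sum_{S\in\Ss_e}\lambda_S$. The APS constraint for item $e$ gives $q\le \frac1n$, so the remaining bundles carry total weight $1-q\ge 1-\frac1n=\frac{n-1}{n}>0$ (assume $n\ge 2$). Define the collection $\{(S,\lambda'_S)\}_{S\in\Ss\setminus\Ss_e}$ with $\lambda'_S=\lambda_S/(1-q)$. These are nonnegative and sum to $1$. Every $S\in\Ss\setminus\Ss_e$ is a subset of $\Mm\setminus\{e\}$, so this is a collection of subsets of the new item set.

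Next we check the item constraints. For any $e'\neq e$,
\[
\sum_{S\notin\Ss_e,\,S\ni e'}\lambda'_S\le \frac{1}{1-q}\sum_{S\ni e'}\lambda_S\le \frac{1/n}{1-q}\le \frac{1/n}{(n-1)/n}=\frac{1}{n-1}.
\]
Hence this is a fractional $\frac{1}{n-1}$-partition of $\Mm\setminus\{e\}$. Its bundles form a subfamily of $\Ss$, so each has value at least $\min_{S\in\Ss}v(S)=\aps(\Mm,v,\frac1n)$. Taking the maximum in Definition~\ref{def:APS} yields the claim.

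The only delicate point is ensuring $1-q>0$, so that the renormalization is defined and the remaining support is nonempty. This follows from $q\le\frac1n<1$. The argument uses only monotonicity-free properties (the bundles are kept intact), so it applies to any valuation $v$.
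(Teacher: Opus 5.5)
Your proof is correct and follows the same route as the paper: discard the bundles containing $e$ and rescale the remaining weights. Your normalization by $1/(1-q)$ is slightly more careful than the paper's factor $\frac{n}{n-1}$, which tacitly assumes the constraint for $e$ holds with equality ($q=\frac1n$).
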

\begin{proof}
    Consider the original APS partition $\{(S, \lambda_S)\}_{S\in \Ss}$.
    It satisfies $\sum_{S \in \Ss}\lambda_S = 1$ and for every $e \in \Mm$, $\sum_{S \ni e}\lambda_S = 1/n$.
    Construct a new APS partition $\{(S, \mu_{S})\}_{S \in \Ss}$ as follows.
    For every $S \in \Ss$ such that $e \in S$, set $\mu_S = 0$, and for every $S \in \Ss$ such that $e \notin S$, set $\mu_S = \frac{n}{n - 1}\lambda_S$.
   Observe that 
   \[
        \sum_{e\notin S}\lambda_S = \sum_{S}\lambda_S - \sum_{e\in S}\lambda_S = 1 - \frac{1}{n} = \frac{n - 1}{n} \implies \sum_{S}\mu_S = \frac{n}{n - 1}\sum_{e\notin S}\lambda_S = 1,
   \]
   and at the same time for any $e'\neq e$:
   \[
        \sum_{S\ni e'}\mu_S  = \sum_{\substack{S\ni e',\\j\notin S}}\mu_S  = \frac{n}{n - 1}\sum_{\substack{S\ni e',\\e\notin S}}\lambda_S \leq \frac{n}{n - 1}\sum_{S\ni e'} \lambda_S =\frac{1}{n - 1}. 
   \]
   So, $\{(S, \mu_{S})\}_{S \in \Ss}$ is a valid APS partition.
   Meanwhile, since the {collection} of sets $S$ with $\mu_S > 0$ {is a subset of the collection of sets with $\lambda_S >0$}, 
   the minimum value of such a set {does not decrease}. 
   Hence the APS-value for this partition is at least the APS value of the original $\{(S, \lambda_{S})\}_{S \in \Ss}$.
\end{proof}

The following lemma will provide us with convenient structure of the APS partition $\{(S, \lambda_S)\}_{S\in \Ss}$.
\begin{lemma}\label{wlog}
    Let $\{(S, \lambda_S)\}_{S\in \Ss}$ be a fractional $\frac{1}{n}$-partition corresponding to $\aps(\Mm, v, \frac{1}{n})$, we refer to it as $\aps$-partition.
    Without loss of generality, we can assume that 
    \begin{enumerate}
        \item there are at most $m + 1$ bundles in the support of $\{(S, \lambda_S)\}_{S\in \Ss}$;

        \item for every $S \in \Ss$ with $\lambda_S > 0$, $v(S) = 1$, and $v(S) = v_S(S)$ for some linear function $v_S$ defined only on the items from $S$;

        \item every ``minimal'' bundle (one in which removing any item decreases its value) is a sub-bundle of some $S \in \Ss$;    
    \end{enumerate}
\end{lemma}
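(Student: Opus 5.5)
We first normalize so that $\aps(\Mm, v, \frac{1}{n}) = 1$. The plan is to modify, in turn, the fractional partition and then the valuation. Each modification keeps the APS value equal to~1 and keeps the new valuation pointwise below $v$. Hence any partition in which every bundle has value at least $\alpha$ under the modified valuation also does so under $v$, and this is the sense of ``without loss of generality''.

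\textbf{Item 1.} Fix the collection $\Ss$ and consider the polytope of weight vectors $(\lambda_S)_{S\in\Ss}$ satisfying $\lambda_S \ge 0$, $\sum_S \lambda_S = 1$ and $\sum_{S \ni e}\lambda_S \le \frac{1}{n}$ for every $e\in\Mm$. It is nonempty, since it contains the APS partition, and it is bounded, so it has a vertex. At a vertex, at least $|\Ss|$ linearly independent constraints are tight. Only $m+1$ of the constraints are not of the form $\lambda_S \ge 0$, so at most $m+1$ coordinates are positive. Every set in the support still has value at least~1, so restricting $\Ss$ to this support proves item~1.

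\textbf{Item 2.} This is where the valuation is changed. For each $S$ in the support, $v(S)\ge 1$. Since $v$ is XOS, there is an additive clause $a^S$ of $v$ with $a^S(S) = v(S)$. Define $v_S(e) = a^S(e)/a^S(S)$ for $e\in S$ and $v_S(e)=0$ for $e\notin S$, so $v_S$ is supported on $S$ only. Set
\[
v'(T) = \max_{S} v_S(T),
\]
with the maximum over the support. Then $v'$ is XOS by construction. We also have $v' \le v$ pointwise, because $v_S \le a^S/v(S) \le a^S \le v$ whenever $v(S)\ge 1$.

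The key check is $v'(S) = 1$ exactly. Clearly $v_S(S) = 1$. For any other $S'$ in the support, $v_{S'}(S) = v_{S'}(S\cap S') \le v_{S'}(S') = 1$. So $v'(S)=1$, and this value is attained by $v_S$, which is defined only on $S$. The same fractional partition certifies that $\aps(\Mm,v',\frac1n)\ge 1$, and $v'\le v$ gives the reverse inequality. Finally, since $v' \le v$, an allocation with $v'(P_i)\ge\alpha$ for all $i$ also satisfies $v(P_i)\ge\alpha$, so it suffices to prove the theorem for $v'$.

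\textbf{Item 3.} Work with $v'$ and let $T$ be minimal, meaning that removing any item strictly decreases $v'(T)$. Pick $S$ in the support with $v'(T) = v_S(T)$. Suppose some $e\in T$ has $e\notin S$. Then $v_S(e)=0$, so
\[
v'(T\setminus\{e\}) \ge v_S(T\setminus\{e\}) = v_S(T) = v'(T).
\]
This contradicts the minimality of $T$. Therefore $T\subseteq S$, which proves item~3.

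The main obstacle is item~2. The requirement $v(S)=1$ exactly cannot be achieved by shrinking the bundles, because items are indivisible and removing an item can drop the value below~1. The fix is to replace $v$ by the smaller XOS function $v'$ built from rescaled clauses restricted to the support sets. The point that needs care is showing that no other clause pushes $v'(S)$ above~1, which holds because each $v_{S'}$ has total mass exactly~1 on $S'$. Once $v'$ is in place, item~3 follows directly from the fact that every clause of $v'$ is supported inside a single support bundle.
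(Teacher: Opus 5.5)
Your proposal is correct and follows essentially the same route as the paper: a basic-solution argument for item~1, the XOS clause of each support set rescaled to total mass $1$ and restricted to that set for item~2, and the zero-weight-item removal argument for item~3. The only differences are minor. You use the $\le \frac{1}{n}$ constraints with a vertex argument, whereas the paper uses an LP with equality constraints. You also explicitly verify $v'(S)=1$ (via $v_{S'}(S)\le v_{S'}(S')=1$) and $v'\le v$, both of which the paper simply asserts hold ``by construction''.
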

\begin{proof}
    For 1), consider the linear program representation of the APS-definition.
    For a value $z \geq 0$, let $\Ss_z$ denote all $S \in 2^{\Mm}$ such that $v(S) \geq z$.
    We seek for the maximal value of $z$ such that the following LP has a feasible solution: the variables are $\lambda_S \geq 0$ for $S \in \Ss_z$, and the constraints are $\sum_{S \in \Ss}\lambda_S = 1$ and $\sum_{S \ni j}\lambda_S = 1/n$ for every $j \in \Mm$.
    {The LP has $m+1$ constraints. Consequently, it has a} basic feasible solution 
    {with} at most $m + 1$ non-zero variables $\lambda_S$. {This gives an APS partition supported on at most $m + 1$ bundles.}

    {To see 2), we first scale valuation $v$ so that {$\aps(\Mm, v, 1/n) = 1$. Then,} for every $S \in \Ss$ with $\lambda_S > 0$, $v(S) \geq 1$.
    Next, let $S \in \Ss$, and let $v_k$ be a linear function from the XOS-definition of $v$ such that $v(S) = v_k(S)$ (i.e, $v_k$ is the maximizing linear funciton for $v$).
    Introduce a new linear function $\wv_{S}$ defined as follows: for any $e \in S$, $\wv_{S}(e) = v_k(e)/v_k(S)$, and for any $e \notin S$, $v_{S}(e) = 0$.
    We repeat this process for every $S \in \Ss$ with $\lambda_S > 0$, and let $\wv$ be an XOS valuation function defined as maximum over all $\wv_{S}$ obtained as above.
    By construction, $\aps(\Mm, \wv, 1/n) = 1 = \aps(\Mm, v, 1/n)$. In addition, for every $S \in \Ss$ with $\lambda_S > 0$, $\wv(S) = 1$, and $\wv(S) = v_{S}(S)$ for some linear $v_S$ which is nonzero only on $S$.
    {Since $v(S) \geq \wv(S)$ for every $S$,} 
    proving the theorem for $\wv$ implies the same result for the original $v$.
    
    For 3), assume that the valuation $v$ satisfies the conditions of 2).
    Then, for every \textit{subset of items} $M \subseteq \Mm$, $v(M) = \max_{S \in \Ss}v_S(M)$ where $v_S$ is a linear function that is equal to $0$ for any $e \notin S$.
    Consider some $M \subseteq \Mm$ that is not contained in any $S \in \Ss$, and let $S^* \in \Ss$ be such that linear function $v_{S^*}(M)$ maximizes $v(S) = \max_{S \in \Ss}v_S(M)$.
    Then, there exists $e \in M\setminus S^*$, hence $v_{S^*}(e) = 0$.
    As a result, $v(M \setminus \{e\}) \geq v_{S^*}(M\setminus \{e\}) = v_{S^*}(M) = v(M)$, implying that $M$ is not minimal.
    So, every minimal bundle must be a sub-bundle of some $S \in \Ss$.    
    }
\end{proof}

In addition, we will need the following assumption on the values of items in $\Mm$.
\begin{claim}\label{smallitems}
    Let $\{(S, \lambda_S)\}_{S\in \Ss}$ be a fractional $\frac{1}{n}$-partition corresponding to $\aps(\Mm, v, \frac{1}{n})$ for XOS valuation $v$.
    Without loss of generality, we can assume that for every $e \in \Mm$, $v(e) < \alpha$.
\end{claim}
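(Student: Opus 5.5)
We argue by induction on $n$, using Lemma~\ref{giveitem} as the single-step reduction. Normalize so that $\aps(\Mm, v, \frac{1}{n}) = 1$; the goal (in this identical-valuation setting) is to exhibit a partition of $\Mm$ into $n$ bundles, each of value at least $\alpha$.

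Suppose some item $e \in \Mm$ has $v(e) \geq \alpha$. We place $\{e\}$ as one bundle of the final partition; it is acceptable on its own. By Lemma~\ref{giveitem}, the remaining instance satisfies $\aps(\Mm \setminus \{e\}, v, \frac{1}{n-1}) \geq 1$, so it is an instance of the same kind with $n-1$ parts and APS value still at least $1$. We then repeat on $\Mm \setminus \{e\}$ with $n-1$ parts.

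This repetition stops for one of two reasons:
\begin{itemize}
\item We run out of parts. In the extreme case $n$ reaches $1$, and the whole remaining set is one bundle of value at least its APS value, which is at least $1 > \alpha$.
\item We reach an instance $(\Mm', n')$ in which every item satisfies $v(e) < \alpha$. We solve this instance with the small-item analysis, obtaining $n'$ acceptable bundles of $\Mm'$.
\end{itemize}
In either case, combining these bundles with the singletons removed along the way gives $n$ disjoint bundles of value at least $\alpha$. Any items left over can be appended to an arbitrary bundle, which is harmless by monotonicity of $v$.

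Two points need checking.
\begin{itemize}
\item After applying Lemma~\ref{giveitem}, the normalization of Lemma~\ref{wlog} may have to be redone. This is fine: Lemma~\ref{wlog} is a without-loss-of-generality transformation that can be reapplied to the reduced instance, and the reduced APS value is at least $1$. If it is strictly larger, we rescale $v$ down, which only makes the target $\alpha$ relative to the original normalization easier to meet.
\item The theorems being proved must be stated uniformly for all $n$, so that the reduced instance with $n' < n$ is still covered. For Theorem~\ref{apsmmsgap} this holds by definition. For the identical-valuation case of Theorem~\ref{mainresultaps}, the statement is meant for every $n$, so nothing is lost.
\end{itemize}
The main obstacle is simply ensuring that the reduction preserves every property used later, namely the APS value and the validity of the fractional partition. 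Lemma~\ref{giveitem} supplies both.
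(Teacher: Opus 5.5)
Your argument is correct and is the same as the paper's. Each item of value at least $\alpha$ is handed out as a singleton bundle, and \cref{giveitem} guarantees that the APS of the remaining instance with one fewer agent does not decrease, so one may assume all remaining items are small. Your explicit treatment of termination, of re-applying \cref{wlog} with downward rescaling, and of the need for the small-item result to hold for every $n$ spells out details that the paper's one-paragraph proof leaves implicit.
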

\begin{proof}
    Assume that the APS-value for $\{(S, \lambda_{S})\}_{S \in \Ss}$ is $1$.
    If there exists an item $e \in \Mm$ with $v(e) \geq \alpha$, give this item to some agent.
    If we give away all items $e$ of value at least $v(e) \geq \alpha$ to agents, those agents receive $\alpha$-fraction of their APS, while by \cref{giveitem} the APS-value for the remaining agents and items {does not decrease}.
    As a result, we can assume that all items of value at least $\alpha$ have been given away to some agents, and it remains to find an allocation for the remaining items and agents.
\end{proof}

\subsection{Allocation algorithm}
\label{sec:greedalgo}

We produce an allocation in $n$ steps. At step $i$ we pick a bundle for agent $i$, removing the allocated items from $\Mm$.
For $i \in [n]$, let $\Mm^i$ denote the items that remain at the beginning of step $i$ (so $\Mm^1 = \Mm$), and for a bundle $S \in \Ss$, let $S^i := S \cap \Mm^i$.
Define $\beta^i := \sum_{S}\lambda_S\cdot v(S^i)$, {interpreted as} the total remaining value of the APS partition at the beginning of step $i$.
By \cref{wlog}, $\beta^1 = 1$.
{We call a bundle $B$ \textit{acceptable} if $v(B) \geq \alpha$. 
At step $i$, we pick some minimal acceptable bundle $B \subseteq \Mm^i$, such that the choice of $B$ maximizes $\beta^{i + 1}$.
Note that by \cref{wlog}, $B$ being minimal implies that there exists $(S, \lambda_S) \in \Ss$ with  $\lambda_S > 0$ such that $B\subseteq S$ (i.e, it is a subset of an APS bundle).
We give the chosen $B$ to agent $i$ and remove the items of $B$ from $\Mm$, i.e $\Mm^{i + 1} := \Mm^i\setminus B$.}

\begin{algorithm}[h]
\caption{Greedy allocation algorithm}\label{basealgo}
	\begin{algorithmic}[1]
		\Require{$(n, m, \Mm, v)$, and an APS partition $\{(S, \lambda_S)\}_{S \in \Ss}$ satisfying \cref{wlog} and \cref{smallitems}.}
		\Ensure{Disjoint sets $M_1, \ldots, M_n$ such that $v(M_i) \geq \alpha$ for all $i \in [n]$.}
		\State Initialize $\Mm^1 = \Mm$
		\For{$i = 1,\ldots, n$}
		      \State Let $B\subseteq \Mm^i$ be a minimal acceptable bundle maximizing $\beta^{i + 1} = \sum_S\lambda_S \cdot v(S \cap (\Mm^i \setminus B))$
            \State 
            Give agent $i$ bundle $M_i:= B$.
            \State Update $\Mm^{i + 1} := \Mm^i \setminus B$.
		\EndFor
		\Return $M_1, \ldots, M_n$.
	\end{algorithmic}
\end{algorithm}
For the purpose of analysis, we will consider another, ``efficient'' version of this algorithm, where instead of picking {the ``best'' minimal} 
acceptable bundle at every step, the algorithm chooses $B$ from a specific list of $O(m)$ acceptable bundles.

Suppose that we want to pick a bundle $B$ that only contains items from some subset $\Mm'\subseteq \Mm$.
We partition all bundles $S \in \Ss$ into sub-bundles based on the value $v(S\cap \Mm')$, as follows.
\begin{lemma}\label{partition}
    Let $\Mm' \subseteq \Mm$ be fixed.
    For $t \geq 1$, let $\Ss(\Mm', t\alpha)$ contain all bundles $S \in \Ss$ such that $(t - 1)\alpha \leq v(S\cap \Mm') < t\alpha$.
    For every $t\geq 2$ and every $S \in \Ss(\Mm', t\alpha)$, the following holds.
    \begin{enumerate}
        \item if $t$ is even, there exist $t/2$ sub-bundles $B_1, \ldots, B_{t/2} \subseteq S$ such that for any $k \in [t/2]$, $v(B_k \cap \Mm') \geq \alpha$, and for every $j \in S$, item $j$ is contained in \textbf{at most one} 
        sub-bundle $B_k$.

        \item if $t$ is odd, there exist $t$ sub-bundles $B_1, \ldots, B_t \subseteq S$ such that for any $k \in [t]$, $v(B_k\cap \Mm')\geq \alpha$, and for every $j \in S$, item $j$ is contained in \textbf{at most two} of sub-bundles $B_k$.
    \end{enumerate}
\end{lemma}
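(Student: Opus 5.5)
We first reduce to an additive problem. Fix $S \in \Ss(\Mm', t\alpha)$ and put $T := S \cap \Mm'$. Since $v$ is XOS, there is an additive clause $w$ in its definition with $w(T) = v(T) \ge (t-1)\alpha$. Also $w(B) \le v(B)$ for every $B$. So it suffices to find sub-bundles $B_k \subseteq T$ with $w(B_k) \ge \alpha$ and the required multiplicities. By \cref{smallitems}, every item satisfies $w(e) \le v(e) < \alpha$. The problem is now purely combinatorial: a multiset of reals in $[0,\alpha)$ with total $W := w(T) \in [(t-1)\alpha, t\alpha)$ must be grouped into bundles of weight at least $\alpha$.

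\emph{Even $t$.} List the items of $T$ in any order and cut greedily. Open a bundle, add items until its weight first reaches $\alpha$, close it, and repeat. Each closed bundle has weight $<2\alpha$, because just before the last item was added its weight was $<\alpha$, and that item has weight $<\alpha$. After $k$ bundles are closed, the unused weight is therefore $> W - 2k\alpha \ge (t-1-2k)\alpha$. This is at least $\alpha$ as long as $k \le t/2 - 1$, so a further bundle can be completed. Hence we obtain $t/2$ pairwise disjoint bundles, each of weight at least $\alpha$.

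\emph{Odd $t$.} Place the items consecutively on a circle of circumference $W$, each item an arc of length $w(e) < \alpha$. Then traverse the circle twice, which gives a cyclic sequence of total length $2W \ge 2(t-1)\alpha$ in which every item appears exactly twice. Any family of bundles that are pairwise disjoint \emph{consecutive} segments of this doubled sequence automatically uses each item at most twice. A segment that wraps fully onto itself could contain both copies of an item, but this is excluded because each bundle we form has weight $< 2\alpha \le W$ for $t \ge 3$. It therefore remains to cut the doubled circle into $t$ consecutive segments of weight $\ge \alpha$ each.

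The plain greedy cut from the even case does not suffice here. With bundles of weight up to nearly $2\alpha$, one only gets about $(t-1)$ of them out of weight $2(t-1)\alpha$. To gain the extra bundle, I would exploit the freedom in the starting point on the circle, and also the freedom in ordering the items. The natural first attempt is as follows. Greedy overshoot is a loss of at most $w(\text{last item})$ per bundle. Choosing the rotation (starting offset) uniformly at random, the expected overshoot at each cut is $\sum_e w(e)^2/(2W)$, which is at most $\alpha/2$ in total weight terms per cut. One then averages over the offset, or alternatively orders items so that heavy items are interleaved with light ones, to show that some offset yields $t$ segments.

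A second, more robust option is a combination. First apply the even case with $t-1$ in place of $t$, which is valid since $W \ge (t-2)\alpha$; this gives $(t-1)/2$ disjoint bundles. Take each of these twice, giving $t-1$ bundles with multiplicity $\le 2$. Then rebalance along the circle so that each copy is trimmed to weight in $[\alpha,\alpha+\max_e w(e))$, freeing enough weight for the $t$-th bundle. I expect this odd case, and specifically controlling the accumulated overshoot so that $t$ bundles fit into weight $2(t-1)\alpha$, to be the main obstacle. The even case and the reduction to an additive clause are routine.
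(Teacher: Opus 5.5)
Your reduction to one additive clause $w$ with $w(T)=v(T)$ and $w(e)<\alpha$ for every item is correct. So is your greedy cutting for even $t$, which is a valid alternative to the paper's pairing argument. The gap is the odd case: you leave it open, and neither route you sketch closes it.

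\emph{Random offset.} The averaging only applies to the first cut. Every later segment starts at an item boundary, so the later cut positions are not uniform, and the bound $\sum_e w(e)^2/(2W)$ is unjustified there. Even granting it, an expected overshoot of $\alpha/2$ per cut is too weak for $t=3$, the odd case the algorithm actually uses. When $W=2\alpha$, the three segments must fit into total weight $2W=4\alpha$, so their overshoots must sum to at most $\alpha$.

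\emph{Duplicate and rebalance.} This fails on three items $a,b,c$ of weight $2\alpha/3$ each. The even case gives the single bundle $\{a,b\}$, and two copies of it use up $a$ and $b$. No copy can be trimmed, since one item weighs less than $\alpha$, and only $c$ remains. The only valid family here is the cyclic one $\{a,b\},\{b,c\},\{c,a\}$.

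The missing idea, which is the paper's, is to \emph{merge} items before pairing. While two current blocks have combined $w$-weight $<\alpha$, replace them by their union; they need not be adjacent in any order. Every block stays lighter than $\alpha$, and at termination any two blocks together weigh at least $\alpha$. Since the total weight is at least $(t-1)\alpha$, there are at least $t$ blocks $g_1,\dots,g_t$.

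For odd $t$, put $B_k=g_k\cup g_{k+1}$ with indices mod $t$. Each $B_k$ has weight at least $\alpha$, and each item lies in at most two of them. For even $t$, $B_k=g_{2k-1}\cup g_{2k}$ gives the disjoint family directly.

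This is also the ``ordering freedom'' your doubled circle needs. Lay the items out block by block: $g_1,\dots,g_t$, then the remaining blocks. The doubled circle then splits into $t$ consecutive segments. Each consists of two consecutive blocks, except the one crossing the seam, $g_t$ through $g_1$, which also absorbs the remaining blocks.
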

\begin{proof}
    For a given set $S$, perform the following procedure: if two items $e, e'$ have combined value $v_S(\{e, e'\} \cap \Mm') < \alpha$, we ``unify'' them, replacing $e, e'$ in $S$ with a combined item $\{e, e'\}$ of value $v_S(\{e\}\cap \Mm') + v_S(\{e'\}\cap \Mm')$ and treating both $e, e'$ like a single item.
    This is a ``virtual'' procedure that is performed in order to simplify the analysis, so that if some property / action is applied to the unified item, it is applied simultaneously to all of the items in the union.
    We repeat the process until in given set $S$ any two (unified) items $e, e'$ have combined $v_S(\cdot \cap M)$-value at least $\alpha$ (so we can no-longer unify them without making combined value at least $\alpha$).

    Let $S \in \Ss(\Mm', t\alpha)$, then set $S$ must contain at least $t$ (unified) items that have not been picked yet.
    Let $e_1, \ldots, e_t \in S$ be arbitrary different items, note that any two of them together have value at least $\alpha$.
 
    {If $t$ is even} then 
    create $t/2$ bundles $B_1, \ldots, B_{t/2}$ of value at least $\alpha$ as follows: {$B_k = \{e_{2k - 1}, e_{2k}\}$ for $1 \leq k \le t/2$}.
    Every $e \in S$ is contained in {at most} 
    one of $B_1, \ldots, B_{t/2}$, and all $B_k$ have value at least $\alpha$, as each contains at least two items.

    {If $e$ is odd then}
    create $t$ bundles $B_1, \ldots, B_t\subseteq S$ of value at least $\alpha$ as follows: $B_k = \{e_{k}, e_{k + 1}\}$ for $1 \leq k < t$, and {$B_t = \{e_t, e_1\}$}. 
    Every $j \in S$ is contained in at most two of $B_1, \ldots, B_t$, and all $B_k$ for $k \in [t]$ have value at least $\alpha$, as each contains at least two items.
\end{proof}

We apply \cref{partition} with set $\Mm' = \Mm^i$, and obtain classes $\Ss^i(t\alpha) := \Ss(\Mm^i, t\alpha)$.
Since we consider $\alpha \leq 1/3$, and by \cref{wlog} there are no sets of value greater than $1$, we will only have bundles of classes $\Ss^i(\alpha)$, $\Ss^i(2\alpha)$, $\Ss^i(3\alpha)$ and $\Ss^i(4\alpha)$, where for every $S \in \Ss^i(4\alpha)$ it holds $3\alpha \leq v(S\cap \Mm^i) \leq 1$.
We denote $\Ss^i(1) := \Ss^i(4\alpha)$ for clarity.
Using \cref{partition}, we are ready to give the description of the efficient algorithm (\cref{algo}).


\begin{algorithm}[h]
\caption{Efficient allocation algorithm}\label{algo}
	\begin{algorithmic}[1]
		\Require{$(n, m, \Mm, v)$, and an APS partition $\{(S, \lambda_S)\}_{S \in \Ss}$ satisfying \cref{wlog} and \cref{smallitems}.}
		\Ensure{Disjoint sets $B^1, \ldots, B^n$ such that $v(B^i) \geq \alpha$ for all $i \in [n]$.}
		\State Initialize $\Mm^1 = \Mm$
		\For{$i = 1,\ldots, n$}
            \State Apply \cref{partition} with $\Mm' = \Mm^i$, obtaining classes $\Ss^i(\alpha), \Ss^i(2\alpha), \Ss^i(3\alpha), \Ss^i(1)$ 
            \State Let $\Bb^i$ be a collection of sets, initially $\Bb^i = \varnothing$
            \For{$S \in \Ss$ with $\lambda_S > 0$}
                \If{$S \in \Ss^i(2\alpha)$} add a copy of $S \cap \Mm^i$ to $\Bb^i$ \EndIf
                \If{$S \in \Ss^i(3\alpha)$} let $B_1, B_2, B_3$ be sub-bundles of {$S \cap \Mm^i$} from \cref{partition}
                    \State for every $k = 1, 2, 3$, add a copy of $B_k$ to $\Bb^i$ \EndIf
                \If{$S \in \Ss^i(1)$} let $B_1, B_2$ be sub-bundles of {$S\cap \Mm^i$} from \cref{partition}
                    \State for every $k = 1, 2$, add a copy of $B_k$ to $\Bb^i$ \EndIf
            \EndFor
            \State Let $B \in \Bb^i$ be an acceptable bundle maximizing $\beta^{i + 1} = \sum_S\lambda_S \cdot v(S \cap (\Mm^i \setminus B))$
            \State 
            Give agent $i$ bundle $B^i := B$.
            \State Update $\Mm^{i + 1} := \Mm^i \setminus B$.
		\EndFor
		\Return $B^1, \ldots, B^n$.
	\end{algorithmic}
\end{algorithm}

The algorithm surely succeeds if $\beta^n \geq \alpha$, as by definition there would exist at least one $S \in \Ss$ with $\lambda_S > 0$ and $v(S^n) \geq \alpha$.
Let $\alpha^*$ denote the largest value such that for every $\alpha \leq \alpha^*$ it holds that $\beta^n \geq \alpha$.
We present a lower bound on the value of $\alpha^*$.

One of the key observations that we use in the analysis of \cref{algo} is the following.
Suppose that at every step $i$, instead of picking a maximizing bundle $B$, we would select $B$ randomly, according to some distribution over acceptable bundles.
If the distribution with which we pick $B$ is such that every item $e \in \Mm$ is chosen with some small probability, then we can guarantee that in expectation, the total APS value should not decrease by a lot.
\begin{lemma}\label{probval}
    Let $i \in [n]$, and let $\Bb^i$ be as in \cref{algo}.
    Let $\mu^i$ be an arbitrary probability distribution over bundles in $\Bb^i$.
    For $e \in \Mm$, let $\mu^i(e)$ be the probability that we pick item $e$ if we sample a random $B \sim \mu^i$.
    Suppose that for every $e \in \Mm$, the value $\mu^i(e)$ is upper bounded by some $p^i$.
    Then, in expectation over $B \sim \mu^i$
    \[
        \E[\mu^i]{\beta^{i + 1}\mid \beta^i} = \E[B\sim\mu^i]{\sum_S\lambda_S \cdot v(S \cap (\Mm^i \setminus B))} \geq (1 - p^i)\cdot \beta^i.
    \]
\end{lemma}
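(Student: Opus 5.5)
We prove the bound bundle by bundle and then average over the APS weights. By linearity of expectation,
\[
\E[B\sim\mu^i]{\sum_S\lambda_S \cdot v(S \cap (\Mm^i \setminus B))} = \sum_S \lambda_S\, \E[B\sim\mu^i]{v(S^i\setminus B)},
\]
where $S^i = S\cap\Mm^i$. It therefore suffices to show that for every fixed $S$ with $\lambda_S>0$ we have
\[
\E[B\sim\mu^i]{v(S^i\setminus B)} \geq (1-p^i)\, v(S^i).
\]
Summing this inequality with weights $\lambda_S$ gives exactly $(1-p^i)\beta^i$.

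For the per-bundle inequality we use the XOS structure. Let $a$ be an additive function from the XOS representation of $v$ that attains the maximum on $S^i$, so that $a(S^i) = v(S^i)$ and $a(T)\le v(T)$ for every $T$. By \cref{wlog} we could take $a=v_{S'}$ for an appropriate $S'$, but any maximizing clause suffices. Since $S^i\setminus B\subseteq S^i$ and $a$ is additive and nonnegative, we get
\[
v(S^i\setminus B) \geq a(S^i\setminus B) = \sum_{e\in S^i} a(e)\,\mathds{1}[e\notin B].
\]
Taking expectations over $B\sim\mu^i$ gives
\[
\E{v(S^i\setminus B)} \ge \sum_{e\in S^i} a(e)(1-\mu^i(e)) \ge (1-p^i)\sum_{e\in S^i}a(e) = (1-p^i)v(S^i),
\]
using the hypothesis $\mu^i(e)\le p^i$ and $a(e)\ge 0$.

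Note that $\beta^i$ is deterministic given $\Mm^i$, so the conditioning on $\beta^i$ in the statement is just conditioning on the current state $\Mm^i$. The identity on the left of the statement is simply the definition of $\beta^{i+1}=\sum_S\lambda_S v(S\cap\Mm^{i+1})$ with $\Mm^{i+1}=\Mm^i\setminus B$.

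The only step that needs care, and the place where XOS is essential, is the lower bound of $v(S^i\setminus B)$ by the fixed additive clause. For a general monotone $v$, a single item could carry disproportionate value in context, and the per-bundle inequality would fail. Everything else is linearity of expectation.

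As a corollary, some bundle in the support of $\mu^i$ attains at least the expectation. Hence the greedy choice in \cref{algo} satisfies $\beta^{i+1}\ge(1-p^i)\beta^i$, which is the form in which the lemma will be used (cf. \cref{processobs}).
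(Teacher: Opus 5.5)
Your proof is correct and is essentially the paper's argument: linearity of expectation, a lower bound on $v(S^i\setminus B)$ by a single additive XOS clause, and then the per-item bound $\mu^i(e)\le p^i$. The one difference is in your favor: you take the clause that maximizes $v$ on the current set $S^i$, whereas the paper uses the fixed clause $v_S$ from \cref{wlog} and asserts $v(S^i)=v_S(S^i)$ for the subset $S^i\subseteq S$. \cref{wlog} does not actually guarantee that equality, since another clause $v_{S'}$ can value $S^i$ more highly, so your choice of clause makes this step airtight.
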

\begin{proof}
    {By \cref{wlog}, for a given set $S$ it holds $v(S) = v_S(S)$ for some additive function $v_S$ that is $0$ for all items $e \notin S$.
    Similarly, for every $i \in [n]$ it holds $v(S^i) = v_S(S^i)$.
    For convenience, define a linear function $v_S^i$ which is equal to $v_S$ on all items from $S^i$ and $0$ elsewhere.}
    
    For $e \in \Mm$, the expected loss $\beta^i = \sum_Sv(S^i)\lambda_S =  \sum_Sv^i_S(S)\lambda_S$ suffers by giving item $e$ is
    \[
        \left(\sum_{S\ni e}v^i_S(e)\lambda_S\right)\cdot \mu^i(e).
    \]
    Therefore, the total expected loss over $B \sim \mu^i$ that $\beta^i$ suffers after step $i$ is
    \[
        \sum_e\left(\sum_{S\ni e}v^i_S(e)\lambda_S\right)\cdot \mu^i(e) = \sum_e\sum_{S\ni e}v^i_S(e)\mu^i(e)\lambda_S = \sum_S\left(\sum_{e \in S}v^i_S(e)\mu^i(e)\right)\lambda_S.
    \]
    Since for every $e \in \Mm$, $\mu^i(e)\leq p^i$,
    the expected total loss of $\beta^i$ after step $i$ is at most 
    \[
        \sum_S\left(\sum_{e \in S}v^i_S(e)\mu^i(e)\right)\lambda_S \leq p^i\cdot \sum_S\sum_{e \in S}v^i_S(e)\lambda_S = p^i\cdot \sum_Sv^i_S(S)\lambda_S = p^i\cdot \beta^i,
    \]
    and therefore $\E{\sum_S\lambda_S \cdot v(S \cap (\Mm^i \setminus B))} \geq \E{\sum_S\lambda_S \cdot v^i_S(S \cap (\Mm^i \setminus B))} \geq (1 - p^i)\cdot \beta^i$.
\end{proof}
Denote $\Ss^i_\alpha, \Ss^i_{2\alpha}, \Ss^i_{3\alpha}, \Ss^i_1 = \Ss^i(\alpha), \Ss^i({2\alpha}), \Ss^i({3\alpha}), \Ss^i(1)$ for convenience.
In the following corollary, we provide a concrete distribution $\mu^i$ and probability bound $p^i$ for every step $i$ of the algorithm.
\begin{corollary}\label{choice}
    Consider a partition $\Ss^i_\alpha, \Ss^i_{2\alpha}, \Ss^i_{3\alpha}, \Ss^i_1$ at step $i \in [n]$.
    Let $\Lambda^i_\alpha := \sum_{S \in \Ss^i_\alpha}\lambda_S$, and $\Lambda^i_{2\alpha}$, $\Lambda^i_{3\alpha}$, $\Lambda^i_1$ are defined accordingly as sums of weights of sets in corresponding classes.
    Let 
    \[
        \Lambda^i := 0\cdot \Lambda^i_\alpha + 1\cdot \Lambda^i_{2\alpha} + \frac{3}{2}\cdot \Lambda^i_{3\alpha} + 2\cdot \Lambda^i_{1}.
    \]
    It holds for every $i \in [n]$ that $\beta^{i + 1} \geq (1 - \frac{1}{n\Lambda^i})\cdot \beta^i$.
\end{corollary}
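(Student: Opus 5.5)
We construct an explicit distribution $\mu^i$ over the collection $\Bb^i$ built in \cref{algo}, bound the maximum item-selection probability by $p^i = \frac{1}{n\Lambda^i}$, and then combine \cref{probval} with the greedy choice of the algorithm.

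The distribution weights each copy added to $\Bb^i$ in proportion to $\lambda_S$ of its parent bundle. For $S\in\Ss^i_{2\alpha}$, the single copy $S\cap\Mm^i$ gets weight $\lambda_S$; it is acceptable since $v(S\cap \Mm^i)\ge\alpha$. For $S\in\Ss^i_{3\alpha}$, each of the three sub-bundles $B_1,B_2,B_3$ from \cref{partition} (case $t=3$) gets weight $\frac12\lambda_S$, for a total of $\frac32\lambda_S$. For $S\in\Ss^i_1$, each of the two sub-bundles $B_1,B_2$ gets weight $\lambda_S$, for a total of $2\lambda_S$. Here we apply \cref{partition} with $t=4$. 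Bundles in $\Ss^i_1$ may have value up to $1$, possibly at least $4\alpha$, but they certainly have value at least $3\alpha$, which already yields at least $4$ unified items and hence two disjoint pairs; so the $t=4$ construction still applies. Bundles in $\Ss^i_\alpha$ get nothing. The total weight is exactly $\Lambda^i$, so dividing by $\Lambda^i$ gives a probability distribution $\mu^i$. (If $\Lambda^i=0$ the bound is vacuous, or trivially negative, so we may assume $\Lambda^i>0$.)

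Next we bound $\mu^i(e)$ for each item $e$. Fix $S\ni e$; all bundles placed in $\Bb^i$ on behalf of $S$ are subsets of $S$, and each is selected with the following total (pre-normalization) mass containing $e$:
\begin{itemize}
\item at most $\lambda_S$ in class $2\alpha$;
\item at most $2\cdot\frac12\lambda_S=\lambda_S$ in class $3\alpha$, since $e$ lies in at most two of the three sub-bundles;
\item at most $\lambda_S$ in class $1$, since the two sub-bundles are disjoint.
\end{itemize}
Summing over all $S\ni e$ and using the fractional partition constraint $\sum_{S\ni e}\lambda_S\le \frac1n$ gives
\[
\mu^i(e)\le \frac{1}{\Lambda^i}\sum_{S\ni e}\lambda_S\le\frac{1}{n\Lambda^i}.
\]
One subtlety is the unification step inside \cref{partition}: a unified item is a set of original items, and a sub-bundle containing it contains all of them. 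Each original item still belongs to exactly one unified item, so the containment counts above hold for original items as well.

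Applying \cref{probval} with $p^i=\frac1{n\Lambda^i}$ gives
\[
\E[B\sim\mu^i]{\textstyle\sum_S\lambda_S v(S\cap(\Mm^i\setminus B))}\ge\left(1-\frac1{n\Lambda^i}\right)\beta^i.
\]
Some bundle in the support of $\mu^i$ therefore achieves at least this value. The algorithm picks the acceptable $B\in\Bb^i$ maximizing $\beta^{i+1}$, and every bundle in the support is acceptable, so the chosen bundle does at least as well and $\beta^{i+1}\ge(1-\frac1{n\Lambda^i})\beta^i$.

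The main obstacle is the justification of the sub-bundle counts: that class-$1$ bundles still admit two disjoint acceptable sub-bundles, and that unification respects the per-item containment bounds. Everything else is bookkeeping.
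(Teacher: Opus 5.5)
Your proposal is correct and follows essentially the same route as the paper. You use the same distribution $\mu^i$ on $\Bb^i$: weight $\lambda_S/\Lambda^i$ for the copy of $S^i$ in class $\Ss^i_{2\alpha}$, $\lambda_S/(2\Lambda^i)$ for each of the three sub-bundles in class $\Ss^i_{3\alpha}$, and $\lambda_S/\Lambda^i$ for each of the two sub-bundles in class $\Ss^i_1$. You prove the same per-item bound $\mu^i(e)\le \frac{1}{n\Lambda^i}$ as in \cref{distribution}, then conclude with \cref{probval} and the maximizing choice of the algorithm.

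Your extra remarks are harmless refinements of points the paper leaves implicit: that bundles in $\Ss^i_1$ contain at least four unified items because each unified item is worth less than $\alpha$, and that unification preserves the containment counts for original items.
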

\begin{proof}
    It is sufficient to show that at every step $i \in [n]$ there exists a choice of an acceptable bundle $B$ such that $\sum_S\lambda_S \cdot v(S \cap (\Mm^i \setminus B))\geq (1 - \frac{1}{n\Lambda^i})\cdot \beta^i$.
    To do so, we are going to construct a probability distribution $\mu^i$ over the bundles $\Bb^i$ in \cref{algo}, such that for every $j \in \Mm$ the probability to pick item $j$ if we sample a random $B \sim \mu^i$ is at most $\frac{1}{n\Lambda^i}$.
    Then, by \cref{probval} in expectation over $B \sim \mu^i$ it holds that $\sum_S\lambda_S \cdot v(S \cap (\Mm^i \setminus B)) \geq (1 - \frac{1}{n\Lambda^i})\cdot \beta^i$, implying that there exists at least one choice of $B \in \Bb^i$ that guarantees $\beta^{i + 1} \geq (1 - \frac{1}{n\Lambda^i})\cdot \beta^i$.

    Note that $\Bb^i$ is a multiset, as by construction we may insert several copies of the same $B$ into $\Bb^i$ in \cref{algo}.
    Consequently, the distribution $\mu^i$ over $\Bb^i$ will have separate values of $\mu^i_B$ for different copies of $B \in \Bb^i$. 
    To define the values of $\mu^i$, we consider all $S \in \Ss$ with $\lambda_S > 0$.
    \begin{itemize}
        \item If $S \in \Ss^i_\alpha$, no subset of $S$ appears in $\Bb^i$.
        \item If $S \in \Ss^i_{2\alpha}$, set $\mu^i_{S^i} =  \frac{\lambda_S}{\Lambda^i}$ for the corresponding copy of $S^i$ inserted in $\Bb^i$.
        \item if $S \in \Ss^i_{3\alpha}$, let $B_1, B_2, B_3 \subseteq S$ be sub-bundles of $S^i$ constructed as in \cref{partition}.
        For every $k = 1, 2, 3$, set $\mu^i_{B_k} = \frac{\lambda_S}{2\Lambda^i}$ for the corresponding copy of $B_k$ inserted in $\Bb^i$.
        \item if $S \in \Ss^i_{1}$, let $B_1, B_2 \subseteq S$ be sub-bundles of $S^i$ constructed as in \cref{partition}.
        For every $k = 1, 2$, set $\mu^i_{B_k} = \frac{\lambda_S}{\Lambda^i}$ for the corresponding copy of $B_k$ inserted in $\Bb^i$.
    \end{itemize}
    \begin{lemma}\label{distribution}
        It holds that $\sum_{B \in \Bb^i}\mu_B^i = 1$, so $\mu^i$ is a valid probability distribution.
         Furthermore, for every item $e \in \Mm$, the probability $\mu^i(e)$ of picking item $e$ if we sample a random $B \sim \mu^i$ is at most $\frac{1}{n\Lambda^i}$.
        That is,
        \[
            \mu^i(e) = \sum_{\substack{B \in \Bb^i \\ B \ni e}}\mu^i_B \leq \frac{1}{n\Lambda^i}.
        \]
\end{lemma}
\begin{proof}
    First, partition all bundles $B \in \Bb^i$ into classes $\Bb^i_\alpha, \Bb^i_{2\alpha}, \Bb^i_{3\alpha}, \Bb^i_{1}$ depending on what class $\Ss^i_\alpha, \Ss^i_{2\alpha}, \Ss^i_{3\alpha}, \Ss^i_1$ their ``parent'' bundle $S \in \Ss$ belongs to.
    If some bundle $B$ has several copies in $\Bb^i$, each copy comes from a separate ``parent'' bundle $S$, and we put each copy of $B$ into the corresponding $\Bb^i$-class based on each parent $S$.
    Then, observe that:
    \begin{enumerate}
        \item if $S \in \Ss^i_\alpha$, no sub-bundles of it contribute to $\mu^i$ (i.e $\Bb^i_\alpha = \varnothing$);
        \item if $S \in \Ss^i_{2\alpha}$, then $S^i \in \Bb^i$ and it contributes exactly $\frac{\lambda_S}{\Lambda^i}$ to $\mu^i$;
        \item if $S \in \Ss^i_{3\alpha}$, then sub-bundles of $S^i$ are in $\Bb^i_{3\alpha}$ and in total contribute exactly $\frac{3}{2}\cdot \frac{\lambda_S}{\Lambda^i}$ to $\mu^i$;
        \item if $S \in \Ss^i_{1}$, then sub-bundles of $S^i$ are in $\Bb^i_{1}$ and in total contribute exactly $2\cdot \frac{\lambda_S}{\Lambda^i}$ to $\mu^i$.
    \end{enumerate}
    As a result,
    \begin{multline*}
        \sum_{B \in \Bb^i}\mu^i_B = \sum_{B \in \Bb^i_\alpha}\mu_B^i + \sum_{B \in \Bb^i_{2\alpha}}\mu_B^i + \sum_{B \in \Bb^i_{3\alpha}}\mu_B^i + \sum_{B \in \Bb^i_1}\mu_B^i\\
        = \frac{1}{\Lambda^i}\left(0\cdot \Lambda^i_\alpha + 1\cdot \Lambda^i_{2\alpha} + \frac{3}{2}\cdot \Lambda^i_{3\alpha} + 2\cdot \Lambda^i_{1}\right) = 1.
    \end{multline*}

    Next, let $e \in \Mm$ be some item.
    It is clear that
    \[
        \sum_{\substack{B \in \Bb^i \\ B \ni e}}\mu^i_B = \sum_{\substack{B \in \Bb^i_{\alpha} \\ B \ni e}}\mu^i_B + \sum_{\substack{B \in \Bb^i_{2\alpha} \\ B \ni e}}\mu^i_B + \sum_{\substack{B \in \Bb^i_{3\alpha} \\ B \ni e}}\mu^i_B + \sum_{\substack{B \in \Bb^i_1 \\ B \ni e}}\mu^i_B.
    \]
    \begin{enumerate}
        \item if $B \in \Bb^i_\alpha$, its parent bundle $S$ contributes $0$ to the sum above.
        \item if $B \in \Bb^i_{2\alpha}$, its parent bundle $S$ contributes exactly $\frac{\lambda_S}{\Lambda^i}$ to the sum above, as there is only one sub-bundle of $S$ (set $S^i$ itself) containing $e$ in $\Bb^i$.
        \item if $B \in \Bb^i_{3\alpha}$, its parent bundle $S$ contributes exactly $2\cdot \frac{\lambda_S}{2\Lambda^i}$ to the sum above, as $e$ is contained in only two out of tree sub-bundles of $S^i$ that belong to $\Bb^i$.
        \item if $B \in \Bb^i_{1}$, its parent bundle $S$ contributes exactly $\frac{\lambda_S}{\Lambda^i}$ to the sum above, as there is only one out of two sub-bundles of $S^i$ containing $e$ in $\Bb^i$.
    \end{enumerate}
    As a result
    \[
        \sum_{\substack{B \in \Bb^i \\ B \ni e}}\mu^i_B = \sum_{\substack{S \in \Ss^i_\alpha \\ S \ni e}}0 + \sum_{\substack{S \in \Ss^i_{2\alpha} \\ S \ni e}}\frac{\lambda_S}{\Lambda^i} + \sum_{\substack{S \in \Ss^i_{3\alpha} \\ S \ni e}}2\cdot \frac{\lambda_S}{2\Lambda^i} + \sum_{\substack{S \in \Ss^i_1 \\ S \ni e}}\frac{\lambda_S}{\Lambda^i} \leq \frac{1}{\Lambda^i}\sum_{\substack{S \in \Ss\\ S\ni e}}\lambda_S \leq \frac{1}{n\Lambda^i}.
    \]
\end{proof}
The claim follows.
\end{proof}

\subsection{Bounding the probability}
\label{sec:pBound}

{
As shown in \cref{choice}, for every $i \in [n]$ it holds that $\beta^{i + 1} \geq (1 - \frac{1}{n\Lambda^i})\cdot \beta^i$ where
\[\Lambda^i = 0\cdot \Lambda^i_\alpha + 1\cdot \Lambda^i_{2\alpha} + \frac{3}{2}\cdot \Lambda^i_{3\alpha} + 2\cdot \Lambda^i_{1}.\]
This bound depends the values of $\Lambda^i_\alpha, \Lambda^i_{2\alpha}, \Lambda^i_{3\alpha}, \Lambda^1$, which are difficult to compute exactly without more information, as they depend on the exact execution of \cref{algo}.
To tackle this, for every step $i$ we obtain a universal upper-bound the expression $\frac{1}{n\Lambda^i}$ that depends only on $\beta^i$.
Note that to upper bound $\frac{1}{n\Lambda^i}$ is equivalent to lower bound $\Lambda^i$.}

\begin{claim}\label{lambdabound}
    Suppose that at step $i$ the total remaining value {$\beta^i$ satisfies} $\beta^i \geq \beta$ for some $\beta \geq \alpha$.
    Then $\Lambda^i$ is at least the optimum $W^i_\beta$ of the following optimization problem over the variables $w_\alpha, w_{2\alpha}, w_{3\alpha}, w_1$:
    \[
    \begin{aligned}
        &\minimize&& W := 0\cdot w_\alpha + 1\cdot w_{2\alpha} + \frac{3}{2}\cdot w_{3\alpha} + 2\cdot w_{1}\\
        &\subto&& w_\alpha + w_{2\alpha} + w_{3\alpha} + w_1 = 1;\\
        &&& \alpha \cdot w_\alpha + 2\alpha \cdot w_{2\alpha} + 3\alpha \cdot w_{3\alpha} + 1\cdot w_{1} \geq \beta; \\\
        &&& w_\alpha, w_{2\alpha}, w_{3\alpha}, w_1 \geq 0.
    \end{aligned}\tag{$\OPT^i_\beta$}\label{opt}
    \]
\end{claim}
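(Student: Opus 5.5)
The natural move is to exhibit the actual class weights as a feasible point of \eqref{opt}. Concretely, I would set $w_\alpha := \Lambda^i_\alpha$, $w_{2\alpha} := \Lambda^i_{2\alpha}$, $w_{3\alpha} := \Lambda^i_{3\alpha}$, $w_1 := \Lambda^i_1$. With this choice the objective $W$ evaluates exactly to $\Lambda^i$. So once feasibility is checked, the optimum $W^i_\beta$ of the minimization is at most $\Lambda^i$, which is the claim.

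The constraints are verified as follows.

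\textbf{Nonnegativity.} This is immediate, since $\lambda_S \ge 0$.

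\textbf{Normalization.} The classes $\Ss^i_\alpha,\Ss^i_{2\alpha},\Ss^i_{3\alpha},\Ss^i_1$ partition all $S\in\Ss$ with $\lambda_S>0$. This uses \cref{wlog}: $v(S^i)\le v(S)=1$, so every bundle falls into exactly one class $\Ss(\Mm^i,t\alpha)$ with $t\in\{1,2,3\}$ or into $\Ss^i_1$, given $\alpha\le 1/3$. Hence the four weights sum to $\sum_S\lambda_S=1$.

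\textbf{Value constraint.} This is the only step with content. By the definition of the classes in \cref{partition}, each $S\in\Ss(\Mm^i,t\alpha)$ satisfies $v(S^i)<t\alpha$. For $S\in\Ss^i_1$ we have $v(S^i)\le 1$. Therefore
\[
\beta \le \beta^i=\sum_S\lambda_S v(S^i)\le \alpha\Lambda^i_\alpha+2\alpha\Lambda^i_{2\alpha}+3\alpha\Lambda^i_{3\alpha}+1\cdot\Lambda^i_1 .
\]
This is exactly the second constraint.

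I expect no real obstacle. The only care needed is the boundary convention for the top class: bundles with $3\alpha\le v(S^i)\le 1$ are assigned coefficient $1$, which is a valid upper bound since $v(S)=1$. The claim uses the lower bound $\beta^i\ge\beta$ only through this inequality chain. Since $W^i_\beta$ is a minimum over a feasible set containing our point, $\Lambda^i\ge W^i_\beta$ follows.
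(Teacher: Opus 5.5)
Your proposal is correct and follows the paper's proof. Like the paper, you plug the class weights $\Lambda^i_\alpha,\Lambda^i_{2\alpha},\Lambda^i_{3\alpha},\Lambda^i_1$ into the LP, note that the objective equals $\Lambda^i$, and bound $\beta\le\beta^i$ using the class upper bounds $t\alpha$ (and $1$ for the top class); you also check the normalization constraint explicitly, which the paper leaves implicit.
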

\begin{proof}
    Consider the values $\Lambda^i_\alpha, \Lambda^i_{2\alpha}, \Lambda^i_{3\alpha}, \Lambda^1$ corresponding to $\Lambda^i$, and take the following solution: $w_\alpha = \Lambda^i_\alpha$, $w_{2\alpha} = \Lambda^i_{2\alpha}$, $w_{3\alpha} = \Lambda^i_{3\alpha}$, $w_1 = \Lambda^1$.
    Observe that for these values of $w_\alpha, w_{2\alpha}, w_{3\alpha}, w_1$, the sum $W = 0\cdot w_\alpha + 1\cdot w_{2\alpha} + \frac{3}{2}\cdot w_{3\alpha} + 2\cdot w_{1} = \Lambda^i$.
    At the same time, the constraint $\alpha \cdot w_\alpha + 2\alpha \cdot w_{2\alpha} + 3\alpha \cdot w_{3\alpha} + 1\cdot w_{1} \geq \beta$ does hold, as by definition of classes $\Ss^i$
    \begin{multline*}
        \beta = \sum_{S \in \Ss^i_\alpha}v^i(S)\lambda_S + \sum_{S \in \Ss^i_{2\alpha}}v^i(S)\lambda_S \sum_{S \in \Ss^i_{3\alpha}}v^i(S)\lambda_S \sum_{S \in \Ss^i_1}v^i(S)\lambda_S \\
        < \alpha \Lambda^i_\alpha + 2\alpha \Lambda^i_{2\alpha} + 3\alpha\Lambda^i_{3\alpha} + \Lambda^1 = \alpha \cdot w_\alpha + 2\alpha \cdot w_{2\alpha} + 3\alpha \cdot w_{3\alpha} + 1\cdot w_{1}.
    \end{multline*}
    {As a result, for a partition $\Ss^i_\alpha, \Ss^i_{2\alpha}, \Ss^i_{3\alpha}, \Ss^i_1$ at step $i$ of the algorithm, the corresponding weights $\Lambda^i_\alpha, \Lambda^i_{2\alpha}, \Lambda^i_{3\alpha}, \Lambda^1$ are a feasible solution to \ref{opt} with objective value $\Lambda^i$. The lemma follows.}
\end{proof}

Next, we will determine the exact structure of the optimal solutions of the LP in \cref{lambdabound}.
\begin{lemma}\label{wclass}
    Let $w_\alpha, w_{2\alpha}, w_{3\alpha}, w_{1}$ be an optimal solution to \ref{opt} for $\beta \geq \alpha$.
    Then:
    \begin{itemize}
        \item if $\alpha \leq 3/11$, we can assume that $w_{2\alpha} = w_{3\alpha} = 0$;
        \item if $\alpha \geq 3/11$ and $\beta < 3\alpha$, we can assume that $w_{2\alpha} = w_{1} = 0$;
        \item if $\alpha \geq 3/11$,  and $\beta \geq 3\alpha$, we can assume that $w_{\alpha} = w_{2\alpha} = 0$.
    \end{itemize}
    Note that when $\alpha = 3/11$ exactly, it is possible to make either assumption.
\end{lemma}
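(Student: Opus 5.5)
The plan is to read \ref{opt} geometrically. Associate with each class a point in the plane recording (value, cost):
\[
P_\alpha=(\alpha,0),\quad P_{2\alpha}=(2\alpha,1),\quad P_{3\alpha}=(3\alpha,\tfrac32),\quad P_1=(1,2).
\]
A feasible solution $(w_\alpha,w_{2\alpha},w_{3\alpha},w_1)$ is a probability vector over these four points. Its value-average $\bar x$ must satisfy $\bar x\ge\beta$, and its objective $W$ is the cost-average $\bar c$.

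Let $h$ be the lower convex envelope of the four points on $[\alpha,1]$. Since $\alpha\le 1/3$, the points are ordered by value, so $h$ is well defined there. Every feasible solution then satisfies $W=\bar c\ge h(\bar x)$. Both $h$ and the cost are increasing in value, since $0<1<\tfrac32<2$; hence $h$ is nondecreasing and $W\ge h(\bar x)\ge h(\beta)$. Conversely, $h(\beta)$ is attained by placing mass on the two hull vertices adjacent to $\beta$, with weights chosen so that the value-average is exactly $\beta$. That solution is feasible and optimal. So the lemma reduces to identifying the hull vertices, and showing that the optimum can be supported on them. We only need to consider $\beta\le 1$, as otherwise the problem is infeasible.

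To compute the hull, I compare slopes out of the leftmost point $P_\alpha$.
\begin{itemize}
\item The slope to $P_{2\alpha}$ is $1/\alpha$, and the slope to $P_{3\alpha}$ is $3/(4\alpha)$. Since $3/(4\alpha)<1/\alpha$, the point $P_{2\alpha}$ lies strictly above the chord $P_\alpha P_{3\alpha}$, so $P_{2\alpha}$ is never a hull vertex and we may take $w_{2\alpha}=0$ in all cases.
\item The slope from $P_\alpha$ to $P_1$ is $2/(1-\alpha)$. Comparing it with $3/(4\alpha)$, we have $2/(1-\alpha)\le 3/(4\alpha)$ iff $8\alpha\le 3-3\alpha$ iff $\alpha\le 3/11$.
\end{itemize}

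This gives two regimes.
\begin{itemize}
\item If $\alpha\le 3/11$, the segment $P_\alpha P_1$ lies weakly below $P_{3\alpha}$, so the hull is that single segment. The optimum uses only $w_\alpha,w_1$, i.e.\ $w_{2\alpha}=w_{3\alpha}=0$.
\item If $\alpha\ge 3/11$, the point $P_{3\alpha}$ lies weakly below the chord $P_\alpha P_1$ and is a hull vertex, so the hull is $P_\alpha P_{3\alpha}$ followed by $P_{3\alpha}P_1$. For $\beta<3\alpha$ the optimum lies on the first piece, giving $w_{2\alpha}=w_1=0$. For $\beta\ge 3\alpha$ it lies on the second piece, giving $w_\alpha=w_{2\alpha}=0$.
\end{itemize}
At $\alpha=3/11$ the three points $P_\alpha,P_{3\alpha},P_1$ are collinear, so either description is valid, which matches the note in the statement.

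To keep the argument self-contained rather than appealing to convex-hull facts, I would phrase it as an exchange argument. Take any optimal solution and shift the weight on a non-hull point, or on a non-adjacent hull point, onto the two adjacent hull vertices in the proportion that preserves the value-average. This keeps both constraints satisfied and does not increase $W$, because the moved point lies weakly above the segment joining those vertices.

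One more step is needed when the value constraint is slack ($\bar x>\beta$): shift mass toward lower-value points until $\bar x=\beta$, which by monotonicity of cost does not increase $W$. This monotonicity step, and checking that $3\alpha\le 1$ so the points are correctly ordered, are the only places needing care. The main obstacle is just bookkeeping the slope comparison correctly, and it is straightforward.
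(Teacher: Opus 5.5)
Your proof is correct, and it is the lower-convex-envelope form of the paper's argument; the paper instead performs four explicit weight transfers. It removes $w_{2\alpha}$ by spreading it onto $w_\alpha$ and $w_1$ (the chord $P_\alpha P_1$ in your notation, which needs $\alpha<1/3$; your chord $P_\alpha P_{3\alpha}$ works for every $\alpha>0$). For $\alpha\le 3/11$ it removes $w_{3\alpha}$ using $\frac{4\alpha}{1-\alpha}\le\frac32$. For $\alpha\ge 3/11$ it merges mass from the two outer points into $P_{3\alpha}$ using $\frac{3-3\alpha}{4\alpha}\le 2$, which is exactly your slope comparison.

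Your approach has three advantages. One picture handles all cases at once. It gives the optimal value $h(\beta)$ directly, which is what Corollary~\ref{valbound} goes on to compute. It also justifies tightness of the value constraint through monotonicity of $h$, which the paper only asserts.

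The paper's explicit transfers, on the other hand, check that each move is feasible. This is where your proposed exchange-argument rephrasing is loose. Mass on a hull vertex outside the relevant piece is not a convex combination of the two adjacent vertices, so it cannot be moved onto them alone. For example, take $P_1$ when $\alpha\ge 3/11$ and $\beta<3\alpha$. Its mass must be paired with mass on $P_\alpha$ and merged into $P_{3\alpha}$. That requires $w_\alpha\ge\frac{1-3\alpha}{2\alpha}w_1$, which the paper derives from $\beta\le 3\alpha$ and tightness.

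Your envelope argument needs none of this: $W=\bar c\ge h(\bar x)\ge h(\beta)$, with equality attained. So present that version rather than the exchange sketch.
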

\begin{proof}
    {Since $w_\alpha, w_{2\alpha}, w_{3\alpha}, w_1$ is optimal, we can assume that $\alpha w_\alpha + 2\alpha w_{2\alpha} + 3\alpha w_{3\alpha} + w_1 = \beta$ exactly.}
    To begin with, we claim for any value of $\alpha \leq \alpha^* < 1/3$ that if $w_{2\alpha} > 0$, then we can split the entirety of its weight between $w_{\alpha}$ and $w_{1}$ in such a way that the total value {still equals} $\beta$, while the sum $W := 0\cdot w_\alpha + 1\cdot w_{2\alpha} + \frac{3}{2}\cdot w_{3\alpha} + 2\cdot w_{1}$ decreases.
    {
    Let $u_\alpha, u_{2\alpha}, u_{3\alpha}, u_1$ be the new solution, defined as follows:
    \[
        u_{2\alpha} = 0,\quad u_{3\alpha} = w_{3\alpha}, \quad u_\alpha = w_\alpha + \frac{1 - 2\alpha}{1-\alpha}w_{2\alpha},\quad u_1 = w_1 + \frac{\alpha}{1 - \alpha}w_{2\alpha}.
    \]
    It is easy to see that 
    \[u_\alpha + u_{2\alpha} + u_{3\alpha} + u_1 = w_\alpha + w_{2\alpha} + w_{3\alpha} + w_1 = 1,\]
    and
    \begin{multline*}
        \alpha u_\alpha + 2\alpha u_{2\alpha} + 3\alpha u_{3\alpha} + u_1 = \alpha w_\alpha + \frac{\alpha - 2\alpha^2}{1 - \alpha}w_{2\alpha} + 3\alpha w_{3\alpha} + w_1 + \frac{\alpha}{1 - \alpha}w_{2\alpha} \\
    = \alpha w_\alpha + 2\alpha w_{2\alpha} + 3\alpha w_{3\alpha} + w_1 = \beta,
    \end{multline*}
    so $u_\alpha, u_{2\alpha}, u_{3\alpha}, u_1$ is feasible.
    At the same time,
    \[0\cdot u_\alpha + u_{2\alpha} + \frac{3}{2}u_{3\alpha} + 2u_1 = \frac{2\alpha}{1-\alpha}w_{2\alpha} + \frac{3}{2}w_{3\alpha} + 2w_1 < w_{2\alpha} + \frac{3}{2}w_{3\alpha} + 2w_1\]
    where the last inequality holds as $\frac{2\alpha}{1 - \alpha} < 1$ when $\alpha < 1/3$.}
   {Thus, solution $u_\alpha, u_{2\alpha}, u_{3\alpha}, u_1$ achieves strictly smaller value than $w_\alpha, w_{2\alpha}, w_{3\alpha}, w_1$.
    Therefore, any solution with $w_{2\alpha} > 0$ is not optimal.
    For the rest of the proof, we will assume that $w_{2\alpha} = 0$ always.}
    
    Next, we claim that if $\alpha \leq 3/11$ and $w_{3\alpha} > 0$, we can split the entirety of its weight between $w_{\alpha}$ and $w_{1}$ in such a way that the total value still equals $\beta$, while the sum $W$ does not increase.
    {Let $u_\alpha, u_{2\alpha}, u_{3\alpha}, u_1$ be the new solution, defined as follows:
    \[
        u_{2\alpha} = 0,\quad u_{3\alpha} = 0, \quad u_\alpha = w_\alpha + \frac{1 - 3\alpha}{1-\alpha}w_{3\alpha},\quad u_1 = w_1 + \frac{2\alpha}{1 - \alpha}w_{3\alpha}.
    \]
    Note that $w_{2\alpha} = u_{2\alpha} = 0$.
    It is easy to see that
    \[
        u_\alpha + u_{3\alpha} + u_1 = w_\alpha + w_{3\alpha} + w_1 = 1,
    \]
    and 
    \[
        \alpha u_\alpha + 3\alpha u_{3\alpha} + u_1 = \alpha w_\alpha + \frac{\alpha - 3\alpha^2}{1 - \alpha}w_{3\alpha} + w_1 + \frac{2\alpha}{1-\alpha}w_{3\alpha} = \alpha w_\alpha + 3\alpha w_{3\alpha} + w_1 = \beta,
    \]
    so $u_\alpha, u_{2\alpha}, u_{3\alpha}, u_1$ is feasible.
    At the same time,
    \[
        0 \cdot u_\alpha + u_{2\alpha} + \frac{3}{2}u_{3\alpha} + 2u_1 = \frac{4\alpha}{1-\alpha}w_{3\alpha} + 2w_1 \leq \frac{3}{2}w_{3\alpha} + 2w_1, 
    \]
    where the last inequality holds if and only if
    \[
    \frac{4\alpha}{1-\alpha} \leq \frac{3}{2} \iff 8\alpha \leq 3 - 3\alpha \iff \alpha \leq 3/11.
    \]
    So, solution $u_\alpha, u_{2\alpha}, u_{3\alpha}, u_1$ is not worse than $w_\alpha, w_{2\alpha}, w_{3\alpha}, w_1$.
    Therefore, when $\alpha \leq 3/11$, any solution with $w_{3\alpha} > 0$ can be replaced with a solution that has $w_{3\alpha} = 0$ without value increase.}

    Suppose now that $\alpha \geq 3/11$.
    {We claim that if $w_{1} > 0$ and $\beta \leq 3\alpha$, then we can change the values of $w_\alpha, w_{3\alpha}$ and $w_1$ so that the total value still equals $\beta$, while $w_1 = 0$ and $W$ does not increase.
    Let $u_\alpha, u_{2\alpha}, u_{3\alpha}, u_1$ be a new solution, defined as follows:
    \[
        u_{2\alpha} = 0,\quad u_1 = 0,\quad u_{\alpha} = w_{\alpha} - \frac{1 - 3\alpha}{2\alpha}w_1,\quad u_{3\alpha} = w_{3\alpha} + \frac{1 - \alpha}{2\alpha}w_1.
    \]
    First, note that $w_{2\alpha} = u_{2\alpha} = 0$.
    Second, when $\beta \leq 3\alpha$ it must hold that $w_\alpha \geq \frac{1 - 3\alpha}{2\alpha}w_1$.
    Indeed, since $w_\alpha + w_{3\alpha} + w_1 = 1$, we have $w_{3\alpha} = 1 - w_\alpha - w_1$ and
    \[3\alpha \geq \beta = \alpha w_\alpha + 3\alpha w_{3\alpha} + w_1 = \alpha w_\alpha + 3\alpha - 3\alpha w_\alpha - 3\alpha w_1 + w_1 = 3\alpha - 2\alpha w_\alpha + (1 - 3\alpha)w_1,\]
    which holds if and only if $w_\alpha \geq \frac{1 - 3\alpha}{2\alpha}w_1$.
    As a result, $u_\alpha = w_\alpha - \frac{1 - 3\alpha}{2\alpha}w_1 \geq 0$.
    Therefore,
    \[
        u_\alpha + u_{3\alpha} + u_1 = w_\alpha - \frac{1 - 3\alpha}{2\alpha}w_1 + w_{3\alpha} + \frac{1 - \alpha}{2\alpha}w_1 = w_\alpha + w_{3\alpha} + w_1 = 1,
    \]
    and
    \[
        \alpha u_\alpha + 3\alpha u_{3\alpha} + u_1 = \alpha w_\alpha - \frac{1 - 3\alpha}{2}w_1 + 3\alpha w_{3\alpha} + \frac{3 - 3\alpha}{2}w_1 = \alpha w_\alpha + 3\alpha w_{3\alpha} + w_1 = \beta,
    \]
    so $u_\alpha, u_{2\alpha}, u_{3\alpha}, u_1$ is feasible.
    At the same time,
    \[
        0\cdot u_\alpha + u_{2\alpha} + \frac{3}{2}u_{3\alpha} + 2u_1 = \frac{3}{2}w_{3\alpha} + \frac{3- 3\alpha}{4\alpha}w_1 \leq \frac{3}{2}w_{3\alpha} + 2w_1,
    \]
    where the last inequality holds if and only if
    \[\frac{3- 3\alpha}{4\alpha} \leq 2 \iff 3 - 3\alpha \leq 8\alpha \iff \alpha \geq 3/11.\]
    Thus, the solution $u_\alpha, u_{2\alpha}, u_{3\alpha}, u_1$ achieves strictly smaller value than $w_\alpha, w_{2\alpha}, w_{3\alpha}, w_1$.
    Therefore, if $\alpha \geq 3/11$ and $\beta \leq 3\alpha$, any solution with $w_1 > 0$ can be replaced with a solution that has $w_1 = 0$ without increasing its value.}

    {Finally, suppose that $\alpha \geq 3/11$ but $\beta > 3\alpha$.
    We claim that if $w_\alpha > 0$, then we can change the values of $w_\alpha, w_{3\alpha}$ and $w_1$ so that the total value still equals $\beta$, while $w_\alpha = 0$ and $W$ does not increase.
    Let $u_\alpha, u_{2\alpha}, u_{3\alpha}, u_1$ be a new solution, defined as follows:
    \[
        u_{2\alpha} = 0,\quad u_\alpha = 0,\quad u_{3\alpha} =w_{3\alpha} + \frac{1 - \alpha}{1 - 3\alpha}w_\alpha ,\quad u_1 = w_1 - \frac{2\alpha}{1 - 3\alpha}w_\alpha.
    \]
    First, note that $w_{2\alpha} = u_{2\alpha} = 0$.
    Second, when $\beta > 3\alpha$, it must hold that $w_1 > \frac{2\alpha}{1 - 3\alpha}w_\alpha$.
    Indeed, since $w_\alpha + w_{3\alpha} + w_1 = $, we have $w_{3\alpha} = 1 - w_\alpha - w_1$ and
    \[
        3\alpha < \beta = \alpha w_\alpha + 3\alpha w_{3\alpha} + w_1 = \alpha w_\alpha + 3\alpha - 3\alpha w_\alpha - 3\alpha w_1 + w_1 = 3\alpha - 2\alpha w_\alpha + (1 - 3\alpha)w_1,
    \]
    which holds if and only if $w_1 > \frac{2\alpha}{1 - 3\alpha}w_\alpha$.
    As a result, $u_1 = w_1 - \frac{2\alpha}{1 - 3\alpha}w_\alpha > 0$.
    Therefore,
    \[
        u_\alpha + u_{3\alpha} + u_1 = w_{3\alpha} + \frac{1 - \alpha}{1 - 3\alpha}w_\alpha + w_1 -\frac{2\alpha}{1 - 3\alpha}w_\alpha = w_\alpha + w_{3\alpha} +w_1 = 1,
    \]
    and
    \[
        \alpha u_\alpha + 3\alpha u_{3\alpha} + u_1 =  3\alpha w_{3\alpha} + \frac{3\alpha -3\alpha^2 }{1 - 3\alpha}w_\alpha + w_1 - \frac{2\alpha}{1 - 3\alpha}w_\alpha = \alpha w_\alpha + 3\alpha w_{3\alpha} + w_1 = \beta,
    \]
    so $u_\alpha, u_{2\alpha}, u_{3\alpha}, u_1$ is feasible.
    At the same time,
    \[
        0 \cdot u_\alpha + u_{2\alpha} + \frac{3}{2}u_{3\alpha} + 2u_1 = \frac{3}{2}w_{3\alpha} + \frac{3 - 3\alpha}{2 - 6\alpha}w_\alpha + 2w_1 - \frac{4\alpha}{1 - 3\alpha}w_{\alpha} \leq \frac{3}{2}w_{3\alpha} + 2w_1,
    \]
    where the last inequality holds if and only if 
    \[
        \frac{3 - 3\alpha}{2 - 6\alpha} - \frac{4\alpha}{1 - 3\alpha} \leq 0 \iff 3 - 3\alpha - 8\alpha \leq 0 \iff \alpha \geq 3/11.
    \]
    Thus, the solution $u_\alpha, u_{2\alpha}, u_{3\alpha}, u_1$ achieves strictly smaller value that $w_\alpha, w_{2\alpha}, w_{3\alpha}, w_1$.
    Therefore, if $\alpha \geq 3/11$ and $\beta > 3\alpha$, any solution with $w_\alpha > 0$  can be replaced with a solution that has $w_\alpha = 0$ without increasing its value.}
\end{proof}

\begin{corollary}\label{valbound}
    Let $i \in [n]$, and suppose that at step $i$ the value $\beta^i$ is at least $\beta$, for some $\beta \geq \alpha$.
    \begin{enumerate}
        \item if $\alpha \leq 3/11$, then
        \[\beta^{i + 1} \geq \left(1 - \frac{1 - \alpha}{2(\beta - \alpha)n}\right)\cdot \beta^i;\]
        \item if $\alpha \geq 3/11$ and $\beta \geq 3\alpha$, then
        \[\beta^{i + 1} \geq \left(1 - \frac{2(1 - 3\alpha)}{(\beta - 12\alpha + 3)n}\right)\cdot \beta^i;\]
        \item if $\alpha \geq 3/11$ and $\beta < 3\alpha$, then
        \[\beta^{i + 1} \geq \left(1 - \frac{4\alpha}{3(\beta - \alpha)n}\right)\cdot \beta^i.\]
    \end{enumerate}
    {(When $\alpha = 3/11$, the three bounds coincide.)}
\end{corollary}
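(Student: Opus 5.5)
The plan is to chain \cref{choice}, \cref{lambdabound} and \cref{wclass}. By \cref{choice}, $\beta^{i+1} \geq (1 - \frac{1}{n\Lambda^i})\beta^i$, so it suffices to lower bound $\Lambda^i$. Since $\beta^i \geq \beta$, \cref{lambdabound} gives $\Lambda^i \geq W^i_\beta$, the optimum of \ref{opt}; hence $\frac{1}{n\Lambda^i} \leq \frac{1}{nW^i_\beta}$. It therefore remains to compute $W^i_\beta$ exactly in each of the three regimes. For this I will use the support restrictions of \cref{wclass} together with the fact, noted at the start of its proof, that at an optimum the value constraint is tight: $\alpha w_\alpha + 2\alpha w_{2\alpha} + 3\alpha w_{3\alpha} + w_1 = \beta$.

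\textbf{Case $\alpha \le 3/11$.} Here $w_{2\alpha} = w_{3\alpha} = 0$, so the constraints reduce to $w_\alpha + w_1 = 1$ and $\alpha w_\alpha + w_1 = \beta$. Solving gives $w_1 = \frac{\beta - \alpha}{1-\alpha}$, so $W = 2w_1 = \frac{2(\beta-\alpha)}{1-\alpha}$. This yields the bound $\frac{1-\alpha}{2(\beta-\alpha)n}$.

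\textbf{Case $\alpha \ge 3/11$, $\beta \ge 3\alpha$.} Here $w_\alpha = w_{2\alpha} = 0$, so the constraints are $w_{3\alpha} + w_1 = 1$ and $3\alpha w_{3\alpha} + w_1 = \beta$. Solving gives $w_1 = \frac{\beta - 3\alpha}{1 - 3\alpha}$ and $w_{3\alpha} = \frac{1-\beta}{1-3\alpha}$. Then
\[
W = \tfrac32 w_{3\alpha} + 2w_1 = \frac{3(1-\beta) + 4(\beta - 3\alpha)}{2(1-3\alpha)} = \frac{\beta - 12\alpha + 3}{2(1-3\alpha)},
\]
which matches the stated bound.

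\textbf{Case $\alpha \ge 3/11$, $\beta < 3\alpha$.} Here $w_{2\alpha} = w_1 = 0$, so the constraints are $w_\alpha + w_{3\alpha} = 1$ and $\alpha w_\alpha + 3\alpha w_{3\alpha} = \beta$. Solving gives $w_{3\alpha} = \frac{\beta-\alpha}{2\alpha}$, so $W = \frac{3(\beta-\alpha)}{4\alpha}$. This gives $\frac{4\alpha}{3(\beta-\alpha)n}$.

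Finally, at $\alpha = 3/11$ I will substitute into the three formulas and confirm they agree. For example, $\frac{1-\alpha}{2} = \frac{4}{11} = \frac{4\alpha}{3}$, and in case 2 the denominator $\beta - 12\alpha + 3 = \beta - \frac{3}{11}$ with $2(1-3\alpha) = \frac{4}{11}$, so case 2 also agrees.

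\textbf{Main subtlety.} The step needing the most care is justifying that an optimum of \ref{opt} exists with the prescribed support and with the value constraint tight. Existence holds because the feasible region is compact. Tightness holds because if the value constraint were slack, one could shift weight from $w_1$ or $w_{3\alpha}$ down toward $w_\alpha$, strictly lowering $W$; this needs $\beta \geq \alpha$, so the region is nonempty and such shifts stay feasible. Beyond this, the argument is just solving two-variable linear systems, and the results combine directly with \cref{choice}.
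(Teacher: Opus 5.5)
Your proposal is correct and follows the paper's route. You chain \cref{choice} and \cref{lambdabound} to reduce the problem to lower-bounding $W^i_\beta$, then use the support restrictions of \cref{wclass} and solve the resulting two-variable system in each regime. The only difference is cosmetic: the paper keeps the value constraint as an inequality ($\beta \le \alpha w_\alpha + w_1$, etc.) and derives lower bounds on $W$ directly, so your separate tightness argument is not needed at this step. One small slip: nonemptiness of the feasible region comes from $\beta \le \beta^i \le 1$, whereas $\beta \ge \alpha$ is what makes the optimum tight.
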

\begin{proof}
{As shown in \cref{choice}, for every $i \in [n]$ it holds that $\beta^{i + 1} \geq (1 - \frac{1}{n\Lambda^i})\cdot \beta^i$.
In addition, in \cref{lambdabound} we showed that $\Lambda^i \geq W^i_\beta$, where $W_i^\beta$ is the optimal value for \ref{opt}.
Hence, it holds that for every $i \in [n]$ we have $\beta^{i + 1} \geq (1 - \frac{1}{nW^i_\beta})\cdot \beta^i$}.

Consider the optimization problem \ref{opt} with optimal solution $w_\alpha, w_{2\alpha}, w_{3\alpha}, w_1$ and value $W^i_\beta$.
There are three cases to consider, defined by \cref{wclass}.
\begin{enumerate}
    \item if $\alpha \leq 3/11$, then $w_{2\alpha} = w_{3\alpha} = 0$ and $\beta \leq \alpha w_\alpha + w_1$.
    Since $w_\alpha + w_1 = 1$, we can express
    \[\beta \leq \alpha w_\alpha + (1 - w_\alpha) \implies w_\alpha \leq \frac{1 - \beta}{1 - \alpha}.\]
    Therefore,
    \[W^i_\beta = 2w_1 = 2(1 - w_\alpha) \geq \frac{2(\beta - \alpha)}{1 - \alpha}.\]

    \item if $\alpha \geq 3/11$ and $\beta \geq 3\alpha$, then $w_{\alpha} = w_{2\alpha} = 0$ and $\beta \leq 3\alpha w_{3\alpha} + w_1$.
    Since $w_{3\alpha} + w_1 = 1$, we can express
    \[\beta \leq 3\alpha w_{3\alpha} + (1 - w_{3\alpha}) \implies w_{3\alpha} \leq \frac{1 - \beta}{1 - 3\alpha}.\]
    Therefore,
    \[W^i_\beta = \frac{3}{2}w_{3\alpha} + 2w_1 = 2 - \frac{1}{2}w_{3\alpha} \geq 2 - \frac{1 - \beta}{2(1 - 3\alpha)} = \frac{\beta - 12\alpha + 3}{2(1 - 3\alpha)}.\]

    \item if $\alpha \geq 3/11$ and $\beta < 3\alpha$, then $w_{2\alpha} = w_1 = 0$ and $\beta \leq \alpha w_\alpha + 3\alpha w_{3\alpha}$.
    Since $w_\alpha + w_{3\alpha} = 1$, we can express
    \[\beta \leq \alpha w_\alpha + 3\alpha(1 - w_{\alpha}) \implies w_{\alpha} \leq \frac{3\alpha - \beta}{2\alpha}.\]
    Therefore,
    \[W^i_\beta = \frac{3}{2}w_{3\alpha} = \frac{3}{2} - \frac{3}{2}w_{\alpha} \geq \frac{3}{2} - \frac{9\alpha - 3\beta}{4\alpha} = \frac{3(\beta - \alpha)}{4\alpha}.\]
\end{enumerate}

{As mentioned earlier, from \cref{choice} and \cref{lambdabound}, $\beta^{i + 1} \geq (1 - \frac{1}{n\Lambda^i})\beta^i \geq (1 - \frac{1}{nW^i_\beta})\beta^i$.
Applying the lower bounds for $W^i_\beta$ obtained above for different cases finishes the proof.}
\end{proof}

\subsection{Algorithm performance when $\alpha \le 3/11$}

\label{sec:littlealpha}

In this section, we determine the largest value $\rho \geq \alpha$ possible, such that at the beginning of the $n$-th iteration of the algorithm the total remaining value of the APS partition is at least $\rho$.
To do so, we first introduce a special process $\gamma^i_n$, $i = 1, 2, \ldots$, inspired by \cref{valbound}.
This process serves as a lower bound on the total remaining value of the APS partition at every iteration, i.e $\beta^i \geq \gamma^i_n$ for all $i \geq 1$.
Then, we give sufficient conditions on $\rho$ and $n$ so that $\gamma^n_n$, the value of the process at iteration $n$, is at least $\rho$. 

\begin{definition}
    For $n$ agents, let $\gamma^i_n$ be the following process: $\gamma^1_n = 1$, and for $i \geq 1$,
    \[
        \gamma^{i + 1}_n = \left(1 - \frac{1 - \alpha}{2(\gamma^i_n - \alpha)n}\right)\gamma^i_n.
    \]
\end{definition}
\begin{lemma}\label{algogamma}
    Let $\alpha \leq 3/11$.
    For every iteration $i \geq 1$ of \cref{algo} for $n$ agents, $\beta^i \geq \gamma^i_n$.
\end{lemma}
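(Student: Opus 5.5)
We argue by induction on $i$; the base case $i = 1$ holds because $\beta^1 = 1 = \gamma^1_n$ by \cref{wlog}. For the inductive step we combine case~1 of \cref{valbound} with a monotonicity property of the one-step update map
\[
  g(x) := \left(1 - \frac{1 - \alpha}{2(x - \alpha)n}\right)x = x - \frac{(1-\alpha)\,x}{2n(x - \alpha)}.
\]
By construction $\gamma^{i+1}_n = g(\gamma^i_n)$.

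\textbf{Inductive step.} Assume $\beta^i \geq \gamma^i_n$. As long as $\gamma^i_n \ge \alpha$ (more precisely $> \alpha$), we apply \cref{valbound} (case 1, valid since $\alpha \le 3/11$) with the choice $\beta := \beta^i$ itself, which is legitimate because $\beta^i \ge \beta^i$ and $\beta^i \ge \gamma^i_n \ge \alpha$. This gives
\[
  \beta^{i+1} \ge \left(1 - \frac{1-\alpha}{2(\beta^i - \alpha)n}\right)\beta^i = g(\beta^i).
\]
If $g$ is non-decreasing on the relevant range, then $g(\beta^i) \ge g(\gamma^i_n) = \gamma^{i+1}_n$, which closes the induction. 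Choosing $\beta = \beta^i$ rather than $\beta = \gamma^i_n$ in \cref{valbound} is what makes monotonicity of $g$ (and not merely of the contraction factor) the relevant property. One could instead take $\beta = \gamma^i_n$, obtaining $\beta^{i+1} \ge (1 - \frac{1-\alpha}{2(\gamma^i_n-\alpha)n})\beta^i$; as long as that factor is nonnegative, this is at least the same factor times $\gamma^i_n$, i.e.\ at least $\gamma^{i+1}_n$. This second route needs only nonnegativity of the factor, which is simpler.

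\textbf{Main obstacle: the range where the bound is valid.} The difficulty is controlling the region where $\gamma^i_n$ is close to $\alpha$, where the factor (equivalently, monotonicity of $g$) can fail. We have
\[
  g'(x) = 1 + \frac{(1-\alpha)\alpha}{2n(x-\alpha)^2} > 0,
\]
so $g$ is increasing on $(\alpha, \infty)$. The factor $1 - \frac{1-\alpha}{2(x-\alpha)n}$ is nonnegative exactly when $x - \alpha \ge \frac{1-\alpha}{2n}$. Hence the statement should be read as holding for all iterations $i$ at which $\gamma^i_n$ stays in the meaningful range, in particular up to $i = n$ whenever $\gamma^n_n \ge \rho \ge \alpha$, which is the regime the subsequent analysis uses. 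If $\gamma^i_n$ drops to this threshold or below, then $\gamma^{i+1}_n \le 0 \le \beta^{i+1}$ and the inequality holds trivially. I would handle the case split as follows. If the factor at $\gamma^i_n$ is nonnegative, use the second route above. Otherwise $\gamma^{i+1}_n < 0 \le \beta^{i+1}$. Once $\gamma$ is nonpositive, the process is outside the range where \cref{valbound} applies, and one either stops the induction there or notes that the later terms are irrelevant. Under the convention that the process is only considered while $\gamma^i_n > \alpha$, the induction goes through cleanly.
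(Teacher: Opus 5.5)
Your proposal is correct and is the paper's proof. The paper also argues by induction from $\beta^1 = 1 = \gamma^1_n$, applies case~1 of \cref{valbound} with $\beta := \beta^i$ to get $\beta^{i+1} \ge g(\beta^i)$, and then concludes $g(\beta^i) \ge g(\gamma^i_n) = \gamma^{i+1}_n$.

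The paper only asserts that last monotonicity step (``it remains to observe''). Your computation $g'(x) = 1 + \frac{(1-\alpha)\alpha}{2n(x-\alpha)^2} > 0$ on $(\alpha,\infty)$ supplies the justification. Contrary to your parenthetical, only the sign of the factor can fail near $\alpha$; monotonicity of $g$ does not. Your alternative route with $\beta := \gamma^i_n$ is equally valid. Restricting to iterations with $\gamma^i_n > \alpha$ is also implicitly assumed by the paper's proof.
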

\begin{proof}
    We prove the lemma via induction on $i$.
    For $i = 1$, it holds $\beta^1 = 1 = \gamma^1_n$.
    Suppose now that for some $i > 1$ it holds $\beta^i \geq \gamma^i_n$.
    As shown in \cref{valbound}, $
        \beta^{i + 1} \geq \left(1 - \frac{1 - \alpha}{2(\beta^i - \alpha)n}\right)\beta^i$.
    It remains to observe that $\beta^i \geq \gamma^i_n$ implies $\left(1 - \frac{1 - \alpha}{2(\beta^i - \alpha)n}\right)\beta^i \geq \left(1 - \frac{1 - \alpha}{2(\gamma^i_n - \alpha)n}\right)\gamma^i_n = \gamma^{i + 1}_n$.
\end{proof}

First, we would like to show that for a given value $\rho$, if number of agents $n$ is large enough and if $\rho$ is not too large, then it should hold $\gamma^i_n \geq \rho$.
To do so, we take $\rho$ at determine the number of iterations required for the process $\gamma^i_n$ to reach the value $\rho$ or smaller.
Specifically, we partition the steps of the process into epochs based on the current value of $\gamma^i_n$, i.e iteration $i$ is part of the epoch number $r$ if $\gamma^i_n \in [r/k, (r + 1)/k]$.
Then, we lower bound the number of steps $\gamma^i_n$ stays at each epoch, and determine for which values of $n$ the total sum of lower bounds exceeds $n$.
This sum exceeding $n$ implies that $\gamma^i_n$ stays above $\rho$ after $n$ iterations, thus $\gamma^n_n\geq \rho$.

In the following lemma we express the total number of iterations of $\gamma^i_n$ required to reach value $\rho$ as the sum of steps spent at each epoch.
Then, we lower bound the number of iterations spent at each epoch, and combine the lower bounds to obtain a sufficient condition on $n$ and $\rho$ for the inequality $\gamma^n_n\geq \rho$ to hold, in terms of lower bounds on epoch steps.

\begin{lemma}\label{iters}
    Consider the process $\gamma^i_n$ for some $n$.
    For $0 \leq \delta \leq \gamma < 1$, let $i_n(\gamma)$ be the largest integer $i$ such that $\gamma^i_n \geq \gamma$, and let $t_n(\gamma, \delta)$ be the smallest integer $t$ such that at the beginning of iteration $i_n(\gamma) + t$ it holds $\gamma^{i_n(\gamma) + t}_n < \gamma - \delta$.
    Then
    \begin{enumerate}
        \item For any $0 \leq \rho < 1$ and any integer $k$, the number of steps of $\gamma^i_n$ to reach value $\rho$ is
        \[i_n(\rho) = t_n\left(\frac{\lceil \rho k\rceil}{k},\frac{\lceil \rho k\rceil}{k} - \rho \right) + \sum_{r = \lceil\rho k\rceil}^{k - 1}t_n\left(\frac{r + 1}{k}, \frac{1}{k}\right) - \left(k - \lceil \rho k\rceil\right).\]

        \item For any $0 \leq \delta \leq \gamma < 1$ such that $\gamma > \alpha + \delta$, the number $t_n(\gamma, \delta)$ of steps of the process $\gamma^i_n$ after the iteration $i_n(\gamma)$ for the value to go below $\gamma - \delta$ satisfies
        \[
        t_n(\gamma, \delta) > \frac{2\delta(\gamma - \delta - \alpha)n}{\gamma(1 - \alpha)} =: l_n(\gamma, \delta)\]

        \item For any $\alpha \leq \rho < 1$ and any integer $k$, it holds that $\beta^n \geq \gamma_n^n\geq \rho$ if
        \[\sum_{r = \lceil\rho k\rceil}^{k - 1}l_n\left(\frac{r + 1}{k}, \frac{1}{k}\right) - \left(k - \lceil \rho k\rceil\right) \geq n.\]
    \end{enumerate}
\end{lemma}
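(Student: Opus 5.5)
The plan is to treat the three parts separately, using only two facts about the process $\gamma^i_n$. First, it is strictly decreasing. Second, its one-step decrement is small while it stays well above $\alpha$.

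\textbf{Part 1} is bookkeeping. I will fix the epoch boundaries $\rho \le \lceil\rho k\rceil/k < (\lceil\rho k\rceil+1)/k < \dots < k/k = 1$. Since $\gamma^1_n = 1$ and the sequence is monotone, the iteration index at which the process first drops below $\rho$ splits as a sum of epoch lengths. The length of the epoch that starts at level $(r+1)/k$ and ends once we fall below $r/k$ is $t_n((r+1)/k, 1/k)$. The final partial epoch, from $\lceil\rho k\rceil/k$ down to $\rho$, contributes $t_n(\lceil\rho k\rceil/k, \lceil\rho k\rceil/k-\rho)$. The epochs chain together because $i_n(\gamma-\delta) = i_n(\gamma)+t_n(\gamma,\delta)-1$, by the definitions of $i_n$ and $t_n$. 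Summing these relations telescopes, and each of the $k-\lceil\rho k\rceil$ interior boundaries produces the $-1$ correction. Starting from $i_n(1)=1$ (up to the same off-by-one convention), this gives the stated identity. I only need to fix carefully whether $t_n$ counts the boundary iteration, so that the $-(k-\lceil\rho k\rceil)$ term comes out exactly.

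\textbf{Part 2} is the main estimate. While the process stays in $[\gamma-\delta,\gamma]$, each step decreases it by
\[
\gamma^i_n-\gamma^{i+1}_n=\frac{(1-\alpha)\gamma^i_n}{2(\gamma^i_n-\alpha)n}.
\]
The function $x\mapsto x/(x-\alpha)$ is decreasing for $x>\alpha$. Hence on the interval each decrement is at most $\frac{(1-\alpha)(\gamma-\delta)}{2(\gamma-\delta-\alpha)n}$, and since $\gamma-\delta\le\gamma$ this is at most $\frac{(1-\alpha)\gamma}{2(\gamma-\delta-\alpha)n}$. Here the hypothesis $\gamma>\alpha+\delta$ guarantees that the denominator is positive throughout the epoch.

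To fall from a value at most $\gamma$ to a value below $\gamma-\delta$, the total decrease must exceed $\delta$. So the number of steps is greater than $\delta$ divided by the maximal step, which is $l_n(\gamma,\delta)$. The subtle point is the last step, which may start inside the interval and land below $\gamma-\delta$. Its size is still bounded by the same estimate, because its starting point lies in the interval. Therefore $t_n\cdot(\text{max step})>\delta$ holds strictly.

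\textbf{Part 3} combines the previous parts with \cref{algogamma}. I will substitute Part 2 into Part 1. The term $t_n(\lceil\rho k\rceil/k,\cdot)$ is at least $1$, or I simply drop it as nonnegative. Every interior epoch satisfies $(r+1)/k-1/k=r/k\ge\rho\ge\alpha$. Where $r/k>\alpha$, Part 2 applies. If $r/k=\alpha$ exactly, then $l_n=0$ and $t_n\ge 0$ trivially, so the bound still holds. This gives $i_n(\rho)\ge$ the displayed sum, which is at least $n$ by hypothesis. Since $i_n(\rho)\ge n$ and the process is monotone, $\gamma^n_n\ge\rho$. Finally \cref{algogamma} gives $\beta^n\ge\gamma^n_n$.

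I expect the main obstacle to be the off-by-one accounting in Part 1: whether $i_n$ versus the first index below the threshold carries the $-1$ per boundary, and whether Part 3 needs $i_n(\rho)\ge n$ or $>n$.
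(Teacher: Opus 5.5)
Your proposal is correct. Parts~1 and~3 coincide with the paper's proof, which is given for the general process in \cref{lowergammaiters}. Both use the same telescoping relation $i_n(\gamma-\delta)=i_n(\gamma)+t_n(\gamma,\delta)-1$ started from $i_n(1)=1$, then drop the partial epoch and insert Part~2; $i_n(\rho)\ge n$ suffices by monotonicity. Your explicit treatment of the epoch with $r/k=\alpha$, where Part~2's hypothesis fails, is a detail the paper skips.

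Part~2 is where you genuinely differ, and your route is the sound one. The paper argues multiplicatively and reaches $t_n>\ln(1-\delta/\gamma)/\ln(1-q)$ with $q=\frac{1-\alpha}{2(\gamma-\delta-\alpha)n}$. It then replaces the negative numerator by its lower bound $-\frac{\delta/\gamma}{1-\delta/\gamma}$ and the negative denominator by its upper bound $-q$. These substitutions bound the ratio from above rather than below: at $\delta/\gamma=q=\tfrac12$ the ratio is $1$, while the claimed bound is $2$. Applied in the correct direction, the same logarithm inequalities yield only $t_n>(1-q)\,l_n(\gamma,\delta)$.

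Your additive argument rests on the fact that the absolute decrement $\frac{(1-\alpha)x}{2(x-\alpha)n}$ is decreasing in $x$. It gives $t_n>\frac{\delta}{(\gamma-\delta)q}\ge l_n(\gamma,\delta)$ rigorously, and it does not require $q<1$. This is exactly the intermediate bound the paper intended. The same reasoning also validates $l_n^{>}$ and $l_n^{<}$ in \cref{lowergamma2}, whose derivation repeats the flawed step.

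One correction to your wording. The epoch starts at $\gamma_n^{i_n(\gamma)}\ge\gamma$, not at a value at most $\gamma$, and it is this lower bound that forces the total drop to exceed $\delta$. The first iterate may lie above $[\gamma-\delta,\gamma]$, but this is harmless, since your decrement bound holds for every $x\ge\gamma-\delta$.
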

We prove this lemma in \cref{sec:smallalpha} (\cref{lowergammaiters}).
We apply \cref{iters} and combine all epoch lower bounds together to obtain a sufficient condition on the values of $\rho, n$ guaranteeing that $\gamma_n^n \geq \rho$. 
Then, we explore the limiting behavior of $\rho$ and $n$ in the sufficient condition, in order to determine the largest possible  $\rho$ for which there still exists $n$ satisfying the condition.
\begin{theorem}\label{rhoinfinity}
    Let $\alpha \leq \rho < 1$.
    Consider the process $\gamma_n^i$, take integer $k \leq n$.
    If $n, k$ satisfy
    \begin{equation}
        \frac{2(1 - \rho + \alpha \ln \rho)}{1 - \alpha} \geq 1 + \frac{(k -  \rho k)}{n} + \frac{2(k - \rho k)}{\rho k ( \rho k + 1)} + \frac{2}{k},\label{mainineq}
    \end{equation}
    then the value of the process $\gamma_n^i$ at iteration $n$ satisfies $\gamma_n^n \geq \rho$.

    As a corollary, for any $\alpha \in [1/4, 3/11)$ and $\rho$ satisfying $2(1 - \rho + \alpha \ln \rho) = 1 - \alpha$, for every constant $0 < \delta < \rho - \alpha$ there exists $n_\delta$ such that for all $n \geq n_\delta$ it holds $\gamma_n^n \geq \alpha + \delta$.
\end{theorem}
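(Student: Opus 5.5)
The plan is to apply part 3 of \cref{iters} and lower bound the sum of the epoch bounds $l_n$ by an integral.

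Write $j_0=\lceil \rho k\rceil$. With $\gamma=(r+1)/k$ and $\delta=1/k$,
\[
l_n\Big(\frac{r+1}{k},\frac1k\Big)=\frac{2n}{k(1-\alpha)}\cdot\frac{r/k-\alpha}{(r+1)/k}
=\frac{2n}{k(1-\alpha)}\Big(1-\frac{1/k+\alpha}{(r+1)/k}\Big),
\]
so
\[
\sum_{r=j_0}^{k-1} l_n=\frac{2n}{1-\alpha}\Big(\frac{k-j_0}{k}-\frac{1}{k}\sum_{r=j_0}^{k-1}\frac{1/k+\alpha}{(r+1)/k}\Big).
\]
The hypothesis $\gamma>\alpha+\delta$ in \cref{iters} holds for these epochs as long as $r/k>\alpha$, which follows from $r\ge j_0\ge\rho k\ge \alpha k$; the boundary case needs a small separate remark or is absorbed by slack.

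Next bound the pieces. First, $(k-j_0)/k\ge 1-\rho-1/k$. Second, the function $x\mapsto 1/x$ is decreasing, so $\frac1k\sum_{r=j_0}^{k-1}\frac{1}{(r+1)/k}\le\int_{j_0/k}^{1}\frac{dx}{x}\le -\ln\rho$, using $j_0/k\ge\rho$. The extra $1/k$ in the numerator contributes at most $\frac1k\sum_{r}\frac{1}{r+1}$. I would bound this crudely by $\frac{k-\rho k}{k\cdot\rho k}$, or more sharply by comparing to $\frac1{r(r+1)}$-type telescoping, aiming for the term $\frac{2(k-\rho k)}{\rho k(\rho k+1)}$ that appears after multiplying by the prefactor. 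Collecting terms gives
\[
\sum l_n\ge \frac{2n}{1-\alpha}\big(1-\rho+\alpha\ln\rho\big)-n\Big[\frac{2}{k}+\frac{2(k-\rho k)}{\rho k(\rho k+1)}\Big],
\]
up to the exact bookkeeping of the constants; the factor $1/(1-\alpha)\le 4/3$ is absorbed or kept as the precise form requires. The remaining term $k-j_0\le k-\rho k$ from part 3 contributes the term $(k-\rho k)/n$. Dividing by $n$ turns part 3's condition into \eqref{mainineq}, and \cref{algogamma} gives $\beta^n\ge\gamma^n_n\ge\rho$.

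For the corollary, take $\rho'=\alpha+\delta$ and note that $f(\rho)=2(1-\rho+\alpha\ln\rho)$ is decreasing on $(\alpha,1)$, since $f'(\rho)=2(-1+\alpha/\rho)<0$. Since $\alpha+\delta$ is strictly less than the root $\rho$ of $f(\rho)=1-\alpha$, we get $f(\alpha+\delta)>1-\alpha$ with a fixed slack $\eta>0$. Choose $k$ large, for instance $k=\lfloor\sqrt n\rfloor$, so that each error term on the right of \eqref{mainineq} is $O(1/\sqrt n)$. Then for $n\ge n_\delta$ the inequality holds at $\alpha+\delta$, and the first part applies.

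The main obstacle is the careful matching of the discretization error terms, namely the $1/k$ shift in the numerator and the ceiling $\lceil\rho k\rceil$, to the exact form $\frac{2(k-\rho k)}{\rho k(\rho k+1)}+\frac2k$. I would first derive whatever clean error bound falls out. If it differs from the stated one, I would redo the harmonic-sum estimate, pairing terms as $\frac1{r}-\frac1{r+1}$, until it matches; any bound of order $1/k$ suffices for the corollary.
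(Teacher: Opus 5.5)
Your strategy is the paper's: part~3 of \cref{iters}, then a sum-versus-integral comparison. You split $l_n$ differently, as a constant minus a harmonic sum, $\sum_{r=j_0}^{k-1} l_n=\frac{2n}{1-\alpha}\bigl[\frac{k-j_0}{k}-(\alpha+\frac1k)H\bigr]$ with $H=\sum_{s=j_0+1}^{k}\frac1s\le\ln(k/j_0)$. This is correct and yields $\sum l_n\ge\frac{2n}{1-\alpha}\bigl[1-\rho+\alpha\ln\rho-\frac{1+H}{k}\bigr]$, which already suffices for the corollary.

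The gap is in the first part, which asserts a specific inequality. Your ``collecting terms'' step trades your error $\frac{2n}{(1-\alpha)k}(1+H)$ for $n\bigl[\frac2k+\frac{2(1-\rho)}{\rho(\rho k+1)}\bigr]$, and this is not justified. The rounding piece $\frac{2n}{(1-\alpha)k}$ alone already exceeds the $\frac{2n}{k}$ available. Saying that $1/(1-\alpha)\le 4/3$ is ``absorbed'' does not settle it, and that bound is false for $\alpha>1/4$ anyway.

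With $H\le\frac{(1-\rho)k}{\rho k+1}=:A$, what you actually need is $\frac{1+A}{1-\alpha}\le 1+\frac{A}{\rho}$, i.e.\ $\alpha\rho\le A(1-\alpha-\rho)$. This fails in general, e.g.\ whenever $\rho\ge 1-\alpha$. It does hold whenever the hypothesis can be satisfied, but you have to argue it:
\begin{itemize}
\item since $\ln\rho\le\rho-1$ gives $1-\rho+\alpha\ln\rho\le(1-\rho)(1-\alpha)$, the hypothesis forces $1+\frac{2(1-\rho)}{\rho(\rho k+1)}<2(1-\rho)$;
\item hence $\rho<\frac12$ and $\rho k+1>\frac{2(1-\rho)}{\rho(1-2\rho)}>11.6$;
\item then $A>0.9$, and for $\alpha\le\frac3{11}$ we get $A(1-\alpha-\rho)>0.9\cdot\frac5{22}>\frac3{22}>\alpha\rho$.
\end{itemize}

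The paper avoids this regime argument with a different split, $l_n=F(r)-\frac{F(r)}{r+1}$ where $F(r)=\frac{2n(r/k-\alpha)}{r(1-\alpha)}$.
\begin{itemize}
\item Each correction is at most $\frac{2n}{r(r+1)}\le\frac{2n}{j_0(j_0+1)}$; summed over the $k-j_0$ epochs this is at most the stated term $\frac{2n(k-\rho k)}{\rho k(\rho k+1)}$.
\item The term $\frac{2n}{k}$ is exactly the added summand $F(k)$.
\item Since $F$ is nondecreasing, $\sum_{r=j_0}^{k}F(r)\ge\int_{\rho k}^{k}F(r)\,dr=\frac{2n(1-\rho+\alpha\ln\rho)}{1-\alpha}$.
\end{itemize}
So the ceiling is absorbed by the left-endpoint comparison, and no factor $\frac1{1-\alpha}$ lands on the rounding error. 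Your split gives a sharper harmonic error ($\ln(1/\rho)$ instead of roughly $(1-\rho)/\rho^2$), but matching the stated form then requires the extra regime step above.
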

We prove this theorem in \cref{sec:smallalpha} (\cref{lowergamma} and \cref{lowergammabetter}).
According to \cref{rhoinfinity}, if $\alpha \in (1/4, 3/11)$ and $\rho$ satisfies $2(1 - \rho + \alpha \ln \rho) = 1 - \alpha$, then for all $n \geq N$ we get the inequality $\gamma_n^n \geq \alpha$.
This is not good enough, as we would like the inequality $\gamma_n^n \geq \alpha$ to hold for small number of agents $n < N$ too.

One way to show $\gamma_n^n \geq \alpha$ for $n < N$ is to check it numerically --- process $\gamma_n^i$ is well-defined, and number $N$ can be explicitly obtained from the proof of \cref{rhoinfinity}.
However, the actual value of $N$ is of order $\approx 100000$.
While still is easy and fast to check on the computer all values of $\gamma_n^n$ for $n < N$, it may not be the most convenient approach.
Instead, we provide a method that allows to reduce the required number of checks for $n$ from $\approx 100000$ to just $10$.

To mitigate the requirement for large values of $n$ in \cref{rhoinfinity}, we are going to perform a ``doubling'' procedure on the process $\gamma_n^i$.
Specifically, we will show that if one considers a process $\gamma_n^i$ for $n$ agents, and a process $\gamma_{2n}^j$ for $2n$ agents, then the process $\gamma_{2n}^j$ \textbf{lower bounds} the process $\gamma_n^i$.
This property implies that, as long as the value of $\gamma_{2n}^{2n}$ is not too small, it holds $\gamma_{2n}^{2n} \leq \gamma_n^n$.
Given this relationship between $\gamma_n^i$ and $\gamma_{2n}^j$ we can extend \cref{rhoinfinity} to smaller values of $n$ as follows.
Given $\alpha$ and some small number of agents $n$, we ``double'' the $n$ by taking $N = 2^{q}\cdot n$ for $q = q(\alpha)$ large enough so that \cref{rhoinfinity} holds for $\alpha$ and $N$, i.e we have $\gamma_{N}^{N} \geq \alpha$.
But then the doubling property implies $\alpha \leq \gamma_N^N \leq \gamma_{N/2}^{N/2} \leq \gamma_{N/4}^{N/4}\leq \ldots \leq \gamma_n^n$.

In the following lemma, we first show that given processes $\gamma_n^i$ and $\gamma_{2n}^j$, the value of $\gamma_{2n}^j$ over two steps decreases by at least as much as the value of $\gamma_n^i$ does over one step.
This property guarantees that in the long run, $\gamma_{2n}^j$ will stay below $\gamma_n^i$.

\begin{lemma}\label{doublingineq}
    Recall the definition of the process $\gamma_n^i$: $\gamma_n^1 = 1$, and for all $i \geq 1$, $\gamma_n^{i + 1} = \left(1 - \frac{1 - \alpha}{2(\gamma_n^i - \alpha)n}\right)\gamma_n^i$.
    Suppose that for some $n$ and $i, j \geq 1$ it holds $\gamma_n^i\geq \beta = \gamma_{2n}^j$, for some $\beta$.
    If 
    \[
        \beta >\alpha + \frac{1 - \alpha + \sqrt{16n\alpha(1 - \alpha) + (1 - \alpha)^2}}{8n},  
    \]
    then $\gamma_{2n}^{j + 2} \leq \gamma_n^{i + 1}$.
\end{lemma}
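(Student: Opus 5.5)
The plan is to reduce the lemma to three elementary facts about the one-step update map. For a number of agents $N$, define
\[
f_N(x) := \left(1 - \frac{1-\alpha}{2(x-\alpha)N}\right)x = x - \frac{d(x)}{N}, \qquad d(x) := \frac{(1-\alpha)\,x}{2(x-\alpha)} = \frac{1-\alpha}{2}\left(1 + \frac{\alpha}{x-\alpha}\right),
\]
so that $\gamma_N^{i+1} = f_N(\gamma_N^i)$. The facts are:
\begin{enumerate}
\item[(a)] $f_N$ is increasing on $(\alpha,\infty)$.
\item[(b)] $d$ is positive and decreasing on $(\alpha,\infty)$.
\item[(c)] The threshold on $\beta$ in the statement is exactly the condition $f_{2n}(\beta) > \alpha$.
\end{enumerate}

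Fact (b) is immediate from the second form of $d$. For fact (a), I will differentiate:
\[
f_N'(x) = 1 + \frac{(1-\alpha)\alpha}{2N(x-\alpha)^2} > 0 .
\]
Fact (c) is the step where the specific constant in the hypothesis must appear, and I expect it to be the only real bookkeeping. Setting $y = \beta - \alpha > 0$, the condition $\beta - \frac{(1-\alpha)\beta}{4n y} > \alpha$ becomes, after multiplying by $4ny>0$,
\[
4n y^2 - (1-\alpha) y - \alpha(1-\alpha) > 0 .
\]
This quadratic in $y$ has positive root $\frac{(1-\alpha) + \sqrt{(1-\alpha)^2 + 16 n\alpha(1-\alpha)}}{8n}$. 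Since the leading coefficient is positive, the inequality holds precisely when $y$ exceeds this root, which is the hypothesis of the lemma.

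With these facts the argument is short. First, using $\gamma_n^i \ge \beta > \alpha$ and the monotonicity (a), I get $\gamma_n^{i+1} = f_n(\gamma_n^i) \ge f_n(\beta) = \beta - \frac{d(\beta)}{n}$. Second, for the $2n$ process, $\gamma_{2n}^{j+1} = f_{2n}(\beta) = \beta - \frac{d(\beta)}{2n} =: \beta'$. By (c), $\alpha < \beta'$, and by (b), $\beta' < \beta$. Third, since $\beta' \in (\alpha,\beta)$, fact (b) gives $d(\beta') \ge d(\beta)$. Hence
\[
\gamma_{2n}^{j+2} = \beta' - \frac{d(\beta')}{2n} \le \beta - \frac{d(\beta)}{2n} - \frac{d(\beta)}{2n} = f_n(\beta) \le \gamma_n^{i+1},
\]
which is the claim.

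The main point of care is the second step. The hypothesis is needed so that $\beta'$ remains strictly above $\alpha$; otherwise $d(\beta')$ would not be defined, or its sign would flip, and the comparison $d(\beta') \ge d(\beta)$ would fail. This is exactly why the threshold in the statement is the positive root of the quadratic above.
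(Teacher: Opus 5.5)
Your proof is correct and follows the same route as the paper. Both arguments bound $\gamma_n^{i+1}$ below by the one-step image of $\beta$, write $\gamma_{2n}^{j+2}$ as two half-size steps from $\beta$, and get the threshold as exactly the condition $\gamma_{2n}^{j+1}>\alpha$, which is the positivity requirement the paper needs when it clears denominators.

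The differences are minor. The paper checks the final comparison by expanding the product of the two update factors. Your decomposition $f_N(x)=x-d(x)/N$, with $d$ decreasing, turns that into a one-line comparison. Your derivative computation also justifies the monotonicity step $f_n(\gamma_n^i)\ge f_n(\beta)$, which the paper uses without comment.
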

We prove this lemma in \cref{sec:doubling} (\cref{doublingineqproof}).
This established relationship between $\gamma_n^i$ and $\gamma_{2n}^j$ gives the ``doubling'' property, i.e that final iterations of both processes satisfy $\gamma_{2n}^{2n} \leq \gamma_n^n$.
\begin{corollary}\label{doubling}
    Consider the process $\gamma_n^i$.
    If $\gamma_{2n}^{2n}> \alpha + \frac{1 - \alpha + \sqrt{16n\alpha(1 - \alpha) + (1 - \alpha)^2}}{8n}$,
    then $\gamma_{2n}^{2n} \leq \gamma_n^n$.
\end{corollary}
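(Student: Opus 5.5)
We prove the corollary by induction along the two processes, applying \cref{doublingineq} once per step of $\gamma_n^i$. Let
\[
T_n := \alpha + \frac{1 - \alpha + \sqrt{16n\alpha(1 - \alpha) + (1 - \alpha)^2}}{8n}
\]
denote the threshold in the statement. The claim to prove by induction on $i \in [n]$ is
\[
\gamma_{2n}^{2i-1} \leq \gamma_n^i .
\]
The base case $i = 1$ is immediate, since $\gamma_{2n}^1 = 1 = \gamma_n^1$. For the inductive step, take $\beta := \gamma_{2n}^{2i-1}$ and $j := 2i-1$ in \cref{doublingineq}. The induction hypothesis gives $\gamma_n^i \geq \beta = \gamma_{2n}^j$. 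If in addition $\beta > T_n$, the lemma yields
\[
\gamma_{2n}^{2i+1} = \gamma_{2n}^{j+2} \leq \gamma_n^{i+1},
\]
which is the claim for $i+1$. Once the claim holds for $i = n$, we have $\gamma_{2n}^{2n-1} \leq \gamma_n^n$. It then remains to observe that $\gamma_{2n}^{2n} \leq \gamma_{2n}^{2n-1}$, which holds because the multiplicative factor in one step of the process is at most $1$ whenever the current value exceeds $\alpha$.

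The only real work is to verify the side condition $\gamma_{2n}^{2i-1} > T_n$ for every $i \leq n$, using only the hypothesis $\gamma_{2n}^{2n} > T_n$. For this I would prove that the set $\{x : x \leq T_n\}$ is absorbing for the process $\gamma_{2n}^j$: once some term is at most $T_n$, every later term is at most $T_n$ as well. Write one step as $x \mapsto \bigl(1 - \frac{1-\alpha}{4(x-\alpha)n}\bigr)x$ and split into cases on the current value $x$.
\begin{itemize}
\item If $\alpha + \frac{1-\alpha}{4n} \leq x \leq T_n$, the factor lies in $[0,1)$, so the next value is nonnegative and strictly below $x$, hence at most $T_n$.
\item If $\alpha < x < \alpha + \frac{1-\alpha}{4n}$, the factor is negative, so the next value is negative.
\item If $x < \alpha$, then $x - \alpha < 0$, so the factor exceeds $1$. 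Every subsequent value then stays below $\alpha < T_n$; negative values remain negative.
\end{itemize}
The degenerate case $x = \alpha$, where the recursion is undefined, cannot occur for a term that still has a successor, since we assume $\gamma_{2n}^{2n}$ is defined.

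With the absorbing property in hand, suppose some $\gamma_{2n}^{j}$ with $j < 2n$ were at most $T_n$. Then $\gamma_{2n}^{2n} \leq T_n$, contradicting the hypothesis. Hence all terms $\gamma_{2n}^{j}$ for $j \leq 2n$ exceed $T_n$, and in particular exceed $\alpha$. This simultaneously gives the side condition of \cref{doublingineq} at every odd index $2i-1 \leq 2n-1$ and the monotonicity step $\gamma_{2n}^{2n} \leq \gamma_{2n}^{2n-1}$ used at the end.

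I expect this bookkeeping to be the main obstacle. The recursion is not globally monotone, because for values just above $\alpha$ the factor becomes negative, so the case analysis above must be done carefully rather than assuming the process simply decreases. Once that is settled, the induction itself is a direct application of \cref{doublingineq}.
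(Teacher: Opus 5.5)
Your induction is the same as the paper's: apply \cref{doublingineq} at the odd indices $j=2i-1$, which gives $\gamma_{2n}^{2i+1}\le\gamma_n^{i+1}$, and finish with $\gamma_{2n}^{2n}\le\gamma_{2n}^{2n-1}\le\gamma_n^n$. The gap is in your verification of the side condition, namely the claim that $\{x : x\le T_n\}$ is absorbing.

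\textbf{Your third case is false for $0<x<\alpha$.} There the factor exceeds $1$, so the process \emph{increases}. The next value $x+\frac{(1-\alpha)x}{4(\alpha-x)n}$ exceeds $\alpha$ as soon as $4n(\alpha-x)^2<(1-\alpha)x$, and it tends to $+\infty$ as $x\to\alpha^-$.

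\textbf{Your first case feeds exactly into this situation.} Your threshold $T_n$ is precisely the larger root of $4n(x-\alpha)^2=(1-\alpha)x$, i.e.\ the point the step maps to $\alpha$. So a term slightly below $T_n$ is sent slightly below $\alpha$, and the term after that is huge. A trajectory can therefore dip below $T_n$ and rebound far above it. Hence your contradiction (some earlier term is at most $T_n$, so $\gamma_{2n}^{2n}\le T_n$) does not follow.

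\textbf{This cannot be fixed by a finer local case analysis}, because the hypothesis on the last term alone does not imply the conclusion. Take $n=2$ and $\alpha=0.692$. Then $\gamma_4^2=7/8<T_2\approx 0.8756$, next $\gamma_4^3\approx 0.6909<\alpha$, and then $\gamma_4^4\approx 25>T_2$. Yet $\gamma_2^2=3/4$, so the conclusion fails.

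What the argument actually needs is global information: every term $\gamma_{2n}^j$ with $j\le 2n$ exceeds $T_n$, so that the process stays in its monotone regime. The paper's proof simply asserts that ``the process is non-increasing,'' which amounts to assuming exactly this. In the paper's range $\alpha\in[1/4,3/11]$ this is the behaviour one expects, but it is a property of the whole trajectory, not something derivable from the single hypothesis on $\gamma_{2n}^{2n}$.

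\textbf{A clean repair} is to strengthen the hypothesis to $\gamma_{2n}^{j}>T_n$ for all $j\le 2n$. This is what the application supplies:
\begin{itemize}
\item For large $N$, the epoch argument bounds the first time the process drops below $\rho>T_N$, which controls all terms up to index $N$.
\item Your induction yields the termwise domination $\gamma_{2n}^{2i-1}\le\gamma_n^{i}$ for every $i\le n$.
\end{itemize}
Together these propagate a uniform lower bound on all terms (not just the last one) down the doubling chain.
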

\begin{proof}
    Consider the first iterations of $\gamma_n^i$ and $\gamma_{2n}^j$.
    Since $\gamma_n^1= 1 = \gamma_{2n}^1$, we can apply the inequality of \cref{doublingineq} and obtain $\gamma_{2n}^3 \leq \gamma_n^2$.
    Suppose now that it holds
    $\gamma_{2n}^{2n} \geq \alpha + \frac{1 - \alpha + \sqrt{16n\alpha(1 - \alpha) + (1 - \alpha)^2}}{8n}$.
    Since the process is non-increasing, we have $\gamma_{2n}^3\geq \gamma_{2n}^{2n}$, hence $\gamma_{2n}^3 \leq \gamma_n^2$ implies that we can apply \cref{doublingineq} to $j = 3$ and $i = 2$ with value $\beta = \gamma_{2n}^3$, obtaining $\gamma_{2n}^5 \leq \gamma_n^3$.
    Since $\gamma_{2n}^j \geq \gamma_{2n}^{2n}$ for all $j \leq 2n$, we can repeat the same process, applying $\gamma_{2n}^{j + 2} \leq \gamma_n^{i + 1}$ subsequently for $j = 2i + 1$ and $i \leq n - 1$ with $\beta = \gamma_{2n}^j$.
    Doing so up to $n - 1$, we get $\gamma_{2n}^{2n - 1} \leq \gamma_n^n$, and the claim follows.
\end{proof}
Combining \cref{rhoinfinity} with \cref{doubling} allows us to prove the main result.
\begin{theorem}\label{311theorem}
   Assume that $\alpha \in [1/4, 3/11)$.
   Then $\beta^n$, the remaining value after $n$ steps of \cref{algo} for $n$ agents, satisfies $\beta^n \geq \alpha$ for all $n \geq 11$.
\end{theorem}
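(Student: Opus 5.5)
By \cref{algogamma}, $\beta^n \geq \gamma_n^n$ for every $n$ and every $\alpha \le 3/11$. It therefore suffices to show $\gamma_n^n \geq \alpha$ for all $n \geq 11$ and all $\alpha \in [1/4, 3/11)$. The plan combines the asymptotic statement \cref{rhoinfinity} with the doubling property \cref{doubling}, and then handles a finite base range of $n$ directly.

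\textbf{Step 1: a uniform large-$n$ bound.} For $\alpha \in [1/4, 3/11)$, let $\rho(\alpha)$ solve $2(1-\rho+\alpha\ln\rho) = 1-\alpha$. I would first check that $\rho(\alpha) > \alpha$ with a margin bounded away from zero on this interval. The function $g(\rho) = 2(1-\rho+\alpha\ln\rho)$ is decreasing for $\rho > \alpha$, and at $\rho = \alpha$ one needs $g(\alpha) > 1-\alpha$, i.e.\ $1 - \alpha + 2\alpha\ln\alpha > 0$; this is a one-variable check on $[1/4, 3/11]$. Using inequality \eqref{mainineq} with a fixed choice such as $k \approx \sqrt{n}$ and $\rho = \alpha + \delta$ for a small explicit $\delta$, I obtain an explicit $N_0$ such that $\gamma_N^N \geq \alpha + \delta$ for all $N \geq N_0$. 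Monotonicity in $\alpha$ (the left side of \eqref{mainineq} behaves monotonically) should let one $N_0$ serve the whole interval; alternatively I would split the interval into finitely many pieces. Enlarging $N_0$ if necessary, I also ensure that $\delta$ exceeds the doubling threshold $\frac{1-\alpha+\sqrt{16n\alpha(1-\alpha)+(1-\alpha)^2}}{8n} = O(n^{-1/2})$ whenever $2n \geq N_0$.

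\textbf{Step 2: downward doubling.} Fix $n \geq 11$ with $n < N_0$, and choose $q$ with $N = 2^q n \geq N_0$. By Step 1, $\gamma_N^N \geq \alpha + \delta$, which exceeds the threshold in \cref{doubling} applied with $N/2$ in place of $n$. Hence $\gamma_{N/2}^{N/2} \geq \gamma_N^N \geq \alpha+\delta$. Iterating down the chain $N, N/2, \dots, n$ gives $\gamma_n^n \geq \alpha+\delta$ at every intermediate level, so the threshold condition keeps holding. The subtle point is that at small levels $m$ the threshold $O(m^{-1/2})$ grows, so it may exceed $\delta$. The fix is to stop the chain at the first level where the threshold fails to be below the current lower bound, and treat such $m$ as base cases.

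\textbf{Step 3: base cases.} Since the threshold involves only $\sqrt{16m\alpha(1-\alpha)}/(8m)$, it is below the achieved margin once $m$ exceeds a small constant. Doubling from any $n$ eventually passes through some $m \in [11, 21]$, namely $m = \lfloor n/2^j\rfloor$-type representatives; more precisely, every $n \ge 11$ is $2^j m$ with $m \in \{11,\dots,21\}$ odd or in a range of size about 10. For these finitely many $m$, I verify $\gamma_m^m \geq \alpha$ directly. The recursion is explicit and monotone in $\alpha$, with the hardest case at $\alpha \to 3/11$, so this reduces to about 10 numerical evaluations. Chaining upward via \cref{doubling} then covers every $n = 2^j m$.

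\textbf{Main obstacle.} The hard part is making Steps 1 and 3 fit together quantitatively. The margin $\delta$ coming from \eqref{mainineq} with explicit error terms is small, while the doubling threshold is $\Theta(m^{-1/2})$. So I must check carefully that the inequalities chain without gaps for all $m \ge 11$. Concretely, this means showing that $\gamma_{2m}^{2m}$ exceeds the threshold at level $m$ for every $m$ in the base range and beyond, which again needs numerical lower bounds on $\gamma_{2m}^{2m}$ for moderate $m$.
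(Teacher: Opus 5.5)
Your skeleton is the paper's: \cref{algogamma}, then \cref{rhoinfinity}, then the downward doubling chain of Step 2. Step 3, however, is wrong as stated. \cref{doubling} gives $\gamma_{2m}^{2m}\le\gamma_m^m$, so it transfers lower bounds only from more agents to fewer agents. A direct check that $\gamma_m^m\ge\alpha$ for $m\in\{11,\dots,21\}$ therefore yields no lower bound on $\gamma_{2m}^{2m}$, and ``chaining upward'' is invalid. Separately, not every $n\ge 11$ has the form $2^j m$ with $11\le m\le 21$ (take $n=23$). The doubling lemma also relates only $n$ and $2n$, not $n$ and $\lfloor n/2\rfloor$.

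Your ``main obstacle'' is misdiagnosed as well. In the downward chain every level inherits $\gamma\ge\alpha+\delta$ from $\gamma_{2^q n}^{2^q n}$. The threshold $T_\alpha(m)=\frac{1-\alpha+\sqrt{16m\alpha(1-\alpha)+(1-\alpha)^2}}{8m}$ is decreasing in $m$. So the whole chain goes through as soon as $\delta>T_\alpha(n)$ at the final level $n$, and no numerical lower bounds on $\gamma_{2m}^{2m}$ are ever needed. If $\delta\le T_\alpha(n)$, the only fallback is to verify every such $n$ directly, which is an initial segment, not a set of representatives.

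The missing idea is not to take $\delta$ small. \cref{rhoinfinity} allows any constant $\delta<\rho(\alpha)-\alpha$, and $\rho(\alpha)-\alpha$ decreases on $[1/4,3/11]$. Hence $\rho(\alpha)-\alpha\ge\rho(3/11)-3/11\approx 0.3502-0.2727>0.07749$. With the single choice $\delta=0.07749$ for the whole range, the only quantitative condition is
\[
0.07749>\frac{1-\alpha+\sqrt{16n\alpha(1-\alpha)+(1-\alpha)^2}}{8n},
\]
which holds once $n\ge 10.61$; for instance, at $n=11$ and $\alpha=3/11$ the right side is about $0.0759$. So for every $n\ge 11$, choose $q$ with $2^q n\ge N_\delta$ and run the chain down. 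This gives $\gamma_n^n\ge\alpha+\delta$ with no base cases at all, and it is exactly the paper's proof; the values $n\le 10$ are checked numerically outside this statement. Finally, your concern about an $N_0$ uniform in $\alpha$ is unnecessary, since $q$ may depend on both $\alpha$ and $n$.
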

\begin{proof}
    By \cref{algogamma} we have $\beta^n \geq \gamma_n^n$, so to show $\beta^n \geq \alpha$ it suffices to prove that $\gamma_n^n \geq \alpha$.
    Let $\rho$ satisfy $2(1 - \rho + \alpha \ln \rho)=1 - \alpha$.
    Observe that for $\alpha^* = 3/11$, the value of $\rho^*$ satisfying $2(1 - \rho^* + \alpha^* \ln \rho^*)=1 - \alpha^*$ is approximately $0.3502$ (rounded down).
    Hence, the difference $\rho - \alpha$ is at least $\rho^* - \alpha^* >  0.07749$.
    We take $\delta = 0.07749 < \rho - \alpha$. 
    By \cref{rhoinfinity}, there exists $N_\delta$ such that for all $N \geq N_\delta$ we have $\gamma_N^N\geq\alpha$.
    Let $q = q(\alpha, \delta)$ be large enough integer such that $N_{q} := 2^{q} \cdot n \geq N_\delta$, then it holds $\gamma_{N_{q}}^{N_{q}} \geq \alpha + \delta$.
    We would like to show that $\gamma_n^n \geq \alpha$ too.

    For these values of $\alpha, \rho$ and $\delta$ it holds $
        \alpha + \delta > \alpha + \frac{1 - \alpha + \sqrt{16n\alpha(1 - \alpha) + (1 - \alpha)^2}}{8n}$
    when $n \geq 10.61$.
    Then, for the number $N_q$ we have
    \begin{equation*}
        \gamma_{N_q}^{N_q} \geq \alpha + \delta > \alpha + \frac{1 - \alpha + \sqrt{16n\alpha(1 - \alpha) + (1 - \alpha)^2}}{8n} \geq \alpha + \frac{1 - \alpha + \sqrt{16N_q\alpha(1 - \alpha) + (1 - \alpha)^2}}{8N_q},
    \end{equation*}
    thus we can apply doubling, and by \cref{doubling} it holds $\gamma_{N_q}^{N_q} \leq \gamma_{N_q/2}^{N_q/2}$.
    Applying the lower bound above repeatedly to $N_q/2$, $N_q/4$ and so on, we get
    \begin{multline*}
        \gamma_{N_q/2^r}^{N_q/2^r} \geq \alpha + \delta > \alpha + \frac{1 - \alpha + \sqrt{16n\alpha(1 - \alpha) + (1 - \alpha)^2}}{8n} \geq\\ \alpha + \frac{1 - \alpha + \sqrt{16N_q/2^r\alpha(1 - \alpha) + (1 - \alpha)^2}}{8N_q/2^r}\label{nqineq2}
    \end{multline*}
    for incrementing values of $r \geq 0$, and thus $\gamma_{N_q/2^r}^{N_q/2^r} \leq \gamma_{N_q/2^{r + 1}}^{ N_q/2^{r + 1}}$.
    Since $n \geq 11$, the chain of inequalities continues until we reach $r = q$, thus $\gamma_{N_q}^{N_q} \leq \gamma_n^n$.
    As a result, it holds $\gamma_n^n \geq \alpha + \delta$.
\end{proof}
\begin{figure}[h]
\includegraphics[width=17cm]{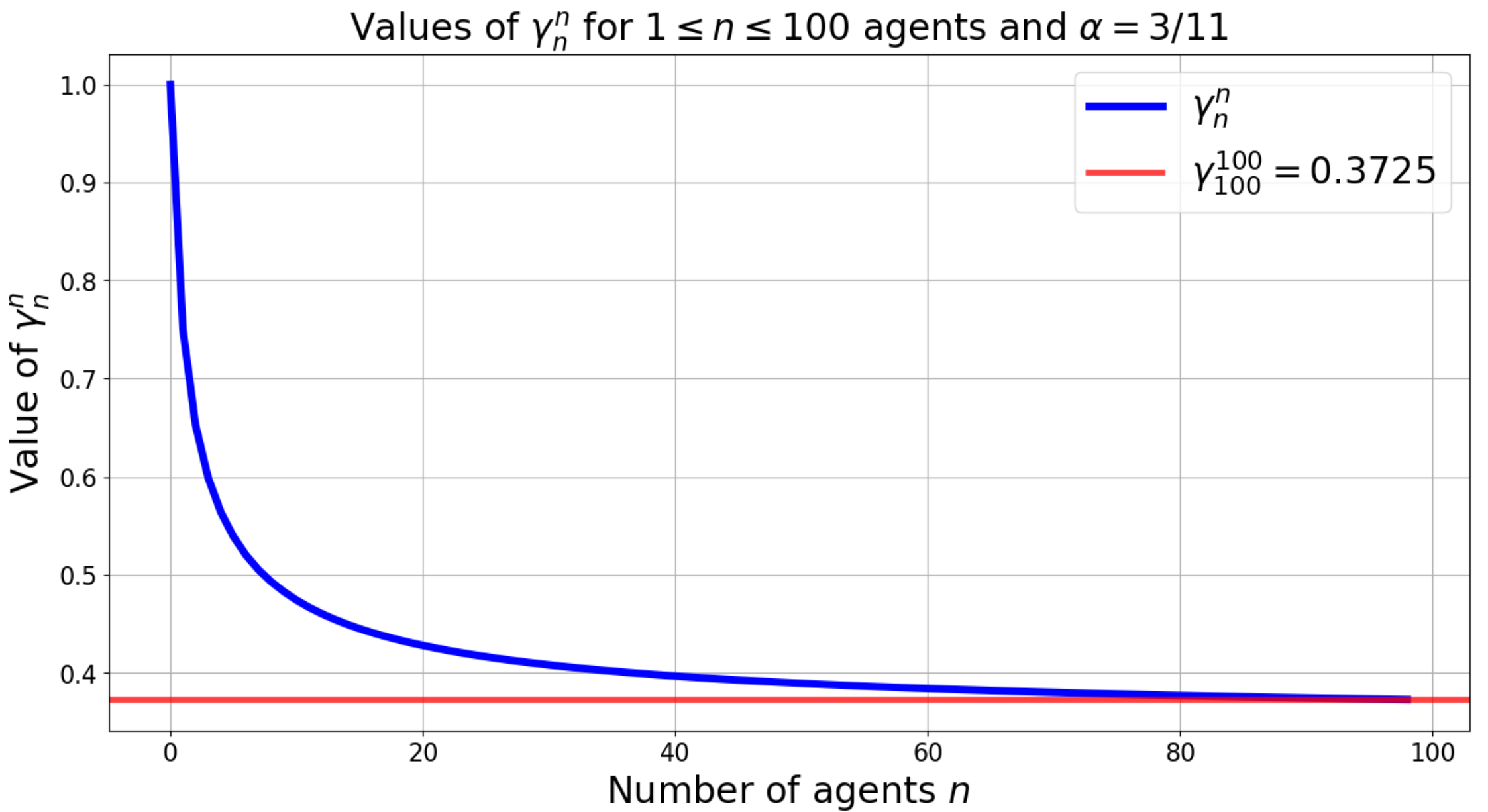}
\label{fig:smalla}
\caption{Numerical computation of values $\gamma_n^n$}
\label{fig:app1}
\end{figure}
Values of $n \leq 10$ were checked  umerically by running a simple calculation program (see Figure~\ref{fig:app1} for values of $\gamma_n^n$ for $n$ up to $100$). 
Since $\rho > 3/11 \geq \alpha$, there is still a lot of room left for improvement, so to maximize $\alpha$ we will turn to the more complicated case $\alpha \geq 3/11$.

\subsection{Algorithm performance when $\alpha \geq 3/11$}
\label{sec:bigAlpha}

The proofs for $\alpha \geq 3/11$ are completely is analogous, and just require more technical work.
We will need instead to use a different process $\gamma_n^i$.
The following lemma is proved by induction.
\begin{lemma}\label{algogamma2}
    For $n$ agents, let $\gamma_n^i$ be the following process: $\gamma_n^1 = 1$, and for $i \geq 1$,
    \begin{gather*}
        \text{if }\;\gamma^i_n \geq 3\alpha \;\text{ then }\;
        \gamma^{i + 1}_n = \left(1 - \frac{2(1 - 3\alpha)}{(\gamma^i_n -12\alpha +3)n}\right)\gamma^i_n;\\
        \text{if }\; \gamma^i_n < 3\alpha \;\text{ then }\; \gamma^{i + 1}_n = \left(1 - \frac{4\alpha}{3(\gamma^i_n - \alpha)n}\right)\gamma^i_n.
    \end{gather*}
    Assume $\alpha \in [3/11, 1/3)$.
    For every iteration $i \geq 1$ of \cref{algo} for $n$ agents, $\beta^i \geq \gamma_n^i$.
\end{lemma}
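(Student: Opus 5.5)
The plan is to copy the argument of \cref{algogamma}: induction on $i$, with \cref{valbound} (cases 2 and 3, both valid for $\alpha \ge 3/11$) supplying the one-step bound. Write the update as $\gamma^{i+1}_n = g(\gamma^i_n)$, where
\[
g(x) = x - \frac{2(1-3\alpha)\,x}{(x - 12\alpha + 3)n} \ \text{ for } x \ge 3\alpha, \qquad g(x) = x - \frac{4\alpha\, x}{3(x-\alpha)n} \ \text{ for } \alpha < x < 3\alpha.
\]
The base case is immediate, since $\beta^1 = 1 = \gamma^1_n$ by \cref{wlog}.

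For the inductive step, assume $\beta^i \ge \gamma^i_n$ and apply \cref{valbound} with the parameter $\beta := \gamma^i_n$. This is legitimate because $\beta^i \ge \gamma^i_n$, and we may assume $\gamma^i_n \ge \alpha$; otherwise there is nothing to prove for the subsequent steps that matter. Depending on whether $\gamma^i_n \ge 3\alpha$ or $\gamma^i_n < 3\alpha$, case 2 or case 3 of \cref{valbound} gives
\[
\beta^{i+1} \ge \Bigl(1 - \frac{1}{n\,W(\gamma^i_n)}\Bigr)\beta^i ,
\]
where $W(\gamma)$ is the corresponding lower bound on the LP optimum. We then need
\[
\Bigl(1-\frac{1}{nW(\gamma^i_n)}\Bigr)\beta^i \;\ge\; \Bigl(1-\frac{1}{nW(\gamma^i_n)}\Bigr)\gamma^i_n \;=\; \gamma^{i+1}_n .
\]
This holds as long as the factor $1 - \frac{1}{nW(\gamma^i_n)}$ is nonnegative. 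Here there is a subtlety. In \cref{algogamma} the authors instead substituted $\beta^i$ itself into the bound, which requires $g$ to be monotone. Using $\gamma^i_n$ as the parameter in \cref{valbound} avoids that requirement: \cref{valbound} only needs $\beta^i \ge \beta$. So the core step is simply to multiply the inequality $\beta^i \ge \gamma^i_n$ by a fixed nonnegative factor.

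The main obstacle is the nonnegativity of $1 - \frac{1}{nW}$, i.e., $nW(\gamma^i_n) \ge 1$.
\begin{itemize}
\item If $nW < 1$, then $\gamma^{i+1}_n$ is negative. The claim $\beta^{i+1} \ge \gamma^{i+1}_n$ then holds trivially, because $\beta^{i+1} \ge 0$.
\item If $nW \ge 1$, the multiplication preserves the inequality.
\end{itemize}
So both cases close. I would also check the boundary case $\gamma^i_n < \alpha$, where the formula with $(x - \alpha)$ in the denominator can be negative or undefined. I would handle it by declaring the process meaningful only while $\gamma^i_n > \alpha$. Alternatively, I would note that once $\gamma$ drops to $\alpha$ the lemma is only used up to the last index with $\gamma^i_n \ge \alpha$, which is what the later analysis needs.

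Finally, I would remark that in case 3, with $\gamma^i_n = \beta < 3\alpha$, \cref{valbound} indeed requires only $\beta \ge \alpha$. This completes the induction.
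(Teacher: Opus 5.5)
Your proof is correct. It uses the same induction on $i$ with \cref{valbound} as the one-step bound that the paper relies on. The paper only says the lemma ``is proved by induction'', following the template of \cref{algogamma}. The difference is how you instantiate that corollary.

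The paper's template plugs the parameter $\beta=\beta^i$ into the corollary. It then needs the update map $x\mapsto(1-p(x))x$ to be nondecreasing. For $\alpha\ge 3/11$ this also requires comparing across the regime switch when $\beta^i\ge 3\alpha>\gamma^i_n$. That comparison does work:
\begin{itemize}
\item each branch is increasing;
\item for $x\ge 3\alpha$ the first branch dominates the second;
\item the two branches agree at $x=3\alpha$, where both give $p=\frac{2}{3n}$.
\end{itemize}

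You instead plug in $\beta=\gamma^i_n$, which the corollary permits since it only needs $\beta^i\ge\beta\ge\alpha$. This makes the case split coincide with the one in the definition of $\gamma_n^i$. The step then reduces to multiplying $\beta^i\ge\gamma^i_n$ by a fixed factor, with no monotonicity needed. The only cost is the sign check on $1-p$, which you handle correctly.

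Your caveat about small $\gamma^i_n$ is genuinely needed, not cosmetic. The case-3 bound is undefined at $\gamma^i_n=\alpha$. If the process lands in $(0,\alpha)$, the recursion increases it, possibly above $1\ge\beta^{i+1}$. So the statement should be restricted to indices up to the first one with $\gamma^i_n\le\alpha$. The paper's argument carries the same implicit restriction without stating it.
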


We state the analogue of \cref{rhoinfinity} for $\alpha > 3/11$, proved in \cref{sec:bigalpha} (\cref{lowergamma2better}).
\begin{theorem}\label{bigrhoinf}
    Let $\rho$ satisfy $2(12\rho - 3)\ln(3\rho) = (1 - 3\rho)(3\ln 3 - 4)$, and assume $\alpha \in [3/11, \rho)$.
    For any constant $0 < \delta < \rho - \alpha$, there exists $n_\delta$ such that for all $n \geq n_\delta$ it holds $\gamma_n^n \geq \alpha +\delta$.
\end{theorem}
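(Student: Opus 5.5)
Since the recursion is deterministic and every step is a contraction by a factor $1-\Theta(1/n)$, I would analyse the process $\gamma_n^i$ of \cref{algogamma2} through its continuous limit, in the same way as for \cref{rhoinfinity}. Write $s=i/n$ and $g(s)=\gamma_n^{sn}$. In the limit the two phases become the ODEs
\[
g'=-\frac{2(1-3\alpha)\,g}{g-12\alpha+3}\quad (g\ge 3\alpha),\qquad g'=-\frac{4\alpha\, g}{3(g-\alpha)}\quad (g<3\alpha),
\]
with $g(0)=1$. Separating variables, the time needed to go from $1$ down to $3\alpha$ is
\[
s_1=\frac{1}{2(1-3\alpha)}\int_{3\alpha}^{1}\frac{g-12\alpha+3}{g}\,dg=\frac{1-3\alpha+(3-12\alpha)\ln\frac{1}{3\alpha}}{2(1-3\alpha)}.
\]
The time needed to go from $3\alpha$ down to a target $x$ is
\[
s_2(x)=\frac{3}{4\alpha}\int_{x}^{3\alpha}\frac{g-\alpha}{g}\,dg=\frac{3}{4\alpha}\Big(3\alpha-x-\alpha\ln\frac{3\alpha}{x}\Big).
\]
Taking $x=\alpha$ gives $s_2(\alpha)=\tfrac34(2-\ln 3)$. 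The condition $s_1+s_2(\alpha)=1$, multiplied out, should reduce exactly to $2(12\alpha-3)\ln(3\alpha)=(1-3\alpha)(3\ln3-4)$. So $\rho$ is precisely the threshold where the continuous process reaches $\alpha$ at time $1$. For $\alpha<\rho$, the total time to reach $\alpha$ is then strictly greater than $1$, which I expect to check by monotonicity in $\alpha$. Hence there is a margin, and I would choose a target $\alpha+\delta'$ with $\delta<\delta'<\rho-\alpha$ whose hitting time $s_1+s_2(\alpha+\delta')$ is still strictly greater than $1$.

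To make this rigorous for finite $n$, I would copy the epoch argument of \cref{iters}. First, split the interval $[\alpha+\delta',1]$ into intervals of length $1/k$, and treat $3\alpha$ as an extra breakpoint so that each epoch lies entirely inside one phase. Second, inside an epoch $[\gamma-\delta_0,\gamma]$, the per-step decrease of $\gamma_n^i$ is at most $\frac{1}{n}\,c(g)$, evaluated at the worst point of the interval. For phase one,
\[
c(g)=\frac{2(1-3\alpha)\,g}{g-12\alpha+3},
\]
which is increasing in $g$, so the worst point is the top $\gamma$. For phase two,
\[
c(g)=\frac{4\alpha\, g}{3(g-\alpha)},
\]
which is decreasing in $g$, so the worst point is the bottom $\gamma-\delta_0$; this bottom stays above $\alpha$ because $\delta'>0$. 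This yields a lower bound on the number of steps spent in the epoch of the form $\delta_0 n/\max c-1$. Third, summing over epochs gives a Riemann-type lower bound of $n\,(s_1+s_2(\alpha+\delta'))-O(n/k)-O(k)$ on the number of steps before the process drops below $\alpha+\delta'$. Choosing $k$ large as a constant, and then $n\ge n_\delta$, makes this exceed $n$, so $\gamma_n^n\ge\alpha+\delta'>\alpha+\delta$.

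I expect the main obstacle to be controlling the discretization error uniformly. The integrands stay bounded on $[\alpha+\delta',1]$ only because the lower limit is kept a fixed distance $\delta'$ away from $\alpha$, and this is exactly what the slack $\rho-\alpha>\delta$ pays for. One further subtlety is the single step at which the process jumps across $3\alpha$. The decrease per step is $O(1/n)$, so this step costs $O(1)$ steps and is negligible. Finally, the algebraic identification of the limiting equation with $2(12\rho-3)\ln(3\rho)=(1-3\rho)(3\ln 3-4)$ has to be checked carefully, including the fact that $s_1+s_2(\alpha)$ is decreasing in $\alpha$ on $[3/11,1/3)$, so that $\alpha<\rho$ indeed gives total time greater than $1$.
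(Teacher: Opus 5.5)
Your overall route is the paper's. You use epochs with an extra breakpoint at $3\alpha$ (the paper uses $3\alpha+\eps$), a per-epoch lower bound on the number of steps, and a Riemann-sum comparison with integrals. These produce exactly the left-hand side of \cref{lowergamma2better}, namely $s_1+s_2(x)$. Your reduction of $s_1+s_2(\alpha)=1$ to the defining equation of $\rho$ is correct. Using a constant $k$ instead of the paper's $d=\Theta(\sqrt n)$ is harmless for an existence statement. The gap is in how you obtain the margin.

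\textbf{The margin only covers small $\delta$.} Write $T_\alpha(x)=s_1+s_2(x)$ for the limiting hitting time of level $x$ with parameter $\alpha$. Your planned check, that $\alpha\mapsto T_\alpha(\alpha)$ is decreasing, yields only $T_\alpha(\alpha)>1$. Since $T_\alpha(x)$ decreases in $x$, this gives $T_\alpha(\alpha+\delta')>1$ only for $\delta'$ below an $\alpha$-dependent threshold $\rho^*(\alpha)-\alpha$, where $T_\alpha(\rho^*(\alpha))=1$. The theorem asks for every $\delta<\rho-\alpha$, so you need $T_\alpha(x)>1$ for all $x<\rho$, i.e.\ $\rho^*(\alpha)\ge\rho$. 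That compares different parameters at a fixed level, and it does not follow from the diagonal statement. The paper is also loose here: it identifies $\rho^*(\alpha)$ with $\rho$ by ``increasing $\alpha$ until $\rho^*=\alpha$'', so you must supply the argument yourself.

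\textbf{How to close it.} The missing fact is that, for each fixed level $g$, the rate is nondecreasing in $\alpha$:
\begin{itemize}
\item $\partial_\alpha\frac{2(1-3\alpha)}{g-12\alpha+3}=\frac{6(1-g)}{(g-12\alpha+3)^2}\ge 0$;
\item $\frac{4\alpha}{3(g-\alpha)}$ is increasing in $\alpha$;
\item for $g\ge 3\alpha$ the phase-one rate is at most the phase-two rate, because $6(1-3\alpha)(g-\alpha)-4\alpha(g-12\alpha+3)=(6-22\alpha)(g-3\alpha)\le 0$ when $\alpha\ge 3/11$ (the inequality used in \cref{epsprocess2}). So raising the switch point $3\alpha$ only increases the rate.
\end{itemize}
Integrating $1/(g\,p(g))$ then gives $T_\alpha(x)>T_\alpha(\rho)\ge T_\rho(\rho)=1$ for $\alpha<x<\rho$, which is exactly what you need.

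\textbf{A smaller slip.} For $\alpha\ge 3/11$ one has $12\alpha-3>0$. So the phase-one decrement $c(g)=\frac{2(1-3\alpha)g}{g-(12\alpha-3)}$ is \emph{decreasing} in $g$, not increasing. In both phases the maximum of $c$ over an epoch is therefore at the bottom endpoint. Evaluating at the top would overestimate the epoch length and invalidate the per-epoch bound. With the bottom endpoint the Riemann error is still $O(1/k)$, so nothing else changes. The paper avoids this by bounding the multiplicative factor $p(x)$, which is decreasing in $x$ in both phases, at $x=\gamma-\tau$.
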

The doubling and main result also extend to $\alpha > 3/11$, proved in \cref{sec:doubling} (\cref{doublingproof2}).
\begin{lemma}\label{doubling2}
    Suppose that for some $n$ and $i, j \geq 1$ it holds $\gamma_n^i \geq \beta=\gamma_{2n}^j$, for some value $\beta$.
    If $\beta >  \alpha + \frac{\alpha(1 + \sqrt{6n + 1})}{3n}$,
    then $\gamma_{2n}^{j + 2} \leq \gamma_n^{i + 1}$.
    Therefore, if $\gamma_{2n}^{2n}> \alpha + \frac{\alpha(1 + \sqrt{6n + 1})}{3n}$,
    then $\gamma_{2n}^{2n} \leq \gamma_n^n$.
\end{lemma}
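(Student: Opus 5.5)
We prove the first part by comparing two steps of the $2n$-process against one step of the $n$-process, exactly as in \cref{doublingineq}, and then derive the second part by the chaining argument of \cref{doubling}.

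Write $g_n(x) = (1 - q(x)/n)\,x$ for the one-step map of the process $\gamma_n^i$. Here $q(x) = \frac{2(1-3\alpha)}{x - 12\alpha + 3}$ when $x \ge 3\alpha$, and $q(x) = \frac{4\alpha}{3(x-\alpha)}$ when $x < 3\alpha$. The map $g_n$ is non-decreasing in $x$ on the relevant range, and $q$ is non-increasing and continuous at $3\alpha$, where both branches give $\frac{2}{3}\cdot\frac{1}{1}\cdot$ the same value $\frac{2}{3}$. Since $\gamma_n^i \ge \beta$, monotonicity gives $\gamma_n^{i+1} = g_n(\gamma_n^i) \ge g_n(\beta)$. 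It therefore suffices to show $g_{2n}(g_{2n}(\beta)) \le g_n(\beta)$.

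Set $\beta' = g_{2n}(\beta) = \beta - \beta q(\beta)/(2n)$. We need
\[
\beta' - \frac{\beta' q(\beta')}{2n} \le \beta - \frac{\beta q(\beta)}{n},
\]
which is equivalent to $\beta' q(\beta') \ge \beta q(\beta)$. This is plausible because $x\,q(x)$ is non-increasing in $x$. In the branch $x < 3\alpha$ we have $x q(x) = \frac{4\alpha}{3}\cdot\frac{x}{x-\alpha}$, which is decreasing. In the branch $x \ge 3\alpha$, $x q(x) = 2(1-3\alpha)\frac{x}{x-(12\alpha-3)}$, which is decreasing because $12\alpha - 3 > 0$ for $\alpha \ge 3/11$. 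Since $\beta' \le \beta$, we get $\beta' q(\beta') \ge \beta q(\beta)$ immediately, provided $\beta'$ stays in the domain where the process is meaningful, that is $\beta' > \alpha$ and $q(\beta') \le 2n$.

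The threshold $\beta > \alpha + \frac{\alpha(1+\sqrt{6n+1})}{3n}$ is exactly what guarantees this. Consider the worst branch $\beta < 3\alpha$ and put $y = \beta - \alpha$. The condition $\beta' > \alpha$ becomes
\[
y - \frac{2\alpha(y+\alpha)}{3yn} > 0, \quad\text{i.e.}\quad 3n y^2 - 2\alpha y - 2\alpha^2 > 0 .
\]
Its positive root is $y = \frac{\alpha(1 + \sqrt{1+6n})}{3n}$, which matches the stated bound. I would check that this same quadratic also keeps the factor $1 - q/(2n)$ nonnegative, so that $g_{2n}$ is applied legitimately. For $\beta \ge 3\alpha$ the corresponding bound is weaker, since $q(\beta) \le 2/3 \cdot$ const, so it is implied. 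The main obstacle is the case where $\beta' < 3\alpha \le \beta$, with the two steps straddling the branch switch. There I would rely on $x q(x)$ being non-increasing across the whole range, which follows from continuity of $q$ at $3\alpha$ (both formulas equal $\frac{2}{3\alpha}\cdot\alpha\cdot$ the common value $\frac{4\alpha}{3\cdot 2\alpha} = \frac{2}{3}$ there) together with monotonicity on each piece.

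For the second part, we chain exactly as in \cref{doubling}. Start with $\gamma_n^1 = \gamma_{2n}^1 = 1$. Every $\gamma_{2n}^j$ with $j \le 2n$ is at least $\gamma_{2n}^{2n}$, which exceeds the threshold. Applying the first part repeatedly with $j = 2i - 1$ yields $\gamma_{2n}^{2i+1} \le \gamma_n^{i+1}$. We use the induction hypothesis in the form $\gamma_n^i \ge \gamma_{2n}^{2i-1}$, and take $\beta = \gamma_{2n}^{2i-1}$. Continuing up to $i = n-1$ gives $\gamma_{2n}^{2n} \le \gamma_{2n}^{2n-1} \le \gamma_n^n$.
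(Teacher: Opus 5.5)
Your proof is correct. For the first part it takes a cleaner route than the paper's.

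The paper (\cref{doublingproof2} with $\tau=0$) splits into cases according to whether $\gamma_n^i,\gamma_{2n}^j,\gamma_{2n}^{j+1}$ lie above or below $3\alpha$. In each case it clears denominators until the target reduces to something trivial. The mixed case $\gamma_{2n}^{j+1}<3\alpha\le\beta$ needs a separate estimate using $\alpha\ge 3/11$ and $\beta\ge 3\alpha$. The case $\gamma_n^i\ge 3\alpha>\beta$ is reduced to the all-below case by comparing the two branch factors, which is again valid only for $\alpha\ge 3/11$.

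You instead observe that the two-steps-versus-one-step comparison is exactly the identity $g_{2n}(g_{2n}(\beta))\le g_n(\beta)\iff \beta' q(\beta')\ge \beta q(\beta)$. You also note that $x\,q(x)$ is non-increasing on $(\alpha,\infty)$: each piece has the form $c\,x/(x-a)$ with $c>0$ and $0<a<x$, and both pieces equal $2\alpha$ at $x=3\alpha$. That one fact gives both the monotonicity of $g_n(x)=x-x\,q(x)/n$, which replaces the paper's branch comparison, and the key inequality across the branch switch. So no case analysis is needed. The threshold arises, as in the paper, precisely as the condition $\gamma_{2n}^{j+1}>\alpha$ in the lower branch.

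Your route buys brevity and robustness. It only uses $1/4<\alpha<1/3$. It also transfers verbatim to the shifted process with $\tau$ (the form needed for $\hgam$ in \cref{bigdoubling}), since $x\,q(x-\tau)$ is still non-increasing and continuous at $3\alpha+\tau$.

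Your chaining for the second part is the paper's. This includes the tacit use that $\gamma_{2n}^j$ is non-increasing up to $j=2n$. That holds as long as the trajectory stays above $\alpha$, where $g_{2n}(x)<x$; on $(0,\alpha)$ the map can increase.

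Two corrections:
\begin{itemize}
\item Drop the proviso $q(\beta')\le 2n$ and your plan to check that $1-q(\beta')/(2n)\ge 0$. The identity above is pure algebra and needs no sign condition. The proviso is in fact false near the threshold: as $\beta$ decreases to it, $\beta'\downarrow\alpha$ and $q(\beta')\to\infty$, so $\gamma_{2n}^{j+2}<0$. That check would therefore fail.
\item For $\beta\ge 3\alpha$, the clean reason that $\beta'>\alpha$ is $q(\beta)\le q(3\alpha)=2/3$, which gives $\beta'\ge(1-\tfrac{1}{3n})\beta\ge 2\alpha$. Your garbled continuity computation should simply read $q(3\alpha)=2/3$ from both formulas.
\end{itemize}
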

\begin{theorem}
   Let $\alpha = \frac{11}{40} + \frac{1}{1000}$.
   Then $\beta^n$, the remaining value after $n$ steps of \cref{algo} for $n$ agents, satisfies $\beta^n \geq \alpha$ for all $n \geq 54$.  
\end{theorem}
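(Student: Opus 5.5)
We will mirror the proof of \cref{311theorem}, now using the process of \cref{algogamma2} together with \cref{bigrhoinf} and the doubling bound of \cref{doubling2}. Fix $\alpha = \frac{11}{40}+\frac{1}{1000} = 0.276$. Since $\alpha \in [3/11, 1/3)$, \cref{algogamma2} gives $\beta^n \ge \gamma_n^n$. It therefore suffices to show $\gamma_n^n \ge \alpha$ for every $n \ge 54$.

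\textbf{Step 1 (large $n$).} Let $\rho \approx 0.2767738$ be the root of $2(12\rho-3)\ln(3\rho) = (1-3\rho)(3\ln3-4)$. We check numerically that $\alpha < \rho$. The gap is small: $\rho - \alpha \approx 7.7\cdot 10^{-4}$. Pick a constant $\delta$ with $0<\delta<\rho-\alpha$, say $\delta = 7\cdot 10^{-4}$. \cref{bigrhoinf} then yields $n_\delta$ such that $\gamma_N^N \ge \alpha+\delta$ for all $N \ge n_\delta$.

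\textbf{Step 2 (doubling down to $n \ge 54$).} Given $n \ge 54$, choose $q$ with $N_q = 2^q n \ge n_\delta$, so that $\gamma_{N_q}^{N_q}\ge \alpha+\delta$. We want to apply \cref{doubling2} repeatedly to the pairs $(N_q/2^{r+1}, N_q/2^r)$ for $r=0,\dots,q-1$. Each application requires
\[
\gamma_{N_q/2^r}^{N_q/2^r} > \alpha + \frac{\alpha(1+\sqrt{6m+1})}{3m}, \qquad m = N_q/2^{r+1} \ge n.
\]
The right-hand side is decreasing in $m$. Hence, by induction, it is enough to have $\alpha+\delta$ exceed this threshold at $m=n$, i.e. $\delta > \frac{\alpha(1+\sqrt{6n+1})}{3n}$. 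Once that holds, each step gives $\gamma^{N_q/2^{r+1}}_{N_q/2^{r+1}} \ge \gamma^{N_q/2^r}_{N_q/2^r} \ge \alpha+\delta$, so the induction hypothesis propagates. At the end we get $\gamma_n^n \ge \alpha+\delta > \alpha$.

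\textbf{Step 3 (the main obstacle: reconciling the threshold with $n\ge 54$).} With $\delta$ of order $10^{-3}$, the condition $\delta > \alpha(1+\sqrt{6n+1})/(3n) \approx 0.225/\sqrt n$ only holds for $n$ of order $10^5$. So the clean chain from Step 2 alone does not reach down to $54$. I would handle this by splitting into two ranges.

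For $n$ above a threshold $n_1$ (where $\delta$ beats the doubling error term), use Step 2 directly.

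For $54 \le n < n_1$, finish the chain differently. Doubling from $N_q$ down to some $n' \in [n_1, 2n_1)$ still works. For the remaining factors of two, observe that the requirement in \cref{doubling2} only concerns $\gamma_{2m}^{2m}$, not $\alpha+\delta$. Since $\gamma^{2m}_{2m}$ itself stays well above $\alpha$ for moderate $m$, each step $m \to m/2$ can be certified by a direct numerical evaluation of $\gamma_{2m}^{2m}$ against the threshold $\alpha + \alpha(1+\sqrt{6m+1})/(3m)$.

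Equivalently, and more simply, compute $\gamma_n^n$ directly (with rigorous rounding, e.g. interval arithmetic) for all $54 \le n < n_1$. This is exactly how the paper handled small $n$ in \cref{311theorem} and Figure~\ref{fig:app1}.

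The expected difficulty is precisely this: $\alpha$ is very close to $\rho$, so the analytic margin $\delta$ is tiny. The lower cutoff $54$ is then presumably where the numerically computed $\gamma_n^n$ first exceeds $\alpha$ and the doubling hypothesis becomes verifiable. I would first try to verify numerically that $\gamma_{2m}^{2m}$ exceeds the doubling threshold for every $m \ge 54$ up to $n_1$. Then one uniform doubling chain from any $N_q$ down to $n$ closes the argument.
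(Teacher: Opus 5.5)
Your reduction to $\gamma_n^n\ge\alpha$ and your doubling chain in Step 2 are fine. The gap is in Step 1: you use the wrong limiting value, and that forces you into Step 3, which is not a proof as written. Step 3 defers the whole range of $n$ from $54$ up to about $10^5$ to a brute-force interval-arithmetic computation (about $10^5$ processes, each of length up to $10^5$) that you have not carried out. Your explanation of where the cutoff $54$ comes from is also incorrect.

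The $\rho\approx 0.2767738$ in \cref{bigrhoinf} is the critical value $\alpha^*$, where the asymptotic limit of $\gamma_n^n$ collapses onto $\alpha$. It is not the limit of $\gamma_n^n$ at your fixed $\alpha=0.276$. To mirror \cref{311theorem} properly, you need an $\alpha$-dependent limit, just as that proof takes $\rho$ to solve $2(1-\rho+\alpha\ln\rho)=1-\alpha$ for the given $\alpha$. So let $\rho(\alpha)$ be the root $\rho>\alpha$ of
\[
\frac{3(3\alpha-\rho-\alpha\ln(3\alpha)+\alpha\ln\rho)}{4\alpha}+\frac{1-3\alpha+(12\alpha-3)\ln(3\alpha)}{2(1-3\alpha)}=1 ,
\]
and apply \cref{lowergamma2better} with arbitrarily small perturbation parameters, $n_0=n$ and $d=\Theta(\sqrt n)$. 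The left side is decreasing in $\rho$ for $\rho\ge\alpha$ and exceeds $1$ on $[\alpha,\rho(\alpha))$. Hence for any fixed margin $\delta<\rho(\alpha)-\alpha$ you get $\gamma_N^N\ge\alpha+\delta$ for all sufficiently large $N$.

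The key point you missed is that this margin is not small. The $\rho$-derivative of the left side, $\frac{3(\alpha/\rho-1)}{4\alpha}$, vanishes at $\rho=\alpha$. So $\rho(\alpha)-\alpha$ grows like $\sqrt{\alpha^*-\alpha}$, not linearly in $\alpha^*-\alpha$. A gap of about $7.7\cdot 10^{-4}$ in $\alpha$ thus becomes $\rho(0.276)\approx 0.3086$, a margin of about $0.0326$.

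Take any $\delta$ in $(0.03242,\,0.03261)$. Then your own Step 2 condition $\delta>\alpha(1+\sqrt{6n+1})/(3n)$ holds for every $n\ge 54$: the right side is about $0.0324$ at $n=54$ and about $0.0327$ at $n=53$. The doubling chain therefore closes for all $n\ge 54$ with no computation at all. This is exactly where the $54$ in the paper's proof comes from.
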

\begin{proof}
    The proof is analogous to \cref{311theorem}.
    For $\alpha = \frac{11}{40} + \frac{1}{1000}$, the value $\rho$ for which  $
        \frac{3(3\alpha - \rho - \alpha\ln(3\alpha)  + \alpha\ln \rho)}{4\alpha} + \frac{1 - 3\alpha + (12\alpha - 3)\ln(3\alpha )}{2(1 - 3\alpha)} = 1$
    is $\rho \geq 0.30861$.
    Then, by \cref{lowergamma2better}, the gap between $\gamma_n^n \geq \rho$ and $\alpha$ will be at least $0.03261$, so we can take $\delta = 0.03261$ and have $\gamma_{N_\delta}^{N_\delta} \geq \alpha$ for the doubled process.
    Then, the doubling $n$ must satisfy $0.03261 \geq  \frac{\alpha(1 + \sqrt{6n + 1})}{3n}$,
    or $n \geq 53.4$.
\end{proof}
\begin{figure}[H]
\includegraphics[width=17cm]{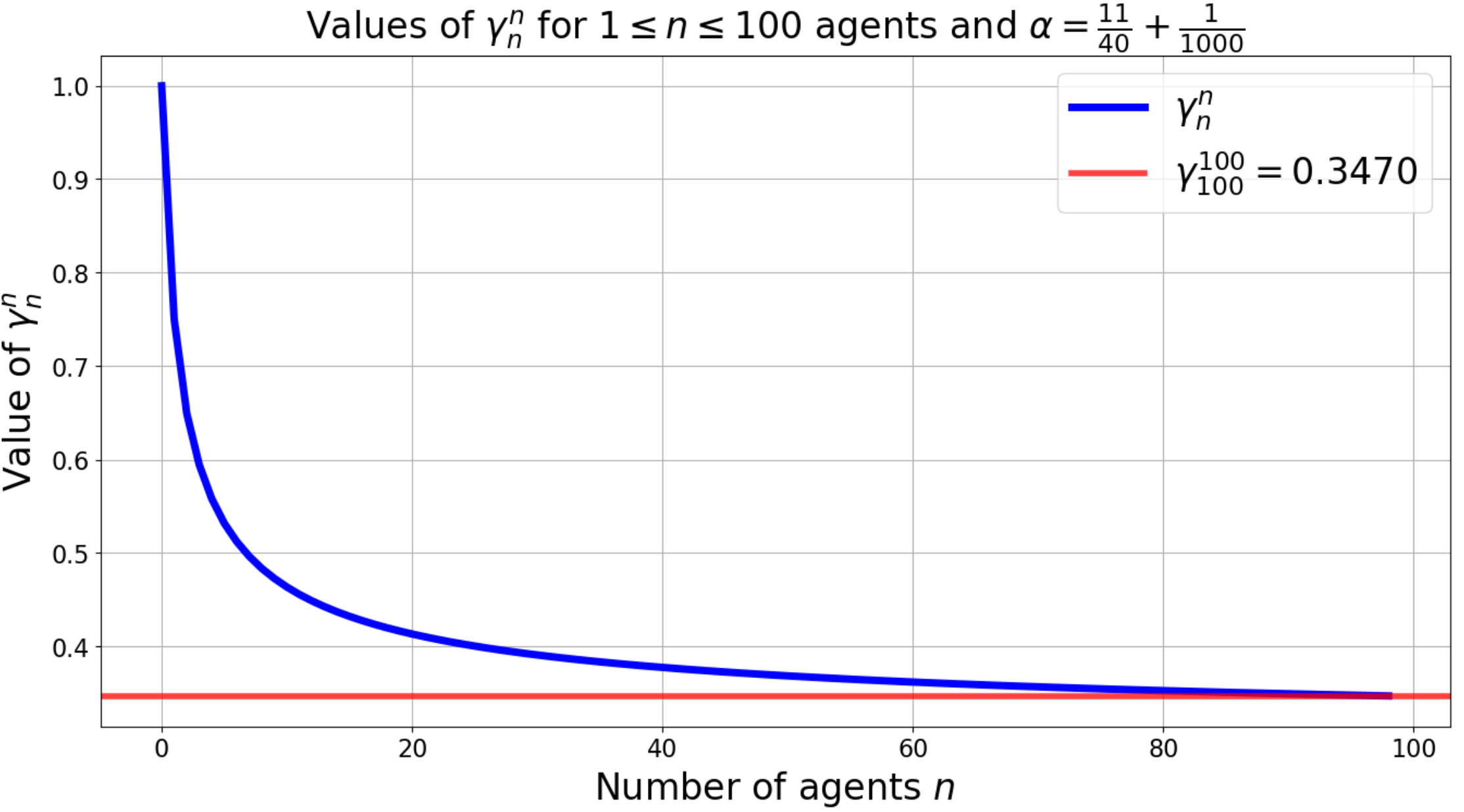}
\label{fig:smalla}
\caption{Numerical computation of values $\gamma_n^n$}
\label{fig:plotbig}
\end{figure}
The values $n\leq 100$ were checked directly via a computer program in \cref{fig:plotbig}.

\section{Different valuations}

\label{sec:diffvalsintro}

The algorithm for the general case, when valuations $v_i$ for agents $i \in [n]$ are not necessarily the same, will have the same high-level structure.
There will be $n$ iterations, and at iteration $k \in [n]$ it picks some agent $i_k$ and allocates a bundle to this agent.
There major difficulty that the algorithm needs to address in the multiple-agent case is the fact that different valuations result in different APS-partitions, which can be have problematic overlaps between each other.
So, if say bundle $B$ is acceptable to agent $i$, removing items of $B$ from APS-partitions of other agents may result in a significant loss of value.
Therefore, the algorithm needs to ensure that not only every agents receives an acceptable bundle in the end, but also that giving a bundle to one agent keeps enough value in the APS-partitions of the agents that are still in the game.

For agent $i$, let $\{(S_i, \lambda_i(S_i)\}_{S_i \in \Ss_i}$ be her APS partition satisfying \cref{wlog}.
In the following sections, we will also assume that these APS partitions satisfy \cref{smallitems}.
This assumption is later removed in \cref{sec:bigitems}, where we solve different instances with items of value at least $\alpha$ using results obtained for the case when \cref{smallitems} does hold.

In our algorithm, each agent will keep track of their personal APS-partition, which is modified over the course of the algorithm.
At every iteration, we remove some items from every bundle of the partition, while keeping the weights the same.
We will need a similar terminology of filtered partition, used for the same-valuation case, extended to multiple agents.

\begin{definition}
    For $i \in [n]$, consider agent $i$ with APS-partition $\{(S_i, \lambda_i(S_i)\}_{S_i \in \Ss_i}$, and suppose that this agent participates in the following high-level algorithm: there are $n$ iterations in total, and at every iteration $k \in [n]$, it removes some (possibly none) items from every bundle $S_i \in \Ss_i$.
    We denote by $S_i^k \subseteq S_i$ the remaining items of $S_i$ at iteration $k$, and let $\Ss_i^k := \{S_i^k : S_i \in \Ss_i\}$.
    We call $\Ss_i^k$ a \textbf{filtration of $\Ss_i$ at iteration $k$}, and $\{(S_i^k, \lambda_i(S_i)\}_{S_i \in \Ss_i}$ a \textbf{filtered APS-partition}.
    We will refer to $\beta_i^k := \sum_{S_i \in \Ss_i}\lambda_i(S_i) \cdot v_i(S_i^k)$ as the \textbf{remaining value} of the filtered partition $\Ss_i^k$.
\end{definition}

Throughout the algorithm, every agent $i \in [n]$ maintains a filtration of $\Ss_i$.
At some iteration $k$, we may choose agent $i_k$ and allocate her an acceptable bundle $B_{i_k}$.
Unlike in \cref{algo} before, the different-valuations algorithm will pick a bundle for agent $i_k$ probabilistically, using a distribution over subsets of the filtered partition $\Ss_i^k$.
In order to construct this distribution, we use essentially the same ideas that were earlier described in \cref{partition}, \cref{choice} and \cref{distribution}.
For completeness, we summarize these ideas in the following lemma.

\begin{lemma}\label{distpartition}
    For $i \in [n]$, consider agent $i$ with valuation $v_i$ and APS-partition $\{(S_i, \lambda_i(S_i)\}_{S_i \in \Ss_i}$ satisfying \cref{wlog}, and let $\Ss_i^k = \{S_i^k : S_i \in \Ss_i\}$ be some filtration of $\Ss_i$ at iteration $k \in [n]$.
    For any $\beta \geq \alpha$ such that $\beta_i^k:=\sum_{S_i \in \Ss_i}\lambda_i(S_i) \cdot v_i(S_i^k)$ satisfies $\beta_i^k\geq \beta$, the following holds.
    
    There exists a probability distribution $\mu_i^k$ over subsets of $S_i^k$ for $S_i^k \in \Ss_i^k$, such that for every $B\subseteq \Mm$ with $\mu_i^k(B) > 0$ it holds $v_i(B) \geq \alpha$.
    In addition, for every item $e \in \Mm$ the probability $\mu_i^k(e)$ to pick $e$ if we sample a random set $B\sim \mu_i^k$ satisfies $\mu_i^k(e) \leq p = p(\beta)$, where
    \begin{enumerate}
        \item if $\alpha \leq 3/11$, then $p := \frac{1 - \alpha}{2(\beta - \alpha)n}$;
        
        \item if $\alpha \geq 3/11$ and $\beta \geq 3\alpha$, then $p := \frac{2(1 - 3\alpha)}{(\beta - 12\alpha + 3)n}$;

        \item if $\alpha \geq 3/11$ and $\beta < 3\alpha$, then $p := \frac{4\alpha}{3(\beta - \alpha)n}$.
    \end{enumerate}
\end{lemma}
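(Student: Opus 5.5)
The plan is to replay, for the single agent $i$ and the filtration $\Ss_i^k$, the argument of \cref{partition}, \cref{choice}, \cref{distribution}, \cref{lambdabound} and \cref{valbound}. The identical-valuations proof never used how the remaining items $\Mm^i$ arose. It only used three facts: the APS partition satisfies \cref{wlog}, every item has value below $\alpha$ (\cref{smallitems}), and the filtered bundles are intersections $S\cap \Mm^i$. For a filtration, the bundle $S_i^k$ need not equal $S_i\cap \Mm'$ for a single common set $\Mm'$. However, \cref{partition} is applied bundle by bundle, so I will apply it to each $S_i$ separately with $\Mm' = S_i^k$.

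First, classify the bundles. Put $S_i$ into class $\Ss_i^k(t\alpha)$ according to $(t-1)\alpha \le v_i(S_i^k) < t\alpha$, with the top class $\Ss_i^k(1)$ collecting those with $3\alpha \le v_i(S_i^k)\le 1$. By \cref{wlog} all values are at most $1$ and $\alpha<1/3$, so these four classes cover everything. The unification argument of \cref{partition} then produces sub-bundles of $S_i^k$, each of value at least $\alpha$:
\begin{itemize}
\item for $\Ss_i^k(2\alpha)$, the whole set $S_i^k$;
\item for $\Ss_i^k(3\alpha)$, three sub-bundles covering each item at most twice;
\item for $\Ss_i^k(1)$, two disjoint sub-bundles.
\end{itemize}
Bundles in $\Ss_i^k(\alpha)$ contribute nothing. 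Define $\Lambda = \Lambda_{2\alpha} + \frac32\Lambda_{3\alpha} + 2\Lambda_1$, where each $\Lambda_c$ sums $\lambda_i(S_i)$ over class $c$. Build $\mu_i^k$ exactly as in \cref{choice}: weight $\lambda_i(S_i)/\Lambda$, $\lambda_i(S_i)/(2\Lambda)$ and $\lambda_i(S_i)/\Lambda$ on the sub-bundles of the three classes, respectively.

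The computation of \cref{distribution} then goes through verbatim. The weights sum to $1$, and every item $e$ has $\mu_i^k(e)\le \frac{1}{\Lambda}\sum_{S_i\ni e}\lambda_i(S_i)\le \frac{1}{n\Lambda}$. This uses only that each parent $S_i$ contributes at most $\lambda_i(S_i)/\Lambda$ to the probability of any one of its items, together with the APS constraint $\sum_{S_i\ni e}\lambda_i(S_i)\le 1/n$. Every set in the support is a subset of some $S_i^k$ and has value at least $\alpha$, as required.

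Finally, bound $\Lambda$ below in terms of $\beta$. As in \cref{lambdabound}, the vector $(\Lambda_\alpha,\Lambda_{2\alpha},\Lambda_{3\alpha},\Lambda_1)$ is feasible for \ref{opt}: the weights sum to $1$, and $\beta\le\beta_i^k < \alpha\Lambda_\alpha+2\alpha\Lambda_{2\alpha}+3\alpha\Lambda_{3\alpha}+\Lambda_1$ by the class bounds. Hence $\Lambda\ge W_\beta$. The case analysis of \cref{wclass} and the computations in the proof of \cref{valbound} give $W_\beta$ equal to $\frac{2(\beta-\alpha)}{1-\alpha}$, $\frac{\beta-12\alpha+3}{2(1-3\alpha)}$ and $\frac{3(\beta-\alpha)}{4\alpha}$ in the three cases. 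Inverting yields exactly the three stated values of $p=\frac{1}{nW_\beta}$.

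The only point needing care is the first step: checking that nothing in \cref{partition} or \cref{lambdabound} relies on the filtered bundles being intersections with a \emph{common} set $\Mm^i$. I expect this to be immediate, since both arguments are per-bundle plus a single weighted sum. One also has to note that the maximal-value normalisation $v_i(S_i)=1$ from \cref{wlog} still bounds every $v_i(S_i^k)$ by $1$, which follows from monotonicity.
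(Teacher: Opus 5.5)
Your proposal is correct and matches the paper's proof essentially step for step. The paper likewise classifies the filtered bundles $S_i^k$ via \cref{partition} and builds the same multiset distribution with weights $\lambda_i(S_i)/\Lambda_i^k$ and $\lambda_i(S_i)/(2\Lambda_i^k)$; it then invokes \cref{distribution} for $\mu_i^k(e)\le 1/(n\Lambda_i^k)$ and finishes with \cref{lambdabound} and \cref{valbound}. Your two points of care are exactly the details the paper leaves implicit: that \cref{partition} is a per-bundle argument, so no common $\Mm'$ is needed, and that the small-items assumption of \cref{smallitems} (assumed throughout that section though not restated in the lemma) is what guarantees enough unified items in each class.
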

\begin{proof}
    We use \cref{partition} on valuation $v_i$ and partition $\Ss_i^k$ to distribute all bundles $S_i^k \in \Ss_i^k$ into classes $\Ss_i^k(t\alpha)$ for $t \geq 1$ as follows: bundle $S_i^k$ belongs to $\Ss_i^k(t\alpha)$ if $(t - 1)\alpha \leq v_i(S_i^k) < t\alpha$.
    As $\alpha \leq 1/3$, and by \cref{wlog} there are no sets of value greater than $1$, we will only have bundles of classes $\Ss_i^k(\alpha), \Ss_i^k(2\alpha), \Ss_i^k(3\alpha), \Ss_i^k(4\alpha)$, where for every $S_i^k \in \Ss_i^k(4\alpha)$ it holds $3\alpha \leq v(S_i^k) \leq 1$.
    We denote $\Ss_i^k(1) := \Ss_i^k(4\alpha)$ for clarity, and proceed to construct $\mu_i^k$ using \cref{partition}.

    Let $\Lambda_i^k(\alpha) := \sum_{S_i^k \in \Ss_i^k(\alpha)}\lambda_i(S_i)$, and let $\Lambda_i^k(2\alpha), \Lambda_i^k(3\alpha), \Lambda_i^k(1)$ be defined accordingly as sums of weights of sets in corresponding classes.
    Let
    \[
        \Lambda_i^k := 0 \cdot \Lambda_i^k(\alpha) + 1\cdot \Lambda_i^k(2\alpha) + \frac{3}{2}\cdot \Lambda_i^k(3\alpha) + 2\cdot \Lambda_i^k(1).
    \]
    Let $\Bb_i^k$ be a (multi)-collection of sets, initially $\Bb_i^k = \varnothing$.
    For every $S_i^k \in \Ss_i^k$ with $\lambda_i(S_i) > 0$ we check which of the $\Ss_i^k$-classes it belongs to.
    \begin{itemize}
        \item If $S_i^k \in \Ss_i^k(\alpha)$, we do nothing, effectively setting $\mu_i^k(S_i^k) = 0$.
        \item If $S_i^k \in \Ss_i^k(2\alpha)$, add a copy of $S_i^k$ into $\Bb_i^k$, and assign it probability $\mu_i^k(S_i^k) = \frac{\lambda_i(S_i)}{\Lambda_i^k}$.
        \item If $S_i^k \in \Ss_i^k(3\alpha)$, let $B_1, B_2, B_3 \subseteq S_i^k$ be sub-bundles of $S_i^k$ constructed as in \cref{partition}.
        For every $r = 1, 2, 3$, add a copy of $B_r$ to $\Bb_i^k$, and assign it probability $\mu_i^k(B_r) = \frac{\lambda_i(S_i)}{2\Lambda_i^k}$.
        \item If $S_i^k \in \Ss_i^k(1)$, let $B_1, B_2 \subseteq S_i^k$ be sub-bundles of $S_i^k$ constructed as in \cref{partition}.
        For every $r = 1, 2$, add a copy of $B_r$ to $\Bb_i^k$, and assign it probability $\mu_i^k(B_r) = \frac{\lambda_i(S_i)}{\Lambda_i^k}$.
    \end{itemize}
    By construction, any $B\subseteq \Mm$ with $\mu_i^k(B) > 0$ satisfies $v_i(B) \geq \alpha$.
    By \cref{distribution}, $\mu_i^k$ is a valid probability distribution over $\Bb_i^k$, and for any item $e \in \Mm$ it holds
    \[
        \mu_i^k(e) = \sum_{\substack{B \in \Bb_i^k\\ B\ni e}}\mu_i^k(B) \leq \frac{1}{n\Lambda_i^k}.
    \]
    Finally, by \cref{lambdabound} and \cref{valbound}, we can upper bound $1/(n\Lambda_i^k)$ by $p(\beta)$, with corresponding values of $p(\beta)$ for different of $\alpha$ and $\beta \leq \beta_i^k$ (as stated in lemma).
\end{proof}
\subsection{Algorithm description}
\label{sec:algorithm}

The high-level description of the algorithm for different valuations is quite similar to the same-valuation case (\cref{algo}).
The algorithm has $n$ iterations and maintains a set $\Aa^k$ of \textbf{active} agents at iteration $k$, i.e the agents that still have not been allocated a bundle.
At each iteration $k$, we pick some agent $i_k \in \Aa^k$, construct a distribution $\mu_{i_k}^k$ as in \cref{distpartition}, sample a bundle $B_{i_k} \sim \mu_{i_k}^k$ at random and give the bundle $B_{i_k}$ to agent $i_k$.
Then, agent $i_k$ becomes inactive, i.e $\Aa^{k + 1} = \Aa^{k}\setminus \{i_k\}$, and we move to iteration $k + 1$.

In order to succeed, the algorithm needs to ensure that for the next agent $j \in \Aa^{k + 1}$ picked at iteration $k + 1$, the remaining value of the filtered partition $\Ss_{j}^{k + 1}$ is at least $\alpha$, i.e $\beta_{j}^{k + 1} \geq \alpha$ (with high probability), otherwise \cref{distpartition} cannot be applied and we would not be able to choose an acceptable bundle for $j$.
One could try mimicking previous approach of \cref{algo}, which, having allocated $B_{i_k}$ to $i_k$, creates $\Ss_j^{k + 1}$ out of $\Ss_j^{k}$ by simply removing items of $B_{i_k}$ from all bundles $S_j^k \in \Ss_j^k$, i.e $S_j^{k + 1} = S_j^k\setminus B_{i_k}$.
Unfortunately, doing so in the different valuation case may cause the remaining agents in $\Aa^{k + 1}$ lose a lot of value.
For example, we could sample an acceptable bundle $B_{i_k}$ for $i_k$, such that for some agent $j \in \Aa^{k + 1}$ that remains active, removing items of $B_{i_k}$ from bundles $S_j^k \in \Ss_{j}^k$ will force $\sum_{S_j \in \Ss_j}\lambda_j(S_j)\cdot v_j(S_j^k \setminus B_{i_k}) < \alpha$.
And, if APS-partitions of different agents have significant overlaps, i.e multiple bundles of one agent have huge intersections with many different bundles of the other agents, the probability to pick a ``wrong'' bundle and remove a lot of value from the other agents can be quite substantial.

To mitigate this issue, our algorithm will include the following additional steps:
\begin{itemize}
    \item Before any allocations, for every agent $i$ we scan the initial partition $\Ss_i$ and identify bundles $S_i \in \Ss_i$ that other agents $j\neq i$ on average find very valuable --- we will refer to such bundles as ``dangerous''.
    Then, we create a filtered partition $\Ss_i^1$ where every dangerous bundle $S_i$ is replaced with $S_i^1 = \varnothing$, while all the other bundles are untouched. 
    {This effectively ``bans'' the algorithm from allocating any subset of a dangerous $S_i$-s to $i$ in the future.
    Indeed, when we construct a distribution $\mu_i^k$ from \cref{distpartition}, every subset $S\subseteq S_i$ will have $\mu_i^k(S) = 0$, while the probabilities of other bundles are  increased.}
    We show that the total weight of the dangerous bundles w.r.t APS-partition of $i$ is small, thus the value $\beta_i^1$ of $\Ss_i^1$ is still close to $1$, i.e $i$ loses almost no value by ignoring the dangerous bundles.

    \item After picking agent $i_k \in \Aa^k$ and allocating her an acceptable bundle $B_{i_k} \sim \mu_{i_k}^k$, for every remaining agent $j \in \Aa^{k + 1}$ the algorithm keeps removing items of $B_{i_k}$ from bundles $S_j^k \in \Ss_j^k$ only as long as the total loss inflicted to $\beta_j^k$ does not exceed some value $c > 0$, decided in advance.
    This procedure, referred to as ``filtering'', creates $\Ss_j^{k + 1}$ out of $\Ss_j^k$.
    Note that from the perspective of agent $j \in \Aa^{k + 1}$, agent $i_k$ only ``partially owns'' the allocated bundle $B_{i_k}$, as the filtering procedure intentionally keeps some of the intersections of $B_{i_k}$ with $S_j^k \in \Ss_j^{k}$.

    \item While the filtering procedure guarantees that for every $j \in \Aa^{k + 1}$ the total loss inflicted to $\beta_j^k$ is at most $C$, at the same time it creates conflicts.
    It is now possible that at the future iteration, when agent $j \in \Aa^{k + 1}$ is picked, the algorithm allocates her an acceptable bundle $B_j$ such that $B_j \cap B_{i_k} \neq \varnothing$, i.e we allocate several items twice.
    To fix this, the algorithm ``steals'' the intersection $B_j \cap B_{i_k}$ from already allocated agent $i_k$, and instead gives it to the currently picked agent $j$, so that $j$ receives the bundle $B_j$ fully (at least for now).
    We will show that 1) if the filtering procedure removes items of $B_{i_k}$ from bundles $S_j^k \in \Ss_j^k$ in a specific way, then $i_k$ will not lose much value from $j$ stealing items of $B_j\cap B_{i_k}$, and 2) since the filtered partition $\Ss_{i_k}^k$ of agent $i_k$ did not have any dangerous bundles, then the number of agents $j \in \Aa^{k + 1}$ that can potentially steal items from $i_k$ in the future is small.
\end{itemize}

With the steps above implemented, we guarantee that for every iteration $k \in [n]$ and for every still active agent $j \in \Aa^{k + 1}$, the value loss $\beta_j^{k} - \beta_j^{k + 1}$ is at most $c$.
Suppose that agents do not lose much value by removing dangerous bundles and having some of their items stolen in the future.
Then, as long as for every iteration $k$ and agent $i_k \in \Aa^k$ picked at this iteration, the bundle picked for $i_k$ is acceptable, the algorithm succeeds.
Note that by construction, if at iteration $k$ one can make the distribution $\mu_{i_k}^k$ from \cref{distpartition}, then a bundle $B_{i_k}\sim \mu_{i_k}^k$ sampled from this distribution is guaranteed to be acceptable for $i_k$, i.e have $v_{i_k}(B_{i_k})\geq \alpha$.
The only prerequisite is actually being able to construct $\mu_{i_k}^k$, which by \cref{distpartition} requires the remaining value of agent $i_k$, $\beta_{i_k}^k$, satisfy $\beta_{i_k}^k \geq \alpha$.
As stated, by removing dangerous bundles agents do not lose a lot of value, so one can hope that for every $i \in [n]$ we have $\beta_i^1 \geq 1 - o(1) \geq \alpha$, and the agent picked first, $i_1$, gets an acceptable bundle.
However, it is not clear whether the same will be true for agents picked at iterations $k > 1$, especially after removing items and updating the filtered partitions.
For the algorithm to succeed it needs to assure that for every iteration $k \in [n]$ and agent $j \in \Aa^{k + 1}$ picked at some point in the future, the remaining value $\beta_j^{k + 1}$ is at least $\alpha$.

We claim that this can be achieved if the agent $i_k \in \Aa^k$ chosen at iteration $k$ is exactly the agent with the \textbf{largest} remaining value $\beta_{i_k}^k$.
We show that if agents are is chosen this way at every iteration, over the course of the algorithm with high probability every agent $j$ keeps her remaining value $\beta_j$ at least $\alpha$, as long as the value loss $\beta_j^{k} - \beta_j^{k + 1} \leq c$ is sufficiently small. 

We are now ready to give a complete description of the algorithm for the different valuations case.
First, we formally define what does it mean for a bundle to be ``dangerous''.
\begin{definition}
    Suppose that for every $j \in [n]$, APS-partition $\{(\lambda_j(S_j), S_j\}_{S_j \in \Ss_j}$ of agent $j$ satisfies \cref{wlog}.
    For every set $B\subseteq \Mm$ and agent $j \in [n]$, let
    \[
        \Delta(B, j) := \sum_{S_j \in \Ss_j}\lambda_j(S_j)\cdot v_j(B\cap S_j)
    \]
    be the value lost by $j$ if we remove the items of the set $B$ from all bundles of the partition $\Ss_j$.
    Additionally, for every set $B\subseteq \Mm$ and agent $i \in [n]$, let
    \[
        \Delta_i(B) := \sum_{j\neq i}\Delta(B, j)
    \]
    be the total value lost by all agents, except for $i$, if we remove items of $B$ from the their partitions.
\end{definition}

We will pick some large value $D > 0$ (say, of order $\sqrt{n}$) in advance, and say that a bundle $S_i\in \Ss_i$ of agent $i \in [n]$ is \textit{dangerous} if $\Delta_i(S_i) > D$.
The algorithm for different valuations is presented in \cref{diffalgo} and \cref{filteralgo}.

\begin{algorithm}[h]
\caption{Different valuations allocation algorithm}\label{diffalgo}
	\begin{algorithmic}[1]
		\Require{Instance $(n, m, \Mm, v_1, \ldots, v_n)$, $n$ APS partitions $\{(S_i, \lambda_i(S_i)\}_{S_i \in \Ss_i}$, parameters $c, D > 0$.}
		\Ensure{Disjoint sets $Q_1, \ldots, Q_n$ such that $\bigsqcup_{i = 1}^nQ_i \subseteq \Mm$.}
        \State Let $\Aa^k$ be the set of active agents at iteration $k$, initially $\Aa^1 := [n]$
        \State For $i \in [n]$ and every $S_i \in \Ss_i^1$, let $S_i^k\subseteq S_i$ be the items remaining in $S_i$ at iteration $k$
        \For{$i = 1, \ldots, n$}
            \State For every $S_i \in \Ss_i$, if $\Delta_i(S_i) > D$, let $S_i^1 := \varnothing$, otherwise $S_i^1 := S_i$
            \State Let $\Ss_i^1 := \{S_i^1 : S_i \in \Ss_i\}$ --- initial filtered partition
        \EndFor
		\For{$k = 1,\ldots, n$}
            \State For every $i \in \Aa^k$, compute $\beta^k_i := \sum_{S_i \in \Ss_i}\lambda_i(S_i) \cdot v_i(S_i^k)$
            \State Sort $\Aa^{k} = \{i_k, i_{k + 1}, \ldots, i_n\}$ so that $\beta^k_{i_k} \geq \beta^k_{i_{k + 1}} \geq \ldots \geq \beta^k_{i_n}$, and pick the agent $i_k$
            \State Let $\mu_{i_k}^k$ be a probability distribution over sub-bundles of $\Ss_{i_k}^k$ from \cref{distpartition}
            \State Sample $B_{i_k} \sim \mu_{i_k}^k$ at random, and let $Q_{i_k}^{k} := B_{i_k}$
            \For{$h \in [n] \setminus\Aa^{k}$}
                \State Update $Q_h^{k} := Q_h^{k - 1} \setminus B_{i_k}$
            \EndFor
            \State Update $\Aa^{k + 1} := \Aa^k \setminus \{i_k\}$
            \For{$j \in \Aa^{k+1}$}
                \State Update $\Ss_j^{k + 1} := \mathbf{Filtering}(v_{i_k}, v_j, B_{i_k}, \Ss_j^k, c)$
            \EndFor
		\EndFor
		\Return sets $Q_1^n, \ldots, Q_n^n$.
	\end{algorithmic}
\end{algorithm}

\begin{algorithm}[h]
\caption{\textbf{Filtering}}\label{filteralgo}
	\begin{algorithmic}[1]
		\Require{valuations $v_i, v_j$, set $B_i$, filtered partition $\Ss_j^k$, parameter $c > 0$.}
		\Ensure{New filtered partition $\Ss_j^{k + 1}$ for agent $j$}
        \State Sort $\Ss_j^k = \{S_{j, 1}^k, \ldots, S_{j, |\Ss_j^k|}^k\}$, so that for all $t \geq 1$ it holds $v_i(B_i \cap S_{j, t}^k) \geq v_i(B_i \cap S_{j, t + 1}^k)$
        \State Let $T_j^k \in [ |\Ss_j^k|]$ be the largest number such that $\sum_{t = 1}^{T_j^k}\lambda_j(S_{j, t}) \cdot v_j(B_i \cap S_{j, t}^k)\leq c$
        \For{$t = 1, \ldots, |\Ss_j^k|$}
            \If{$t \leq T_j^k$} update $S_{j, t}^{k + 1} := S_{j, t}^k\setminus B_{i}$
            \EndIf
            \If{$t > T_j^k$} update $S_{j, t}^{k + 1} := S_{j, t}^k$
            \EndIf
        \EndFor
        \Return $\Ss_j^{k + 1} := \{S_j^{k + 1} : S_j \in \Ss_j\}$
	\end{algorithmic}
\end{algorithm}

We will prove the following theorem about the performance of \cref{diffalgo}.
\begin{theorem}\label{difftheorem}
    Let $\alpha$ be some value in the segment $[1/4, \alpha^*)$, where $\alpha^*$ is the solution to 
    \[
        \frac{(12\alpha - 3)\ln(3\alpha)}{1 - 3\alpha} = \frac{3\ln 3}{2} - 2 \implies \alpha^* \approx 0.2767738...
    \]
    There exists $n_\alpha$ such that for all $n \geq n_\alpha$, the following holds: there exist values of $c, D > 0$ such that after running \cref{diffalgo} on $n$ agents with APS-partitions satisfying \cref{wlog}, with probability at least $1- 1/n$ for every agent $i \in [n]$ it holds $v_i(Q_i^n) \geq \alpha$.
\end{theorem}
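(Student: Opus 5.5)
We will prove \cref{difftheorem} by comparing each agent's remaining value $\beta_i^k$ with a deterministic ``ideal'' curve, and by controlling separately the three sources of loss: dangerous bundles, filtering, and stealing.

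Fix $\alpha<\alpha^*$. By \cref{bigrhoinf} (and \cref{rhoinfinity} if $\alpha<3/11$) there is $\delta>0$ such that the process $\gamma_n^i$ of \cref{algogamma2} satisfies $\gamma_n^n\ge\alpha+\delta$ for large $n$. Define a perturbed curve $\gamma_{n,\eps}$, started at $1-\eps$, where the step multiplier $1-p(\cdot)$ from \cref{distpartition} is evaluated at $\gamma_{n,\eps}^k-\eps$ instead of $\gamma_{n,\eps}^k$, and the probability bound is further inflated by a factor $(1+o(1))$ to account for banned bundles. Since $p(\cdot)$ is Lipschitz on $[\alpha+\delta/2,1]$ and $p=O(1/n)$, a discrete Gr\"onwall argument over $n$ steps should give $\gamma_{n,\eps}^n\ge\alpha+\delta/2$ once $\eps$ is a small constant times $\delta$.

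Set the parameters to $D=n^{0.4}$ and $c=n^{-0.1}$, or more generally $c=1/\polylog n$ with $D\cdot c$ chosen suitably. Three deterministic facts then follow.

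\emph{(a) Banned bundles cost little.} Summing $\Delta_i(S_i)$ over $\lambda_i$-weights gives $\sum_i\sum_{S_i}\lambda_i(S_i)\Delta_i(S_i)\le$ the total damage. Consequently, by averaging, the total $\lambda_i$-weight of dangerous bundles of agent $i$ is at most $(n-1)/(nD)\cdot n$ relative to each other agent's item weight $1/n$, and is $o(1)$ after the right normalization. Hence $\beta_i^1\ge1-o(1)$, and the per-item probabilities in \cref{distpartition} increase only by a factor $1+o(1)$.

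\emph{(b) Filtering bounds the loss per round.} By construction of \cref{filteralgo}, $\beta_j^k-\beta_j^{k+1}\le c$ for every active $j$.

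\emph{(c) Stealing costs the allocated agent little.} For an allocated bundle $B$ of agent $i$ that is not dangerous, fewer than $D/c$ agents $j$ receive damage exceeding $c$. For each such $j$, the maximal prefix with damage at most $c$ contains at least $cn$ bundles of weight at most $1/n$ (up to the weight normalization). The ordering by $v_i(B\cap\cdot)$ then forces every non-prefix bundle to have $v_i$-intersection at most $1/(cn)$. Since $j$'s eventual bundle lies inside one $S_j^k$, agent $j$ steals at most $1/(cn)$ from $i$. The total stolen value is therefore at most $D/(c^2n)=o(1)$. To absorb this loss, we run the whole argument with $\alpha'=\alpha+o(1)$ as the acceptability threshold, which \cref{distpartition} allows since only the threshold changes.

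The core step is a martingale argument. Let the bad event $\mathcal E_{j,k}$ be $\beta_j^k<\gamma_{n,\eps}^k-\eps$, and work on the event that no bad event has occurred before round $k$. The selected agent $i_k$ has the largest $\beta$, so $\beta_{i_k}^k\ge\beta_j^k$. Applying \cref{probval} to the sampled bundle yields $\E{\beta_j^{k+1}\mid\mathcal F_k}\ge(1-p(\beta_{i_k}^k))\beta_j^k$, where filtering only helps because it removes fewer items. The danger is the positive correlation: the multiplier depends on $\beta_{i_k}^k$, not on $\beta_j^k$. We therefore bound $p(\beta_{i_k}^k)\le p(\beta_j^k)\le p(\gamma_{n,\eps}^k-\eps)$ on the good event. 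The centered increments $X_k=\beta_j^{k+1}-\E{\beta_j^{k+1}\mid\mathcal F_k}$ then form a martingale difference sequence with $|X_k|\le c$. To keep conditional variances small, we use $\E{(\beta_j^k-\beta_j^{k+1})^2}\le c\cdot\E{\beta_j^k-\beta_j^{k+1}}\le c\cdot O(1/n)$, so the total variance is $O(c)$. Freedman's inequality then gives deviation exceeding $\eps$ with probability $\exp(-\Omega(\eps^2/c))\le n^{-3}$ when $c=1/\polylog n$.

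The remaining work is to turn the one-step comparison into a path comparison. The deviation $\beta_j^k-\gamma_{n,\eps}^k$ equals a martingale part plus a drift part. The drift part is nonnegative because of the $-\eps$ slack built into $\gamma_{n,\eps}$, together with monotonicity of $x\mapsto(1-p(x))x$ in the relevant range. This should be checked from the explicit forms of $p$, where it holds as long as $x-\alpha\gg1/n$. A union bound over all $n^2$ pairs $(j,k)$ gives failure probability at most $1/n$. On the success event every picked agent has $\beta\ge\alpha+\delta/2-\eps>\alpha'$, so she receives an acceptable bundle, and by (c) she keeps value at least $\alpha$.

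I expect the main obstacle to be this drift/correlation step: making the stopped-process argument rigorous when the sampling agent is adversarially chosen and the inequality is only one-sided, while choosing $c$ and $D$ so that (a), (c) and the concentration bounds hold simultaneously.
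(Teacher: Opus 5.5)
Your overall architecture matches the paper's. The shared ingredients are: banning dangerous bundles (\cref{dangerous}), the per-round loss cap $c$ from filtering (\cref{bounddiff}), the stealing bound $D/(c(cn-1))$ (\cref{stolenvalue}), the max-$\beta$ selection rule giving $p_{i_k}\le p(\beta_j^k)$, the conditional-variance bound $c\cdot O(1/n)$ per round, a union bound over $(j,k)$, and running at a slightly raised threshold $\alpha'$. Your parameters ($D=n^{0.4}$, $c=n^{-0.1}$, constant $\eps$) work as well as the paper's $D=\Theta(\sqrt n)$, $c=\Theta(\ln^{-3}n)$, $\eps=\Theta(\sqrt{c\ln n})$.

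One repair is needed in (a): the bound must hold for every agent. It does, via the per-agent identity $\sum_{S_i}\lambda_i(S_i)\Delta_i(S_i)\le\frac{n-1}{n}$, which puts the dangerous weight below $1/D$. Summing over $i$ and averaging would not cover every agent, and $(n-1)/(nD)\cdot n$ is not $o(1)$ for $D=n^{0.4}$.

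The genuine difference is the concentration step, and there the reason you give for nonnegative drift is wrong. Write $\gamma^k:=\gamma_{n,\eps}^k$, let $q_k\ge p(\gamma^k-\eps)$ be its multiplier, and set $Y_k:=\beta_j^k-\gamma^k$. On $j$'s good event your lower bound on the drift of $Y$ is $q_k\gamma^k-p(\beta_j^k)\beta_j^k$. Monotonicity of $x\mapsto(1-p(x))x$ does not make this nonnegative. For constant $p$, where $(1-p)x$ is increasing, it equals $-q_kY_k<0$ whenever $\beta_j$ is above the curve, and the $-\eps$ slack does not change that sign.

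The property you actually need is that $x\mapsto x\,p(x)$ is nonincreasing. It does hold here: every branch of $p$ in \cref{distpartition} has the form $\frac{a}{n(x-b)}$ with $b\in\{\alpha,12\alpha-3\}$. Here $b>0$ because the second branch is used only for $\alpha\ge 3/11$, and the branches agree at $3\alpha$. Then $p_{i_k}\beta_j^k\le p(\beta_j^k)\beta_j^k\le p(\gamma^k-\eps)(\gamma^k-\eps)\le q_k\gamma^k$. So $Y$ is a submartingale until $j$'s first bad round, and $Y_t\ge Y_1+\sum_{k<t}X_k$ with $Y_1\ge 0$. Freedman applied to the stopped martingale then finishes the argument.

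The paper avoids relying on this shape of $p$. It first compares $\beta_j$ with the random process $\wgamma_j^{k+1}=(1-p_{i_k})\wgamma_j^k$, built from the realized $p_{i_k}$, so that $\wgamma_j-\beta_j$ is a supermartingale whenever it is nonnegative. It restarts at the beginning of the final stretch on which this difference is nonnegative; by \cref{bounddiff} the difference is at most $c<\eps/2$ there. It then applies Pinelis's inequality (\cref{martingale}). Only afterwards does it show, by induction on the good event, that $\wgamma_j^k\ge\gamma_n^k(\beta_j^1,\eps)$ (\cref{epsprocess}), which needs only that $p$ is nonincreasing and at most $1$.

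Your corrected route is shorter, with no restart and no intermediate random process. The paper's route is insensitive to the precise form of $p$.
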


\subsection{Analysis of the algorithm}
\label{sec:stealinganalysis}

To prove \cref{difftheorem}, we will analyze the performance of \cref{diffalgo} step by step.
First, we will show the deterministic guarantees, specifically regarding the value loss of dangerous bundles and the filtering procedure of \cref{filteralgo}.
We start with a simple observation that at every iteration, the amount of value lost by agents is at most $c$.
\begin{lemma}\label{bounddiff}
    For every $k \in [n - 1]$ and every agent $j \in \Aa^{k + 1}$ that \textbf{remains active} after iteration $k$ of \cref{diffalgo}, it holds that $0 \leq \beta_j^k - \beta_j^{k + 1} \leq c$.
\end{lemma}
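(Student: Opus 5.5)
The plan is to unfold the definitions of $\beta_j^k$ and $\beta_j^{k+1}$ directly and compare them bundle by bundle, using the structure of \cref{filteralgo}. Fix $k \in [n-1]$ and an agent $j \in \Aa^{k+1}$. By construction of \cref{diffalgo}, the only change to agent $j$'s filtered partition during iteration $k$ comes from the call $\Ss_j^{k+1} := \mathbf{Filtering}(v_{i_k}, v_j, B_{i_k}, \Ss_j^k, c)$. The stealing step modifies only the sets $Q_h$ of already inactive agents, not any $S_j^k$. So it suffices to analyse a single application of \cref{filteralgo}.

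For the lower bound $\beta_j^k - \beta_j^{k+1} \geq 0$, observe that for every $S_j \in \Ss_j$ the filtering sets either $S_j^{k+1} = S_j^k \setminus B_{i_k}$ or $S_j^{k+1} = S_j^k$. In both cases $S_j^{k+1} \subseteq S_j^k$. Since $v_j$ is monotone and the weights $\lambda_j(S_j) \geq 0$ are unchanged, each term in $\beta_j^{k+1} = \sum_{S_j} \lambda_j(S_j) v_j(S_j^{k+1})$ is at most the corresponding term in $\beta_j^k$.

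For the upper bound, write $\beta_j^k - \beta_j^{k+1} = \sum_{t \leq T_j^k} \lambda_j(S_{j,t}) \bigl(v_j(S_{j,t}^k) - v_j(S_{j,t}^k \setminus B_{i_k})\bigr)$. Only the first $T_j^k$ bundles in the sorted order change. The key step is bounding each marginal loss $v_j(S_{j,t}^k) - v_j(S_{j,t}^k \setminus B_{i_k})$ by $v_j(B_{i_k} \cap S_{j,t}^k)$. This follows from subadditivity of XOS valuations, since $S_{j,t}^k = (S_{j,t}^k \setminus B_{i_k}) \cup (S_{j,t}^k \cap B_{i_k})$. Alternatively, use the additive representation from \cref{wlog}: $v_j(S) = v_S(S)$ for an additive $v_S$ supported on $S$, and $v_j(S') \geq v_S(S')$ for $S' \subseteq S$. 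Either way, the total loss is at most $\sum_{t \leq T_j^k} \lambda_j(S_{j,t}) v_j(B_{i_k} \cap S_{j,t}^k)$. By the choice of $T_j^k$ in \cref{filteralgo}, this sum is at most $c$. If even the first term exceeds $c$, we take $T_j^k = 0$, nothing is removed, and the loss is $0$.

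I do not expect a real obstacle. The only points needing care are (i) confirming that nothing else in iteration $k$ (the dangerous-bundle removal happens only at initialisation, and stealing acts on the $Q$'s) alters $\Ss_j$; and (ii) the subadditivity step, which is immediate for XOS since XOS $\subseteq$ subadditive.
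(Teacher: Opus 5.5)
Your argument is correct and matches the paper's proof: only the first $T_j^k$ bundles in the sorted order lose items, each loses at most the value of its intersection with $B_{i_k}$, the choice of $T_j^k$ caps the weighted total at $c$, and monotonicity gives the lower bound. Rely on the subadditivity step rather than your ``alternative'': \cref{wlog} only gives $v_j(S')\ge v_{S}(S')$ for $S'\subseteq S$, which bounds $v_j(S^k_{j,t})$ from the wrong side. The paper's own appeal to exact additivity of $v_j$ on subsets of $S_j$ glosses over the same point, and your subadditivity inequality is what actually makes that step rigorous.
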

\begin{proof}
    Fix $k \in [n - 1]$ and $j \in \Aa^{k + 1}$.
    Since for every $S_j \in \Ss_j$ it holds $S_j^{k + 1} \subseteq S_j^k$, $\beta_j^k \geq \beta_j^{k + 1}$.
    Next, let $i_k$ be the agent picked at iteration $k$ of \cref{diffalgo}, and let $B_{i_k}$ be the initially assigned bundle.
    In order to obtain $\Ss_j^{k + 1}$ from $\Ss_j^k$, the algorithm applies filtering procedure (\cref{filteralgo}) to $\Ss_j$ and set $B_{i_k}$: set $\Ss_j^k = \{S_{j, 1}^k, \ldots, S_{j, |\Ss_j^k|}^k\}$ is sorted in decreasing order of $v_{i_k}(B_{i_k} \cap S_{j, t}^k)$, the value of the intersection of $B_{i_k} \cap S_{j, t}^k$ for agent $i_k$.
    Then, \cref{filteralgo} removes items of $B_{i_k}$ only from sets $S_{j, t}^k$ for $t \in [T_j^k]$, i.e $S_{j, t}^{k + 1} = S_{j, t}^k \setminus B_{i_k}$, where $T_j^k$ satisfies $\sum_{t = 1}^{T_j^k}\lambda_j(S_{j, t})\cdot v_j(B_{i_k} \cap S_{j, t}^k)\leq c$, and for all other $t > T_j^k$, $S_{j, t}^{k + 1} = S_{j, t}^k$.
    By \cref{wlog}, valuation $v_j$ acts like an additive function on all $S_j \in \Ss_j$, so the lemma follows 
    from $v_j(S_{j, t}^{k + 1}) = v_j(S_{j, t}^{k }\setminus B_{i_k}) = v_j(S_{j, t}^{k })  - v_j(S_{j, t}^{k }\cap B_{i_k})$ for all $t \in [T_j^k]$ and the definition $\beta_j^{k + 1}= \sum_{t = 1}^{T_j^k}\lambda_j(S_{j_t})\cdot v_j(S_{j, t}^{k + 1}) + \sum_{t = T_j^k + 1}^{|S_j|}\lambda_j(S_{j_t})\cdot v_j(S_{j, t}^{k + 1})$.
\end{proof}

Next, before \cref{diffalgo} starts allocating bundles, for every agent $i \in [n]$ it creates a filtered partition $\Ss_i^1$ by effectively removing all dangerous bundles $S_i \in \Ss_i$.
That is, if $S_i \in \Ss_i$ has $\Delta_i(S_i) > D$ for the value $D > 0$, then $S_i^1 = \varnothing$, and otherwise $S_i^1 = S_i$.
We will show that if $D$ is large enough, then the total weight of dangerous bundles of $i$ is small.
As a corollary, removing all dangerous bundles from $\Ss_i$ inflicts a very small loss of value, so $\beta_i^1$, the value of $\Ss_i^1$, will still be almost $1$.
\begin{claim}\label{dangerous}
    For every $D > 0$ and every agent $i \in [n]$, the amount of value lost by removing all $S_i \in \Ss_i$ such that $\Delta_i(S_i) > D$ is small.
    More specifically,
    \[
        \beta_i^1 = \sum_{S_i^1 \in \Ss_i^1}\lambda_i(S_i)\cdot v_i(S_i^1) = \sum_{\substack{S_i \in \Ss_i\\\Delta_i(S_i) \leq D}}\lambda_i(S_i)\cdot v_i(S_i) \geq 1 - \frac{n - 1}{Dn}.
    \]
\end{claim}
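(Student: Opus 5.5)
This is a Markov-type averaging argument over the APS weights $\lambda_i$. By \cref{wlog}, every $S_i$ with $\lambda_i(S_i)>0$ has $v_i(S_i)=1$, and $\sum_{S_i}\lambda_i(S_i)=1$. Hence
\[
\beta_i^1 = 1 - \sum_{S_i:\,\Delta_i(S_i)>D}\lambda_i(S_i),
\]
and the claim reduces to showing that the total $\lambda_i$-weight of dangerous bundles is at most $\frac{n-1}{Dn}$.

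To bound that weight, I will compute the $\lambda_i$-weighted average of the damage $\Delta_i(S_i)$ and apply Markov's inequality. Expanding the definition,
\[
\sum_{S_i}\lambda_i(S_i)\Delta_i(S_i) = \sum_{j\neq i}\sum_{S_j\in\Ss_j}\lambda_j(S_j)\sum_{S_i}\lambda_i(S_i)\, v_j(S_i\cap S_j).
\]
Fix $j\neq i$ and a bundle $S_j$. By \cref{wlog}, $v_j$ agrees on subsets of $S_j$ with an additive function $v_{S_j}$ supported on $S_j$. Therefore
\[
\sum_{S_i}\lambda_i(S_i)\,v_j(S_i\cap S_j) = \sum_{e\in S_j} v_{S_j}(e)\sum_{S_i\ni e}\lambda_i(S_i) \le \frac{1}{n}\, v_j(S_j) = \frac1n,
\]
where the inequality is the $\frac1n$-partition constraint for agent $i$. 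Summing over $S_j$ with weights $\lambda_j(S_j)$, which sum to $1$, gives at most $\frac1n$ for each $j$. Summing over the $n-1$ agents $j\neq i$ gives an average damage of at most $\frac{n-1}{n}$.

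Markov's inequality then gives
\[
\sum_{\Delta_i(S_i)>D}\lambda_i(S_i) \le \frac{1}{D}\sum_{S_i}\lambda_i(S_i)\Delta_i(S_i) \le \frac{n-1}{Dn},
\]
which yields $\beta_i^1\ge 1-\frac{n-1}{Dn}$.

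The only point needing care is the interchange step, specifically the use of additivity of $v_j$ on $S_j$. It requires the normalization from \cref{wlog} for every agent $j$, not just agent $i$; this is assumed in the definition of $\Delta$, so the step goes through. Everything else is routine bookkeeping.
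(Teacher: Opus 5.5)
Your argument is correct and is essentially the paper's own proof. Like the paper, you bound the $\lambda_i$-weighted average of $\Delta_i(S_i)$ by $\frac{n-1}{n}$ by exchanging the order of summation and using additivity of $v_j$ on $S_j$ together with agent $i$'s $\frac1n$ coverage constraint, then apply Markov's inequality; your $\le\frac1n$ is actually more accurate than the paper's $=\frac1n$. One caveat you share with the paper: \cref{wlog} as proved only gives $v_j(T)\ge v_{S_j}(T)$ for $T\subseteq S_j$, since the maximum defining the normalized XOS function may be attained by another bundle's clause, so making the interchange step airtight requires defining $\Delta$ (and the potentials $\beta$) through the supporting additive functions $v_{S_j}$ rather than $v_j$ (acceptability is preserved because $v_j\ge v_{S_j}$ on subsets of $S_j$).
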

\begin{proof}
    By \cref{wlog}, for all $S_j \in \Ss_j$ valuation $v_j$ acts like an additive function on subsets of $S_j$, and also $v_j(S_j) = 1$.
    Then, expanding the definition of $\Delta(S_i, j)$ and re-arranging summands,
    \begin{multline*}
        \sum_{S_i \in \Ss_i}\lambda_i(S_i)\cdot \Delta_i(S_i) = \sum_{S_i \in \Ss_i}\sum_{j \neq i}\lambda_i(S_i)\cdot \Delta(S_i, j) = \sum_{S_i \in \Ss_i}\sum_{j \neq i}\sum_{S_j \in \Ss_j}\lambda_i(S_i)\cdot \lambda_j(S_j)\cdot v_j(S_j \cap S_i)
        \\= \sum_{j \neq i}\sum_{S_j \in \Ss_j}\lambda_j(S_j)\sum_{S_i\in \Ss_i}\lambda_i(S_i)\cdot v_j(S_j \cap S_i) 
        = \sum_{j \neq i}\sum_{S_j \in \Ss_j}\lambda_j(S_j)\sum_{S_i\in \Ss_i}\lambda_i(S_i)\sum_{e \in S_j}v_j(\{e\}\cap S_i) \\= \sum_{j \neq i}\sum_{S_j \in \Ss_j}\lambda_j(S_j)\sum_{e \in S_j}\sum_{\substack{S_i \in \Ss_i\\ S_i \ni e}}\lambda_i(S_i)\cdot v_j(e)
        = \sum_{j \neq i}\sum_{S_j \in \Ss_j}\lambda_j(S_j)\sum_{e \in S_j}\frac{v_j(e)}{n} \\=\frac{1}{n}\sum_{j \neq i}\sum_{S_j \in \Ss_j}\lambda_j(S_j)\cdot v_j(S_j) = \frac{1}{n}\sum_{j \neq i}1 = \frac{n - 1}{n}.
    \end{multline*}
    Now, it is easy to see that for any number $D > 0$,
    \[
        \sum_{S_i \in \Ss_i}\lambda_i(S_i)\cdot \Delta_i(S_i) < \frac{n - 1}{n} \implies \sum_{\substack{S_i \in \Ss_i\\\Delta_i(S_i) > D}}\lambda_i(S_i) < \frac{n - 1}{Dn}.
    \]
    The lemma follows by $v_i(S_i) = 1$ for all $S_i \in \Ss_i$.
\end{proof}

Finally, we show that for every agent $i \in [n]$, the amount of value $i$ may lose from stealing items is small.
We do so by analyzing the filtering procedure of \cref{filteralgo}.
First, we observe that the number of agents picked after $i$ that may potentially steal items from $i$ is small, as long as there are no dangerous items.
Second, we prove that if some other agent $j$ ends up stealing items from $i$, then from the perspective of $i$ the stolen value is not too large.
We do so by noting that with respect to $i$, the average value of intersection of the bundle picked for $i$ and the remaining bundles of $j$ must be small, and that by construction of \cref{filteralgo} agent $j$ can steal only lower valued intersections of $i$.
As a result, the value of the stolen intersection cannot be too large, and $i$ does not lose much value to $j$.
Combining these two observations together gives us the desired.
\begin{claim}\label{stolenvalue}
    For $i \in [n]$, let $B_{i}$ be the bundle initially assigned to agent $i$ by \cref{diffalgo} at the iteration when $i$ was picked.
    Then $Q_i^n$, the final bundle of agent $i$, satisfies
    \[
        v_{i}(Q_{i}^n) \geq v_i(B_{i}) - \frac{D}{c(cn - 1)}.
    \]
\end{claim}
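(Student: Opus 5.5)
The plan is to bound two quantities separately: the number of agents who can steal from $i$, and the value that any single such agent can steal. The product of the two bounds gives $\frac{D}{c}\cdot\frac{1}{cn-1}$.

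Fix the iteration $k$ at which $i=i_k$ is picked. The bundle $B_i$ is sampled from $\mu_i^k$ of \cref{distpartition}, so $B_i\subseteq S_i^k\subseteq S_i$ for some $S_i\in\Ss_i$ with $S_i^1\neq\varnothing$; hence $S_i$ is not dangerous and $\Delta_i(S_i)\le D$. By monotonicity, $\Delta(B_i,j)\le\Delta(S_i,j)$, so $\sum_{j\neq i}\Delta(B_i,j)\le D$. By \cref{wlog}, $v_i$ is additive on subsets of $S_i$, hence on subsets of $B_i$.

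\emph{Step 1: who can steal.} Items of $Q_i$ are removed only by agents $j\in\Aa^{k+1}$ picked later, and only through items of $B_j\cap B_i$. The bundle $B_j$ is a subset of a single filtered bundle $S_{j,t}^{k'}$ with $k'>k$. Filtering at iteration $k$ removed $B_i$ from the prefix $S_{j,1}^k,\dots,S_{j,T_j^k}^k$, and later filtrations only shrink bundles. So any item stolen by $j$ lies in $B_i\cap S_{j,t}^k$ for some $t>T_j^k$ with $S_{j,t}^k\cap B_i\neq\varnothing$.

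If no such $t$ exists, $j$ steals nothing. Otherwise the full sum $\sum_t\lambda_j(S_{j,t})v_j(B_i\cap S_{j,t}^k)$ exceeds $c$, so $\Delta(B_i,j)>c$. Combined with $\sum_j\Delta(B_i,j)\le D$, at most $D/c$ agents can steal from $i$. Each such $j$ steals only once, at the moment it receives $B_j$; later removals from $Q_j$ do not affect $Q_i$.

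\emph{Step 2: how much one agent steals.} The amount $j$ takes from $i$ is at most $v_i(B_i\cap S_{j,t}^k)$ for a single $t>T_j^k$. By the sorting in \cref{filteralgo}, this is at most $\min_{s\le T_j^k}v_i(B_i\cap S_{j,s}^k)$, which is at most the $\lambda_j$-weighted average over the prefix:
\[
\frac{\sum_{s\le T_j^k}\lambda_j(S_{j,s})\,v_i(B_i\cap S_{j,s}^k)}{\sum_{s\le T_j^k}\lambda_j(S_{j,s})}.
\]

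For the numerator, use additivity of $v_i$ on $B_i$ and the covering condition $\sum_{S_j\ni e}\lambda_j(S_j)\le 1/n$. This gives $\sum_{S_j}\lambda_j(S_j)v_i(B_i\cap S_j)\le v_i(B_i)/n\le 1/n$, since values are at most $1$ by \cref{wlog}.

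For the denominator, maximality of $T_j^k$ means that adding the next term pushes the loss above $c$. That next term is at most $\lambda_j\cdot v_j(S_j)\le 1/n$, because any nonempty bundle has weight at most $1/n$. Hence the prefix loss is at least $c-1/n$. Since $v_j\le1$ on each bundle, the prefix weight is at least the prefix loss, so it is at least $c-1/n$.

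Putting these together, each stealing agent takes value at most $\frac{1/n}{c-1/n}=\frac{1}{cn-1}$ from $i$. Summing over at most $D/c$ stealers, and using additivity of $v_i$ on $B_i$ so that stolen amounts subtract, gives $v_i(Q_i^n)\ge v_i(B_i)-\frac{D}{c(cn-1)}$.

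The main obstacle is Step 2, specifically the lower bound on the prefix weight. It needs the observation that the first excluded term is at most $1/n$, and it implicitly requires $cn>1$; that assumption should be stated.
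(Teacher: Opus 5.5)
Your proof is correct and follows the paper's argument essentially step for step. You bound the number of potential stealers by $D/c$, using that $\Delta(B_i,j)>c$ for any $j$ whose filtering prefix does not cover all of its bundles. You bound each theft by the $\lambda_j$-weighted prefix average, whose numerator is at most $v_i(B_i)/n\le 1/n$ by the covering constraint and whose denominator is at least $c-1/n$ by maximality of $T_j^k$. Your explicit remark that $cn>1$ is needed (which also ensures $T_j^k\ge 1$) is a worthwhile addition that the paper leaves implicit.
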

\begin{proof}
    Suppose that agent $i$ was picked at iteration $k$.
    For every $j \in \Aa^{k + 1}$, denote by $B_{j}$ the bundle \cref{diffalgo} \textbf{initially} gave to agent $j$ at a future iteration when $j$ was picked.
    By construction, the final bundle $Q_i^n$ that agent $i$ gets is exactly
    \[
        Q_i^n = B_i \setminus\bigcup_{j \in \Aa^{k + 1}}B_{j} \;\;= B_i \setminus \bigcup_{j \in \Aa^{k + 1}}(B_{i} \cap B_{j}).
    \]
    Then, by subadditivity of $v_i$,
    \[
         v_i(B_i) \leq v_i(Q_i^n) +\sum_{j \in \Aa^{k + 1}}v_i(B_i \cap B_{j}) \iff  v_i(Q_i^n) \geq v_i(B_i) - \sum_{j \in \Aa^{k + 1}}v_i(B_i \cap B_{j}).
    \]

    Fix some $j \in \Aa^{k + 1}$, and consider the filtering procedure, \cref{filteralgo}, applied to agent $j$ at step $k$.
    Let $\Ss_j^k = \{S_{j, 1}^k, \ldots, S_{j, |\Ss_j^k|}^k\}$ be sorted in decreasing order of $v_i(B_i \cap S_{j, t}^k)$, the value of the intersection of $B_i \cap S_{j, t}^k$ for agent $i$.
    In addition, let $T_j^k$ be the largest number such that $\sum_{t = 1}^{T_j^k}\lambda_j(S_{j, t})\cdot v_j(B_i \cap S_{j, t}^k)\leq c$, i.e the total value of intersections of $B_i$ with bundles from $1$ up to $T_j^k$ is at most $c$.
    There are two cases to consider.

    First: $T_j^k = |\Ss_j^k|$.
    In this case, $c \geq \sum_{t = 1}^{T_j^k}\lambda_j(S_{j, t}) \cdot v_j(B_i \cap S_{j, t}^k) = \sum_{S_j \in \Ss_j}\lambda_j(S_{j}) \cdot v_j(B_i \cap S_{j}^k)$.
    That is, the total value the items in $B_i$ that are still remaining in the partition $\Ss_j^k$ is at most $c$.
    Then, for every $S_j^k \in \Ss_j^k$ it would hold that $S_j^{k + 1} = S_j^k \setminus B_i$, i.e the filtering procedure removes all items of $B_i$ from the bundles of agent $j$'s partition.
    But then, if $B_j$ is the bundle that \cref{diffalgo} initially gives to $j$ at one of the future iterations, it must hold that $B_i \cap B_j = \varnothing$.
    Thus, agent $i$ does not lose any value from such agent $j$.
    
    Second: $T_j^k < |\Ss_j^k|$.
    In this case, by maximality of $T_j^k$,
    \[
        \Delta(B_i, j) = \sum_{S_j \in \Ss_j}\lambda_j(S_j)\cdot v_j(B_i \cap S_j) \geq \sum_{t = 1}^{|S_j^k|}\lambda_j(S_{j, t})\cdot v_j(B_i \cap S_{j, t}^k) > c.
    \]
    Suppose that this particular copy of $B_i$ came from a bundle $S_i^k \in \Ss_i^k$, then $\Delta(S_i, j) \geq \Delta(B_i, j) > c$.
    On the other hand, it must be that $S_i^1 \neq\varnothing$, hence by construction $\Delta_i(S_i) = \sum_{j \neq i}\Delta(S_i, j)\leq D$.
    Then, the number of agents $j$ such that $T_j^k < |\Ss_j^k|$ is at most $D/c$.

    Let $B_j$ be the bundle that \cref{diffalgo} initially gives to $j$ at one of the future iterations (after $k$), and suppose that this particular copy of $B_j$ came from a bundle $S_{j, r}^k$ (as in \cref{distpartition}), for some $1 \leq r \leq |S_j^k|$.
    There are two possible options.
    \begin{enumerate}
        \item if $r \leq T_j^k$,  by construction $S_{j, r}^{k + 1} = S_{j, r}^k \setminus B_i$.
        Since $j$ is picked after iteration $k$, $B_j \subseteq S_{j, r}^{k + 1}$ and $B_i \cap B_j \subseteq B_i \cap S_{j, r}^{k + 1} = \varnothing$.
        In this case, agent $i$ does not lose any value from agent $j$.
        \item if $r > T_j^k$, then by construction $S_{j, r}^{k + 1} = S_{j, r}^k$, so agent $i$ may lose some value from agent $j$, as filtering procedure does not remove items of $B_i$ from $S_{j, r}^k$.
    \end{enumerate}
    Suppose that $r > T_j^k$.
    Then, for every $1 \leq t\leq T_j^k$ it holds that $v_i(B_i \cap S_{j, r}^k) \leq v_i(B_i \cap S_{j, t}^k)$.
    Therefore,
    \[
        v_i(B_i \cap S_{j, r}^k)\cdot \sum_{t = 1}^{T_j^k}\lambda_j(S_{j, t}) \leq \sum_{t = 1}^{T_j^k}\lambda_j(S_{j, t})\cdot v_i(B_i \cap S_{j, t}^k) \leq \sum_{S_j \in \Ss_j}\lambda_j(S_j)\cdot v_i(S_j \cap B_i).
    \]
    
    Observe that by maximality of $T_j^k$ it must holds $\sum_{t = 1}^{T_j^k}\lambda_j(S_{j, t}) \geq c - 1/n$.
    Indeed, if the opposite is true, as by \cref{wlog} for every $S_j \in \Ss_j$ we have $v_j(S_j) = 1$,
    \[
        \sum_{t = 1}^{T_j^k}\lambda_j(S_{j, t})\cdot v_j(B_i \cap S_{j, t}^k) \leq \sum_{t = 1}^{T_j^k}\lambda_j(S_{j, t}) < c - 1/n.
    \]
    By \cref{wlog}, for every $S_j \in \Ss_j$ it holds that $\lambda_j(S_j) \leq 1/n$, so adding $\lambda_j(S_{j, T_j^k + 1})\cdot v_j(S_{j, T_j^k + 1}^k)$ to the sum above would increase it by at most $1/n$, implying that $T_j^k$ is not maximal.

    Now, since $B_i \subseteq S_i^k$, by \cref{wlog} $v_i(B_i) \leq v_i(S_i)= 1$, furthermore, valuation $v_i$ acts like an additive function on $B_i$.
    It follows that
    \begin{multline*}
        \sum_{S_j \in \Ss_j}\lambda_j(S_j)\cdot v_i(S_j \cap B_i)  =  \sum_{S_j \in \Ss_j}\lambda_j(S_j)\cdot \sum_{e \in S_j}v_i(\{e\} \cap B_i)\\
        = \sum_{e \in B_i}v_i(e)\sum_{\substack{S_j \in \Ss_j\\S_j\ni e}}\lambda_j(S_j) = \sum_{e \in B_i}\frac{v_i(e)}{n} = \frac{v_i(B_i)}{n} \leq \frac{1}{n}.
    \end{multline*}
    Combining all the inequalities together, we get that if $r > T_j^k$, the corresponding bundle $S^k_{j, r}$ of agent $j$ satisfies
    \begin{multline*}
        v_i(B_i \cap S_{j, r}^k)\cdot \left(c - \frac{1}{n}\right) \leq v_i(B_i \cap S_{j, r}^k)\cdot \sum_{t = 1}^{T_j^k}\lambda_j(S_{j, t}) \leq \sum_{S_j \in \Ss_j}\lambda_j(S_j)\cdot v_i(S_j \cap B_i)  \leq \frac{1}{n}\\
        \implies  v_i(B_i \cap S_{j, r}^k) \leq \frac{1}{cn - 1}.
    \end{multline*}
    \label{proof:stealing}
    As a result, if agent $j$ picks a bundle $B_j\subseteq S^k_{j, r}$ such that $r \leq T_j^k$, then $B_j\subseteq S^k_{j, r}\setminus B_i$ and hence $j$ does not steal anything from agent $i$.
    On the other hand, if $r \geq T_j^k$, agent $i$ loses at most $v_i(B_i \cap B_j) \leq v_i(B_i \cap S_{j, r}^k) \leq1/(cn - 1)$ value from agent $j$, as shown above.
    Note that every agent $j$ who steals something from $i$ can steal only once, as they pick a single bundle $B_j\subseteq S_{j, r}^k$.
    And, as we proved earlier, the number of such agents $j$ with $T_j^k < |\Ss_j^k|$ is at most $D/c$.
    Thus,
    \[
        v_i(Q_i^n) \geq v_i(B_i) - \sum_{l = k + 1}^nv_i(B_i \cap B_{j_l}) \geq v_i(B_i) - \frac{D}{c(cn - 1)}.
    \]
\end{proof}

\cref{stolenvalue} tells us the following: if for every agent $i \in [n]$ the initially assigned bundle $B_i$ had value $v_i(B_i) \geq \alpha$, then after \cref{diffalgo} finishes, agent $i$ would still have a bundle of value at least $\alpha - \frac{D}{c(cn - 1)}$.
If, for example, we pick $D = \Theta(\sqrt{n})$ and $c = \polylog n$, then each agent would receive value at least $(1 - o(1))\alpha$.
So, as mentioned earlier, in order to prove \cref{difftheorem} we need to guarantee that for every agent $i \in [n]$ and every $k \in [n]$, if $i$ was picked by \cref{diffalgo} at iteration $k$, it is possible to construct the distribution $\mu_{i}^k$ from \cref{distpartition}.
That is, to show that with high probability, for every agent $i \in [n]$ and iteration $k \in [n]$, if $i \in \Aa^k$ then the remaining value $\beta_{i}^k \geq \alpha$.
This will be the main focus of the next section, where we prove the following theorem.
\begin{theorem}\label{diffvalue}
    Let $\alpha$ be some value in the segment $(1/4, \alpha^*)$, where $\alpha^*$ is defined in \cref{difftheorem}.
    There exists $n_\alpha$ such that for all $n \geq n_\alpha$, the following holds: there exist values of $c, D > 0$ such that if we run \cref{diffalgo} on $n$ agents with APS-partitions satisfying \cref{wlog}, with probability at least $1- 1/n$ for every agent $i \in [n]$ and every $k \in [n]$, if $i \in \Aa^k$ it holds $\beta_{i}^k \geq \alpha + (\alpha^* - \alpha)/2$.
\end{theorem}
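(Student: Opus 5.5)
The plan is to compare every active agent's remaining value $\beta_i^k$ against a deterministic ``shifted'' reference curve, and to control deviations from it by martingale concentration plus a union bound.

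\textbf{Step 1: the reference curve.} Set $\eps := (\alpha^*-\alpha)/4$. Define the reference process $\tilde\gamma^k$ by $\tilde\gamma^1 = 1 - \frac{n-1}{Dn}$ and
\[
\tilde\gamma^{k+1} = \tilde\gamma^k - p(\tilde\gamma^k - \eps)\cdot(\tilde\gamma^k+\eps),
\]
where $p(\cdot)$ is the bound of \cref{distpartition}. The point is that $p$ is evaluated at the pessimistic value $\tilde\gamma^k-\eps$. By the continuous-limit analysis behind \cref{bigrhoinf}, together with continuity of the limiting ODE in the perturbations $\eps$ and $1/D$, I would show that for $D=\sqrt n$ and $n$ large, $\tilde\gamma^n \ge \alpha + 3\eps$. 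This holds because the limit value for the unperturbed process is $\alpha^*$ when $\alpha$ is used as the item/acceptability threshold.

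\textbf{Step 2: expected one-step drift.} Fix an agent $j$ and consider iteration $k$, in which agent $i_k$ is chosen with maximal $\beta^k_{i_k}\ge\beta^k_j$. Define the good event $G_k$ as: every active agent $h$ satisfies $\beta_h^{k'} \ge \tilde\gamma^{k'}-\eps$ for all $k'\le k$. On $G_k$ we have $\beta^k_{i_k}\ge\tilde\gamma^k-\eps$. Since $p$ is monotone, the sampled bundle contains each item with probability at most $p(\tilde\gamma^k-\eps)$.

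Next I would reuse the argument of \cref{probval}, applied to agent $j$'s partition. Filtering removes only a subset of $B_{i_k}$ from $j$'s bundles, so the expected loss is at most $p(\tilde\gamma^k-\eps)\,\beta_j^k$. This holds because, summing $\lambda_j(S_j)v_j(e)$ over items $e$ weighted by their selection probability, the bound is item-wise.

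\textbf{Step 3: concentration.} Set $X_k := \beta_j^k - \tilde\gamma^k$, stopped at the first time $G_k$ fails. Then $X_k$ is approximately a supermartingale up to a correction: the drift of $\beta_j$ is $-p\,\beta_j$ whereas the reference drifts by $-p\,(\tilde\gamma+\eps)$. Whenever $X_k\le \eps$, the correction works in our favour. Each increment is bounded by $c$ (\cref{bounddiff}) plus an $O(1/n)$ term. I would therefore apply Freedman's inequality (or Azuma) to the process
\[
\beta_j^k-\beta_j^1+\sum_{k'<k}\E[\text{loss}_{k'}],
\]
whose increments are bounded by $c$ and whose total conditional variance is at most $c\cdot\sum_{k'} \E[\text{loss}_{k'}] \le c$. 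This yields
\[
\P{\beta_j^k < \tilde\gamma^k - \eps,\ G_{k-1}} \le \exp\!\big(-\Omega(\eps^2/c)\big).
\]
Choosing $c = 1/\log^2 n$ makes this bound $\ll 1/n^3$. A union bound over the $n^2$ pairs $(j,k)$ then gives that all $G_k$ hold with probability at least $1-1/n$. On that event,
\[
\beta_j^k \ge \tilde\gamma^n-\eps \ge \alpha+2\eps = \alpha + (\alpha^*-\alpha)/2,
\]
which is exactly the claimed bound.

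\textbf{Main obstacle.} The hardest step is the drift comparison in Step 3: the decrement of $\beta_j$ is proportional to $\beta_j$ itself, not to $\tilde\gamma$. A careless bound accumulates an error of order $\sum_k p\cdot\eps \approx \eps\log(\cdot)$ rather than $\eps$. To handle this, I would track the ratio $\beta_j^k/\tilde\gamma^k$, or equivalently $\log\beta_j^k$, whose expected drift is at least $-p(\tilde\gamma^k-\eps)$, the same rate as $\log\tilde\gamma^k$ up to the $\eps$ shift. I would build the $\eps$ shift into the definition of $\tilde\gamma$ multiplicatively so that these terms cancel exactly. Finally, the parameter constraints must be checked: the stolen value $D/(c(cn-1)) = O(\log^4 n/\sqrt n) = o(1)$ in \cref{stolenvalue}, together with the $1/D$ loss, must stay below $\eps$ for $n\ge n_\alpha$.
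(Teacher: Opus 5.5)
Your argument works in outline but follows a different route from the paper. The paper never compares $\beta_j$ directly with a deterministic curve. It first introduces the random expected process $\wgamma_j^{k+1}=(1-p_{i_k})\wgamma_j^k$, driven by the realised $p_{i_k}$. Then $X_k=\wgamma_j^k-\beta_j^k$ satisfies $\E{X_{k+1}\mid\Ff_k}\le(1-p_{i_k})X_k$, so it is an honest supermartingale on each excursion above $0$. The paper restarts at the last sign change (where $X\le c$) and applies Pinelis' inequality. Only afterwards does it show that on $\{\beta_j^k\ge\wgamma_j^k-\eps\ \forall k\}$ one has $\wgamma_j^k\ge\gamma_n^k(\beta_j^1,\eps)$, using $\beta_{i_k}^k\ge\beta_j^k$. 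The terminal bound is then inherited from \cref{diffgamma} and \cref{diffgamma2}, with vanishing $\eps=\Theta(1/\log n)$ and $c=\Theta(\log^{-3}n)$.

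You instead put the correction into a deterministic reference. For $X=\beta_j-\tilde\gamma$, the factor $(\tilde\gamma^k+\eps)$ gives $\E{X_{k+1}-X_k\mid\Ff_k}\ge p(\tilde\gamma^k-\eps)\,(\eps-X_k)$, so $X$ is a \emph{sub}martingale while $X\le\eps$. Freedman's inequality on the Doob martingale $M$ of the losses then does the rest: its increments are at most $c$, and its predictable variance is at most $c\sum_k p(\tilde\gamma^k-\eps)\le c/\alpha$ before the stopping time.

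Your route gives a single stopping-time argument. It needs no random reference process and does not apply the tail bound conditionally on $\Ee_i^s(\eps)$. It also already removes the $\eps\log(\cdot)$ accumulation you worry about in your last paragraph, so the log/ratio tracking is unnecessary; it is essentially what the paper's multiplicative $\wgamma_j$ does.

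You should write out the link from $M$ to $X$. Let $s<k$ be the last time with $X_s\ge\eps$, or $s=1$, where $X_1\ge 0$ by \cref{dangerous}. The drift terms on $(s,k)$ are nonnegative, so $X_k\ge\eps-c+(M_k-M_{s+1})$. Since $s$ is not a stopping time, use the maximal form of Freedman to get $\max_k|M_k-M_1|\le\eps/2$.

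The step that is not justified as written is Step 1. Your curve is not the paper's $\gamma_n^k(\beta,\eps)$, because it carries the extra drift $p\,\eps$. So \cref{bigrhoinf} and \cref{diffgamma2} do not apply directly.

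Continuity of the limiting ODE also only controls perturbations with $\eps\to 0$, not the fixed $\eps=(\alpha^*-\alpha)/4$. Near $\alpha^*$, the slack in the time integral and the $\eps$-penalty are both of order $\alpha^*-\alpha$. Deciding the inequality is therefore a comparison of constants that you have not carried out.

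Moreover, the unperturbed endpoint for threshold $\alpha$ is not $\alpha^*$. It is the root $\rho^*(\alpha)\ge\alpha^*$ of the left-hand side of \cref{bigalphaineqbetter} set equal to $1$ (resp.\ of $2(1-\rho+\alpha\ln\rho)=1-\alpha$ for $\alpha\le 3/11$).

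The repair is easy: take $\eps$ a sufficiently small constant depending on $\alpha$. Continuity then gives $\tilde\gamma^n-\eps\ge\rho^*(\alpha)-O(\eps)-o(1)\ge\alpha+(\alpha^*-\alpha)/2$ for large $n$. Your tail bound $\exp(-\Omega(\eps^2\log^2 n))$ is unaffected by this change.
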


\section{Proof of \cref{diffvalue}}

\label{sec:martingale}

Our proof will rely on the observation made earlier in \cref{probval}.
We will restate it in an equivalent way, using the terminology of \cref{distpartition}.
\begin{lemma}[\cref{probval} restated]\label{diffprob}
    For $i \in [n]$, consider agent $i$ with valuation $v_i$ and APS-partition $\{(S_i, \lambda_i(S_i)\}_{S_i \in \Ss_i}$, satisfying \cref{wlog}, and let $\Ss_i^k$ be some filtration of $S_i$ at iteration $k \in [n]$, such that $\beta_i^k \geq \alpha$.
    Let $\mu_i^k$ be the distribution over subsets of $S_i^k$ for $S_i^k \in \Ss_i^k$, and let $p > 0$ be such that $\mu_i^k(e) \leq p$ for all $e \in \Mm$.
    Then
    \[
        \E[\mu_{i}^k]{\beta_j^{k + 1} \mid \Ss_i^k} \geq (1 - p)\beta_j^k.
    \]
\end{lemma}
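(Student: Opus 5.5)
The plan is to reduce the statement to the same item-by-item expected loss computation as in the proof of \cref{probval}, now carried out on agent $j$'s filtered partition while the randomness comes from agent $i$'s distribution $\mu_i^k$. First, fix the history up to iteration $k$, so that $\Ss_i^k$, $\Ss_j^k$ and $\beta_j^k$ are all determined. The only randomness is then the sampled bundle $B\sim\mu_i^k$.

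The first step is to compare $\beta_j^{k+1}$ with a simpler quantity that does not depend on the filtering. \cref{filteralgo} only ever sets $S_j^{k+1}$ to either $S_j^k$ or $S_j^k\setminus B$. Hence, by monotonicity of $v_j$, pointwise for every realization of $B$:
\[
\beta_j^{k+1}=\sum_{S_j\in\Ss_j}\lambda_j(S_j)\, v_j(S_j^{k+1}) \;\ge\; \sum_{S_j\in\Ss_j}\lambda_j(S_j)\, v_j(S_j^k\setminus B).
\]
If $j=i$, or if one prefers to avoid the filtering altogether, the right-hand side is still a valid lower bound, since removing all of $B$ is the worst case. So it suffices to lower bound the expectation of the right-hand side.

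Next I use \cref{wlog}: on subsets of each $S_j$, the valuation $v_j$ coincides with an additive function $v_{S_j}$ supported on $S_j$. Define $v^k_{S_j}$ to be $v_{S_j}$ restricted to $S_j^k$. Then $v_j(S_j^k\setminus B)=v_j(S_j^k)-\sum_{e\in S_j^k\cap B}v^k_{S_j}(e)$. Taking expectation over $B\sim\mu_i^k$ and exchanging sums exactly as in \cref{probval}, the expected loss is
\[
\sum_{S_j}\lambda_j(S_j)\sum_{e\in S_j^k} v^k_{S_j}(e)\,\mu_i^k(e)\;\le\; p\sum_{S_j}\lambda_j(S_j)\, v_j(S_j^k)=p\,\beta_j^k,
\]
using $\mu_i^k(e)\le p$ for every item $e$. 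Note that, unlike in \cref{choice}, no bound of the form $\sum_{S_j\ni e}\lambda_j(S_j)\le 1/n$ is needed here; only the per-item bound on $\mu_i^k(e)$ is used. Rearranging gives $\E[\mu_i^k]{\beta_j^{k+1}\mid \Ss_i^k}\ge (1-p)\beta_j^k$.

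There is no genuine obstacle; the proof is a direct restatement of \cref{probval}. The one point needing care is the first step. The filtering depends on $B$ (through the sorting by $v_i(B\cap S_{j,t}^k)$ and the cutoff $T_j^k$), so $\beta_j^{k+1}$ is not an affine function of the indicators $\mathbf{1}[e\in B]$. The pointwise domination by the ``remove all of $B$'' quantity is what sidesteps this dependence. I would also state explicitly that the conditioning includes $\Ss_j^k$, that is, the whole history up to iteration $k$. This matters because the lemma is later used as a supermartingale-type step.
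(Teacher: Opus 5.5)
Your proposal is correct and takes essentially the paper's approach. The paper gives no separate proof: the statement is just \cref{probval} restated, and you reproduce its item-by-item expected-loss computation via the additive representation from \cref{wlog}. Your explicit pointwise bound $\beta_j^{k+1}\ge\sum_{S_j}\lambda_j(S_j)\,v_j(S_j^k\setminus B)$ handles the fact that \cref{filteralgo} removes $B$ from only some of agent $j$'s bundles. That is a detail the paper leaves implicit, and it is exactly what is needed.
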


The main idea of the proof  of \cref{diffvalue} is as follows.
Consider iteration $k \in [n]$ and agent $i_k \in \Aa^k$ picked by \cref{diffalgo} at this iteration.
By \cref{distpartition}, the distribution $\mu_{i_k}^k$ possesses the following property: for every item $e \in \Mm$, the probability $\mu_{i_k}^k(e)$ for item $e$ to belong in the bundle $B_{i_k}\sim \mu_{i_k}^k$ is at most $p(\beta_{i_k}^k)$.
Suppose for a moment that for every $e \in \Mm$ the probability $\mu_{i_k}^k(e)$ is exactly $p(\beta_{i_k}^k)$.
Then, as shown earlier in the proof of \cref{diffprob} (\cref{probval}), for every agent $j \in \Aa^{k + 1}$ that stays active after iteration $k$, the expected value of $\beta_{j}^{k + 1}$ would be exactly $(1 - p(\beta_{i_k}^k))\beta_j^k$.
That is, on average every remaining agent $j \in \Aa^{k + 1}$ would lose exactly $p(\beta_{i_k}^k)$-fraction of their current value $\beta_j^k$.
Of course, in reality agents are not guaranteed to lose exactly $p(\beta_{i_k}^k)$-fraction, they may lose much more or much less value.
However, we will show that with high probability, for every agent $j \in [n]$ the remaining value $\beta_j^k$ at all iterations $k$ (until the agent is picked) behaves not much worse than as if the agent loses exactly $p(\beta_{i_k}^k)$-fraction.
And, the sequence where each agent loses exactly $p(\beta_{i_k}^k)$-fraction or better is essentially the process $\gamma$ that was considered for the same-valuation case in the earlier section.

\subsection{Expected process}

Fix some agent $j \in [n]$, we will track the value $\beta_j^k$ at different iterations $k$ until $j$ is picked by \cref{diffalgo}.
By \cref{dangerous}, with the appropriately chosen $D > 0$ the initial value $\beta_j^1$ is at least $1 - o(1)$, and by \cref{bounddiff}, at every iteration $k$ the value $\beta_j^k$ does not decrease by more than $C > 0$.
We will show that with high probability, the sequence of values $\beta_j^k$ is not much worse than the sequence of ``expected'' values of $\beta_j^k$, where the agent loses exactly $p(\beta_{i_k}^k)$-fraction of her value.
First and foremost, we define a so-called ``expected process'' $\wgamma_j$ of agent $j$, which is essentially a sequence of values $\beta_j^k$ that is always decreased exactly by the expected fraction.

\begin{definition}
    For $k \in [n]$, let $i_k$ denote the agent that was picked at iteration $k$ of \cref{diffalgo}, and let $p_{i_k}$ be \textbf{the smallest possible} number such that for every item $e \in \Mm$, $\mu_{i_k}^k(e) \leq p_{i_k}$.
    For agent $j \in [n]$, let $\wgamma_j^k$ be the following process: $\wgamma_j^1 = \beta_j^1$ and for $k \geq 1$,
    \[
        \wgamma_j^{k + 1} = \left(1 - p_{i_k}\right)\wgamma_j^k.
    \]
\end{definition}

We will show that if the maximum value and the variance of $\wgamma_j^k - \beta_j^k$ for various $k$ is bounded, then with high probability the expected process $\wgamma_j$ essentially serves as a lower bound to the sequence of values $\beta_j$.
\begin{theorem}\label{martingale}
    Fix some $1 \leq t < n$, and let $j \in \Aa^{t + 1}$ be some agent that \textbf{remains active} after iteration $t$ of \cref{diffalgo}.
    For $k \in [t]$, let $X_k := \wgamma_j^k - \beta_j^k$, and let $\Ff_k$ be the smallest $\sigma$-algebra for random variables $\{X_1, \ldots, X_k\}$, i.e minimal $\sigma$-algebra w.r.t values $\beta_i^k$ for agents $i \in \Aa^k$.
    Suppose that
    \[
       \forall k \in [t]: |X_{k + 1} - X_k| \leq \theta,\qquad\text{and}\qquad \sum_{k = 1}^t\E{(X_{k + 1} - X_k)^2 \mid \Ff_k} \leq \eta^2;
    \]
    for some $\theta, \eta^2 > 0$.
    Then, for any $\eps > 2c$,
    \[
        \P{\wgamma_j^{t + 1}- \beta_j^{t + 1}> \eps} \leq \exp\left[\frac{\eta^2}{\theta^2}\psi\left(\frac{\eps\theta}{2\eta^2}\right)\right],
    \]
    where $\psi(x) := x - (1 + x)\ln(1 + x)$.
\end{theorem}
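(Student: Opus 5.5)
The plan is to reduce the statement to a Freedman/Bennett-type maximal inequality for a martingale with bounded increments and bounded predictable quadratic variation. First I compute the one-step drift of $X_k$. Write $D_k := X_{k+1}-X_k = -p_{i_k}\wgamma_j^k + (\beta_j^k-\beta_j^{k+1})$. The choice of $i_k$, the distribution $\mu_{i_k}^k$ and the number $p_{i_k}$ are all determined by the state at the start of iteration $k$, so they are $\Ff_k$-measurable. Hence \cref{diffprob} (with $p=p_{i_k}$) gives $\E{\beta_j^k-\beta_j^{k+1}\mid \Ff_k}\le p_{i_k}\beta_j^k$. Therefore
\[
\E{D_k\mid \Ff_k}\le p_{i_k}(\beta_j^k-\wgamma_j^k) = -p_{i_k}X_k .
\]
So $X$ has non-positive drift whenever $X_k\ge 0$, which is exactly the regime relevant to a large positive deviation. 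Set $M_k:=\sum_{l<k}\bigl(D_l-\E{D_l\mid\Ff_l}\bigr)$. This is a martingale whose increments satisfy $|D_l-\E{D_l\mid \Ff_l}|\le$ (a constant multiple of) $\theta$, and whose predictable quadratic variation is at most $\sum_l \E{D_l^2\mid\Ff_l}\le\eta^2$.

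Next comes a last-exit decomposition. Since $X_1=0$ and the drop $\beta_j^k-\beta_j^{k+1}$ is at most $c$ by \cref{bounddiff}, $X$ can increase by at most $c$ in one step; the other term $-p_{i_k}\wgamma_j^k$ only decreases $X$. Suppose $X_{t+1}>\eps$. Let $\tau$ be the last index $\le t$ with $X_\tau\le \eps/2$. Then $X_{\tau+1}\le \eps/2+c<\eps$, using $\eps>2c$, and $X_l>\eps/2>0$ for all $\tau<l\le t+1$. On this window the drift terms are non-positive, so $X_{t+1}-X_\tau\le M_{t+1}-M_\tau$. The left side exceeds $\eps-\eps/2=\eps/2$, so the martingale rises by at least $\eps/2$ over a window that starts at a random time.

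Finally, I apply Freedman's inequality in Bennett form. For a martingale with increments bounded by $\theta$ and total conditional variance at most $\eta^2$,
\[
\P{\exists\, s\le r:\ M_r-M_s\ge a}\le \exp\Bigl[\tfrac{\eta^2}{\theta^2}\psi\bigl(\tfrac{a\theta}{\eta^2}\bigr)\Bigr].
\]
This is proved via the exponential supermartingale $\exp\bigl(\lambda M_k-\tfrac{e^{\lambda\theta}-1-\lambda\theta}{\theta^2}V_k\bigr)$, where $V_k$ is the predictable quadratic variation, followed by optimizing $\lambda$. With $a=\eps/2$ this gives exactly the claimed bound $\exp[\frac{\eta^2}{\theta^2}\psi(\frac{\eps\theta}{2\eta^2})]$.

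The main obstacle is this last step, because the window $[\tau,t+1]$ starts at a random time that is not a stopping time (it depends on the future). My first attempt is to avoid needing an increment over a random window altogether. I would use the predictable indicator $H_l:=\mathbf 1\{X_l>\eps/2\}$ and the transformed martingale $\tilde M_k=\sum_{l<k}H_l\bigl(D_l-\E{D_l\mid\Ff_l}\bigr)$. Since $H_l\in\{0,1\}$, its increments and quadratic variation obey the same bounds $\theta$ and $\eta^2$. Every upcrossing of $X$ from below $\eps/2$ to above $\eps$ must be charged to $\tilde M$ increasing by about $\eps/2$ minus the overshoot at the start of the excursion, and $\eps>2c$ keeps that overshoot controlled. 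Then a single application of Freedman's bound to $\tilde M$ (equivalently, to $\sup_{s\le r}(\tilde M_r-\tilde M_s)$ via the reflected supermartingale) should suffice, with no union bound over starting times. If the bookkeeping of the overshoot at $X_\tau$ costs a constant, I would absorb it by slightly adjusting $a$, which is the role of the factor $2$ and of the hypothesis $\eps>2c$.
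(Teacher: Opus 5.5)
The genuine gap is the last step, which you flag yourself. After the last-exit decomposition you must bound the probability of a rise by $\eps/2$ from a start index that depends on the path up to time $t$ and is not a stopping time. That is an event of the form $\{\max_{s\le t}(S_{t+1}-S_s)>\eps/2\}$.

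The inequality you display, with $\exists\, s\le r$, is not Freedman's inequality. The exponential supermartingale $\exp\bigl(\lambda M_k-\tfrac{e^{\lambda\theta}-1-\lambda\theta}{\theta^2}V_k\bigr)$, combined with the maximal inequality for nonnegative supermartingales, controls $\{\exists r:\ M_r-M_1\ge a\}$ only with the starting index fixed. Neither of your fixes closes this:
\begin{itemize}
\item The transformed process $\tilde M$ still accumulates the martingale increments of all earlier excursions above $\eps/2$, and these can sum to a negative amount. So $\tilde M_{t+1}-\tilde M_1>\eps/2$ does not follow, and you are back to a rise from a random start.
\item The process reflected at its running minimum, $R_{k+1}=\max(R_k+d_k,0)$, only satisfies $e^{\lambda R_{k+1}}\le e^{\lambda(R_k+d_k)}+\mathbf{1}\{R_{k+1}=0\}$. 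Each return to the minimum adds a term to the exponential moment, so this is a union bound in disguise, not a single application of Freedman.
\end{itemize}

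The rest of your reduction matches the paper up to details. Your claim that the drift is nonpositive on all of $[\tau,t]$ fails at step $\tau$ when $X_\tau<0$. In that case $X_{\tau+1}<c\le\eps/2$, so you can start at $\tau+1$, and there is no overshoot to absorb.

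The paper avoids this case split by taking $s$ to be the first index of the final run on which $X_k\ge 0$. Then $X_s\le c$, the drift is nonpositive on all of $[s,t]$, and $X_{t+1}-X_s>\eps-c\ge\eps/2$; this is exactly where the factor $2$ and the hypothesis $\eps>2c$ enter. The paper also applies the supermartingale bound \cref{pinelis} directly to the stopped process. So no Doob decomposition is needed, and you avoid the unspecified multiple of $\theta$ in your increment bound, which would change the constant.

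Be aware, though, that the paper does not resolve the random-start issue either: it applies \cref{pinelis} to $X^{s,\tau(s)}$ as though $s$ were deterministic. A rigorous repair of either argument goes as follows.
\begin{itemize}
\item Let $s_1<s_2<\cdots$ be the successive entry times of $X$ into $[0,\infty)$. These are stopping times, and $X_{s_m}\le c$.
\item For each $m$, the process started at $s_m$, stopped on leaving $[0,\infty)$, and frozen after time $t+1$ is a supermartingale satisfying the hypotheses. So \cref{pinelis} bounds the probability that it rises by more than $\eps/2$ by $\exp[\frac{\eta^2}{\theta^2}\psi(\frac{\eps\theta}{2\eta^2})]$.
\item A union bound over the at most $t$ indices $m$ gives the statement with an extra factor $t$.
\end{itemize}
That factor is harmless downstream: in \cref{finalprob} it is absorbed by enlarging the numerical constant in $\eps$. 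However, neither your argument nor the paper's yields the bound exactly as stated.
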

The idea of the proof is the following.
Suppose that \cref{diffalgo} picks agent $i_k$ at iteration $k$.
Then, by definition $\wgamma_j^{k + 1} = (1 - p_{i_k})\wgamma_j^k$ exactly.
On the other hand, since $p_{i_k}$ is an upper bound on the probability $\mu_{i_k}^k(e)$, by \cref{diffprob} in expectation the value of $\beta_j^{k + 1}$ is at least $(1 - p_{i_k})\beta_j^k$.
Then, the difference $\wgamma_j^{k + 1} - \beta_j^{k + 1}$ in expectation is at most $(1 - p_{i_k})(\wgamma_j^k - \beta_j^k)$.
Note that if $\wgamma_j^k - \beta_j^k > 0$, then $\wgamma_j^{k + 1} - \beta_j^{k + 1}$ is at most $\wgamma_j^k - \beta_j^k$.
That is, when the value of the process $\wgamma_j$ is above the value of $\beta_j$, the difference $\wgamma_j^k - \beta_j^k$ is shrinking in expectation.
Therefore, at iterations $k$ when $\wgamma_j^k > \beta_j^k$, the difference $\wgamma_j^k - \beta_j^k$ behaves like a \textit{super-martingale}.
So, to prove that $\wgamma_j$ does not exceed $\beta_j$ by more than $\eps$, we can consider the iterations where $\wgamma_j$ is above $\beta_j$ and apply concentration inequalities for super-martingales.

\begin{proof}[Proof of \cref{martingale}]
    We will consider the sequence $X \equiv X_1, \ldots, X_t, X_{t + 1}$ as a probabilistic process $\{X_k\}_{k = 1}^\infty$, where for all $k \geq t + 1$ we set $X_{k + 1} := X_{k}$.
    For integer $s\geq 1$, define the stopping time 
    \[
        \tau(s) := s + \inf\{r \geq 0 : X_{s + r} < 0\},
    \]
    i.e the first iteration of $X$ after (and including) $s$ at which the process $X$ goes below $0$.
    \begin{lemma}\label{subprocess}
        For integer $s \geq 1$, let $X^{s, \tau(s)}$ be a stopped sub-process of $X$ which starts at time $s$ and stops at time $\tau(s)$.
        That is,
        \[
            X^{s, \tau(s)} : = \{X_{\min(s + r), \tau(s)}\}_{r = 0}^\infty.
        \]
        Then, for every $s \geq 1$, the process $X^{s, \tau(s)}$ is a super-martingale.
    \end{lemma}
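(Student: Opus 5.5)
We have to show that for each $s\ge1$ the stopped process $Y_r:=X_{\min(s+r,\tau(s))}$ is a super-martingale with respect to the filtration $\Ff_{s+r}$. That is, we need $\E{Y_{r+1}-Y_r\mid \Ff_{s+r}}\le 0$. The plan is a one-step computation. At any time at which the process has already stopped, or at any index $k\ge t+1$ where $X$ is frozen by definition, the increment is identically zero and there is nothing to prove. So it suffices to treat an index $k=s+r\le t$ with $\tau(s)>k$, i.e.\ $X_k\ge 0$, and show $\E{X_{k+1}-X_k\mid\Ff_k}\le 0$ there. The event $\{\tau(s)>k\}$ is determined by $X_s,\dots,X_k$ and hence is $\Ff_k$-measurable, so we may condition on it freely.

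On that event, let $i_k$ be the agent picked at iteration $k$. The identity of $i_k$ is determined by the values $\beta_i^k$ of the active agents, which are $\Ff_k$-measurable, and so is the number $p_{i_k}$, the smallest bound on $\mu_{i_k}^k(e)$. By definition of the expected process, $\wgamma_j^{k+1}=(1-p_{i_k})\wgamma_j^k$ holds deterministically given $\Ff_k$. Since $j\in\Aa^{t+1}\subseteq\Aa^{k+1}$, agent $j$ is still active after iteration $k$, and \cref{diffprob} with $p=p_{i_k}$ gives $\E{\beta_j^{k+1}\mid\Ff_k}\ge(1-p_{i_k})\beta_j^k$. Subtracting,
\[
\E{X_{k+1}\mid\Ff_k}\le(1-p_{i_k})(\wgamma_j^k-\beta_j^k)=(1-p_{i_k})X_k\le X_k,
\]
where the last step uses $X_k\ge0$ and $p_{i_k}\in[0,1]$. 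This is exactly the super-martingale inequality for the stopped process.

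The step needing the most care is the application of \cref{diffprob}. That lemma is stated for full removal of the sampled bundle $B_{i_k}$ from every bundle of $\Ss_j^k$. Here, however, $\Ss_j^{k+1}$ is produced by \cref{filteralgo}, which removes only part of $B_{i_k}$. I would address this by noting that filtering removes a subset of $B_{i_k}$ from each $S_j^k$. Hence, by monotonicity and additivity of $v_j$ on $S_j$ (\cref{wlog}), $\beta_j^{k+1}\ge\sum_{S_j}\lambda_j(S_j)\,v_j(S_j^k\setminus B_{i_k})$ pointwise. The expectation of the right-hand side is at least $(1-p_{i_k})\beta_j^k$ by the computation in \cref{probval}.

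A second point to verify is that the conditioning in \cref{diffprob} ("given $\Ss_i^k$") is compatible with conditioning on $\Ff_k$. The randomness at step $k$ is only the fresh sample $B_{i_k}\sim\mu_{i_k}^k$, which is drawn independently of the past. So the bound holds conditionally on the entire history, and therefore on $\Ff_k\subseteq$ that history by the tower property. Integrability is immediate because all $X_k$ are bounded in $[-1,1]$. This completes the plan.
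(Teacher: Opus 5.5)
Your proposal is correct and follows the paper's proof. It uses the same one-step bound $\E{X_{k+1}\mid\Ff_k}\le(1-p_{i_k})X_k$ from the definition of $\wgamma_j$ and \cref{diffprob}, the same split on whether the stopped process has already stopped (using $X_k\ge 0$ before stopping), and integrability from $|X_k|\le 1$; your two extra checks go beyond the paper, which uses a Markov-property remark instead: that filtering removes only a subset of $B_{i_k}$, so monotonicity of $v_j$ lets the full-removal computation of \cref{probval} apply, and that conditioning on the full history plus the tower property settles the measurability of $p_{i_k}$.
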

    \begin{proof}
        Recall that a sequence of random variables $\{\xi_{r = 1}^\infty\}$ is called a (discrete time) \textit{super-martingale} if for every time $r$ it holds
        \[
            \E{|\xi_r|} < \infty \qquad \text{and}\qquad \E{\xi_{r + 1}\mid \xi_1,\ldots, \xi_r} \leq \xi_r.
        \]
        Observe that the process $X = \{X_k\}_{k = 1}^\infty$ has a strong Markov property, for every $k\in[t]$, the values of $\wgamma_{n}^{k + 1}$ and $\beta_j^{k + 1}$ depend only on $\wgamma_j^k, \beta_j^k$ and on agents $i \in \Aa^k$ that are active in the beginning of iteration $k$ of the algorithm.
        Then, the process $\{X_{s + r}\}_{r = 0}^\infty$ is a valid random process that depends only on $X_s$ and corresponds to a sub-process of $X$ starting from time $s$ onward.
        Consequently, the process $\{X_{\min(s + r), \tau(s)}\}_{r = 0}^\infty$ is a valid random process that corresponds to $\{X_{s + r}\}_{r = 0}^\infty$ stopped at time $\tau(s)$, i.e after time $\tau(s)$, this process remains constant equal to $X_{\tau(s)}$.

        Now, for $k \in [n]$, let $i_k$ denote the agent that was picked at iteration $k$ of \cref{diffalgo}, and let $p_{i_k}$ be \textbf{the smallest possible} number such that for every item $e \in \Mm$, $\mu_{i_k}^k(e) \leq p_{i_k}$.
        By definition, for all $k \in [t]$ we have $\wgamma_j^{k + 1} = (1 - p_{i_k})\wgamma_j^k$, and by \cref{diffprob}, $\E{\beta_j^{k + 1}} \geq (1 - p_{i_k})\beta_j^k$.
        Hence, for every $k \in [t]$
        \[
            \E{X_{k + 1}\mid \Ff_k}  = \wgamma_j^{k + 1}  - \E{\beta_j^{k + 1}} \leq (1 - p_{i_k})(\wgamma_j^k - \beta_j^k) = (1 - p_{i_k})X_k.
        \]
        Then, it must hold for every $r \geq 0$ that $\E{X_{r + 1}^{s, \tau(s)}\mid X_r^{s, \tau(s)}} \leq X_r^{s, \tau(s)}$.
        Indeed,
        \begin{itemize}
            \item if $X_r^{s, \tau(s)} < 0$, then $s + r \geq \tau(s)$ and the process has already stopped, so $X_{r + 1}^{s, \tau(s)} = X_r^{s, \tau(s)}$;
            \item if $X_r^{s, \tau(s)} \geq 0$, then $s + r < \tau(s)$, and hence $X_r^{s, \tau(s)} = X_{s + r}$ and $X_{r + 1}^{s, \tau(s)} = X_{s + r + 1}^s$, implying
            \[
                \E{X_{r + 1}^{s, \tau(s)} \mid X_r^{s, \tau(s)}} = \E{X_{s + r + 1}^s\mid X_{s + r}^s} \leq (1 - p_{i_{s + r}})X_{s + r} = (1 - p_{i_{s + r}})X_r^{s, \tau(s)} \leq X_r^{\tau(s)}.
            \]
        \end{itemize}
        Finally, as $|X_k| \leq 1$ for any $k$, we get that the process $X^{s, \tau(s)}$ is a super-martingale.
    \end{proof}

    Suppose that for the given $1 \leq t < n$ it holds $\wgamma_j^{t + 1} - \beta_j^{t + 1} > \eps$.
    Then, it must also hold that $\wgamma_j^t - \beta_j^t > 0$.
    Indeed, as $\wgamma_j^{t + 1} \leq \wgamma_j^t$ and by \cref{bounddiff}, $\beta_j^{t + 1} \geq \beta_j^t - c$, then 
    \[
        \eps < \wgamma_j^{t + 1} - \beta_j^{t + 1} \leq \wgamma_j^t - \beta_j^t + c \implies \wgamma_j^t - \beta_j^t > \eps - c \geq 0.
    \]
    Let $1 \leq s \leq t$ be the smallest integer such that for every $s \leq k \leq t$ it holds $\wgamma_j^s - \beta_j^s \geq 0$, not that for such $s$ either $\wgamma_j^{s - 1} - \beta_j^{s - 1} < 0$ or $s = 1$.
    In other words, $s$ is the start of the segment $[s, t] = \{s, s + 1, \ldots, t\}$ where for every $k \in [s, t]$ the value $X_k= \wgamma_j^k - \beta_j^k$ stays non-negative.
    Since $\wgamma_j^t - \beta_j^t > 0$, such $1 \leq s \leq t$ must exist.

    Note that for this $s$, it must hold that $\wgamma_j^s - \beta_j^s \leq c$.
    Indeed, by the choice of $s$, either $s = 1$ and $\wgamma_j^s -\beta_j^s = 0$, or $s > 1$ and $\wgamma_j^{s - 1} - \beta_j^{s - 1} < 0$.
    Since $\wgamma_j^{s} \leq \wgamma_j^{s - 1}$ and by \cref{bounddiff}, $\beta_j^{s} \geq \beta_j^{s - 1} - c$, if one had $\wgamma_j^s - \beta_j^s > c$ then
    \[
        c < \wgamma_j^s - \beta_j^s \leq \wgamma_j^{s - 1} - \beta_j^{s- 1} + c < c,
    \]
    leading to a contradiction.
    Therefore, $\wgamma_j^{t + 1} - \beta_j^{t + 1} > \eps \geq \wgamma_j^s - \beta_j^s + \eps - c \geq \wgamma_j^s - \beta_j^s + \eps/2$, and
    \[
        \P{\wgamma_j^{t + 1} - \beta_j^{t + 1} > \eps} \leq \P{\wgamma_j^{t + 1} - \beta_j^{t + 1} > \wgamma_j^s - \beta_j^s + \eps/2} = \P{X_{t + 1} - X_s > \eps/2}.
    \]
    
    Consider the stopped sub-process $X^{s, \tau(s)}$ which starts at time $s$ and stops at time $\tau(s)$.
    By the choice of $s$ it holds that $s < \tau(s)$ and $t + 1 < \tau(s)$ and $t - s + 1 > 0$, so for every $0 \leq r \leq t - s + 1$ we have $X^{s, \tau(s)}_r = X_{s + r}$, and
    \[
        \P{X_{t + 1} - X_s > \eps/2} = \P{X^{s, \tau(s)}_{t - s + 1} - X^{s, \tau(s)}_0 > \eps/2}.
    \]
    We will use the following bound by \cite{Pin94} on the deviation probabilities of super-martingales.
    \begin{theorem}[\cite{Pin94}, theorem 8.2]\label{pinelis}
        For a given $K$, let $\{Y_k\}_{k = 0}^K$ be a real-valued super-martingale with respect to filtration $\{\Ff_k\}_{k = 0}^K$ of $\sigma$-algebras.
        Suppose that
        \[
            \forall k \leq K - 1: |Y_{k + 1} - Y_k|\leq \theta; \qquad \text{and}\qquad \sum_{k = 0}^{K - 1}\E{(Y_{k + 1} - Y_k)^2\mid\Ff_k} \leq \eta^2.
        \]
        for some $\theta, \eta^2 > 0$.
        Then, for any $\eps > 0$,
        \[
            \P{Y_K - Y_0 \geq \eps} \leq \exp\left[\frac{\eta^2}{\theta^2}\psi\left(\frac{\eps\theta}{\eta^2}\right)\right],
        \]
        where $\psi(x) := x - (1 + x)\ln(1 + x)$.
    \end{theorem}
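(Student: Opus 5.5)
This is the Bennett-type inequality for supermartingales from \cite{Pin94}, and the plan is to prove it with the exponential supermartingale method plus Markov's inequality. Write $D_k := Y_{k+1} - Y_k$ and, for a parameter $\lambda > 0$ to be chosen later, set $g(\lambda) := e^{\lambda\theta} - 1 - \lambda\theta$.

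The first step is a deterministic pointwise bound. The function $u \mapsto (e^u - 1 - u)/u^2$ (extended by $1/2$ at $u=0$) is nondecreasing on $\mathbb{R}$. Hence for every real $x \le \theta$ (in particular whenever $|x|\le\theta$),
\[
e^{\lambda x} \le 1 + \lambda x + \frac{g(\lambda)}{\theta^2}\, x^2 .
\]
Apply this with $x = D_k$ and take conditional expectations given $\Ff_k$. The supermartingale property gives $\E{D_k \mid \Ff_k} \le 0$, and $\lambda > 0$, so the linear term can be dropped. Using $1 + y \le e^y$, we get
\[
\E{e^{\lambda D_k} \mid \Ff_k} \le \exp\!\Big(\frac{g(\lambda)}{\theta^2}\, \E{D_k^2 \mid \Ff_k}\Big).
\]

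The second step handles the fact that the conditional variances are random; this is the only point needing care. Define the predictable quadratic variation $V_k := \sum_{l<k} \E{D_l^2 \mid \Ff_l}$, which is $\Ff_{k-1}$-measurable, and set
\[
M_k := \exp\!\Big(\lambda (Y_k - Y_0) - \frac{g(\lambda)}{\theta^2} V_k\Big).
\]
The bound from the first step says exactly that $\E{M_{k+1} \mid \Ff_k} \le M_k$, so $(M_k)$ is a nonnegative supermartingale with $M_0 = 1$, and therefore $\E{M_K} \le 1$. The hypothesis gives $V_K \le \eta^2$ almost surely. On the event $\{Y_K - Y_0 \ge \eps\}$ we therefore have $M_K \ge \exp\big(\lambda\eps - g(\lambda)\eta^2/\theta^2\big)$. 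Markov's inequality then yields
\[
\P{Y_K - Y_0 \ge \eps} \le \exp\!\Big(-\lambda\eps + \frac{\eta^2}{\theta^2}\, g(\lambda)\Big).
\]

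The last step optimizes over $\lambda$. Put $x := \eps\theta/\eta^2$ and choose $\lambda := \ln(1+x)/\theta$, so that $e^{\lambda\theta} = 1 + x$. The exponent becomes
\[
\frac{\eta^2}{\theta^2}\big(-x\ln(1+x) + x - \ln(1+x)\big) = \frac{\eta^2}{\theta^2}\,\psi(x),
\]
which is the claimed bound. I expect no serious obstacle. The only subtle points are the monotonicity of $(e^u-1-u)/u^2$, which is what allows one-sided boundedness, and the use of $V_k$ inside $M_k$ instead of $\eta^2$, which is needed because the conditional variances are random.
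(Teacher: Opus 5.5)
Your argument is correct. The paper gives no proof of this statement. It is quoted from \cite{Pin94} and used as a black box in the proof of \cref{martingale}, so the comparison is between a citation and your self-contained Freedman-type argument. That argument runs as follows: a pointwise Bennett bound from the monotonicity of $(e^u-1-u)/u^2=\int_0^1(1-s)e^{su}\,ds$; the exponential supermartingale $M_k$ compensated by the predictable quadratic variation $V_k$; Markov's inequality; and the choice $e^{\lambda\theta}=1+\eps\theta/\eta^2$. All of these steps check out. This includes the $\Ff_k$-measurability of $V_{k+1}$, which is what gives $\E{M_{k+1}\mid\Ff_k}\le M_k$, and the boundedness of $M_k$ that makes the supermartingale property meaningful. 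The citation buys brevity and access to Pinelis's more general framework. Your route makes the paper independent of the reference and gives two strengthenings at no extra cost. First, only the one-sided bound $Y_{k+1}-Y_k\le\theta$ is used. Second, since $V_k\le V_K\le\eta^2$ for every $k\le K$, you can replace Markov's inequality at time $K$ by Ville's maximal inequality for the nonnegative supermartingale $(M_k)$, which gives the same bound for $\P{\max_{k\le K}(Y_k-Y_0)\ge\eps}$.
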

    As shown in \cref{subprocess}, $\{X^{s, \tau(s)}_r\}_{r = 0}^{t - s + 1}$ is a super-martingale, with respect to filtration $\{\Ff_r^{s, \tau(s)}\}_{r = 0}^{t - s + 1}$ where for $0 \leq r \leq t - s + 1$ the $\sigma$-algebra $\Ff_r^{s, \tau(s)}$ is equal to $\Ff_{s + r}$ of the process $X$.
    Then, under the conditions on $X$ given by the statement of the theorem, we have
    \[ 
        \forall r \leq t -s: |X^{s, \tau(s)}_{r + 1} - X^{s, \tau(s)}_r| = |X_{s + r + 1} - X_{s + r}| \leq\theta,
    \]
    and
    \begin{multline*}
        \sum_{r = 0}^{t - s}\E{\left(X^{s, \tau(s)}_{r + 1} - X^{s, \tau(s)}_r\right)^2\mid\Ff_r^{s, \tau(s)}} = \sum_{r = 0}^{t - s}\E{\left(X_{s + r + 1} - X_{s + r}\right)^2\mid\Ff_{s + r}} \\\leq \sum_{k = 1}^t\E{(X_{k + 1} - X_k)^2\mid \Ff_k} \leq\eta^2,
    \end{multline*}
    so by applying \cref{pinelis} to $\{X^{s, \tau(s)}_r\}_{r = 0}^{t - s + 1}$ and $\eps / 2$ we obtain the bound
    \[
         \P{\wgamma_j^{t + 1} - \beta_j^{t + 1} > \eps} \leq  \P{X^{s, \tau(s)}_{t - s + 1} - X^{s, \tau(s)}_0 > \eps/2} \leq \exp\left[\frac{\eta^2}{\theta^2}\psi\left(\frac{\eps\theta}{2\eta^2}\right)\right].
    \]
\end{proof}

In order to apply \cref{martingale}, we need to upper bound the difference $|X_{k + 1} - X_k|$, as well as the sum of variances $\sum_{k = 1}^t\E{(X_{k + 1} - X_k)^2\mid \Ff_k}$.
This is the focus of the next two lemmas.

\begin{lemma}\label{boundexpec}
    Fix some $1 \leq t < n$ and $j \in \Aa^{t + 1}$.
    For $k \in [t]$, let $X_k := \wgamma_j^k - \beta_j^k$, and let $\Ff_k$ be the smallest $\sigma$-algebra for random variables $\{X_1, \ldots, X_k\}$.
    For any $k \in [t]$:
    \[
         |X_{k + 1} - X_k| \leq |\E{X_{k + 1} - X_k \mid \Ff_k}| + c \leq  p_{i_k} + c.
    \]
\end{lemma}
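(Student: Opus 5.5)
We bound the increment $X_{k+1}-X_k = (\wgamma_j^{k+1}-\wgamma_j^k) - (\beta_j^{k+1}-\beta_j^k)$ by splitting it into a deterministic part and the random loss of agent $j$. By definition of the expected process, $\wgamma_j^{k+1}-\wgamma_j^k = -p_{i_k}\wgamma_j^k$. Given $\Ff_k$, this quantity is deterministic: it depends only on $\wgamma_j^k$ and on the distribution $\mu_{i_k}^k$, which is fixed by the state at the start of iteration $k$. Write $L_k := \beta_j^k - \beta_j^{k+1}$ for the loss of agent $j$ in iteration $k$. Then
\[
X_{k+1}-X_k = L_k - p_{i_k}\wgamma_j^k,\qquad \E{X_{k+1}-X_k\mid\Ff_k} = \E{L_k\mid\Ff_k} - p_{i_k}\wgamma_j^k .
\]
Subtracting the two identities gives $X_{k+1}-X_k = \E{X_{k+1}-X_k\mid \Ff_k} + \bigl(L_k - \E{L_k\mid\Ff_k}\bigr)$.

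For the first inequality we control the centered loss. By \cref{bounddiff}, since $j\in\Aa^{t+1}\subseteq\Aa^{k+1}$ remains active, we have $0\le L_k\le c$ almost surely. Hence $0 \le \E{L_k\mid\Ff_k}\le c$, and so $|L_k-\E{L_k\mid\Ff_k}|\le c$. The triangle inequality then gives $|X_{k+1}-X_k|\le |\E{X_{k+1}-X_k\mid\Ff_k}| + c$.

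For the second inequality we bound $|\E{X_{k+1}-X_k\mid\Ff_k}| = |\E{L_k\mid\Ff_k} - p_{i_k}\wgamma_j^k|$. Both terms lie in $[0,\,p_{i_k}]$, so their difference has absolute value at most $p_{i_k}$. For the second term, $0\le\wgamma_j^k\le\wgamma_j^1=\beta_j^1\le 1$ by \cref{wlog}, because $p_{i_k}\le 1$ makes $\wgamma_j$ non-negative and non-increasing. For the first term, $\E{L_k\mid\Ff_k}\le p_{i_k}\beta_j^k\le p_{i_k}$. This follows from the computation in the proof of \cref{probval}/\cref{diffprob}. Filtering removes from $S_j^k$ only a subset of the items of $B_{i_k}$, so for each $S_j$ the loss is at most $v_j(S_j^k\cap B_{i_k})$. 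This equals $\sum_{e\in S_j^k\cap B_{i_k}} v_{S_j}(e)$ by additivity on $S_j$ from \cref{wlog}. Taking expectation over $B_{i_k}\sim\mu_{i_k}^k$ with $\mu_{i_k}^k(e)\le p_{i_k}$, and using $\beta_j^k \le 1$, bounds the expected loss by $p_{i_k}\beta_j^k$.

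The only delicate point is the last step, namely that the expected loss is at most $p_{i_k}\beta_j^k$ even though filtering removes only part of $B_{i_k}$. \cref{diffprob} states only the lower bound $\E{\beta_j^{k+1}}\ge(1-p)\beta_j^k$, which is the same statement read as an upper bound on the loss. To make it fully rigorous we would apply the monotonicity argument above termwise, using that the removed part is a subset of $S_j^k \cap B_{i_k}$. Everything else is bookkeeping with \cref{bounddiff}.
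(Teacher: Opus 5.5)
Your proof is correct and follows the paper's argument. You decompose $X_{k+1}-X_k$ into its conditional mean plus the centered loss $(\beta_j^k-\beta_j^{k+1})-\E{\beta_j^k-\beta_j^{k+1}\mid\Ff_k}$, bound the centered loss by $c$ via \cref{bounddiff}, and bound $|\E{X_{k+1}-X_k\mid\Ff_k}|\le p_{i_k}$ using $\wgamma_j^{k+1}=(1-p_{i_k})\wgamma_j^k$, the expected-loss bound of \cref{diffprob}, and $\wgamma_j^k,\beta_j^k\in[0,1]$, exactly as the paper does (your remark that filtering removes only part of $B_{i_k}$ just makes explicit what the paper takes directly from \cref{diffprob}).
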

\begin{proof}
    It is easy to see that 
    \begin{multline*}
        X_{k + 1} - X_k - \E{X_{k + 1} - X_k \mid \Ff_k}\\
        = (\wgamma^{k + 1}_j - \beta_j^{k + 1}) - (\wgamma^k_j - \beta_j^k) - \E{(\wgamma^{k + 1}_j - \beta_j^{k + 1}) - (\wgamma^k_j - \beta_j^k) \mid \Ff_k}\\
        =(\wgamma^{k + 1}_j - \beta_j^{k + 1}) - (\wgamma^k_j - \beta_j^k) - (\wgamma^{k + 1}_j - \wgamma^k_j) + \E{\beta_j^{k + 1} - \beta_j^k \mid \Ff_k}\\
        = \E{\beta_j^{k + 1} - \beta_j^k \mid \Ff_k} - (\beta_j^{k + 1} - \beta_j^k).
    \end{multline*}
    By \cref{bounddiff} it holds $0 \geq \beta_j^{k + 1} - \beta_j^k \geq -c$, and $0 \geq \E{\beta_j^{k + 1} - \beta_j^k \mid \Ff_k} \geq -c$.
    Hence, $-c \leq \E{\beta_j^{k + 1} - \beta_j^k \mid \Ff_k} - (\beta_j^{k + 1} - \beta_j^k) \leq c$, which implies
    \begin{multline*}
        \E{X_{k + 1} - X_k \mid \Ff_k} - c \leq X_{k + 1} - X_k \leq \E{X_{k + 1} - X_k \mid \Ff_k} + c \\
        \iff |X_{k + 1} - X_k| \leq |\E{X_{k + 1} - X_k \mid \Ff_k}| + c.
    \end{multline*}
    Next, by definition $\wgamma^{k + 1}_j = (1 - p_{i_k})\wgamma^k_j$, and by \cref{diffprob}, $0 \geq \E{\beta_j^{k + 1} - \beta_j^k\mid \Ff_k} \geq -p_{i_k}\beta_j^k$.
    Then, since
    \[
        \E{X_{k + 1} - X_k\mid \Ff_k}  =(\wgamma^{k + 1}_j - \wgamma^k_j) - \E{\beta_j^{k + 1} - \beta_j^k\mid \Ff_k}.
    \]
    the value $\E{X_{k + 1} - X_k\mid \Ff_k}$ is at most $-p_{i_k}(\wgamma_j^k - \beta_j^k)$ and at least $-p_{i_k}\wgamma_j^k$.
    As both $\wgamma_j^k$ and $\beta_j^k$ are between $0$ and $1$, we get $|\E{X_{k + 1} - X_k\mid \Ff_k}| \leq p_{i_k}$.
\end{proof}

\begin{lemma}\label{boundvar}
    Fix some $1 \leq t < n$ and $j \in \Aa^{t + 1}$.
    For $k \in [t]$, let $X_k := \wgamma_j^k - \beta_j^k$, and let $\Ff_k$ be the smallest $\sigma$-algebra for random variables $\{X_1, \ldots, X_k\}$.
    For any $k \in [t]$:
    \[
        \E{(X_{k + 1} - X_k)^2 \mid \Ff_k} \leq p_{i_k}(5p_{i_k} + c).
    \]
\end{lemma}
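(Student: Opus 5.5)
We need to bound $\E{(X_{k+1}-X_k)^2\mid\Ff_k}$. The plan is to write $X_{k+1}-X_k = (\wgamma_j^{k+1}-\wgamma_j^k) - (\beta_j^{k+1}-\beta_j^k)$, then set $a := p_{i_k}\wgamma_j^k \in [0,p_{i_k}]$ and $L := \beta_j^k-\beta_j^{k+1}\ge 0$. This gives $X_{k+1}-X_k = L - a$, hence
\[
(X_{k+1}-X_k)^2 = L^2 - 2aL + a^2 \le L^2 + a^2 \le L^2 + p_{i_k}^2 .
\]
It therefore suffices to show $\E{L^2\mid\Ff_k}\le p_{i_k}(4p_{i_k}+c)$, or any bound of the form $p_{i_k}c + O(p_{i_k}^2)$ with constant at most $4$.

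\textbf{Bounding $\E{L^2}$.} By \cref{bounddiff}, $0\le L\le c$ deterministically. By \cref{diffprob}, $\E{L\mid\Ff_k}\le p_{i_k}\beta_j^k\le p_{i_k}$. Hence
\[
\E{L^2\mid\Ff_k}\le c\,\E{L\mid\Ff_k}\le c\,p_{i_k},
\]
which already gives $\E{(X_{k+1}-X_k)^2\mid\Ff_k}\le p_{i_k}^2 + c\,p_{i_k} \le p_{i_k}(5p_{i_k}+c)$.

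\textbf{Expected obstacle.} The only delicate point is justifying that the expectation in \cref{diffprob} is taken over the fresh sample $B_{i_k}\sim\mu_{i_k}^k$ given the current state. Conditioning on $\Ff_k$ is coarser than conditioning on the full algorithm state, so I would first condition on the full state, where the bound $\E{L}\le p_{i_k}\beta_j^k$ holds, and then take expectation via the tower property. The bound is preserved because $p_{i_k}$ and $c$ are constants at that stage (or, if $p_{i_k}$ is random, the inequality holds pointwise in the state and so survives averaging).

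A second check is that filtering only ever removes a subset of the items $B_{i_k}$ would remove, so the realized loss is at most the unfiltered loss. This is what \cref{diffprob} must be read as covering: its proof bounds the loss from removing all of $B_{i_k}$, and filtering removes less, so the inequality still applies.

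The slack between the $p^2$ coefficient obtained here and the $5$ in the statement means nothing else needs tightening.
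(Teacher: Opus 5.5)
Your proof is correct, and it takes a more direct route than the paper's. The paper first splits $\E{(X_{k+1}-X_k)^2\mid\Ff_k}$ into a conditional variance plus the squared conditional mean. It then bounds the centered increment by $p_{i_k}\beta_j^k + (\beta_j^k-\beta_j^{k+1})$ via the triangle inequality. Finally it controls $\E{(\beta_j^k-\beta_j^{k+1})^2\mid\Ff_k}$ through the convexity estimate $\V{x}\le\E{x}(1-\E{x})$ for $[0,1]$-valued $x$, arriving at $4(p_{i_k}\beta_j^k)^2+c\,p_{i_k}\beta_j^k+p_{i_k}^2\le p_{i_k}(5p_{i_k}+c)$.

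You instead write the increment as $L-a$ with $L,a\ge 0$, so the cross term has a favorable sign and can be discarded. You then use $L^2\le cL$ directly, which is exactly what the variance estimate encodes once rearranged. The inputs are the same: \cref{bounddiff}, \cref{diffprob} (rightly read as covering the filtered, partial removal of $B_{i_k}$), and $\wgamma_j^k,\beta_j^k\le 1$. By not centering, you keep the cancellation that the triangle inequality throws away, and you get the sharper bound $p_{i_k}(p_{i_k}+c)$.

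The paper's version reuses the identity from the proof of \cref{boundexpec} but gains nothing quantitatively. The improved constant is also immaterial downstream: in each summand $p_{i_k}(5p_{i_k}+c)$ of $\eta^2$, the term $c=\Theta(\ln^{-3}n)$ dominates $p_{i_k}=O(1/n)$.

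One correction to your remark on conditioning. If $p_{i_k}$ were genuinely not $\Ff_k$-measurable, averaging a bound that holds given the full state would only yield $\E{(X_{k+1}-X_k)^2\mid\Ff_k}\le\E{p_{i_k}(p_{i_k}+c)\mid\Ff_k}$, which is not the stated inequality. The consistent reading, implicit in the paper too, is to take $\Ff_k$ to be the $\sigma$-algebra of the whole run up to the start of iteration $k$, before $B_{i_k}$ is sampled. Then $p_{i_k}$, $\wgamma_j^k$ and $\beta_j^k$ are all $\Ff_k$-measurable, and your argument goes through verbatim.
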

\begin{proof}
    Observe that
    \[
        \E{(X_{k + 1} - X_k)^2 \mid \Ff_k} = \E{(X_{k + 1} - X_k - \E{X_{k + 1} - X_k \mid \Ff_k})^2 \mid \Ff_k} + \E{X_{k + 1} - X_k\mid \Ff_k}^2
    \]
    By \cref{boundexpec}, $|\E{X_{k + 1} - X_k\mid \Ff_k}| \leq p_{i_k}$.
    Furthermore, the proof of \cref{boundexpec} implies that
    \begin{multline*}
        \left|X_{k + 1} - X_k - \E{X_{k + 1} - X_k\mid \Ff_k}\right| = \left|\E{\beta_j^{k + 1} - \beta_j^k\mid \Ff_k} - (\beta_j^{k + 1} - \beta_j^k)\right|\\
        \leq \left|\E{\beta_j^{k + 1} - \beta_j^k\mid \Ff_k}\right| + |\beta_j^{k + 1} - \beta_j^k| = p_{i_k}\beta_j^k + (\beta_j^k - \beta_j^{k + 1}),
    \end{multline*}
    hence
    \begin{multline*}
        \left(X_{k + 1} - X_k - \E{X_{k + 1} - X_k\mid \Ff_k}\right)^2 \leq (p_{i_k}\beta_j^k + (\beta_j^k - \beta_j^{k + 1}))^2 \\
        = (p_{i_k}\beta_j^k)^2 + 2p_{i_k}\beta_j^k(\beta_j^k - \beta_j^{k + 1}) + (\beta_j^k - \beta_j^{k + 1})^2.
    \end{multline*}
    Therefore, using $\E{\beta_j^k - \beta_j^{k + 1}\mid \Ff_k}\leq p_{i_k}\beta_j^k$,
    \begin{multline*}
        \E{\left(X_{k + 1} - X_k - \E{X_{k + 1} - X_k\mid \Ff_k}\right)^2\mid \Ff_k} \\
        \leq (p_{i_k}\beta_j^k)^2 + 2p_{i_k}\beta_j^k\E{\beta_j^k - \beta_j^{k + 1}\mid \Ff_k} + \E{(\beta_j^k - \beta_j^{k + 1})^2\mid \Ff_k}\\
        \leq 3(p_{i_k}\beta_j^k)^2 + \E{(\beta_j^k - \beta_j^{k + 1})^2\mid \Ff_k}.
    \end{multline*}
    To bound $\E{(\beta_j^k - \beta_j^{k + 1})^2\mid \Ff_k}$, consider the function $f(x) := (x - \E{x})^2$ for an arbitrary random variable $x$ distributed on the segment $[0, 1]$.
    Since $f$ is convex, it holds that
    \[
        \forall x : f(x) = f\big((1 - x)\cdot 0 + x\cdot 1\big) \leq (1 - x)f(0) + xf(1).
    \]
    But then, the variance $\V{x} = \E{f(x)}$ is at most
    \begin{multline*}
        \E{f(x)} \leq \E{1 - x}\cdot f(0) + \E{x}\cdot f(1) 
        = (1 - \E{x})\cdot \E{x}^2 + \E{x}\cdot (1 - \E{x})^2\\
        =\E{x}^2 - \E{x}^3 + \E{x} - 2\E{x}^2 + \E{x}^3 = \E{x}\cdot (1 - \E{x}).
    \end{multline*}
    Note that $(\beta_j^k - \beta_j^{k + 1} \mid \Ff_k)$ is a non-negative random variable with expectation at most $p_{i_k}\beta_j^k$, and maximum value at most $c$ (by \cref{bounddiff}).
    Then, $\V{\beta_j^k - \beta_j^{k + 1} \mid \Ff_k}$ is at most
    \begin{multline*}
        \V{\beta_j^k - \beta_j^{k + 1} \mid \Ff_k} = c^2\V{\frac{\beta_j^k - \beta_j^{k + 1}}{c} \mid \Ff_k} \\
        \leq c^2\E{\frac{\beta_j^k - \beta_j^{k + 1}}{c} \mid \Ff_k}\left(1 - \E{\frac{\beta_j^k - \beta_j^{k + 1}}{c} \mid \Ff_k}\right) \leq c\E{\beta_j^k - \beta_j^{k + 1} \mid \Ff_k} \leq cp_{i_k}\beta_j^k.
    \end{multline*}
    It follows that
    \[
        \E{(\beta_j^k - \beta_j^{k + 1})^2\mid \Ff_k} = \V{(\beta_j^k - \beta_j^{k + 1})^2\mid \Ff_k} + \E{\beta_j^k - \beta_j^{k + 1}\mid \Ff_k}^2 \leq cp_{i_k}\beta_j^k + (p_{i_k}\beta_j^k)^2,
    \]
    and thus
    \[
        \E{\left(X_{k + 1} - X_k - \E{X_{k + 1} - X_k\mid \Ff_k}\right)^2\mid \Ff_k}  \leq 4(p_{i_k}\beta_j^k)^2 + cp_{i_k}\beta_j^k.
    \]
    Using the fact that $\beta_j^k \leq 1$ always, we obtain the final bound
    \begin{multline*}
        \E{(X_{k + 1} - X_k)^2 \mid \Ff_k} 
        \\= \E{(X_{k + 1} - X_k - \E{X_{k + 1} - X_k \mid \Ff_k})^2 \mid \Ff_k} + \E{X_{k + 1} - X_k\mid \Ff_k}^2\\
        \leq 4(p_{i_k}\beta_j^k)^2 + cp_{i_k}\beta_j^k + (p_{i_k})^2 \leq 5(p_{i_k})^2 + cp_{i_k}.
    \end{multline*}
\end{proof}

One could combine \cref{martingale}, \cref{boundexpec} and \cref{boundvar}, we get the following bound: if $j \in \Aa^{t + 1}$ is some agent that remains active after iteration $t$ of \cref{diffalgo}, then for any $\eps > 2c$
\begin{equation}
    \P{\wgamma_j^{t + 1}- \beta_j^{t + 1}> \eps \mid \Ff_t} \leq \exp\left[\frac{\eta^2}{\theta^2}\psi\left(\frac{\eps\theta}{2\eta^2}\right)\right],\label{badbound}
\end{equation}
where $\Ff_t$ is the smallest $\sigma$-algebra for random variables $\{\beta_j^1, \ldots, \beta_j^t\}$ and $\theta = \max_{k \in [t]}(p_{i_k} + c)$, $\eta^2 = \sum_{k = 1}^tp_{i_k}(5p_{i_k} + c)$.
The issue with such a bound is that the values of $p_{i_k}$ for $k \in [t]$ are random variables themselves, and depend on $\Ss_{i_k}^k$, the filtered partition of the agent picked at corresponding iteration.
As a result, the process $\wgamma_j$ is also a random process, so the bound obtained in \cref{badbound} just by itself does not allow us to definitely conclude that the value of $\beta_j^{t + 1}$ is at least $\alpha$, one would also need to bound the process $\wgamma_j^{t + 1}$ first, as well as all the values of $p_{i_k}$ for $k \in [t]$.
Instead, we would like to replace $\wgamma_j$ with some \textit{deterministic} process, similar to what we did in the same-valuation case (\cref{algogamma}).

Consider \cref{diffalgo} and iteration $k$, and for simplicity we assume that $\alpha\leq 3/11$.
Then, according to \cref{distpartition}, we can upper bound the value $p_{i_k}$ by
\[
    p_{i_k} \leq \frac{1 - \alpha}{2(\beta_{i_k}^k - \alpha)n}.
\]
By construction, agent $i_k$ picked at iteration $k$ had the largest value $\beta_{i_k}^k$.
But then, $\beta_{i_k}^k \geq \beta_j^k$ and
\[
    p_{i_k} \leq \frac{1 - \alpha}{2(\beta_{j}^k - \alpha)n}.
\]
Suppose now that $\beta_j^k \geq \wgamma_j^k - \eps$ for some $\eps > 0$.
Then,
\begin{equation}
    p_{i_k} \leq \frac{1 - \alpha}{2(\wgamma_{j}^k - \eps - \alpha)n} \qquad\implies\qquad \wgamma_j^{k + 1} = (1 - p_{i_k})\wgamma_j^k \geq \left(1 - \frac{1 - \alpha}{2(\wgamma_{j}^k - \eps - \alpha)n}\right)\wgamma_j^k \label{detpro}.
\end{equation}
Note that the latter expression is very similar to the original process $\gamma_n^k$ that we used in the same-valuation case, and it provides a lower bound on $\wgamma_j^{k + 1}$ in terms of $\wgamma_j^k$.
Then, intuitively, if for all $k \in [t]$ the lower bound $\beta_j^k \geq \wgamma_j^k - \eps$ holds, then the random process $\wgamma_j^k$ should be lower bounded by the deterministic process which matches the expression in \cref{detpro} exactly.
And, if $\beta_j^1$ is close to $1$, and the deviation $\eps$ is small, then this deterministic process should be close to the original $\gamma_n^k$, which in turn was proven to be at least $\alpha - c$ for any constant $c$.

We will first cover a simpler case of $\alpha \leq 3/11$ and present a complete proof for this range of $\alpha$.
The case of $\alpha > 3/11$ uses exactly the same ideas, but is a bit more involving technically and more formula-heavy, and we will cover it in the next subsection.

\subsection{Proof of \cref{diffvalue} for $\alpha \leq 3/11$}

We will generalize the process $\gamma_n^k$ used in the same valuation case from Appendix~\ref{identical}, allowing the starting point to be some $\beta \leq 1$, as well as introducing an error $\eps$ inside the step-size.
We will call the new process $\gamma_n^k(\beta, \eps)$.

\begin{definition}
    For $n$ agents, $1/4 \leq \alpha \leq 3/11$, $\eps > 0$ and $1\geq \beta > \alpha + \eps$, let $\gamma_n^{k}(\beta, \eps)$ be the following process: $\gamma_n^1(\beta, \eps) = \beta$ and for $k \geq 1$,
    \[
            \gamma_n^{k + 1}(\beta, \eps) = \left(1 - \frac{1 - \alpha}{2(\gamma_n^k(\beta, \eps) - \eps - \alpha)n}\right)\gamma_n^k(\beta, \eps);
    \]
\end{definition}

We will show that if for all $k \in [t]$ the expected process $\wgamma_j^k$ lower bounds $\beta_j^k$ (up to an $\eps$-error), then the deterministic process $\gamma_n^k(\beta_j^1, \eps)$ also lower bounds $\beta_j^k$ up to an $\eps$-error.

\begin{claim}\label{epsprocess}
    For $1 \leq t < n$, let $j \in \Aa^{t + 1}$ be some agent that \textbf{remains active} after iteration $t$ of \cref{diffalgo}.
    For $\eps >0$, let $\Ee_j^t(\eps)$ be the following event:
    \[
        \Ee_j^t(\eps) = \left\{\forall k \in [t]: \beta_j^k \geq \wgamma_j^k - \eps\right\}.
    \]
    Under the event $\Ee_j^t(\eps)$, it holds for all $k \in [t]$ that $\beta_j^k \geq \gamma_n^k(\beta_j^1, \eps) - \eps$.
\end{claim}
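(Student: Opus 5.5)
The plan is to prove by induction on $k \in [t]$ that, under $\Ee_j^t(\eps)$, we have $\wgamma_j^k \geq \gamma_n^k(\beta_j^1, \eps)$. Combined with the event's guarantee $\beta_j^k \geq \wgamma_j^k - \eps$, this gives the claim immediately.

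The base case is trivial, since $\wgamma_j^1 = \beta_j^1 = \gamma_n^1(\beta_j^1,\eps)$. For the inductive step, fix $k < t$ and assume $\wgamma_j^k \geq \gamma_n^k(\beta_j^1,\eps)$. The argument then proceeds as follows.

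\begin{enumerate}
\item Agent $j$ is active at iteration $k$, and \cref{diffalgo} picks the agent $i_k$ with the largest remaining value. Hence $\beta_{i_k}^k \geq \beta_j^k$.
\item On the event, $\beta_j^k \geq \wgamma_j^k - \eps$. Therefore $\beta_{i_k}^k \geq \wgamma_j^k - \eps$.
\item The minimal $p_{i_k}$ is at most the bound of \cref{distpartition}, applied with $\beta = \wgamma_j^k - \eps$. Since that bound is non-increasing in $\beta$, we get
\[
p_{i_k} \leq \frac{1-\alpha}{2(\wgamma_j^k - \eps - \alpha)n}.
\]
This is exactly \eqref{detpro}, which gives
\[
\wgamma_j^{k+1} \geq \left(1 - \frac{1-\alpha}{2(\wgamma_j^k-\eps-\alpha)n}\right)\wgamma_j^k.
\]
\item To finish, I use that the map $g(x) = \left(1 - \frac{1-\alpha}{2(x-\eps-\alpha)n}\right)x$ is non-decreasing on the relevant range, where $x > \alpha+\eps$ and the factor is positive. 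This follows from $g'(x) = 1 - \frac{1-\alpha}{2n}\cdot\frac{-(\eps+\alpha)}{(x-\eps-\alpha)^2}$, which is at least $1 > 0$, since $g(x) = x - \frac{1-\alpha}{2n}\cdot\frac{x}{x-\eps-\alpha}$ and $x/(x-a)$ is decreasing in $x$ for $x>a>0$. Then $\wgamma_j^{k+1} \geq g(\wgamma_j^k) \geq g(\gamma_n^k(\beta_j^1,\eps)) = \gamma_n^{k+1}(\beta_j^1,\eps)$.
\end{enumerate}

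The main obstacle is domain bookkeeping: applying \cref{distpartition} needs $\wgamma_j^k - \eps \geq \alpha$, and the monotonicity of $g$ needs $\gamma_n^k(\beta_j^1,\eps) > \alpha + \eps$. I would handle this by noting that the claim is only meaningful while the deterministic process stays above $\alpha+\eps$. The later parameter choices guarantee $\gamma_n^k(\beta_j^1,\eps) > \alpha + \eps$ for all $k \leq n$, by the analogue of \cref{rhoinfinity} with a perturbed start and step. Under the inductive hypothesis this gives $\wgamma_j^k - \eps \geq \gamma_n^k - \eps > \alpha$, so every application is legitimate.

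If instead the process is allowed to drop below $\alpha+\eps$, I would define it as frozen or stopped there and treat that regime separately. The intended statement, however, implicitly assumes the process stays within its domain, as in the definition, which requires $\beta > \alpha + \eps$ throughout.
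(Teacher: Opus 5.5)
Your proposal is correct and follows the paper's proof: the same induction on $\wgamma_j^k \geq \gamma_n^k(\beta_j^1,\eps)$, using $\beta_{i_k}^k \geq \beta_j^k \geq \wgamma_j^k - \eps$ together with the bound of \cref{distpartition}. You also make explicit two points the paper uses silently: that $x \mapsto \bigl(1 - \frac{1-\alpha}{2(x-\eps-\alpha)n}\bigr)x$ is increasing for $x > \alpha+\eps$, and that the deterministic process must stay above $\alpha+\eps$ for these bounds to apply.
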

\begin{proof}
    It suffices to show that for all $k \in [t]$ it holds $\wgamma_j^k \geq \gamma_n^k(\beta_j^1, \eps)$.
    We prove this via induction on $k \in [t]$.
    By definition, $\wgamma_j^1 = \beta_j^1 = \gamma_n^1(\beta_j^1, \eps)$.
    Next, suppose that for all $1 \leq k'\leq k$ it holds $\wgamma_j^{k'} \geq \gamma_n^{k'}(\beta_j^1, \eps)$.
    We show that $\wgamma_j^{k + 1} \geq \gamma_n^{k + 1}(\beta_j^1, \eps)$.
    
    For $k \in [n]$, let $i_k$ denote the agent that was picked at iteration $k$.
    Recall that $p_{i_k}$ is \textbf{the smallest possible} number such that for every item $e \in \Mm$, $\mu_{i_k}^k(e) \leq p_{i_k}$.
    By \cref{distpartition},
    \[
        p_{i_k} \leq \frac{1 - \alpha}{2(\beta_{i_k}^k - \alpha)n}.
    \]
    Then, by definition of $\wgamma_j^k$, 
    \[
        \wgamma_j^{k + 1} = (1 - p_{i_k})\wgamma_j^k \geq  \left(1 - \frac{1 - \alpha}{2(\beta_{i_k}^k - \alpha)n}\right)\wgamma_j^k.
    \]
    Note that since $j \in \Aa^{t + 1}$, it must hold for every $k \in [t]$ that $\beta_j^k \leq \beta_{i_k}^k$, as otherwise agent $j$ would have been picked at step $k$.
    Then, it holds for every $k \in [t]$ that
    \[
        \wgamma_j^{k + 1} \geq \left(1 - \frac{1 - \alpha}{2(\beta_{i_k}^k - \alpha)n}\right)\wgamma_j^k \geq \left(1 - \frac{1 - \alpha}{2(\beta_{j}^k - \alpha)n}\right)\wgamma_j^k.
    \]
    Suppose that event $\Ee_j^t(\eps)$ holds, then for all $k \in [t]$ it holds $\beta_j^k \geq \wgamma_j^k - \eps$, and
    \[
        \wgamma_j^{k + 1}\geq \left(1 - \frac{1 - \alpha}{2(\beta_{j}^k - \alpha)n}\right)\wgamma_j^k \geq \left(1 - \frac{1 - \alpha}{2(\wgamma_j^k - \eps - \alpha)n}\right)\wgamma_j^k.
    \]
    Finally, it holds by induction that $\wgamma_j^k \geq \gamma_n^k(\beta_j^1, \eps)$.
    Then, at step $k + 1$:
    \[
        \wgamma_j^{k + 1}\geq\left(1 - \frac{1 - \alpha}{2(\wgamma_j^k - \eps - \alpha)n}\right)\wgamma_j^k \\
        \geq \left(1 - \frac{1 - \alpha}{2(\gamma_n^k(\beta_j^1, \eps) - \eps - \alpha)n}\right)\gamma_n^k(\beta_j^1, \eps) = \gamma_n^{k + 1}(\beta_j^1, \eps).
    \]
    It follows that under the event $\Ee_j^t(\eps)$, for all $k \in [t]$ it holds $\beta_j^k \geq \gamma_n^k(\beta_j^1, \eps) - \eps$.
\end{proof}

The lower bound of \cref{epsprocess} allows to obtain a concrete upper bound on the values of $p_{i_k}$.
\begin{corollary}\label{pikbound}
    For $k \in [n]$, let $i_k$ denote the agent that was picked at iteration $k$ of \cref{diffalgo}.
    For any $1 \leq t < n$, under the event $\Ee_j^t(\eps)$ it holds  for every $k \in [t]$ that $p_{i_k} \leq \frac{1 - \alpha}{2(\gamma_n^k(\beta_j^1, \eps) - \eps - \alpha)n}$.
\end{corollary}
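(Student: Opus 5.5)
The plan is to chain three facts already available. First, \cref{distpartition} bounds the agent's selection probability in terms of her own remaining value. Second, the greedy rule of \cref{diffalgo} lets us replace that value by $\beta_j^k$. Third, \cref{epsprocess} lower bounds $\beta_j^k$ by the deterministic process. The argument is then a monotonicity check on the bound $p(\beta)=\frac{1-\alpha}{2(\beta-\alpha)n}$, which is decreasing in $\beta$ for $\beta>\alpha$.

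Fix $1\le t<n$ and $k\in[t]$, and assume the event $\Ee_j^t(\eps)$ holds. Recall that $p_{i_k}$ is defined as the smallest number bounding all item probabilities $\mu_{i_k}^k(e)$. By \cref{distpartition} (case $\alpha\le 3/11$), applied with $\beta=\beta_{i_k}^k$, we have $p_{i_k}\le \frac{1-\alpha}{2(\beta_{i_k}^k-\alpha)n}$. This step requires $\beta_{i_k}^k\ge\alpha$, and that will come out of the lower bound below. Since $j\in\Aa^{t+1}\subseteq\Aa^k$ was still active and $i_k$ was chosen as the maximizer of $\beta^k$ over $\Aa^k$, we get $\beta_{i_k}^k\ge\beta_j^k$. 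Because $p(\cdot)$ is decreasing, it follows that $p_{i_k}\le \frac{1-\alpha}{2(\beta_j^k-\alpha)n}$.

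Next we invoke \cref{epsprocess}. Under $\Ee_j^t(\eps)$ it gives $\beta_j^k\ge \gamma_n^k(\beta_j^1,\eps)-\eps$ for every $k\in[t]$. Substituting this into the decreasing bound yields
$p_{i_k}\le \frac{1-\alpha}{2(\gamma_n^k(\beta_j^1,\eps)-\eps-\alpha)n}$, which is exactly the claim.

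The only point needing care, and the expected obstacle, is that every denominator stays positive. We need $\gamma_n^k(\beta_j^1,\eps)-\eps>\alpha$, so that the monotone substitution is legitimate and $\beta_{i_k}^k\ge\beta_j^k>\alpha$ makes \cref{distpartition} applicable. I would handle this by noting that the definition of $\gamma_n^k(\beta,\eps)$ presupposes $\beta>\alpha+\eps$. The corollary is meaningful only while the process stays above $\alpha+\eps$, and that is the regime in which it will be used: the parameters are chosen, via \cref{rhoinfinity} with a perturbed $\eps$, so that $\gamma_n^k(\beta_j^1,\eps)>\alpha+\eps$ for all $k\le n$. I would state this as the standing assumption for the corollary. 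Under it, each inequality above is a direct substitution.
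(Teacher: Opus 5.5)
Your proof is correct and matches the paper's own argument, which is the same chain: the bound of \cref{distpartition} at $\beta_{i_k}^k$, then the greedy choice $\beta_{i_k}^k\ge\beta_j^k$, then $\beta_j^k\ge\wgamma_j^k-\eps\ge\gamma_n^k(\beta_j^1,\eps)-\eps$ from (the proof of) \cref{epsprocess}. The paper leaves implicit the positivity condition $\gamma_n^k(\beta_j^1,\eps)-\eps>\alpha$ that you state as a standing assumption; one small citation fix is that in the regime where the corollary is used, this condition comes from \cref{diffgamma} (for the perturbed process, together with $\beta_j^1\ge 1-\frac{1}{D}$ from \cref{dangerous}), not from \cref{rhoinfinity}, which treats only the unperturbed case.
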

\begin{proof}
    Follows immediately by definition of $p_{i_k}$, \cref{distpartition} and (the proof of) \cref{epsprocess}.
    %
\end{proof}

Before making use of this upper bound, we will describe the main idea of the rest of the proof.
Let $i \in \Aa^t$ be the agent \textbf{picked} at iteration $t \in [n]$.
Let $\rho$ satisfy $2(1 - \rho + \alpha \ln \rho) = 1-\alpha$.
Suppose that, for some value of $n$ and functions $D = D(n)$ and $\eps = \eps(n)$ we have $\gamma_n^n(\beta_{i}^1, \eps) - \eps\geq \alpha + (\rho - \alpha)/2$.
Then, by \cref{epsprocess} under the event $\Ee_{i}^t(\eps)$ it holds $\beta_{i}^t \geq \gamma_n^t(\beta_{i}^1, \eps) - \eps \geq  \gamma_n^n(\beta_{i}^1, \eps) - \eps \geq \alpha + (\rho - \alpha)/2$ --- the desired lower bound in \cref{diffvalue}.
So, to prove \cref{diffvalue} it suffices to show that for $n$ large enough, with the appropriate choice of $c = c(n), D$ and $\eps$, the event $\Ee_i^t(\eps)$ occurs with probability at least $1 - 1/n^3$.
Then, taking union bound over all iterations $t \in [n]$ for a fixed agent $i$, and after that over all agents $i \in [n]$, gives the desired probability of $1 - 1/n$.

The lower bound $\gamma_n^n(\beta_i^1, \eps) - \eps \geq \alpha +(\rho - \alpha)/2
$ for large $n$ will also be crucial in proving that the event $\Ee_i^t(\eps)$ happens with high probability, since this lower bound by \cref{pikbound} implies that under the event $\Ee_i^k(\eps)$ for any $k< t$ it holds $p_{i_k} = O(1/n)$.
As shown in \cref{dangerous}, for every agent $i \in [n]$ we have $\beta_i^1 \geq 1 - \frac{1}{D}$, hence (by induction) for every $k \in [n]$, $\gamma_n^k(\beta_i^1, \eps) \geq \gamma_n^k(1 -  \frac{1}{D}, \eps)$.
So, it suffices to prove that for any $\eps, \delta \xrightarrow{n\to\infty}0$ there exists some large $n_\alpha$ such that for all $n \geq n_\alpha$ it holds $\gamma_n^n(1 - \delta, \eps) - \eps \geq \alpha + (\rho - \alpha)/2$.
We prove this lower bound in the following theorem.
\begin{theorem}\label{diffgamma}
    Let $\alpha \in [1/4, 3/11)$, and let $\rho$ satisfy $2(1 - \rho + \alpha \ln \rho) = 1-\alpha$.
    For any $\eps, \delta \xrightarrow{n\to\infty}0$, there exists $n_\alpha$ such that for all $n \geq n_\alpha$ it holds $\gamma_n^n(1 - \delta, \eps) -\eps \geq \alpha  + (\rho - \alpha)/2$.
\end{theorem}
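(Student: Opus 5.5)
The plan is to reduce \cref{diffgamma} to the already established analysis of the unperturbed process $\gamma_n^k$ (\cref{iters} and \cref{rhoinfinity}), by repeating the epoch-counting argument with the start $1-\delta$ and the shifted step size. The process $\gamma_n^k(\beta,\eps)$ has the same form as $\gamma_n^k$ with $\alpha$ replaced by $\alpha+\eps$ in the denominator. Precisely,
\[
\gamma_n^{k+1}(\beta,\eps)=\Bigl(1-\frac{1-\alpha}{2(\gamma_n^k(\beta,\eps)-(\alpha+\eps))n}\Bigr)\gamma_n^k(\beta,\eps).
\]
So the natural strategy is to treat $\alpha' := \alpha+\eps$ as the ``acceptability threshold'' in the denominator, while keeping the numerator $1-\alpha$. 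A lower bound on $\gamma_n^n(1-\delta,\eps)$ then follows from the same per-epoch estimates, with an error that vanishes as $\eps,\delta\to 0$.

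\textbf{Step 1: redo part 2 of \cref{iters}.} For $\gamma > \alpha+\eps+\delta'$, I will show that the number of steps to descend from $\gamma$ to below $\gamma-\delta'$ is at least
\[
\frac{2\delta'(\gamma-\delta'-\alpha-\eps)n}{\gamma(1-\alpha)}.
\]
The argument is identical to the one for $l_n$. Within the epoch, each step decreases the value by at most $\gamma\cdot\frac{1-\alpha}{2(\gamma-\delta'-\alpha-\eps)n}$, using monotonicity of the step in the current value.

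\textbf{Step 2: account for the starting point.} Starting at $1-\delta$ instead of $1$ only removes the epochs above $1-\delta$. Partition $[\rho',1-\delta]$ into $k$ epochs of width $1/k$ and sum the epoch lower bounds, as in part 3 of \cref{iters}. This gives a sufficient condition analogous to \eqref{mainineq}, whose left-hand side comes from the Riemann sum
\[
\frac{2}{1-\alpha}\int_{\rho'}^{1-\delta}\frac{x-\alpha-\eps}{x}\,dx
=\frac{2\bigl(1-\delta-\rho'+(\alpha+\eps)\ln\frac{\rho'}{1-\delta}\bigr)}{1-\alpha}.
\]
Its error terms are $O(k/n)+O(1/k)$. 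Here the target is $\rho' := \alpha+(\rho-\alpha)/2+\eps$.

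\textbf{Step 3: choose parameters and take the limit.} Since $\rho$ solves $2(1-\rho+\alpha\ln\rho)=1-\alpha$, the function $g(x)=2(1-x+\alpha\ln x)/(1-\alpha)$ is decreasing on $[\alpha,1]$ with $g(\rho)=1$. Hence $g(\alpha+(\rho-\alpha)/2) > 1$ by a fixed constant margin $\kappa>0$ depending only on $\alpha$. As $\eps,\delta\to0$, the integral above converges to $g$ evaluated at $\alpha+(\rho-\alpha)/2$, uniformly. Choose $k=\lfloor\sqrt n\rfloor$ so that the error terms $O(k/n+1/k)$ also tend to $0$. Then for $n\ge n_\alpha$ the total error is below $\kappa$, so the sufficient condition holds and $\gamma_n^n(1-\delta,\eps)\ge\rho'$. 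This is exactly $\gamma_n^n(1-\delta,\eps)-\eps\ge\alpha+(\rho-\alpha)/2$.

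\textbf{Main obstacle.} The hard part is keeping the error terms uniform. The epoch bound degenerates as the epoch approaches $\alpha+\eps$. Because we stop at $\rho'$, which is a constant distance (roughly $(\rho-\alpha)/2$) above $\alpha+\eps$, all denominators stay bounded away from zero. I would therefore verify carefully that the analogues of the correction terms $\frac{2(k-\rho k)}{\rho k(\rho k+1)}$ and $\frac2k$ in \eqref{mainineq} carry over with $\alpha$ replaced by $\alpha+\eps$ and constants independent of $\eps$ and $\delta$. If the direct reuse of the proof of \cref{rhoinfinity} is awkward, I would alternatively use a monotone coupling: show $\gamma_n^k(1-\delta,\eps)\ge\gamma_n^k-\delta-C\eps$ for the process defined with threshold $\alpha+\eps$. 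That route requires a stability estimate for the recursion over $n$ steps, which I consider riskier.
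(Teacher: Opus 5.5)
Your proposal is correct and follows essentially the same route as the paper (\cref{lowergammaiters}, \cref{lowergamma}, \cref{lowergammabetter}). The shared steps are a per-epoch lower bound on the number of steps with threshold $\alpha+\eps$, summation against $\frac{2}{1-\alpha}\int_{\rho'}^{1-\delta}\frac{x-\alpha-\eps}{x}\,dx$ with $O(d/n)+O(1/d)$ error terms and $d=\Theta(\sqrt{n})$, and a constant margin from $g$ being decreasing with $g(\rho)=1$. The differences are cosmetic: the paper uses epochs of width $(1-\delta)/d$ and bounds the $\eps,\delta$ perturbations by explicit terms such as $\frac{6\eps}{1-\alpha}$ and $2\delta\bigl(1-\frac{\eps}{1-\alpha}\bigr)$ rather than by continuity, and your ``$k$ epochs of width $1/k$'' should read ``epochs of width $1/k$ covering $[\rho',1-\delta]$''.
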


The proof is rather technical, and we refer the reader to \cref{sec:smallalpha}.
With \cref{diffgamma} equipped, we are ready to show that the event $\Ee_i^t(\eps)$ for fixed $i$ and $t$ occurs with high probability.
This is done via the law of total probability and repeatedly applying the bound of \cref{martingale}.

\begin{corollary}\label{totalprob}
    Let $\alpha \in [1/4, 3/11)$ and $\rho$ satisfy $2(1 - \rho + \alpha \ln \rho) = 1-\alpha$.
    Let $\eps, c, D > 0$ satisfy  $\eps, c\xrightarrow{n\to\infty}0$, $D\xrightarrow{n\to\infty}\infty$ and $\eps > 2c$.
    Let $n_\alpha$ be such that for all $n \geq n_\alpha$ it holds $\gamma_n^n(1 - \frac{1}{D}, \eps)- \eps \geq \alpha + (\rho-\alpha)/2$.
    For a fixed $n \geq n_\alpha$ and $1 \leq t \leq n$, let $i \in \Aa^t$ be the agent \textbf{picked} at iteration $t$ of \cref{diffalgo} for $n$ agents.
    Then,
    \[
        \P{\Ee_i^t(\eps)} \geq 1 - t\cdot \exp\left[-\frac{\eps^2}{24\eta^2}\right].
    \]
    where
    \[
        \eta^2 = \left(\frac{5(1 - \alpha)}{(\rho - \alpha)n} + c\right) \frac{1 - \alpha}{\rho - \alpha}.
    \]
\end{corollary}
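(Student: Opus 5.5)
We argue by induction on $t$, writing $\Ee_i^t(\eps)$ as the intersection of the one-step events $\{\beta_i^{k}\ge\wgamma_i^{k}-\eps\}$ for $k\le t$. Since $\beta_i^1=\wgamma_i^1$, the case $k=1$ holds trivially. For each $2\le k\le t$ we bound the probability that the $k$-th step fails while all earlier ones succeed. We then sum these $t-1$ failure probabilities (union bound over the law of total probability). The claim follows once each term is at most $\exp[-\eps^2/(24\eta^2)]$.

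For a fixed $k$, we apply \cref{martingale} with horizon $k-1$ to the process $X_r=\wgamma_i^r-\beta_i^r$, where $r\le k-1$ and $i$ remains active. The difficulty is that the hypotheses of \cref{martingale} (bounds $\theta$, $\eta^2$) must hold deterministically, but $p_{i_r}$ is random. We handle this as follows. On the event $\Ee_i^{r}(\eps)$, \cref{pikbound} together with the choice of $n_\alpha$ (so that $\gamma_n^r(\beta_i^1,\eps)-\eps\ge\gamma_n^n(1-\tfrac1D,\eps)-\eps\ge\alpha+(\rho-\alpha)/2$, using $\beta_i^1\ge 1-1/D$ from \cref{dangerous}) gives $p_{i_r}\le \frac{1-\alpha}{(\rho-\alpha)n}=:\bar p$.

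To make the bounds unconditional, we modify the increments. We define a stopped process that freezes at the first $r$ with $\beta_i^r<\wgamma_i^r-\eps$. Stopping preserves the supermartingale property of \cref{subprocess}. The failure event for step $k$ is contained in the event that this stopped process exceeds $\eps$ at time $k$. Up to the stopping time every increment satisfies the pre-stopping bounds, so on the stopped process:
\begin{itemize}
\item by \cref{boundexpec}, $|X_{r+1}-X_r|\le \bar p+c=:\theta$;
\item by \cref{boundvar}, the summed conditional variance is at most $\sum_{r<k}\bar p(5\bar p+c)\le n\bar p(5\bar p+c)=\left(\frac{5(1-\alpha)}{(\rho-\alpha)n}+c\right)\frac{1-\alpha}{\rho-\alpha}=\eta^2$.
\end{itemize}
\cref{martingale} (its proof goes through verbatim for the stopped process, with $\eps>2c$) then gives failure probability at most $\exp\!\left[\frac{\eta^2}{\theta^2}\psi\!\left(\frac{\eps\theta}{2\eta^2}\right)\right]$.

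It remains to convert the Bennett-type bound into a Gaussian one. We use $\psi(x)\le -\frac{x^2}{2(1+x/3)}$, which gives the exponent $-\frac{\eps^2}{8\eta^2(1+\eps\theta/(6\eta^2))}$. We need $\eps\theta/(6\eta^2)\le 2$, which makes the denominator at most $24\eta^2$. Since $\theta=\bar p+c$ and $\eta^2\ge(\bar p+c)\cdot\frac{1-\alpha}{\rho-\alpha}\cdot\min(1,\cdot)$, roughly $\eta^2/\theta\gtrsim \frac{1-\alpha}{\rho-\alpha}$. This is a constant, larger than $\eps/12$ once $\eps\to0$, so the condition holds for large $n$ (enlarging $n_\alpha$ if necessary).

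The main obstacle is the stopping argument in the second step, i.e.\ justifying that conditioning on the earlier good events legitimately supplies the deterministic bounds $\theta,\eta^2$ required by \cref{martingale}. The final constant check is routine.
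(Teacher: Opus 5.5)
Your proposal is correct and is essentially the paper's proof: you split $\Ee_i^t(\eps)$ into one-step events, bound each failure via \cref{martingale} with $\theta=\bar p+c$ and $\eta^2=n\bar p(5\bar p+c)$ obtained from \cref{boundexpec}, \cref{boundvar}, \cref{pikbound} and $\beta_i^1\ge 1-1/D$, and then replace $\psi$ by a quadratic bound. The differences are refinements only. Stopping at the first failure is a rigorous version of the paper's informal ``under $\Ee_i^s(\eps)$'' conditioning, since it makes $\theta,\eta^2$ deterministic while keeping the supermartingale structure. And you do not need to enlarge $n_\alpha$ at the end: $\eta^2\ge\theta\,\frac{1-\alpha}{\rho-\alpha}$, together with $\eps<1$ (which the hypothesis on $\gamma_n^n$ forces), already gives $\eps\theta/(6\eta^2)<1/6$.
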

\begin{proof}
    By definition of the event $\Ee_i^t(\eps)$, for any $1 \leq s < t$
    \[
        \Ee_i^{s + 1}(\eps) = \left\{\forall k \in [s + 1]: \beta_j^{k} \geq \wgamma_j^{k} - \eps\right\} = \left\{\beta_i^{s + 1} \geq \wgamma_i^{s + 1} - \eps\right\} \cap \Ee_i^{s}(\eps).
    \]
    Then, by the law of total expectation,
    \begin{multline*}
        \P{\Ee_i^t(\eps)} = \P{\beta_i^t \geq \wgamma_i^t - \eps \mid \Ee_i^{t - 1}(\eps)}\cdot \P{\Ee_i^{t - 1}(\eps)} \\
        =\ldots = \P{\Ee_i^1(\eps)}\cdot \prod_{s = 1}^{t - 1}\P{\beta_i^{s + 1} \geq \wgamma_i^{s + 1} - \eps \mid \Ee_i^s(\eps)}.
    \end{multline*}
    By definition, $\wgamma_i^1 = \beta_i^1$, so $\P{\Ee_i^1(\eps)} = 1$.
    Next, in order to bound $\P{\beta_i^{s + 1} \geq \wgamma_i^{s + 1} - \eps \mid \Ee_i^s(\eps)}$ for $s \in [t]$, we use \cref{martingale}:
    \[
        \P{\beta_i^{s + 1} \geq \wgamma_i^{s + 1} - \eps \mid \Ee_i^s(\eps)} = 1 - \P{\wgamma_i^{s + 1} - \beta_i^{s + 1}\geq \eps \mid \Ee_i^s(\eps)} \geq 1 - \exp\left[\frac{\eta^2}{\theta^2}\psi\left(\frac{\eps \theta}{2\eta^2}\right)\right],
    \]
    where $\psi(x) := x - (1 + x)\ln(1 + x)$ and $\theta, \eta^2 > 0$ are such that for $X_k := \wgamma_i^{k} - \beta_i^{k}$, under $\Ee_i^s(\eps)$
    \[
       \forall k \in [s]: |X_{k + 1} - X_k| \leq \theta,\qquad\text{and}\qquad \sum_{k = 1}^s\E{(X_{k + 1} - X_k)^2 \mid \Ff_k} \leq \eta^2.
    \]
    By \cref{boundexpec} and \cref{boundvar} we have the following upper bounds:
    \[
        \forall k \in [s]: |X_{k + 1} - X_k| \leq p_{i_k} + c
    \]
    and
    \[
        \sum_{k = 1}^s\E{(X_{k + 1} - X_k)^2 \mid \Ff_k} \leq  \sum_{k = 1}^sp_{i_k}(5p_{i_k} +c).
    \]
    Now, by \cref{pikbound} under the event $\Ee_i^s(\eps)$ it holds
    \[
        \forall k \in [s]: p_{i_k} \leq \frac{1 - \alpha}{2(\gamma_n^k(\beta_i^1, \eps) - \eps - \alpha)n} \leq \frac{1 - \alpha}{2(\gamma_n^n(\beta_i^1, \eps) - \eps - \alpha)n},
    \]
    hence under the event $\Ee_i^s(\eps)$ we have upper bounds
    \[
        \forall k \in [s]: |X_{k + 1} - X_k| \leq \frac{1 - \alpha}{2(\gamma_n^n(\beta_i^1, \eps) - \eps - \alpha)n} + c  
    \]
    and 
    \[
        \sum_{k = 1}^s\E{(X_{k + 1} - X_k)^2 \mid \Ff_k} \leq \left(\frac{5(1 - \alpha)}{2(\gamma_n^n(\beta_i^1, \eps) - \eps - \alpha)n} + c\right) \frac{1 - \alpha}{2(\gamma_n^n(\beta_i^1, \eps) - \eps - \alpha)}.
    \]
    Recall that by \cref{dangerous}, for every agent $i \in [n]$ we have $\beta_i^1 \geq 1 - \frac{1}{D}$.
    It is easy to see (via induction on $k$ as in \cref{epsprocess}) that for every $k \in [n]$ we have a lower bound  $\gamma_n^k(\beta_i^1, \eps) \geq \gamma_n^k(1 - \frac{1}{D}, \eps)$.
    Then, since $n \geq n_\alpha$, for these $n$ and agent $i$ it holds $\gamma_{n}^n(\beta_i^1, \eps)-\eps \geq \alpha + (\rho - \alpha)/2$.
    So under the event $\Ee_i^s(\eps)$ we have
    \[
        \forall k \in [s]: |X_{k + 1} - X_k| \leq \frac{1 - \alpha}{(\rho - \alpha)n} + c,\qquad \sum_{k = 1}^s\E{(X_{k + 1} - X_k)^2 \mid \Ff_k} \leq\left(\frac{5(1 - \alpha)}{(\rho - \alpha)n} + c\right) \frac{1 - \alpha}{\rho - \alpha}.
    \]
    We pick these upper bounds as $\theta$ and $\eta^2$ respectively.
    Observe that in this case
    \[
        \frac{\eps\theta}{2\eta^2} < \frac{(\rho -\alpha)\eps}{1 - \alpha} < 1.
    \]
    Now, since $\ln(1 + x) \geq x- x^2/2$ for any $x\geq 0$, it holds for $\psi(x)$ with $x \in (0, 1)$ that
    \[
        \psi(x) = x - (1 + x)\ln(1 + x) \leq x - (1 + x)(x  - x^2/2) = -x^2/2 + x^3/3 \leq -x^2/6,
    \]
    hence
    \begin{multline*}
        \exp\left[\frac{\eta^2}{\theta^2}\psi\left(\frac{\eps \theta}{2\eta^2}\right)\right] \leq \exp\left[-\frac{\eta^2}{\theta^2}\cdot \frac{\eps^2 \theta^2}{24(\eta^2)^2}\right] = \exp\left[-\frac{\eps^2}{24\eta^2}\right]\\
        \implies \P{\beta_i^{s + 1} \geq \wgamma_i^{s + 1} - \eps \mid \Ee_i^s(\eps)} \geq 1 - \exp\left[-\frac{\eps^2}{24\eta^2}\right].
    \end{multline*}
    Since this bound holds for all $s \in [t - 1]$, we have
    \[
        \P{\Ee_i^t(\eps)} = \prod_{s = 1}^{t - 1}\P{\beta_i^{s + 1} \geq \wgamma_i^{s + 1} - \eps \mid \Ee_i^s(\eps)} \geq \left(1 - \exp\left[-\frac{\eps^2}{24\eta^2}\right]\right)^{t - 1} \geq 1 - t\cdot \exp\left[-\frac{\eps^2}{24\eta^2}\right].
    \]
    This finishes the proof.
\end{proof}

\cref{totalprob} allows us to show that for the choice $C = O(\ln^{-3}n)$ and $\eps = O(\sqrt{C\ln n})$, if agent $i_k$ is picked at iteration $k$ of the algorithm, then $\beta_{i_k}^k \geq \alpha + (\rho - \alpha)/2$.
This will complete the proof of \cref{diffvalue}.
As a result, with high probability all iterations of \cref{diffalgo} will succeed, i.e it will able to construct a distribution $\mu_{i_k}^k$ and allocate an acceptable bundle for $i_k$.

\begin{corollary}\label{finalprob}
    Let $\alpha \in [1/4, 3/11)$ and $\rho$ satisfy $2(1 - \rho + \alpha \ln \rho) = 1-\alpha$.
    For a fixed $n$, let $c = \frac{5(1 - \alpha)}{\rho - \alpha}\ln^{-3}n$, $\eps = 7\sqrt{\frac{4c(1 - \alpha)}{\rho - \alpha}\ln n}$ and $D = D(n)\xrightarrow{n\to\infty}0$ as functions of $n$.
    There exists $n_\alpha$ such that for all $n \geq n_\alpha$, running \cref{diffalgo} on $n$ agents, with probability at least $1 - 1/n$ for every agent $i \in [n]$ and every iteration $t \in [n]$, if $i \in \Aa^t$ then $\beta_i^t \geq \alpha + (\rho - \alpha)/2$.
\end{corollary}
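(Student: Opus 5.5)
The plan is to combine \cref{totalprob} with \cref{epsprocess}, \cref{diffgamma} and a union bound. Note first that $D$ must tend to infinity (the statement's ``$D\to 0$'' is presumably a typo). Concretely, take $D=\sqrt n$, so that $\beta_i^1\ge 1-\frac{1}{D}\to 1$ by \cref{dangerous}, and keep $c$ and $\eps$ as prescribed. Both tend to $0$, since $c=\Theta(\ln^{-3}n)$ and $\eps=\Theta(\sqrt{c\ln n})=\Theta(\ln^{-1}n)$. We also need $\eps>2c$. This holds for large $n$ because $\eps/c=\Theta(\ln^{2}n)\to\infty$. Hence the hypotheses of \cref{diffgamma} hold with $\delta=1/D$, and there is $n_\alpha$ such that $\gamma_n^n(1-\frac1D,\eps)-\eps\ge\alpha+(\rho-\alpha)/2$ for all $n\ge n_\alpha$. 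This is exactly the hypothesis of \cref{totalprob}.

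Next, fix an agent $i$ and the iteration $t$ at which $i$ is picked. By \cref{totalprob}, $\P{\Ee_i^t(\eps)}\ge 1-t\exp[-\eps^2/(24\eta^2)]$, where
\[\eta^2=\Big(\frac{5(1-\alpha)}{(\rho-\alpha)n}+c\Big)\frac{1-\alpha}{\rho-\alpha}.\]
For large $n$ the $1/n$ term is at most $c$, because $c$ is polylogarithmic. Therefore $\eta^2\le 2c\frac{1-\alpha}{\rho-\alpha}$. Plugging in $\eps^2=49\cdot\frac{4c(1-\alpha)}{\rho-\alpha}\ln n$ gives
\[\frac{\eps^2}{24\eta^2}\ge \frac{49\cdot 4}{48}\ln n>4\ln n.\]
Consequently the failure probability is at most $t n^{-4}\le n^{-3}$. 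On $\Ee_i^t(\eps)$, \cref{epsprocess} (applied up to iteration $t$, with $t-1$ in place of $t$ where needed) gives
\[\beta_i^t\ge\gamma_n^t(\beta_i^1,\eps)-\eps.\]
By monotonicity of the process in its starting point and in $k$, proved by the same induction as in \cref{epsprocess}, this is at least $\gamma_n^n(1-\frac1D,\eps)-\eps\ge\alpha+(\rho-\alpha)/2$.

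For agents that are active at iteration $t$ but not picked there, the bound is inherited. Such an agent $j$ is picked at some later iteration $t'\ge t$, and $\beta_j^t\ge\beta_j^{t'}$ because the remaining values are non-increasing by \cref{bounddiff}. Alternatively, one can apply the same argument to $\Ee_j^{t}$ directly, since \cref{martingale} and \cref{epsprocess} are stated for any agent that remains active. A union bound over all $n$ agents and $n$ iterations then gives total failure probability at most $n^2\cdot n^{-3}=1/n$.

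The main obstacle is a subtle point in \cref{totalprob}: it conditions on $\Ee_i^s$ to bound the values $p_{i_k}$, and these must remain bounded for the whole run, including the agents picked earlier. I would handle this by noting that \cref{pikbound} uses only that the picked agent has the largest value $\beta$, which is at least the value of agent $i$. This makes the bound self-contained for each fixed $i$, and the rest reduces to the constant bookkeeping above.
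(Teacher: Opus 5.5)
Your proposal is correct and follows the paper's proof essentially step for step. You apply \cref{diffgamma} with $\delta=1/D$, use \cref{epsprocess} together with \cref{dangerous} to get the lower bound on the good event, bound the failure probability by $n^{-3}$ via \cref{totalprob} with $\eta^2\le 2c\frac{1-\alpha}{\rho-\alpha}$, and finish with a union bound over $n^2$ pairs.

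You are right that $D$ must tend to infinity for \cref{diffgamma} to apply. Your monotonicity argument for agents in $\Aa^t$ that are not picked at iteration $t$ is slightly cleaner than the paper's, which applies \cref{totalprob} (stated only for the picked agent) directly to every $i\in\Aa^t$.
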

\begin{proof}
    By \cref{diffgamma}, for these values of $\eps, D$ there exists $n_\alpha$ such that for all $n \geq n_\alpha$ it holds $\gamma_n^n(1 - \frac{1}{D}, \eps) - \eps \geq \alpha  + (\rho - \alpha)/2$.
    Fix some $n \geq n_\alpha$.
    For every $t \in [n]$ and agent $i \in \Aa^t$, by \cref{epsprocess} under the event $\Ee_i^t(\eps)$ it holds that $\beta_i^t \geq \gamma_n^n(\beta_i^1, \eps) - \eps$.
    As mentioned in the proof of \cref{totalprob}, \cref{dangerous} implies that $\gamma_n^n(\beta_i^1, \eps) \geq \gamma_n^n(1 - \frac{1}{D}, \eps)$, hence for this $n$ it holds under the event $\Ee_i^t(\eps)$ that $\beta_i^t \geq \alpha + (\rho-\alpha)/2$.
    So, it suffices to show that the probability that there exists any agent $i \in [n]$ and any iteration $t \in [n]$, such that $i \in \Aa^t$ but event $\Ee_i^t(\eps)$ does not occur, is small.
    By union bound,
    \[
        \P{\exists i, t \in [n]: i \in \Aa^t\text{ and }\overline{\Ee_i^t(\eps)}} \leq \sum_{i = 1}^n\P{\exists t \in [n]: i \in \Aa^t\text{ and }\overline{\Ee_i^t(\eps)}} \leq \sum_{i = 1}^n\sum_{t = 1}^n\P{\overline{\Ee_i^t(\eps)} \mid i \in \Aa^t}.
    \]
    By the choice of $\eps, c$ we have $\eps > 2C$.
    So, by \cref{totalprob}, for every $i \in \Aa^t$ it holds
    \[
        \P{\overline{\Ee_i^t(\eps)} \mid i \in \Aa^t} \leq t\cdot \exp\left[-\frac{\eps^2}{24\eta^2}\right],\qquad\text{where}\qquad \eta^2 = \left(\frac{5(1 - \alpha)}{(\rho - \alpha)n} + c\right)\frac{1 - \alpha}{\rho - \alpha}.
    \]
    By the choice of $c$, $\eta^2 \leq 2c\frac{1 - \alpha}{\rho - \alpha}$, hence by the choice of $\eps$ we get
    \[
        \P{\overline{\Ee_i^t(\eps)} \mid i \in \Aa^t} \leq t\cdot \exp\left[-\frac{\eps^2\cdot \frac{\rho - \alpha}{2}}{24c(1 - \alpha)}\right]\\
        = t\cdot \exp\left[-\frac{98c(1 - \alpha)\ln n}{24c(1 - \alpha)}\right] < \frac{t}{n^4} \leq \frac{1}{n^3}.
    \]
    We conclude that
    \[
        \P{\exists i, t \in [n]: i \in \Aa^t\text{ and }\overline{\Ee_i^t(\eps)}} \leq\sum_{i = 1}^n\sum_{t = 1}^n\P{\overline{\Ee_i^t(\eps)} \mid i \in \Aa^t} \leq \frac{1}{n}.
    \]
\end{proof}
It is easy to see that \cref{finalprob} directly implies \cref{diffvalue} for $\alpha \leq 3/11$.
We combine all the obtained bounds to prove the following theorem.
\begin{theorem}\label{finalallocation}
    For every $\alpha \in [1/4, 3/11)$, there exists $n_\alpha$, such that for all $n \geq n_\alpha$ the following holds.
    For every fair allocation instance of $m$ indivisible goods to $n$ agents with XOS valuations $v_1,\ldots, v_n$ and APS-partitions $\{(S_i, \lambda_i(S_i))\}_{S_i \in \Ss_i}$ satisfying \cref{wlog} and \cref{smallitems}, there exists an allocation algorithm that with probability $1 - 1/n$ produces an allocation where each agent receives at least $\alpha$-fraction of her APS.
\end{theorem}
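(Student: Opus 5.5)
The proof combines the high-probability bound on remaining values (\cref{finalprob}, which gives \cref{diffvalue} for $\alpha\le 3/11$) with the deterministic stealing bound (\cref{stolenvalue}). The choice of $\alpha$ needs care, because stealing costs a vanishing additive amount. Given a target $\alpha\in[1/4,3/11)$, let $\rho$ solve $2(1-\rho+\alpha\ln\rho)=1-\alpha$; the proof of \cref{311theorem} gives $\rho-\alpha>0.077$. Choose an auxiliary $\alpha'\in(\alpha,3/11)$ close to $\alpha$, and run \cref{diffalgo} where ``acceptable'' means value at least $\alpha'$, with $c$ as in \cref{finalprob} (for $\alpha'$), i.e.\ $c=\Theta(\ln^{-3}n)$, and $D=\Theta(\sqrt n)$. \cref{dangerous} then gives $\beta_i^1\ge 1-\frac{1}{D}\to 1$, which is what \cref{diffgamma} and \cref{totalprob} require. (Since $\alpha'\mapsto\rho(\alpha')$ is continuous, one could instead take $\alpha'=\alpha+\xi_n$ with $\xi_n\to0$; a fixed $\alpha'$ is simpler.) \Cref{smallitems} is applied with respect to $\alpha'$, so large items go to agents directly; here I follow the paper's convention that all partitions already satisfy \cref{smallitems}.

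First, apply \cref{finalprob} with $\alpha'$. For all $n\ge n_{\alpha'}$, with probability at least $1-1/n$, every agent $i_k$ picked at iteration $k$ has $\beta_{i_k}^k\ge\alpha'+(\rho'-\alpha')/2\ge\alpha'$. Consequently \cref{distpartition} applies at every iteration, so the distribution $\mu_{i_k}^k$ exists and every sampled bundle satisfies $v_{i_k}(B_{i_k})\ge\alpha'$.

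Second, apply \cref{stolenvalue}: each final bundle satisfies $v_i(Q_i^n)\ge\alpha'-\frac{D}{c(cn-1)}$. With $D=\Theta(\sqrt n)$ and $c=\Theta(\ln^{-3}n)$, this loss is $O(n^{-1/2}\ln^6 n)\to0$. Take $n_\alpha$ large enough that the loss is below $\alpha'-\alpha$ and that $n\ge n_{\alpha'}$. Then every agent gets $v_i(Q_i^n)\ge\alpha$, which is an $\alpha$-fraction of her APS under the normalization of \cref{wlog}.

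Finally, the $Q_i^n$ are disjoint subsets of $\Mm$, since the update loop in \cref{diffalgo} removes each new $B_{i_k}$ from previously assigned bundles. Leftover items can be added to any agent by monotonicity, which gives a genuine allocation.

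The main obstacle is the compatibility of parameters. \cref{finalprob} requires $c\to0$ and $\eps>2c$. \cref{stolenvalue} requires $D/(c^2 n)\to0$. \cref{dangerous} needs $D\to\infty$, even though \cref{finalprob} carelessly writes $D\to0$, which should read $D\to\infty$. The choice $D=\sqrt n$, $c=\Theta(\ln^{-3}n)$ satisfies all three, but I would check explicitly that \cref{diffgamma} holds uniformly with $\delta=1/D$ and $\eps=\Theta(\ln^{-1}n)$. I would also check that the union bound in \cref{finalprob} covers the event that agents are picked in order of largest $\beta$.
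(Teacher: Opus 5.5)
Your proposal is correct and follows the paper's proof. Like the paper, you run \cref{diffalgo} with an inflated threshold $\alpha'>\alpha$, invoke \cref{finalprob} so that every picked agent has $\beta_{i_k}^k\ge\alpha'$ with probability $1-1/n$, and absorb the $O(\ln^6 n/\sqrt n)$ stealing loss from \cref{stolenvalue} into the margin $\alpha'-\alpha$. Your choice $\alpha'\in(\alpha,3/11)$ with $\rho'=\rho(\alpha')$ is slightly more careful than the paper's $\alpha'=\alpha+(\rho-\alpha)/2$, which for $\alpha=1/4$ gives $\alpha'\approx 0.32>3/11$ and pairs $\alpha'$ with $\rho=\rho(\alpha)$ rather than $\rho(\alpha')$.
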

\begin{proof}
    Let $\rho$ satisfy $2(1 - \rho + \alpha \ln \rho) = 1-\alpha$, and let $\alpha' := \alpha + (\rho - \alpha)/2$.
    For some value $n$, consider \cref{diffalgo} on $(n, m, \Mm, v_1, \ldots, v_n)$ with this value $\alpha'$, note that as $\alpha' > \alpha$, there are no items of value over $\alpha'$ in any of APS-partitions, thus \cref{partition} is applicable.
    We pick values $c = \frac{5(1 - \alpha')}{\rho - \alpha'}\ln^{-3}n$ and $D = (\frac{5(1 - \alpha')}{\rho - \alpha'})^2\sqrt{n}$.
    By \cref{finalprob}, for $\eps = 7\sqrt{\frac{4c(1 - \alpha')}{\rho - \alpha'}\ln n}$ there exists $n_{\alpha'}$ such that if $n \geq n_{\alpha'}$, it holds with probability at least $1 - 1/n$ that for every agent $i \in [n]$ and every iteration $t \in [n]$, if $i \in \Aa^t$ then $\beta_i^t \geq \alpha' + (\rho - \alpha')/2$.
    
    As a result, if $n \geq n_{\alpha'}$, with probability at least $1 - 1/n$ every iteration of \cref{diffalgo} on this instance will succeed.
    It follows that if the agent $i$ was picked at iteration $k \in [n]$ and initially assigned a bundle $B_{i} \sim \mu_{i}^k$, then $v_{i}(B_{i}) \geq \alpha'$.
    Hence, by \cref{stolenvalue}
    \[
        v_{i}(Q_{i}^n) \geq v_i(B_i) - \frac{D}{c(cn - 1)} \geq \alpha' - \frac{\ln^6 n}{(\sqrt{n} - \frac{2c}{5(1 - \alpha)}\cdot\frac{\ln^3 n}{\sqrt{n}})} \geq \alpha' -\frac{2\ln^6n}{\sqrt{n}}.
    \]
    Since $\alpha' = \alpha + (\rho - \alpha)/2$, there exists $n_\alpha \geq n_{\alpha'}$ such that for all $n \geq n_\alpha$,
    \[
        \alpha' -\frac{2\ln^6n}{\sqrt{n}} = \alpha + \frac{\rho - \alpha}{2} - \frac{2\ln^6n}{\sqrt{n}} \geq \alpha.
    \]
    But then for every agent $i \in [n]$ we have $v_i(Q_i^n) \geq \alpha$.
\end{proof}

\subsection{$\alpha \geq 3/11$}

The proofs for $\alpha \geq 3/11$ are essentially the same as for the case $\alpha \leq 3/11$, and differ only in minor technicalities.
When $\alpha \geq 3/11$, upper bounds on probabilities $p_{i_k}$ arising from \cref{distpartition} vary depending on whether the agent $i_k$'s current value, $\beta_{i_k}^k$, is above or below $3\alpha$.
Because of this, the definition of the deterministic process $\gamma_n^k$ will also be adaptive.

\begin{definition}
    For $n$ agents, $3/11 \leq \alpha < 1/3$, $\eps > 0$ and $1 \geq \beta > \alpha + \eps$, let $\gamma_n^{k}(\beta, \eps)$ be the following process: $\gamma_n^1(\beta, \eps) = \beta$ and for $k \geq 1$,
    \begin{itemize}
        \item if $\gamma_n^k(\beta, \eps) - \eps \geq 3\alpha$, then
        \[
            \gamma_n^{k + 1}(\beta, \eps) = \left(1 - \frac{2(1 - 3\alpha)}{(\gamma_n^k(\beta, \eps) - \eps - 12\alpha + 3)n}\right)\gamma_n^k(\beta, \eps);
        \]
        \item if $\gamma_n^k(\beta, \eps) - \eps < 3\alpha$, then
        \[
            \gamma_n^{k + 1}(\beta, \eps) = \left(1 - \frac{4\alpha}{3(\gamma_n^k(\beta, \eps) - \eps - \alpha)n}\right)\gamma_n^k(\beta, \eps).
        \]
    \end{itemize}
\end{definition}

With this definition of $\gamma_n^k$, we show that, as in \cref{epsprocess} for $\alpha \leq 3/11$, when $\alpha \geq 3/11$ we also have a lower bound on $\beta_j^k$ under the event $\Ee_j^t$.
\begin{claim}\label{epsprocess2}
    For $1 \leq t < n$, let $j \in \Aa^{t + 1}$ be some agent that \textbf{remains active} after iteration $t$ of \cref{diffalgo}.
    For $\eps >0$, let $\Ee_j^t(\eps)$ be the following event:
    \[
        \Ee_j^t(\eps) = \left\{\forall k \in [t]: \beta_j^k \geq \wgamma_j^k - \eps\right\}.
    \]
    Under the event $\Ee_j^t(\eps)$, it holds for all $k \in [t]$ that $\beta_j^k \geq \gamma_n^k(\beta_j^1, \eps) - \eps$.
\end{claim}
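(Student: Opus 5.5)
The plan is to follow the proof of \cref{epsprocess} essentially verbatim. The only new ingredient is a monotonicity property of the two-regime step map. We show by induction on $k \in [t]$ that $\wgamma_j^k \geq \gamma_n^k(\beta_j^1, \eps)$. Combined with the event $\Ee_j^t(\eps)$, which gives $\beta_j^k \geq \wgamma_j^k - \eps$, this yields the claim. The base case $k=1$ holds by definition, since $\wgamma_j^1 = \beta_j^1 = \gamma_n^1(\beta_j^1,\eps)$.

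For the inductive step, define the step bound $p(\beta)$ as in \cref{distpartition} for $\alpha \geq 3/11$. That is, $p(\beta) = \frac{2(1-3\alpha)}{(\beta - 12\alpha + 3)n}$ if $\beta \geq 3\alpha$, and $p(\beta) = \frac{4\alpha}{3(\beta-\alpha)n}$ if $\beta < 3\alpha$. First we verify that $p$ is non-increasing on $(\alpha, 1]$. Each branch is a decreasing function of $\beta$. At the junction $\beta = 3\alpha$ the left branch tends to $\frac{2}{3n}$ and the right branch equals $\frac{2(1-3\alpha)}{(3-9\alpha)n} = \frac{2}{3n}$, so $p$ is continuous there. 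Since $p_{i_k}$ is the smallest valid upper bound, \cref{distpartition} gives $p_{i_k} \leq p(\beta_{i_k}^k)$. Because $j$ stays active after iteration $k$, the selection rule of \cref{diffalgo} gives $\beta_{i_k}^k \geq \beta_j^k$. Under $\Ee_j^t(\eps)$ we have $\beta_j^k \geq \wgamma_j^k - \eps$, and by induction $\wgamma_j^k \geq \gamma_n^k(\beta_j^1,\eps)$. Monotonicity of $p$ therefore yields
\[
p_{i_k} \leq p\bigl(\gamma_n^k(\beta_j^1,\eps) - \eps\bigr).
\]
The right-hand side is exactly the factor used in the definition of $\gamma_n^{k+1}(\beta_j^1,\eps)$, where the branch is selected by comparing $\gamma_n^k - \eps$ with $3\alpha$.

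It remains to conclude that $\wgamma_j^{k+1} = (1-p_{i_k})\wgamma_j^k \geq \bigl(1 - p(\gamma_n^k-\eps)\bigr)\gamma_n^k$. This needs two facts. The first is $1 - p(\gamma_n^k - \eps) \geq 0$, which holds because the process is only run while $\gamma_n^k - \eps$ stays bounded away from $\alpha$; the regime of interest keeps it above $\alpha + (\rho-\alpha)/2$, and there $p = O(1/n)$. The second is that $\wgamma_j^k \geq \gamma_n^k$, by the induction hypothesis. Multiplying these gives the desired inequality, which completes the induction.

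The main obstacle is the branch switching. We must check that comparing $\gamma_n^k - \eps$ against $3\alpha$ (rather than $\beta_{i_k}^k$) never picks a worse bound. This is exactly the monotonicity and continuity of $p$ at $3\alpha$, so the proof reduces to the short computation above. One subtlety is that the denominator $\beta - 12\alpha + 3$ is positive for $\beta \geq 3\alpha$: indeed $3 - 9\alpha > 0$ because $\alpha < 1/3$. This is needed for the decreasing-branch argument.
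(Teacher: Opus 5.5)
Your proof is correct and is essentially the paper's argument: you prove $\wgamma_j^k \geq \gamma_n^k(\beta_j^1,\eps)$ by induction on $k$, using the chain $\beta_{i_k}^k \geq \beta_j^k \geq \wgamma_j^k - \eps \geq \gamma_n^k(\beta_j^1,\eps) - \eps$. The only difference is cosmetic: you handle the regime switching through monotonicity of the piecewise bound $p$ (each branch decreasing, continuous at $3\alpha$), while the paper uses $\frac{2(1-3\alpha)}{x-12\alpha+3} \leq \frac{4\alpha}{3(x-\alpha)}$ for $x \geq 3\alpha$ and checks the cases; your remark that $1 - p(\gamma_n^k(\beta_j^1,\eps) - \eps) \geq 0$ is needed makes explicit an assumption the paper leaves implicit.
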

\begin{proof}    
    Once again, it suffices to show that for all $k \in [t]$, $\wgamma_j^k \geq \gamma_n^k(\beta_j^1, \eps)$.
    We prove this via induction on $k \in [t]$, the equality $\wgamma_j^1 = \beta_j^1 = \gamma_n^1(\beta_j^1, \eps)$ holds by definition.
    Suppose now that for all $1 \leq k'\leq k$ it holds $\wgamma_j^{k'} \geq \gamma_n^{k'}(\beta_j^1, \eps)$, we show that $\wgamma_j^{k+ 1} \geq \gamma_n^{k+ 1}(\beta_j^1, \eps)$.
    
    Observe that when $\alpha \geq 3/11$, for any $x \geq 3\alpha$ it holds
    \begin{equation}
        \frac{2(1 - 3\alpha)}{(x  - 12\alpha + 3)} \leq \frac{4\alpha}{3(x - \alpha)} \implies \left(1 - \frac{2(1 - 3\alpha)}{(x  - 12\alpha + 3)}\right) \geq \left(1 - \frac{4\alpha}{3(x - \alpha)}\right).\label{3alineq}
    \end{equation}
    Therefore, using the upper bound on $p_{i_k}$ from \cref{distpartition}, 
    \begin{gather*}
        \text{if }\;\beta_{i_k}^k\geq 3\alpha\; \text{ then } \; \wgamma_j^{k + 1} \geq \left(1 - \frac{2(1 - 3\alpha)}{(\beta_{i_k}^k - 12\alpha + 3)n}\right)\wgamma_j^k \geq  \left(1 - \frac{4\alpha}{3(\beta_{i_k}^k - \alpha)n}\right)\wgamma_j^k;\\
        \text{if }\;\beta_{i_k}^k< 3\alpha\; \text{ then } \; \wgamma_j^{k + 1} \geq  \left(1 - \frac{4\alpha}{3(\beta_{i_k}^k - \alpha)n}\right)\wgamma_j^k.
    \end{gather*}
    Similar to $\alpha \leq 3/11$, when $\alpha \geq 3/11$ under the event $\Ee_j^t(\eps)$ it also holds that $\beta_{i_k}^k \geq\beta_j^k \geq \wgamma_j^k - \eps$ for all $k \in [t]$.
    Now each of the values $\beta_{i_k}^k, \beta_j^k, \wgamma_j^k - \eps$ can be above or below $3\alpha$, and we will have to consider different cases.
    Note that if one value is above $3\alpha$, by \cref{3alineq} we lower bound for ``both cases'', as demonstrated above for $\beta_{i_k}^k$.
    Then, depending on what is the value of the next variable in $\beta_{i_k}^k \geq\beta_j^k \geq \wgamma_j^k - \eps$, we can pick the appropriate lower bound.
    
    Repeating the same chain of inequalities: since $\beta_{i_k}^k \geq \beta_j^k$,
    \begin{gather*}
        \text{if }\;\beta_{j}^k\geq 3\alpha\; \text{ then } \; \wgamma_j^{k + 1} \geq \left(1 - \frac{2(1 - 3\alpha)}{(\beta_{j}^k - 12\alpha + 3)n}\right)\wgamma_j^k \geq  \left(1 - \frac{4\alpha}{3(\beta_{j}^k - \alpha)n}\right)\wgamma_j^k;\\
        \text{if }\;\beta_{j}^k< 3\alpha\; \text{ then } \; \wgamma_j^{k + 1} \geq  \left(1 - \frac{4\alpha}{3(\beta_{j}^k - \alpha)n}\right)\wgamma_j^k.
    \end{gather*}
    Since $\beta_j^k \geq \wgamma_j^k - \eps$,
    \begin{gather*}
        \text{if }\;\wgamma_j^k - \eps\geq 3\alpha\; \text{ then } \; \wgamma_j^{k + 1} \geq \left(1 - \frac{2(1 - 3\alpha)}{(\wgamma_j^k - \eps - 12\alpha + 3)n}\right)\wgamma_j^k \geq  \left(1 - \frac{4\alpha}{3(\wgamma_j^k - \eps - \alpha)n}\right)\wgamma_j^k;\\
        \text{if }\;\wgamma_j^k - \eps< 3\alpha\; \text{ then } \; \wgamma_j^{k + 1} \geq  \left(1 - \frac{4\alpha}{3(\wgamma_j^k - \eps - \alpha)n}\right)\wgamma_j^k.
    \end{gather*}
    Finally, by induction $\wgamma_j^k \geq \gamma_n^k(\beta_j^1, \eps)$.
    Hence,    
    \begin{gather*}
        \text{if }\;\gamma_n^k(\beta_j^1, \eps) - \eps\geq 3\alpha\; \text{ then } \;  \wgamma_j^{k + 1} \geq \left(1 - \frac{2(1 - 3\alpha)}{(\gamma_n^k(\beta_j^1, \eps) - \eps - 12\alpha + 3)n}\right)\gamma_n^k(\beta_j^1, \eps) = \gamma_n^{k + 1}(\beta_j^1, \eps);\\
        \text{if }\;\gamma_n^k(\beta_j^1, \eps) - \eps < 3\alpha \; \text{ then } \;  \wgamma_j^{k + 1}  \geq \left(1 - \frac{4\alpha}{3(\gamma_n^k(\beta_j^1, \eps) - \eps - \alpha)n}\right)\gamma_n^k(\beta_j^1, \eps) = \gamma_n^{k + 1}(\beta_j^1, \eps).
    \end{gather*}
    It follows that under $\Ee_j^t(\eps)$, for all $k \in [t]$ it holds $\beta_j^k \geq \gamma_n^{k }(\beta_j^1, \eps) - \eps$.
\end{proof}

\begin{corollary}\label{pikbound2}
    For $k \in [n]$, let $i_k$ denote the agent that was picked at iteration $k$ of \cref{diffalgo}.
    For any $1 \leq t < n$, under the event $\Ee_j^t(\eps)$ it holds  for every $k \in [t]$ that $p_{i_k} \leq \leq \frac{4\alpha}{3(\gamma_n^k(\beta_j^1, \eps) - \eps - \alpha)n}$.
\end{corollary}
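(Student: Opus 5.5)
By definition, $p_{i_k}$ is the smallest number bounding every $\mu_{i_k}^k(e)$. So it suffices to bound $p_{i_k}$ by the value $p(\beta)$ of \cref{distpartition}, with $\beta = \gamma_n^k(\beta_j^1,\eps)-\eps$, and then check that in the range $\alpha\ge 3/11$ the case-3 expression $\frac{4\alpha}{3(\beta-\alpha)n}$ dominates whichever case applies. I would argue exactly as in \cref{pikbound} for the case $\alpha\le 3/11$, reusing the chain of inequalities from the proof of \cref{epsprocess2}.

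First, fix $k\in[t]$ and work under $\Ee_j^t(\eps)$. Agent $i_k$ was chosen with the largest remaining value, and $j$ is still active after iteration $t\ge k$, so $\beta_{i_k}^k\ge\beta_j^k$. The event gives $\beta_j^k\ge \wgamma_j^k-\eps$, and the induction inside the proof of \cref{epsprocess2} gives $\wgamma_j^k\ge\gamma_n^k(\beta_j^1,\eps)$. Combining these,
\[
\beta_{i_k}^k\ \ge\ x:=\gamma_n^k(\beta_j^1,\eps)-\eps .
\]
We should also have $x>\alpha$ so that the bound is meaningful. This is implicit in the statement, since the process is defined for $\beta>\alpha+\eps$, and it is maintained in the regime where we use it: the subsequent argument gives $\gamma_n^n(\cdot)-\eps\ge\alpha+(\rho-\alpha)/2$. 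I would note this as a standing assumption, because otherwise the right-hand side is negative or undefined.

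Second, apply \cref{distpartition} with the lower bound $\beta:=x\le\beta_{i_k}^k$. The function $p(\beta)$ is non-increasing in $\beta$ in each case, so \cref{distpartition} applies to any $\beta\le\beta_{i_k}^k$. There are two cases.
\begin{itemize}
\item If $x<3\alpha$, case 3 of \cref{distpartition} gives $p_{i_k}\le\frac{4\alpha}{3(x-\alpha)n}$ directly.
\item If $x\ge 3\alpha$, case 2 gives $p_{i_k}\le\frac{2(1-3\alpha)}{(x-12\alpha+3)n}$. Inequality \eqref{3alineq}, established in the proof of \cref{epsprocess2} for $\alpha\ge3/11$ and $x\ge3\alpha$, shows this is at most $\frac{4\alpha}{3(x-\alpha)n}$.
\end{itemize}
Either way we get the claimed bound.

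The only delicate point is the legitimacy of \eqref{3alineq}. Cross-multiplying, with both denominators positive since $x-12\alpha+3\ge 3-9\alpha>0$, it reduces to
\[
6(1-3\alpha)(x-\alpha)\le 4\alpha(x-12\alpha+3).
\]
This is linear in $x$, with slope $6-26\alpha$. At $x=3\alpha$ it reads $12\alpha(1-3\alpha)\le 4\alpha(3-9\alpha)$, which holds with equality. The slope $6-26\alpha$ is $\le 0$ exactly when $\alpha\ge 3/13$, and $3/11>3/13$, so the inequality persists for all $x\ge3\alpha$.
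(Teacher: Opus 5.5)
Your argument is correct and is the paper's own: the chain $\beta_{i_k}^k\ge\beta_j^k\ge\wgamma_j^k-\eps\ge\gamma_n^k(\beta_j^1,\eps)-\eps$, then \cref{distpartition} applied with this lower bound, then \eqref{3alineq} to show the case-2 bound is at most the case-3 bound. There is one arithmetic slip in your check of \eqref{3alineq}: the coefficient of $x$ in $6(1-3\alpha)(x-\alpha)-4\alpha(x-12\alpha+3)$ is $6-22\alpha$, not $6-26\alpha$, so it is nonpositive exactly when $\alpha\ge 3/11$, which is why the paper uses the threshold $3/11$ (your stated threshold $3/13$ is wrong, since the inequality fails for large $x$ when $\alpha\in[3/13,3/11)$); the conclusion for $\alpha\ge 3/11$ is unaffected.
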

\begin{proof}
    Follows immediately by definition of $p_{i_k}$, \cref{distpartition} and (the proof of) \cref{epsprocess2} and the fact that when $\alpha \geq 3/11$, for any $x \geq 3\alpha$ it holds $\frac{2(1 - 3\alpha)}{(x  - 12\alpha + 3)} \leq \frac{4\alpha}{3(x - \alpha)}$.
    %
\end{proof}

As with $\alpha \leq 3/11$, when $\alpha \geq 3/11$ we also need a lower bound of the form $\gamma_n^t(\beta_i^1, \eps) - \eps \geq \alpha + c$ for every agent $i \in \Aa^t$.
The following theorem is the extension of \cref{diffgamma} to values $\alpha \geq 3/11$.
\begin{theorem}\label{diffgamma2}
    Let $\rho$ satisfy $2(12\rho - 3)\ln(3\rho) = (1 - 3\rho)(3\ln 3 - 4)$, and assume $\alpha \in [3/11, \rho)$.
    For any $\eps, \delta \xrightarrow{n\to\infty}0$, there exists $n_\alpha$ such that for all $n \geq n_\alpha$ it holds $\gamma_n^n(1 -\delta, \eps) - \eps \geq \alpha + (\rho - \alpha)/2$.
\end{theorem}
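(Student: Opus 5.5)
We prove \cref{diffgamma2} by comparing the perturbed process $\gamma_n^k(1-\delta,\eps)$ with a continuous-time limit and reusing the analysis behind \cref{bigrhoinf}. The plan mirrors the proof of \cref{diffgamma} for $\alpha\le 3/11$.

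First, rescale time by $x = k/n$. The recursion for $\gamma_n^k(\beta,\eps)$ is $\gamma^{k+1}=\gamma^k-\frac{1}{n}g_\eps(\gamma^k)\gamma^k$, where
\[
g_\eps(y)=\frac{2(1-3\alpha)}{y-\eps-12\alpha+3}\ \text{ for } y-\eps\ge 3\alpha, \qquad g_\eps(y)=\frac{4\alpha}{3(y-\eps-\alpha)}\ \text{ for } y-\eps<3\alpha .
\]
This is an Euler scheme for the ODE $y'=-g_\eps(y)y$ started at $1-\delta$. Following the epoch method of \cref{iters}, we count the number of steps the process spends in each interval $[r/k',(r+1)/k']$ of values. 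In that interval the per-step decrease is at most $\frac1n g_\eps(\gamma)\gamma$, evaluated at the worst endpoint. Using monotonicity of $g_\eps$ in $y$ (both branches are decreasing in $y$), we get a lower bound on the number of steps spent there, of the form $n\cdot\frac{1/k'}{\max g_\eps(y)y}$ minus one.

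Summing over epochs from $\rho' := \alpha+(\rho-\alpha)/2+\eps$ up to $1-\delta$ gives a lower bound of $n\,T_\eps(\rho',1-\delta)-O(k')-O(nk'^{-1})$ steps. Here
\[
T_\eps(a,b)=\int_a^b \frac{dy}{g_\eps(y)\,y}
\]
is computed explicitly, split at $y=3\alpha+\eps$: for $y\ge 3\alpha$ we have $\frac{y-12\alpha+3}{2(1-3\alpha)y}$, and below we have $\frac{3(y-\alpha)}{4\alpha y}$. For $\eps=\delta=0$ and lower limit $\rho$ itself, $T_0(\rho,1)$ is exactly the left side of the equation displayed in the proof of the $\frac{11}{40}$ theorem. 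By the definition of $\rho$ via $2(12\rho-3)\ln(3\rho)=(1-3\rho)(3\ln3-4)$ together with \cref{bigrhoinf}, the integral from the limiting threshold up to $1$ equals $1$. Since $\alpha+(\rho-\alpha)/2$ lies strictly above the limiting threshold, $T_0(\alpha+(\rho-\alpha)/2,1)\ge 1+\kappa$ for some constant $\kappa>0$ depending only on $\alpha$.

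Next we handle the perturbations. $T_\eps(a,b)$ is continuous in $\eps$, $a$ and $b$ near $(0,\alpha+(\rho-\alpha)/2,1)$, because the integrand stays bounded away from singularities: $y-\eps-\alpha\ge(\rho-\alpha)/2$ on the range. The piecewise definition also matches continuously at the switch point, as in \cref{epsprocess2}. So for $\eps,\delta$ small, $T_\eps(\rho',1-\delta)\ge 1+\kappa/2$. Choosing $k'$ a large constant and then $n$ large makes the discretization error $O(k')+O(n/k')$ at most $n\kappa/4$. Hence the process needs more than $n$ steps to drop below $\rho'$, so $\gamma_n^n(1-\delta,\eps)-\eps\ge \alpha+(\rho-\alpha)/2$. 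Because $\eps,\delta\to0$, a threshold $n_\alpha$ exists beyond which they fall into the admissible neighborhood, which gives the theorem.

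The main obstacle is making the epoch bookkeeping uniform near the switch point $y=3\alpha+\eps$, where the step rule changes. I would handle it by always using the larger of the two rates, which is the $\frac{4\alpha}{3(y-\eps-\alpha)}$ branch by \eqref{3alineq}, inside the single epoch that straddles the switch. This costs only $O(1/k')$ in $T$. Beyond that, the work is ensuring the constants in the continuity argument depend only on $\alpha$, which can be done by reusing the computations of \cref{lowergamma2better} with $\alpha$ replaced by $\alpha+\eps$ in the denominators.
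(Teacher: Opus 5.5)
Your overall route is the paper's: count iterations per value-epoch, compare the total with $\int dy/(g_\eps(y)y)$, whose closed form is the left side of \eqref{bigalphaineqbetter}, and absorb the $\eps$, $\delta$ and discretization errors. Using a constant number of additive epochs plus continuity, instead of the paper's $d=\Theta(\sqrt n)$ epochs with explicit error terms (\cref{lowergamma2better}), is fine. The switch at $3\alpha+\eps$ is also no obstacle, since both branches of $g_\eps$ equal $2/3$ there.

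\textbf{The gap} is the one step that brings in the constant $\rho$. Write $F_\alpha(y):=T_0(y,1)$ for the given $\alpha$ (for $\alpha<y\le 3\alpha$); you need $F_\alpha\bigl(\alpha+\tfrac{\rho-\alpha}{2}\bigr)>1$.

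The defining equation of $\rho$ is equivalent to $F_\rho(\rho)=1$, i.e.\ it concerns the integral with $\rho$ in the role of $\alpha$. For $\alpha<\rho$, the point $y^*(\alpha)$ with $F_\alpha(y^*)=1$ is a different number. Saying ``the integral from the limiting threshold equals $1$'' is just its definition, and says nothing about where it sits relative to $\rho$.

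Worse, since $F_\alpha'(y)=\frac{3(\alpha/y-1)}{4\alpha}<0$, a point strictly \emph{above} $y^*(\alpha)$ has $F_\alpha<1$, the opposite of your conclusion. What you need is $\alpha+\frac{\rho-\alpha}{2}<y^*(\alpha)$.

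Citing \cref{bigrhoinf} does not fill this. In the paper, that theorem is obtained as the $\eps=\delta=0$ case of this very argument, so invoking it is circular. Even granted, converting a lower bound on $\gamma_n^n$ into a lower bound on the integral would need the reverse (upper) comparison between the process and the integral, which your argument does not provide.

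\textbf{The repair} is a short calculus check. By monotonicity in $y$, it suffices to show $F_\alpha(\rho)\ge 1=F_\rho(\rho)$, i.e.\ that $\alpha\mapsto F_\alpha(\rho)$ is decreasing on $[3/11,\rho]$. One has
\[
\partial_\alpha F_\alpha(\rho)=\frac{3(\rho-\alpha)}{4\alpha^2}+h'(\alpha),\qquad h(\alpha)=\frac{(12\alpha-3)\ln(3\alpha)}{2(1-3\alpha)}.
\]
On this interval the first term is below $0.05$, while a direct computation gives $h'(\alpha)\approx -6.3$. Hence
\[
\kappa:=F_\alpha\bigl(\alpha+\tfrac{\rho-\alpha}{2}\bigr)-1>F_\alpha(\rho)-1\ge 0,
\]
and with this inserted the rest of your argument goes through. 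The paper's own proof also passes quickly over this point: it tacitly identifies the root of \eqref{rhoalphaeq} for the given $\alpha$ with the fixed constant $\rho$.
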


The proof is essentially the same as the proof of \cref{diffgamma}, just a bit more technical.
We refer the reader to \cref{sec:bigalpha}.
Now, we can show that for fixed $i, t$, the event $\Ee_i^t(\eps)$ occurs with high probability also for $\alpha > 3/11$.
The following statements are extensions of \cref{totalprob} and \cref{finalprob} to $\alpha > 3/11$, and are proven analogously.

\begin{corollary}\label{totalprob2}
    Let $\rho$ satisfy $2(12\rho - 3)\ln(3\rho) = (1 -3\rho)(3\ln 3 - 4)$, and assume $\alpha \in [3/11, \rho)$.
    Let $\eps, c, D > 0$ satisfy  $\eps, c\xrightarrow{n\to\infty}0$, $D\xrightarrow{n\to\infty}\infty$ and $\eps > 2c$.
    Let $n_\alpha$ be such that for all $n \geq n_\alpha$ it holds $\gamma_n^n(1 - \frac{1}{D}, \eps)- \eps \geq \alpha + (\rho - \alpha)/2$.
    For a fixed $n \geq n_\alpha$ and $1 \leq t \leq n$, let $i \in \Aa^t$ be the agent \textbf{picked} at iteration $t$ of \cref{diffalgo}.
    Then,
    \[
        \P{\Ee_i^t(\eps)} \geq 1 - t\cdot \exp\left[-\frac{\eps^2}{24\eta^2}\right].
    \]
    where
    \[
        \eta^2 = \left(\frac{40\alpha}{3(\rho - \alpha)n} + c\right) \frac{8\alpha}{3(\rho -\alpha)}.
    \]
\end{corollary}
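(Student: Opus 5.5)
The plan is to mirror the proof of \cref{totalprob} step by step, replacing only the bound on $p_{i_k}$. First I would write $\P{\Ee_i^t(\eps)}$ as a product of conditional probabilities via the chain rule, exactly as before:
\[
\P{\Ee_i^t(\eps)} = \prod_{s=1}^{t-1}\P{\beta_i^{s+1}\geq \wgamma_i^{s+1}-\eps \mid \Ee_i^s(\eps)},
\]
using $\P{\Ee_i^1(\eps)}=1$ since $\wgamma_i^1=\beta_i^1$. Each factor is then lower bounded through \cref{martingale}, which needs a bound $\theta$ on the increments $|X_{k+1}-X_k|$ and a bound $\eta^2$ on the summed conditional second moments, with $X_k=\wgamma_i^k-\beta_i^k$.

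For these I would invoke \cref{boundexpec} and \cref{boundvar}. They give $|X_{k+1}-X_k|\le p_{i_k}+c$ and $\sum_{k\le s}\E{(X_{k+1}-X_k)^2\mid\Ff_k}\le\sum_{k\le s}p_{i_k}(5p_{i_k}+c)$. Both are valid for any $\alpha$, since they use only \cref{bounddiff} and \cref{diffprob}. The new ingredient is \cref{pikbound2}: under $\Ee_i^s(\eps)$, every $k\le s$ satisfies
\[
p_{i_k}\le \frac{4\alpha}{3(\gamma_n^k(\beta_i^1,\eps)-\eps-\alpha)n}.
\]
Monotonicity of the process together with $\beta_i^1\ge 1-\frac1D$ from \cref{dangerous} gives $\gamma_n^k(\beta_i^1,\eps)\ge\gamma_n^n(1-\frac1D,\eps)$. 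The hypothesis on $n_\alpha$ then yields $\gamma_n^k(\beta_i^1,\eps)-\eps-\alpha\ge(\rho-\alpha)/2$, so $p_{i_k}\le\frac{8\alpha}{3(\rho-\alpha)n}$. Summing over at most $n$ iterations gives
\[
\theta=\frac{8\alpha}{3(\rho-\alpha)n}+c, \qquad \eta^2=\Bigl(\frac{40\alpha}{3(\rho-\alpha)n}+c\Bigr)\frac{8\alpha}{3(\rho-\alpha)},
\]
which is exactly the stated $\eta^2$.

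Finally I would check that $x=\frac{\eps\theta}{2\eta^2}\in(0,1)$. The ratio $\theta/\eta^2$ is at most about $\frac{3(\rho-\alpha)}{8\alpha}$, so $x=O(\eps)<1$ for large $n$. On $(0,1)$ the estimate $\psi(x)\le -x^2/6$ holds, which turns the bound of \cref{martingale} into $\exp[-\eps^2/(24\eta^2)]$. Multiplying the $t-1$ factors and using $(1-q)^{t-1}\ge 1-tq$ gives the claim.

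The one delicate point is the monotonicity step. I need $\gamma_n^k(\beta,\eps)$ nonincreasing in $k$ and nondecreasing in the starting value $\beta$, even though the process switches rules at the threshold $3\alpha$. The inequality \cref{3alineq} handles this, because the first-branch multiplier is always at least the second-branch one above $3\alpha$. Monotonicity in $k$ is immediate, since every multiplier is below $1$. Monotonicity in $\beta$ follows by induction, using \cref{3alineq} whenever the two trajectories lie on opposite sides of $3\alpha$.
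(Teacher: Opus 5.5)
Your proposal is correct and is essentially the paper's own argument. The paper proves \cref{totalprob2} ``analogously'' to \cref{totalprob}: the same chain-rule product, \cref{martingale} with \cref{boundexpec} and \cref{boundvar}, \cref{pikbound2} together with $\gamma_n^k(\beta_i^1,\eps)-\eps-\alpha\ge(\rho-\alpha)/2$ giving $p_{i_k}\le\frac{8\alpha}{3(\rho-\alpha)n}$ and hence exactly your $\theta$ and $\eta^2$, then $\psi(x)\le -x^2/6$. Your check of monotonicity across the switch at $\gamma-\eps=3\alpha$ via \cref{3alineq} spells out what the paper leaves as ``easy to see''; note only that the multipliers lie in $(0,1)$ only while the trajectory stays above $\alpha+\eps+\frac{4\alpha}{3n}$, so monotonicity in $k$ is most cleanly obtained from the appendix restatement of \cref{diffgamma2}, which bounds $\gamma_n^{n_0}(1-\delta,\eps)$ for every $n_0\le n$ rather than only at $n_0=n$.
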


\begin{corollary}\label{finalprob2}
    Let $\rho$ satisfy $2(12\rho - 3)\ln(3\rho) = (1 -3\rho)(3\ln 3 - 4)$, and assume $\alpha \in [3/11, \rho)$.
    For a fixed $n$, let $c = \frac{40\alpha}{3(\rho -\alpha)}\ln^{-3}n$, $\eps = 17\sqrt{\frac{2c\alpha}{\rho - \alpha}\ln n}$ and $D = D(n) \xrightarrow{n\to\infty}0$ as functions of $n$.
    There exists $n_\alpha$ such that for all $n \geq n_\alpha$, running \cref{diffalgo} on $n$ agents, with probability at least $1 - 1/n$ for every agent $i \in [n]$ and every iteration $t \in [n]$, if $i \in \Aa^t$ then $\beta_i^t \geq \alpha + (\rho - \alpha)/2$.
\end{corollary}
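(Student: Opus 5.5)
The argument is the one used for \cref{finalprob} in the case $\alpha \leq 3/11$, with the constants taken from \cref{totalprob2} instead of \cref{totalprob}. First, \cref{diffgamma2} supplies the deterministic ingredient. Since $c \to 0$, the choice $\eps = 17\sqrt{\frac{2c\alpha}{\rho-\alpha}\ln n}$ tends to $0$, because $c\ln n = \Theta(\ln^{-2} n)$. Taking $D = D(n) \to \infty$ (the statement's ``$\to 0$'' should be read as $1/D \to 0$, e.g.\ $D = \Theta(\sqrt n)$ as in \cref{finalallocation}), we apply \cref{diffgamma2} with $\delta = 1/D$. This gives $n_\alpha$ such that $\gamma_n^n(1-\frac1D,\eps)-\eps \geq \alpha+(\rho-\alpha)/2$ for all $n\geq n_\alpha$.

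Next fix $i$ and $t$ with $i\in\Aa^t$. By \cref{dangerous}, $\beta_i^1\geq 1-\frac1D$. Monotonicity of the process in its starting point follows by induction as in \cref{epsprocess2}, since each step map $x\mapsto(1-p(x-\eps))x$ is increasing. Hence, under $\Ee_i^t(\eps)$, \cref{epsprocess2} gives
\[
\beta_i^t\geq\gamma_n^t(\beta_i^1,\eps)-\eps\geq\gamma_n^n(1-\tfrac1D,\eps)-\eps\geq\alpha+(\rho-\alpha)/2 .
\]
The same bound at every earlier step also keeps $p_{i_k}\leq \frac{8\alpha}{3(\rho-\alpha)n}$, via \cref{pikbound2}. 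It therefore remains to show that $\P{\overline{\Ee_i^t(\eps)}}\leq n^{-3}$ and then to take a union bound over the $n^2$ pairs $(i,t)$, which gives total failure probability at most $1/n$.

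For that probability bound we use \cref{totalprob2}. We first check $\eps>2c$: indeed $\eps^2/c^2=\Theta(\ln^4 n)\to\infty$. Second, since $c\gg 1/n$, for large $n$ we have $\frac{40\alpha}{3(\rho-\alpha)n}\leq c$ (by the choice of $c$ this needs only $\ln^3 n\leq n$). Hence
\[
\eta^2\leq 2c\cdot\frac{8\alpha}{3(\rho-\alpha)}=\frac{16c\alpha}{3(\rho-\alpha)} .
\]
Plugging in $\eps^2=289\cdot\frac{2c\alpha}{\rho-\alpha}\ln n$ yields
\[
\frac{\eps^2}{24\eta^2}\geq\frac{578\cdot 3}{24\cdot16}\ln n>4\ln n ,
\]
so $\P{\overline{\Ee_i^t(\eps)}}\leq t\,n^{-4}\leq n^{-3}$.

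The main obstacle is the applicability of \cref{totalprob2}. Its derivation replaces the variance and step bounds by constants through \cref{pikbound2}, which in turn requires the lower bound on $\gamma_n^n(\beta_i^1,\eps)-\eps$ conditionally on $\Ee_i^s(\eps)$ for every $s<t$. We must also make sure the conditioning is set up correctly: the martingale bound of \cref{martingale} has to be applied with $\theta,\eta^2$ valid on the conditioning event, which is exactly what the product decomposition in \cref{totalprob} (reused in \cref{totalprob2}) handles. With that in place, the remaining work is just the arithmetic above.
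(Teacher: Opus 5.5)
Your proposal is correct and follows the paper's route exactly: the paper proves this corollary ``analogously'' to \cref{finalprob}. That means \cref{diffgamma2} with $\delta = 1/D$ for the deterministic bound, \cref{epsprocess2} with \cref{dangerous} (plus monotonicity in the starting point) to get $\beta_i^t \geq \alpha + (\rho-\alpha)/2$ under $\Ee_i^t(\eps)$, then \cref{totalprob2} with $\eta^2 \leq \frac{16c\alpha}{3(\rho-\alpha)}$ so that $\frac{\eps^2}{24\eta^2} > 4\ln n$, and finally a union bound over the $n^2$ pairs $(i,t)$. Your reading of ``$D(n)\to 0$'' as a typo for $1/D \to 0$ (e.g.\ $D = \Theta(\sqrt{n})$) matches how the paper actually uses $D$.
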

Finally, we combine \cref{totalprob2} and \cref{finalprob2} to prove the analogue of \cref{finalallocation} for the case $\alpha > 3/11$ (picking $c = \frac{40\alpha'}{3(\rho -\alpha')}\ln^{-3}n$, $D = (\frac{40\alpha'}{3(\rho - \alpha')})^2\sqrt{n}$ and $\eps = 17\sqrt{\frac{2c\alpha'}{\rho - \alpha'}\ln n}$ for $\alpha' := \alpha + (\rho -\alpha)/2$).
\begin{theorem}\label{finalallocation2}
    Let $\rho$ satisfy $2(12\rho - 3)\ln(3\rho) = (1 -3\rho)(3\ln3 - 4)$.
    For every $\alpha \in [3/11, \rho)$, there exists $n_\alpha$, such that for all $n \geq n_\alpha$ the following holds.
    For every fair allocation instance of $m$ indivisible goods to $n$ agents with XOS valuations $v_1,\ldots, v_n$ and APS-partitions $\{(S_i, \lambda_i(S_i))\}_{S_i \in \Ss_i}$ satisfying \cref{wlog} and \cref{smallitems}, there exists an allocation algorithm that with probability $1 - 1/n$ produces an allocation where each agent receives at least $\alpha$-fraction of her APS.
\end{theorem}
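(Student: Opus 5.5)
The proof follows the template of \cref{finalallocation} verbatim, replacing each ingredient by its $\alpha \geq 3/11$ counterpart. Fix $\alpha \in [3/11, \rho)$ and set $\alpha' := \alpha + (\rho - \alpha)/2$, so that $\alpha < \alpha' < \rho$. Run \cref{diffalgo} with acceptability threshold $\alpha'$. Since \cref{smallitems} already removed all items of value at least $\alpha < \alpha'$, \cref{partition} and \cref{distpartition} apply with $\alpha'$ in place of $\alpha$. Choose the parameters as indicated before the statement:
\[
c = \tfrac{40\alpha'}{3(\rho -\alpha')}\ln^{-3}n,\qquad D = \bigl(\tfrac{40\alpha'}{3(\rho - \alpha')}\bigr)^2\sqrt{n},\qquad \eps = 17\sqrt{\tfrac{2c\alpha'}{\rho - \alpha'}\ln n}.
\]
With these choices $c,\eps \to 0$, $D \to \infty$, and $\eps > 2c$ for large $n$.

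The first step is to invoke \cref{diffgamma2} with $\alpha'$, using $\delta = 1/D$ and the above $\eps$. This gives $n_{\alpha'}$ such that $\gamma_n^n(1 - 1/D, \eps) - \eps \geq \alpha' + (\rho - \alpha')/2$. That is exactly the hypothesis of \cref{totalprob2} and \cref{finalprob2}. \cref{finalprob2} then yields, with probability at least $1 - 1/n$, the bound $\beta_i^t \geq \alpha' + (\rho-\alpha')/2 > \alpha'$ for every $i \in \Aa^t$ and every $t$. The proof of \cref{finalprob2} mirrors \cref{finalprob}: bound the per-step change via \cref{boundexpec} and the conditional variance via \cref{boundvar}, then apply \cref{martingale}. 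The uniform bound $p_{i_k} \leq \frac{4\alpha'}{3(\gamma_n^k - \eps - \alpha')n}$ from \cref{pikbound2} makes $\theta = O(1/n)+c$ and $\eta^2 = O(c)$. Then $\eps^2/(24\eta^2) \geq 4\ln n$, and a union bound over $i,t$ finishes.

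On this good event, every picked agent $i_k$ has $\beta_{i_k}^k \geq \alpha'$, so $\mu_{i_k}^k$ exists and the sampled bundle satisfies $v_{i_k}(B_{i_k}) \geq \alpha'$. By \cref{stolenvalue},
\[
v_i(Q_i^n) \geq \alpha' - \frac{D}{c(cn-1)} = \alpha' - O\!\left(\frac{\ln^6 n}{\sqrt n}\right).
\]
Choosing $n_\alpha \geq n_{\alpha'}$ large enough that this loss is at most $(\rho-\alpha)/2$ gives $v_i(Q_i^n) \geq \alpha$ for all $i$. Since the APS is normalized to $1$, this is the $\alpha$-APS guarantee.

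The main obstacle is \cref{diffgamma2}, the deterministic statement that the perturbed two-regime process $\gamma_n^k(1-\delta,\eps)$ still ends above $\alpha' + (\rho-\alpha')/2$. I would prove it by comparing with the continuous limit. For $x \geq 3\alpha$ the ODE is $dx/ds = -\frac{2(1-3\alpha)x}{x-12\alpha+3}$, and for $x<3\alpha$ it is $dx/ds = -\frac{4\alpha x}{3(x-\alpha)}$. Integrating, the time spent above $3\alpha$ is $\frac{1-3\alpha+(12\alpha-3)\ln(3\alpha)}{2(1-3\alpha)}$ and the time from $3\alpha$ down to $\rho$ is $\frac{3(3\alpha-\rho+\alpha\ln(\rho/3\alpha))}{4\alpha}$. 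These sum to $1$ exactly at the defining equation for $\rho$, matching the condition used in the $\frac{11}{40}$ theorem.

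The discrete, $\eps$-shifted process is handled by epochs as in \cref{iters}. The per-epoch step counts change continuously in $\eps,\delta$ and $1/n$, so the discretization and perturbation errors vanish. One subtlety is the regime switch at $\gamma - \eps = 3\alpha$ rather than at $3\alpha$. This only shifts the switch by $\eps \to 0$, and \cref{3alineq} shows that using the larger step there is conservative. Hence, since $\alpha' < \rho$ strictly, the limiting endpoint exceeds $\alpha'$ by a constant margin, which absorbs the $o(1)$ errors for $n \geq n_{\alpha'}$.
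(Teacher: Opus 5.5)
Your proposal is correct and is essentially the paper's proof. You run \cref{diffalgo} at threshold $\alpha'=\alpha+(\rho-\alpha)/2$ with the same $c$, $D$, $\eps$. You obtain $\beta_i^t\ge\alpha'+(\rho-\alpha')/2$ for all active $i$ and all $t$ with probability $1-1/n$ from \cref{diffgamma2}, \cref{totalprob2} and \cref{finalprob2}. You then absorb the $O(\ln^6 n/\sqrt{n})$ stealing loss of \cref{stolenvalue} into the gap $\alpha'-\alpha$.

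Your sketch of \cref{diffgamma2} (two-regime continuous limit, epochs, regime switch at $3\alpha+\eps$) also matches the paper's appendix. One point needs care. The needed endpoint is above $\alpha'+(\rho-\alpha')/2$, not merely above $\alpha'$. This holds because the limiting descent time from $1$ to $\rho$ is decreasing in the threshold and equals $1$ exactly when the threshold is $\rho$, so at threshold $\alpha'<\rho$ the limiting endpoint lies strictly above $\rho$.
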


\section{Big Items}

\label{sec:bigitems}

For every agent $i \in [n]$ and item $e \in \Mm$, we say that item $e$ is \textbf{big} for $i$ if $v_i(e) \geq \alpha$, and \textbf{small} for $i$ if $v_i(e) < \alpha$.
We have showed in previous sections that if all items are small for all agents, i.e \cref{smallitems} is satisfied, then there exists an allocation where each agent gets at least $\alpha$-APS for $\alpha > 11/40$.
In this section, we show how to get rid of the requirement of \cref{smallitems}.

\subsection{Properties of bipartite graph}

Below we show that any bipartite graph $G = (U, W, E)$ with $|U| = n$ must satisfy one of the four structural conditions.
Then, we consider the graph $G = G(U, W, E)$ with $U = [n]$, $W = \Mm$ and $E = \{(i, e) : v_i(e) \geq \alpha\}$, and prove that if $G$ satisfies one of these four conditions, there exists an $\alpha$-APS allocation for $\alpha > 11/40$.

\begin{lemma}\label{bipartite}
    Let $0 < \eps \leq 1/240$ and $n$ be arbitrary.
    For every bipartite graph $G = G(U, W, E)$ with $|U| = n$ and $|W| \geq n$, one of the following must hold.
    \begin{enumerate}
        \item $G$ has a matching in which all of $U$ is matched.

        \item $G$ has a maximal matching with at most $(1 - \eps)n$ edges.

        \item There are partitions $(U_1, U_2)$ of $U$ and $(W_1, W_2)$ of $W$ with the following properties:
        \begin{enumerate}
            \item $|U_1| = |W_1|$, and $(U_1, W_1)$ form a perfect matching which is maximal in $G$;
            
            \item for every vertex $u \in U_2$ we have $|N(u)| \leq n - 3|U_2|$.
        \end{enumerate}

        \item There are partitions $(U_1, U_2, U_3)$ of $U$ and $(W_1, W_2, W_3)$ of $W$ with the following properties:
        \begin{enumerate}
            \item $|U_1|= |W_1|$, and $(U_1, W_1)$ form a perfect matching;

            \item there are no edges between $U_2$ and $W_3$ and between $U_3$ and $W_2 \cup W_3$;

            \item $(1 - 60\eps)n \leq |W_2| < |U_2|$;
            
            \item for every subset $U_2'\subseteq U_2$ of size $|W_2|$ there is a perfect matching between $U_2'$ and $W_2$.
        \end{enumerate}
    \end{enumerate}
\end{lemma}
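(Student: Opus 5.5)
We plan to assume that alternatives 1 and 2 fail and derive 3 or 4. Failure of 2 means every maximal matching of $G$ has more than $(1-\eps)n$ edges. Failure of 1 means the maximum matching size $\nu$ satisfies $(1-\eps)n<\nu<n$, so the deficiency $d:=n-\nu$ satisfies $1\le d<\eps n$. We fix a maximum matching $M$ and use the Dulmage--Mendelsohn (Gallai--Edmonds) decomposition relative to $M$. Let $U_D\subseteq U$ be the set of vertices reachable from $M$-unmatched vertices of $U$ by $M$-alternating paths, and let $W_D=N(U_D)$. Then $W_D$ is completely matched by $M$ into $U_D$, and $|W_D|=|U_D|-d$. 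Every $S\subseteq W_D$ has surplus $|N(S)\cap U_D|>|S|$, and no $M$-edge joins $U\setminus U_D$ to $W_D$. Let $U_1^0:=U\setminus U_D$ and $W_1^0:=M(U_1^0)$. Then $(U_1^0,W_1^0)$ is perfectly matched, and $U_D$ has no edges into $W\setminus(W_D\cup W_1^0)$.

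The dichotomy between 3 and 4 will be governed by the size of $U_D$. We next use the failure of 2 quantitatively, by building \emph{small} maximal matchings through adversarial choices. If many vertices of $W_D$ could be matched to vertices of $U_1^0$, or many vertices of $U_D$ had large degree into $W_1^0$, we would start a matching greedily: each edge chosen from a vertex of $U_D$ to $W_1^0$ blocks the $M$-partner in $U_1^0$, and each such step destroys about two units of the matching. After $\Theta(\eps n)$ such steps we would complete the matching to a maximal one of size at most $(1-\eps)n$, contradicting the failure of 2. In this way we hope to show the following. If $|U_D|$ is small, say $|U_D|\le$ a constant times $\eps n$, then every $u\in U_D$ has at most $n-3|U_D|$ neighbours. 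This is alternative 3, taking $U_2=U_D$ after first rematching $W_D$ inside $U_D$ and moving it into $W_1$; maximality of the matching $(U_1,W_1)$ holds because $U_2$ then has no neighbours outside $W_1$. The counting behind the degree bound is that a $U_2$-vertex with more than $n-3|U_2|$ neighbours has many neighbours whose partners we can block greedily.

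If $|U_D|$ is large, we aim for alternative 4. We cannot take $U_2=U_D$ directly, because surplus $>0$ only guarantees that $W_D$ matches into $U_D$ minus any \emph{single} vertex. Property 4(d), by contrast, requires that $U_2$ minus any $d'=|U_2|-|W_2|$ vertices still has a perfect matching to $W_2$. That is equivalent to the robust Hall condition $|N(S)\cap U_2|\ge |S|+d'$ for every nonempty $S\subseteq W_2$. We will enforce this by peeling. While some $S\subseteq W_2$ violates the condition, we move a maximum matching of $S$ into $N(S)\cap U_2$ to $(U_1,W_1)$. The vertices of $N(S)\cap U_2$ left unmatched are moved to $U_3$; they have no neighbours left in $W_2\cup W_3$. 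Here $W_3:=W\setminus(W_1\cup W_2)$, and $U_2$ never had edges to $W_3$. Each peeling step keeps 4(a) and 4(b) intact, and it strictly decreases $d'$ or $|W_2|$.

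The main obstacle is bounding how much the peeling removes, so that $|W_2|\ge(1-60\eps)n$ at the end and $W_2$ remains nonempty with $|W_2|<|U_2|$. Our plan is to charge every peeled set $S$ against the failure of 2. Because $S$ has few neighbours in $U_2$ relative to its surplus requirement, we can pick a maximal matching that deliberately leaves $N(S)\cap U_2$ saturated by vertices outside $S$, wasting about $|S|$ vertices of $W$. Summing the waste over all peeling rounds gives at most $O(\eps n)$ total removal. The constants $1/240$ and $60$ should come out of this accounting: roughly a factor $2$ from greedy blocking, a factor $3$ from the threshold in 3(b), and slack for the case split. If the accounting gives a worse constant, we would first try to lower the $|U_D|$ threshold between alternatives 3 and 4.
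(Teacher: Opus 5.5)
\textbf{Gap: the large-$|U_D|$ case never uses the failure of alternative 3.} Your split ``small $|U_D|$ gives 3, large $|U_D|$ gives 4'' is false, and the large case cannot be closed by charging only against the failure of alternative 2. Without the failure of 3, the conclusion fails.

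Take $n=2m+2$, $U=U_a\sqcup U_b$ with $|U_a|=|U_b|=m+1$, and $W=W_a\sqcup W_b\sqcup\{x,y\}$ with $|W_a|=|W_b|=m$. Let $G$ consist of the complete bipartite graphs on $(U_a,W_a)$ and on $(U_b,W_b)$, with $x,y$ isolated.
\begin{itemize}
\item Every maximal matching saturates $W_a\cup W_b$, so it has exactly $n-2>(1-\eps)n$ edges once $n>2/\eps$. Hence 1 and 2 fail, and $U_D=U$ is as large as possible.
\item Alternative 4 is impossible. Condition 4(d) excludes $x,y$ from $W_2$, and $|W_2|\ge(1-60\eps)n\ge 3n/4>m$, so $W_2$ meets both $W_a$ and $W_b$.
\item Your robust Hall condition (which is indeed equivalent to 4(d)) then gives $|U_2\cap U_a|\ge|W_2\cap W_a|+d'$ and $|U_2\cap U_b|\ge|W_2\cap W_b|+d'$. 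Summing yields $|U_2|\ge|W_2|+2d'$, i.e.\ $d'\le 0$, contradicting 4(c).
\item Alternative 3 holds instead: the two unmatched vertices have degree $m\le n-6$.
\end{itemize}
So your peeling would have to discard $\Omega(n)$ vertices. Your charging argument fails exactly here: no vertex outside $S=W_a$ is adjacent to $N(S)\cap U_2=U_a$, so nothing can be ``wasted''.

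\textbf{What is missing.} You need to apply the failure of 3 to many different maximal matchings in order to extract high degrees. The paper does this iteratively:
\begin{itemize}
\item Given the high-degree vertices $U_0$ found so far, it extends a matching covering $U_0$ to a maximal matching. Failure of 3 then yields a new unmatched vertex of degree $>(1-3\eps)n$. This continues until $|U_0|\ge(1-3\eps)n$.
\item Failure of 3 is used once more to show that $U_0$ has no perfect matching into $N(U_0)$, and a Hall violator $U_0'$ is taken.
\item The $W$-vertices with fewer than $3n/4$ neighbours in $U_0'$ are discarded. An edge count bounds their number by $60\eps n$, which is where the constants come from.
\item Property 4(d) then follows from minimum-degree conditions on both sides, not from surplus peeling.
\end{itemize}

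\textbf{Two smaller points.} First, your peeling step is invalid as written. Unmatched vertices of $N(S)\cap U_2$ may have neighbours in $W_2\setminus S$, so moving them to $U_3$ can violate 4(b). Second, in the small case $U_2$ must be the $d$ vertices left unmatched by $M$; with $U_2=U_D$ and $W_D\subseteq W_1$ you would get $|W_1|>|U_1|$. With that choice, 3(b) is immediate whenever $|U_D|\le n-2d$, since $N(u)\subseteq W_D$ and $|W_D|=|U_D|-d\le n-3d$. No greedy blocking is needed there.
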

\begin{proof}
    We will show that if items 1), 2), 3) do not hold, then 4) must hold.
    
    We first prove that if 1), 2), 3) do not hold, there exists a set $U_0 \subseteq U$ of size at least $(1 - 3\eps)n$, such that every vertex $u \in U_0$  has at least $(1 - 3\eps)n$ neighbors in $W$.
    To see this, observe that 
    \begin{itemize}
        \item by 1), every maximal matching $M$ in $G$ has size $|M| < n$, so $U\setminus M$ is nonempty;
        \item by 2), $|M| \geq (1 - \eps)n$, thus $|U\setminus M| \leq \eps n$;
        \item hence, by 3), for every maximal matching $M$ in $G$ there exists a vertex $u\in U\setminus M$ such that $|N(u)| > n - 3|U \setminus M| \geq (1 - 3\eps)n$.
    \end{itemize}
    We will pick vertices into $U_0$ iteratively, starting with $U_0 = \varnothing$.
    Let $M_1$ be arbitrary maximal matching in $G$, then there exists some vertex $u_1 \in U\setminus M_1$ such that $|N(u)| \geq (1 -3\eps)n$.
    We add $u_1$ to $U_0$, and, if possible, extend $U_0 = \{u_1\}$ to a maximal matching $M_2$ in $G$, containing $U_0$.
    Then, there exists some vertex $u_2 \in U\setminus M_2$ such that $|N(u)|\geq (1 - 3\eps)n$.
    We add $u_2$ to $U_0$ and then, if possible, extend $U_0 = \{u_1, u_2\}$ to a maximal matching $M_3$ containing $U_0$, then take vertex $u_3$ and so on.
    We continue doing so until we no longer can find a maximal matching containing all $U_0$.
    
    Since at any point for every vertex $u \in U_0$ it holds $|N(u)|\geq (1 -3\eps)n$, it is always possible to perfectly match all current vertices of $U_0$ as long as $|U_0| \leq (1 - 3\eps)n$, and therefore possible to obtain a maximal matching containing all $U_0$.
    Thus, if at some point it is no longer possible to extend $U_0$ to a maximal matching, we must have $|U_0| > (1 - 3\eps)n$.
    We then add to this $U_0$ all other remaining vertices with at least $(1 - 3\eps)n$ neighbors (if there are any).

    Consider this $U_0$, we have $|U_0|\geq (1 - 3\eps)n$, and for all $u \in U_0$ it holds $|N(u)| \geq (1 - 3\eps)n$.
    Denote $W_0 := N(U_0)$, $\oU_0 := U\setminus U_0$ and $\oW_0 := W\setminus W_0$, we have $|\oU_0|\leq 3\eps n$.
    We start forming sets $U_1, U_2, U_3$ and $W_1, W_2, W_3$, initially all of them are empty.
    
    Go over all vertices $u \in \oU_0$ in arbitrary order, if $N(u) \setminus W_1$ is nonempty, take $w \in N(u)\setminus W_1$ and match $(u, w)$ together, adding $u$ to $U_1$ and $w$ to $W_1$.
    It is clear that in the end of this procedure, if some $u \in \oU_0$ was unmatched, then $N(u)\setminus W_1 = \varnothing$, i.e there are no edges between $u$ and $W\setminus W_1$.
    We then add all unmatched vertices of $\oU_0$ to $U_3$.
    Similarly, for every unmatched vertex $w \in \oW_0$ there are no edges between $w$ and $U\setminus U_1$ --- there can't be edges with $U_0$ by definition, and all neighbors of $w$ in $\oU_0$ have already been matched.
    We then add all unmatched vertices of $\oW_0$ to $W_3$.
    In addition, as $|U_0| \geq (1 - 3\eps)n$, we could not have matched more than $3\eps n$ vertices of $\oU_0$, therefore we could not have matched more than $3\eps n$ vertices of $W_0$.
    As a result, for every vertex $u \in U_0$ there will be at least $(1 - 6\eps)n$ neighbors left in $W_0$.

    Consider the subgraph of $G$ on $(U_0, W_0)$.
    We claim that there is no perfect matching between $U_0$ and $W_0$.
    Suppose the contrary, then $U_0 \cup U_1$ and $W_0\cup W_1$ together form a perfect matching which is maximal in $G$.
    Indeed, by construction the only unmatched vertices would be $U_3$ and $W_3$, and there are no edges between them.
    Now, take some vertex $u \in U_3$.
    By construction, $u \in \oU_0$, hence $|N(u)| < (1 - 3\eps)n$.
    At the same time, since $(U_0\cup U_1, W_0 \cup W_1)$ is maximal, case 2) not holding implies that $|U_0\cup U_1| \geq (1 - \eps)n$, so $|U_3| \leq \eps n$.
    Thus, all $u \in U_3$ have $|N(u)|< (1 - 3\eps)n \leq n - 3|U_3|$.
    But then we find ourselves in case 3), which does not hold by assumption.

    {As a result,} there is no perfect matching between $U_0$ and $W_0$.
    {Then,} by Hall's theorem there exists a subset $U_0'\subseteq U_0$ such that $W_0':= N(U_0')$ satisfies $|U_0'| > |W_0'|$.
    Since every vertex in $U_0'$ has at least $(1 - 6\eps)n$ neighbors in $W_0$, we have $|W_0'| \geq (1 - 6\eps)n$ and $|U_0'| \geq (1 -6\eps)n$.

    Let $\overline{U_0'}:= U_0 \setminus U_0'$, and $\overline{W_0'}:= W_0 \setminus W_0'$.
    We go over all vertices $u \in \overline{U_0'}$ in arbitrary order, and if $N(u)\setminus W_1$ is nonempty, take $w \in N(u)\setminus W_1$ and match $(u, w)$ together, adding $u$ to $U_1$ and $w$ to $W_1$.
    Similar to what we did before, in the end of this procedure all unmatched vertices of $\overline{U_0'}$ have no edges into $W\setminus W_1$, and we add them to $U_3$.
    At the same time, all unmatched vertices of $\overline{W_0'}$ have no edges into $U\setminus U_1$, and we add them to $W_3$.
    Since $|U_0'| \geq (1 - 6\eps)n$, the total number of matched vertices of both $\overline{U_0}$ and $\overline{U_0'}$ cannot exceed $6\eps n$.
    Note that for every vertex $u \in U_0'$, $N(u)\cap W_0'$ contains exactly the vertices $w \in W_0$ that have not been matched. 
    Every vertex $u \in U_0'$ initially had at least $(1 - 3\eps)n$ neighbors in $W_0$, and the total number of matched vertices of $W_0$ cannot exceed $6\eps n$.
    Then, for every vertex $u \in U_0'$ there will be at least $(1 - 9\eps)n$ neighbors left in $W_0'$, hence $|U_0'| > |W_0'| \geq (1 - 9\eps)n$.

    Next, let $W_0'' \subseteq W_0'$ contain all vertices $w \in W_0'$ such that $|N(w) \cap U_0'| \geq (3/4) n$.
    We lemma that $|W_0''| \geq (1 - 48\eps)n$.
    To estimate $|W_0''|$, consider the subgraph induced on $(U_0', W_0')$.
    Every vertex in $W_0''$ has degree at most $n$, and every vertex \textbf{not} in $W_0''$ has degree at most $(3/4) n$.
    Since $|U_0'| \geq (1 - 6\eps)n$ and for every $u \in U_0'$, $|N(u)|\geq (1 - 9\eps)n$, the total number of edges in this subgraph is at least $(1 - 9\eps)n|U_0'| \geq (1 - 6\eps)(1 - 9\eps)n^2 \geq (1 - 15\eps)n^2$, and $|W_0'| < n$, we have the following lower bound:
    \begin{multline*}
        (1 - 15\eps)n^2 \leq |W_0''|\cdot n + (n - |W_0''|) \cdot  (3/4) n\\
        \implies (1 - 15\eps)n \leq |W_0''| + (3/4)n - (3/4)|W_0''|\\
        \implies |W_0''| \geq 4\cdot (1/4 -15\eps)n =  (1 - 60\eps)n.
    \end{multline*}
    Let $\overline{W_0''} = W_0'\setminus W_0''$.
    Now, we go over all vertices $w \in \overline{W_0''}$ in arbitrary order, and if $N(w)\cap U_1$ is nonempty, take $u \in N(w)\setminus U_1$, and match $(u, w)$ together, adding $w$ to $W_1$ and $u$ to $U_1$.
    In the end of the procedure, all unmatched vertices of $\overline{W_0''}$ have no edges into $U\setminus U_1$, and we add them to $W_3$.
    Let $U_0''\subseteq U_0'$ denote the remaining vertices of $U_0'$, and $\overline{U_0''} := U_0'\setminus U_0''$.
    Clearly, $|U_0''| = |U_0'| - |\overline{U_0''}|$.
    Since $|W_0''| \geq (1 - 60\eps)n$, then $|\overline{W_0''}| \leq 60\eps n$.
    Therefore, the number $|\overline{U_0''}|$ of matched vertices of $U_0'$ satisfies $|\overline{U_0''}| \leq 60\eps n$.
    Furthermore, every vertex $u \in U_0''$ has lost at most $|\overline{W_0''}| \leq 60\eps n$ neighbors, so $|N(u)|\geq (1 - 69\eps)n$.
    Finally, consider the remaining vertices $W_0''$.
    Observe that 
    \[
        |W_0''| = |W_0'| - |\overline{W_0''}| < |U_0'| - |\overline{W_0''}| = |U_0''| + |\overline{U_0''}| - |\overline{W_0''}|.
    \]
    Note that the vertices $\overline{W_0''}$ include all the vertices of $W_0'$ that have been matched with $\overline{U_0''}$, plus some additional vertices that remained unmatched.
    Then, $|\overline{U_0''}| \leq |\overline{W_0''}|$ and $|W_0''| < |U_0''|$.
    And, since $|\overline{U_0''}| \leq 60\eps n$, each vertex  $w \in W_0''$ has at least 
    \[
        |N(w) \cap U_0'| - 60\eps n \geq (3/4)n - 60\eps n = n/2 + (1/4 - 60\eps)n \geq n/2
    \]
    neighbors left in $U_0''$.
    The last inequality uses $\eps \leq 1/240$.

    We take $U_2 := U_0''$ and $W_2:= W_0''$.
    To finish the proof, it remains to show that if $U_2'\subseteq U_2$ has size $|W_2|$, there exists a perfect matching between $U_2'$ and $|W_2|$.
    By construction for every $u \in U_2$ it holds $|N(u)|\geq (1 - 69\eps)n \geq n/2 > |W_2|/2$.
    At the same time, we showed above that every vertex $w \in W_2$ has at least $n/2 > |W_2|/2$ neighbors left in $U_2$.
    Thus, in the subgraph induced on $(U_2', W_2)$, every vertex has degree at least $|W_2|/2$.
    The lemma follows from the fact that any $k \times k$ bipartite graph where every vertex has degree at least $k/2$ has a perfect matching.
\end{proof}
We apply \cref{bipartite} to the following graph $G = G(U, W, E)$: $U = [n]$, $W = \Mm$, and we draw an edge $(i, e)\in E$ between agent $i \in [n]$ and item $e \in \Mm$ if $v_i(e) \geq \alpha$, i.e item $e$ is big for agent $i$.
In the following subsections, we prove that in each case of \cref{bipartite}, there must exist an $\alpha$-APS allocation for $\alpha > 11/40$, as long as the number of agents $n$ is large enough.

\subsection{Allocations for cases 1), 2), 3)}

We first deal with cases 1) and 2), they are pretty straightforward.

\begin{claim}
    Let $\rho$ satisfy $2(12\rho - 3)\ln(3\rho) = (1 - 3\rho)(3\ln 3 - 4)$.
    For every $\alpha \in [3/11, \rho)$, there exists $n_\alpha$, such that for all $n \geq n_\alpha$ the following holds.
    
    Consider the graph $G = G(U, W, E)$ with $U = [n]$, $W = \Mm$ and $E = \{(i, e) : v_i(e) \geq \alpha\}$.
    If $G$ satisfies cases 1) or 2) of \cref{bipartite}, then there exists an allocation where each agent receives at least $\alpha$-fraction of her APS.
\end{claim}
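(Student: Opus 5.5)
Case 1) is immediate. Take the matching that saturates $U=[n]$ and give every agent $i$ the item $e$ matched to her. Since $v_i(e)\ge \alpha$ and the APS is normalized to $1$, each agent gets at least $\alpha$-APS. The leftover items can be added to any bundle; by monotonicity this does not hurt.

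Case 2) is a reduction to the no-big-items setting. Let $M$ be a maximal matching with $|M|\le (1-\eps)n$. Give each matched agent her matched item, which is worth at least $\alpha$ to her. Let $\Nn'$ be the unmatched agents, so $n':=|\Nn'|\ge \eps n$, and let $\Mm'$ be the unmatched items. Maximality of $M$ means no unmatched agent has an edge to an unmatched item, so for every $i\in\Nn'$ every item of $\Mm'$ is small for $i$. Apply \cref{giveitem} once per removed item, i.e.\ $|M|$ times. This gives $\aps(\Mm',v_i,1/n')\ge \aps(\Mm,v_i,1/n)=1$ for each $i\in\Nn'$. After rescaling we may therefore assume the remaining APS is at least $1$ (scaling only helps). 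Next, normalize the APS partitions of the remaining agents via \cref{wlog}. The normalization only scales values down inside bundles, so items stay small, and \cref{smallitems} holds for the reduced instance on $n'$ agents.

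Now invoke \cref{finalallocation2} (or \cref{finalallocation} if one also wants $\alpha<3/11$) on the reduced instance. It yields some $n'_\alpha$ such that whenever $n'\ge n'_\alpha$ an allocation exists in which each agent of $\Nn'$ gets at least $\alpha$ times her APS in the reduced instance, hence at least $\alpha$ times her original APS. Set $n_\alpha:=\lceil n'_\alpha/\eps\rceil$ with $\eps=1/240$, the value fixed in \cref{bipartite}. Then $n\ge n_\alpha$ implies $n'\ge \eps n\ge n'_\alpha$. Because the algorithm succeeds with probability $1-1/n'>0$, such an allocation exists. Combining it with the matched items gives the required allocation of $\Mm$.

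The one point needing care is that \cref{giveitem} gives the APS bound with entitlement $1/n'$ on the reduced item set, which is exactly the form \cref{finalallocation2} needs; the $n'\ge\eps n$ bound on the number of agents was already handled above. No step should be a real obstacle, since the claim is essentially bookkeeping on top of the earlier theorems.
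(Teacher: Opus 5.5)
Your proposal is correct and matches the paper's proof. Case 1 is the trivial saturating matching. In case 2 you hand out the matched big items, apply \cref{giveitem} repeatedly so the remaining agents keep their APS with entitlement $1/n'$, use maximality of $M$ to obtain \cref{smallitems}, and invoke \cref{finalallocation2} once $n' \ge \eps n \ge n'_\alpha$. The only quibble is that \cref{bipartite} does not fix $\eps = 1/240$: it allows any $0<\eps\le 1/240$, and the case-4 analysis may need a smaller value. So $n_\alpha$ should be $\lceil n'_\alpha/\eps\rceil$ for whatever $\eps$ the overall argument uses, exactly as the paper writes $n \ge n_\alpha/\eps_2$.
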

\begin{proof}
Suppose that \textbf{case 1)} of \cref{bipartite} holds for $G$.
Then, there exists a matching in $G$ such that every agent $i \in [n]$ is matched with a unique item $e \in \Mm$ satisfying $v_i(e)\geq \alpha$.
So, we trivially get an $\alpha$-APS allocation.

Next, suppose that \textbf{case 2)} of \cref{bipartite} holds for $G$ and $\eps_2$.
Let $M$ be the maximal matching in $G$ with at most $(1 - \eps_2)n$ edges.
Recall \cref{giveitem}, which we restate it here for convenience:
\begin{lemma}\label{giveitem2}
    For any item $e \in \Mm$ and any valuation $v$, $\aps(\Mm\setminus\{e\}, v, \frac{1}{n - 1}) \geq \aps(\Mm, v, \frac{1}{n})$.
\end{lemma}
For every edge $(i, e)$ of the matching $M$, we give item $e$ to $i$, then $i$ receives $\alpha$-fraction of her APS.
By \cref{giveitem2}, for all the remaining agents in $U\setminus M$ the APS will not decrease after we remove all items and agents from $M$.
Since the matching $M$ was maximal, for every agent $i \in U\setminus M$ there are no big items remaining in $\Mm\setminus M$.
Furthermore, the number $|U\setminus M|$ of remaining agents is at least $|U| - |M| \geq \eps_2 n$.

As shown in \cref{finalallocation2}, there exists $n_\alpha$ such that for all $n'\geq n_\alpha$, every fair allocation instance with $n'$ XOS agents and APS partitions satisfying \cref{wlog} and \cref{smallitems} has an $\alpha$-APS allocation.
Then, if $\eps_2 n \geq n_\alpha \iff n\geq n_\alpha/\eps_2$, we can apply \cref{finalallocation2} to the agents $U\setminus M$, items $\Mm\setminus M$ and the corresponding APS partitions, and therefore get an $\alpha$-APS allocation for all agents of $U$.
\end{proof}

Case 3) is quite more convoluted, we analyze it in the following lemma.
\begin{lemma}
    Let $0 < \alpha \leq 1/3$, and consider the graph $G = G(U, W, E)$ with $U = [n]$, $W = \Mm$ and $E = \{(i, e) : v_i(e) \geq \alpha\}$.
    If $G$ satisfies case 3) of \cref{bipartite}, then there exists an allocation where each agent receives at least $\alpha$-fraction of her APS.
\end{lemma}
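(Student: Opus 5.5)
The plan is to give every agent of $U_1$ her matched big item, and then show that each agent of $U_2$ still has a fractional partition of the remaining items $W_2$ with twice as many ``slots'' as there are agents in $U_2$, each slot of value at least $1-\alpha$. A standard allocation argument for that regime then gives every agent of $U_2$ value at least $\frac{1-\alpha}{2}\ge\alpha$, since $\alpha\le 1/3$. Normalize all APS values to $1$ and take each APS partition to satisfy \cref{wlog}. For $i\in U_1$ with matched item $e_i\in W_1$, $v_i(e_i)\ge\alpha$, so these agents are done. Since the matching $(U_1,W_1)$ is maximal in $G$, no agent of $U_2$ has a big item in $W_2$.

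Fix $i\in U_2$. Let $k$ be the number of items of $W_1$ that are big for $i$, and let $r=|W_1|-k$ be the number that are small for $i$. By property 3(b), $k\le |N(i)|\le n-3|U_2|$. Apply \cref{giveitem} $k$ times to remove the big items; this yields a fractional $\frac{1}{n-k}$-partition $\{(S,\mu_S)\}$ of $\Mm$ minus those items with all bundles of value $\ge 1$. Each small item $e\in W_1$ still lies in bundles of total weight at most $\frac{1}{n-k}$. Call a bundle \emph{bad} if it contains at least two of the $r$ small $W_1$-items. Counting incidences, the total weight of bad bundles is at most $\frac{r}{2(n-k)}$. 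Discard the bad bundles. From every other bundle, delete its (at most one) small $W_1$-item; since that item is small for $i$, the bundle keeps value $>1-\alpha$ and now lies inside $W_2$. Rescale the surviving weights to sum to $1$; each item of $W_2$ then has weight at most $\frac{1}{(n-k)-r/2}$. Using $r\le n-|U_2|-k$ and $n-k\ge 3|U_2|$,
\[
(n-k)-\tfrac{r}{2}\;\ge\;\tfrac{n-k}{2}+\tfrac{|U_2|}{2}\;\ge\;2|U_2|.
\]
So $i$ has a fractional $\frac{1}{2|U_2|}$-partition of $W_2$ whose bundles all have $v_i$-value at least $1-\alpha$, and no item of $W_2$ has $v_i$-value $\ge\alpha$.

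It remains to allocate $W_2$ among the $|U_2|$ agents, each of whom has a fractional partition into ``$2|U_2|$'' bundles of value $\ge 1-\alpha$ with all items small. I would run a bag-filling / sequential-picking procedure in the spirit of \cite{GhodsiHSSY22}, adapted to fractional partitions as in \cite{FG25}. In each round, some still-active agent is given a minimal bundle of value $\ge\alpha$ for her; because items are small, such a bundle has value $<2\alpha$ to her. The loss suffered by every other active agent $j$ is tracked through $\beta_j(S)=\sum_S\lambda_S v_j(S\cap\cdot)$, as in \cref{probval}, and the procedure must keep $\beta_j\ge\alpha$ until $j$ is served. The expected main obstacle is exactly this step: controlling the loss that a bundle chosen by (or for) one agent inflicts on the others. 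I would first try choosing, in each round, a bundle that is a minimal acceptable sub-bundle of one of the active agents' own partition bundles, selected so that it is acceptable for the agent who values it most. With twice as many slots as agents and per-slot value $1-\alpha\ge 2\alpha$, this should yield the guarantee $\frac{1-\alpha}{2}\ge\alpha$ quoted in the overview. If a direct argument stalls, I would fall back on citing the known $\frac{1-\alpha}{2}$-type result for entitlement-doubled XOS instances.
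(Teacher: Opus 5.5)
Your reduction for a fixed $i\in U_2$ is correct, and its counting is cleaner than the paper's. Discard the bundles containing a big $W_1$-item, and the bundles containing two or more small $W_1$-items (a direct incidence count bounds their weight). What remains, after deleting the at most one small $W_1$-item from each surviving bundle, is a family of bundles inside $W_2$, each of value $>1-\alpha$, which after normalization puts weight at most $\frac{1}{2|U_2|}$ on every item. This is the same intermediate fact the paper establishes.

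The gap is the final step, which is the real content of the lemma. You leave it as ``should yield'' or ``cite a known result'', and the sequential-picking plan you sketch does not work as stated. A minimal acceptable bundle for the picking agent can carry a large share of some other agent's $\beta_j$, and nothing in your plan bounds this loss for every $j$ simultaneously. Picking at random from the agent's distribution controls each $\beta_j$ only in expectation. Since $|U_2|$ may be small, there is no concentration to turn this into a guarantee for all agents at once. This is exactly why the paper needs $n$ large for its main algorithm, and why its greedy-average step controls only an average.

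The missing idea is a global exchange argument rather than a sequential one.
\begin{itemize}
\item Let $\hat{v}_j=\min(1-\alpha,v_j)$. This is still XOS: rescale the supporting additive clause.
\item Take an allocation $(B_j)_{j\in U_2}$ of $W_2$ maximizing $\sum_j \hat{v}_j(B_j)$, and suppose some $i$ has $\hat{v}_i(B_i)<\frac{1-\alpha}{2}$.
\item Draw $S$ from $i$'s normalized fractional partition. Give $i$ the set $S$, and give every other $j$ the set $B_j\setminus S$.
\item Agent $i$'s capped value rises to $1-\alpha$, a gain exceeding $\frac{1-\alpha}{2}$.
\item Take an additive clause supporting $\hat{v}_j$ at $B_j$ and use the per-item bound $\frac{1}{2|U_2|}$. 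Each other $j$ then loses in expectation at most $\frac{\hat{v}_j(B_j)}{2|U_2|}\le\frac{1-\alpha}{2|U_2|}$.
\item The total expected loss is therefore at most $\frac{(|U_2|-1)(1-\alpha)}{2|U_2|}<\frac{1-\alpha}{2}$, so some realization strictly increases capped welfare. This contradicts maximality.
\end{itemize}
Hence every $j\in U_2$ gets $v_j(B_j)\ge\frac{1-\alpha}{2}\ge\alpha$, using $\alpha\le 1/3$. The cap is essential: without it, the expected loss $\frac{v_j(B_j)}{2|U_2|}$ of each other agent is not bounded in terms of $i$'s gain.
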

\begin{proof}
Suppose that \textbf{case 3)} of \cref{bipartite} holds for $G$.
Let $U = (U_1, U_2)$ and $W = (W_1, W_2)$ be the corresponding partitions, where $(U_1, W_1)$ form a perfect matching.
Similarly to case 2), for every edge $(i, e) \in M$, we give item $e$ to $i$, then $i$ receives $\alpha$-fraction of her APS.
By \cref{giveitem2}, for all the remaining agents in $U_2$, the APS will not decrease after we removed all items and agents from $M$.
So, it remains to allocate the remaining items $W_2$ among agents $U_2$.

For every agent $i \in U_2$, consider her APS partition $\{(S_i, \lambda_i(S_i)\}_{S_i \in \Ss_i}$, and let $\Ss_i^+$ be the set of all bundles $S_i \in \Ss_i$ such that $S_i$ contains some big item $e$ (i.e $v_i(e)\geq \alpha$) that \textbf{was allocated} to someone in $U_1$.
In other words, if $e \in S_i$ and $S_i \in \Ss_i^+$, then $e \in N(i) \cap W_1$.
Denote $k := |U_2|$.
By construction, for every agent $i \in U_2$ we have $|N(u)| \leq n - 3k$, so the total weight $\lambda_i(\Ss_i^+) = \sum_{S_i \in \Ss_i^+}\lambda_i(S_i)$ is at most $1 - 3k/n$.
Indeed, for every item $e \in \Mm$ we have $\sum_{S_i \in \Ss_i : S_i\ni e}\lambda_i(S_i) \leq 1/n$, and there are at most $n - 3k$ big items, so in the ``worst'' case each of the big item would have its own disjoint set of bundles containing it.

Note that since $|W_1| = n -k$, there are at least $2k$ allocated items that were small for agent $i$.
For convenience, we ``pretend'' that some of these small allocated items were actually ``big''.
Specifically, we go over small allocated items one by one, and for item $e$ we include the bundles of $\Ss_i$ containing $e$ into $\Ss_i^+$.
We do so until there are exactly $n - 3k$ ``big'' allocated items.
Denote these ``big'' items by $W_1^+$, and the remaining small items by $W_1^-$.
Note that $|W_1^-| = 2k$.

Consider the bundles $\Ss_i^- := \Ss_i \setminus \Ss_i^+$, by definition any $S_i \in \Ss_i^-$ contains no items from $W_1^+$, and either lost no items from the allocation $(U_1, W_1)$, or could have potentially lost some small items from $W_1^-$.
For every item $e \in W_1^-$, denote by $\Ss_i^-(e)$ the set of bundles $S_i \in \Ss_i^-$ that contain $e$.
In addition, denote by $\Ss_i^-(0), \Ss_i^-(1), \Ss_i^-(2)$ the subsets of $\Ss_i^-$ containing bundles that lost respectively exactly $0$, exactly $1$ and at least $2$ small items from $W_1^-$.
We say that two sets $\Ss_i^-(e)$ and $\Ss_i^-(e')$ are disjoint, if there are no $S_i \in \Ss_i^-(e)$ such that $\{e, e'\} \in S_i$ or $S_i'\in \Ss_i^-(e')$ such that $\{e, e'\} \in S_i'$.
Note that if all $2k$ sets $\Ss_i^-(e)$ for $e \in W_1^-$ are disjoint, then $\Ss_i^-(2)=\varnothing$ and $\lambda_i(\Ss_i^-(0)) + \lambda_i(\Ss_i^-(1)) = \lambda_i(\Ss_i^-)\geq 3k/n$.

Let $t \geq 0$ be such that there are exactly $2k - t$ disjoint sets $\Ss_i^-(e)$ for $e \in W_1^-$.
Without loss of generality, let $W_1^- = \{e_1, \ldots, e_{2k}\}$, and sets $\Ss_i^-(e_j)$ for $1 \leq j \leq 2k - t$ are disjoint.
Note that by definition, for every $2k - t + 1 \leq j'\leq 2k$, non-disjoint set $\Ss_i^-(e_{j'})$ can contain only bundles $S_i \in \Ss_i^-(2)$.
Indeed, if $S_i \in \Ss_i^-(e_{j'})$, there exists $e_j$ for $1 \leq j \leq 2k - t$ such that $S_i \in \Ss_i^{-}(e_j)$ and hence $\{e_j, e_{j'}\} \in S_i$.
It will be important that this set $\Ss_i^{-}(e_j)$ must contain at least one bundle from $\Ss_i^-(2)$.
We will use this observation to show that $\lambda_i(S_i^-(2)) \leq k/n$.

Suppose that there are $k_2 > k$ disjoint sets $\Ss_i^-(e)$ such that there exists $S_i \in \Ss_i^-(e)$ which lost at least $2$ items, i.e $S_i \in \Ss_i^-(2)$.
The fact that these sets are disjoint implies that there must exist at least $k_2$ disjoint bundles $S_{i, 1}, \ldots, S_{i, k_2} \in \Ss_i^-(2)$.
But then, since each of these bundles contains at least $2$ items from $W_1^-$, we have $|W_1^-| \geq 2k_2 > 2k$, which is impossible.

As a result, there must be $k_2 \leq k$ disjoint sets $\Ss_i^-(e)$ such that contain some $S_i \in \Ss_i^-(2)$.
Without loss of generality, let these sets be $\Ss_i^-(e_1), \ldots, \Ss_i^-(e_{k_2})$.
We lemma that any bundle $S_i \in \Ss_i^-(2)$ belongs to exactly one of disjoint sets $\Ss_i^-(e_{j})$ for $1 \leq j \leq k_2$.
Indeed, as mentioned earlier, for every $S_i \in \Ss_i^-(e_{j'})$ where $2k - t + 1 \leq j'\leq 2k$, i.e $\Ss_i^-(e_{j'})$ is non-disjoint,  we must have $S_i \in \Ss_i^-(2)$, and there must exist some $1 \leq j \leq k_2$ such that $S_i \in \Ss_i^-(e_j)$.
But then
\[
    \lambda_i(\Ss_i^-(2)) \leq \sum_{j = 1}^{k_2}\lambda_i(\Ss_i^-(e_j)) \\
    \leq \sum_{j = 1}^{k_2}\sum_{\substack{S_i \in \Ss_i\\S_i \ni e_j}}\lambda_i(S_i) \leq \frac{k_2}{n} \leq \frac{k}{n}.
\]

As a corollary, for every agent $i \in U_2$ that was not allocated, the total weight of the bundles in $\Ss_i^-(0) \cup \Ss_i^-(1)$ is at least $2k/n$.
By definition for every $S_i \in \Ss_i^-(0) \cup \Ss_i^-(1)$ we have $v_i(S_i \cap W_2) \geq 1- \alpha$, as each bundle could lose at most one small item.
In order to allocate the remaining $|U_2| = k$ agents, we will use the approach developed in \cite{GhodsiHSSY22, FG25}.

Without loss of generality, let $U_2 := [k]$.
For every agent $i \in [k]$, let $\hv_i$ be a capped valuation function defined as $\hv_i(S) := \min(1 - \alpha, v_i(S))$, note that when $v_i$ is XOS, so is valuation $\hv_i$.
Let $B_1, \ldots, B_k$ be the allocation of items $W_2$ among agents $U_2$, such that agent $i \in [k]$ receives bundle $B_i$, and the welfare $\sum_{i = 1}^k\hv_i(B_i)$ w.r.t $\hv_i$-s is maximal.
We lemma that for every agent $i \in [k]$, $v_i(B_i) \geq \hv_i(B_i) \geq (1-\alpha)/2$.

To see this, for a fixed agent $i \in U_2$ consider its APS-partition $\{(S_i, \lambda_i(S_i)\}_{S_i \in \Ss_i}$ for the original sets $(U, W)$.
We construct the following distribution $\mu_i$ over subsets of $W_2$: for every bundle $S_i \in \Ss_i^-(0)\cup \Ss_i^-(1)$, let $\mu_i(S_i) := \lambda_i(S_i)/\lambda_i(\Ss_i^-(0)\cup \Ss_i^-(1))$.
Suppose that for $i$ we have $\hv_i(B_i) < (1 - \alpha)/2$.
Then, sample a random bundle $B_i^*\sim \mu_i$, and give $i$ the bundle $B_i^* \cap W_2$ instead, and to every other agent $j \in U_2\setminus\{i\}$ give bundle $B_j^* := B_j\setminus B_i^*$.
Since $\mu_i(S_i)$ is non-zero only if $S_i \in \Ss_i^-(0)\cup \Ss_i^-(1)$, we must have $\hv_i(B_i^*) = \min(1 - \alpha, v_i(B_i^*\cap W_2)) \geq 1 - \alpha$, so agent $i$ gained at least $(1 - \alpha)/2$ welfare.
At the same time, for every item $e \in W_2$ the probability $\mu_i(e)$ that $S_i \sim \mu_i$ contains $e$ satisfies
\[
    \mu_i(e) = \sum_{\substack{S_i \in \Ss_i\\S_i\ni e}}\mu_i(S_i) = \frac{1}{\lambda_i(\Ss_i^-(0)\cup \Ss_i^-(1))}\sum_{\substack{S_i \in \Ss_i\\S_i\ni e}}\lambda_i(S_i) \leq \frac{1}{\lambda_i(\Ss_i^-(0)\cup \Ss_i^-(1))}\cdot \frac{1}{n} \leq \frac{1}{2k},
\]
because $\lambda_i(\Ss_i^-(0)\cup \Ss_i^-(1)) \geq 2k/n$.
Now, take some agent $j\in U_2\setminus\{i\}$.
Since her valuation $\hv_j$ acts like an additive function on $B_j$, in expectation agent $j$ lost at most
\[
    \sum_{e \in B_j}\mu_i(e)\cdot \hv_j(e) \leq \frac{1}{2k}\sum_{e \in B_j}\hv_j(e) = \frac{\hv_j(B_j)}{2k} \leq \frac{1- \alpha}{2k}.
\]
But then in expectation the total value lost by all other $|U_2\setminus \{i\}| = k - 1$ agents is at most 
\[
    \frac{1 - \alpha}{2k}\cdot (k - 1)< \frac{1 - \alpha}{2},
\]
implying that there exists a choice of $B_i^*$ such that the allocation $B_1^*, \ldots, B_k^*$ will achieve strictly larger welfare than $B_1, \ldots, B_k$, leading to a contradiction.
Hence, for a maximal welfare $B_1, \ldots, B_k$, for every $i \in [k]$ we must have $v_i(B_i) \geq \hv_i(B_i) \geq (1 - \alpha)/2$.
Finally, $\alpha \leq 1/3$, so $(1 - \alpha)/2 \geq \alpha$, which finishes the proof.
\end{proof}

\subsection{Allocation for case 4): agents $U_1$ and $U_3$}

\label{sec:bigcase4}

\textbf{Case 4} is the hardest one to deal with, it corresponds to the instance where the vast majority of agents agree which items are big and which items are small, while a small fraction of agents do not have any big items at all.
In order to get an $\alpha$-APS allocation for this case, we design a new algorithm, inspired by both \cref{algo} and \cref{diffalgo}.

Suppose that case 4) of \cref{bipartite} holds for $G$ and some $\eps_4'\leq 1/240$, denote $\eps_4 := 60\eps_4'\leq 1/4$.
Let $U = (U_1, U_2, U_3)$ and $W = (W_1, W_2, W_3)$  be the corresponding partitions.
Let $M_1$ be the perfect matching between $U_1$ and $W_1$.
Similarly to cases 2) and 3), for every edge $(i, e) \in M_1$, we give item $e$ to $i$, then $i$ receives $\alpha$-fraction of her APS.
By \cref{giveitem2}, for all the remaining agents in $U_2\cup U_3$ the APS will not decrease after we remove all items and agents from $M_1$.
So, it remains to allocate the remaining items $W_2\cup W_3$ among agents $U_2\cup U_3$.

By construction, items $W_3$ are small for both agents in $U_2$ and in $U_3$, while items $W_2\cup W_3$ are small only for agents in $U_3$.
Since $|U_2| > |W_2| \geq (1 - \eps_4)n$, the set $U_2$ contains at least $(1 - \eps_4)n$ agents, and there are at most $\eps_4 n$ agents in $U_3$.
Since $U_2$ is large, we cannot simply give big items to as many agents in $U_2$ as possible and then try to allocate the leftovers of $U_2$ and $U_3$ using only small items of $W_3$, using for example the algorithm of \cref{finalallocation2}.
First, \cref{finalallocation2} holds only when applied to at least $n_\alpha$ agents, while we will have less than $2\eps_4 n$ agents left.
Second, if $U_2$ is allocated first, there is no guarantee that for the leftovers of $U_2$ and $U_3$ the remaining items will provide enough APS-value, if anything, the vast majority of the APS-value will be lost.

Instead, we will allocate agents of $U_3$ first.
Since for $U_3$ all items of $W_2 \cup W_3$ are small, one may try and apply \cref{finalallocation2} and \cref{diffalgo} to agents $U_3$.
However, $|U_3|\leq \eps_4 n$, so if we assume that $U_3$ are the only agents participating in \cref{diffalgo} and consider their APS-partitions of items $W_2\cup W_3$ among only agents $U_3$, the algorithm will not succeed.
The issue is that when there are substantially less than $n$ total agents, for every step $k$ of \cref{diffalgo} the probabilities $p_{i_k}$ for distributions $\mu_{i_k}^k$ will be too large, thus agents $j \in \Aa^{k + 1}$ that remain active after step $k$ will lose a lot of their APS-value $\beta_j^k$, forcing it to drop below $\alpha$ very fast.

Another shortcoming of the approach suggested above is that if we allow agents $i\in U_3$ in \cref{diffalgo} to pick items from $W_2$, agents $U_3$ can take almost all items of $W_2$ just for themselves, and then we will not be able to match $U_2$ with the remainder of $W_2$.
This is due to the fact that all items of $W_2$ are small for all agents of $U_3$, and \cref{diffalgo} gives no upper bound on the number of items each allocated agent may get.

To mitigate these issues, we modify \cref{diffalgo} and run it on all $n_4 := |U_2 \cup U_3|$ agents together, and consider their APS-partitions of items $W_2 \cup W_3$ among all agents $U_2\cup U_3$.
However, we will force the algorithm to pick and allocate \textbf{only agents from $U_3$} at every step, and use \textbf{only items from $W_3$}, and run it until all agents of $U_3$ are allocated.
One may ask whether there is enough value left for agents $U_3$ if we restrict them only to items from $W_3$, and also whether after we allocate $U_3$, there will be enough value left for agents $U_2$ from the remaining items of $W_2\cup W_3$.
To answer these questions, we first observe that agents $U_2$, the vast majority of APS-value must come from the items of $W_3$, while agents $U_3$ possesses an APS partition of items $W_3$ among $|U_2|- |W_2|$ agents of value at least $1$.
\begin{claim}\label{w3value}
    \begin{enumerate}
        \item For agent $i \in U_3$, let $\{(S_i, \lambda_i(S_i)\}_{S_i \in \Ss_i}$ be her APS partition.
        Then 
        \[
            \sum_{S_i \in \Ss_i}\lambda_i(S_i)\cdot v_i(S_i \cap W_3) \geq 1 - \alpha.
        \]
        \item For every agent $j \in U_2$, there exists an APS-partition $\{(\hS_j, \hlam_j(\hS_j)\}_{\hS_j \in \Hss_j}$ of \textbf{only} items $W_3$ among $|U_2| - |W_2|$ agents such that its value $\hbeta_j := \hbeta_j(\Hss_j)$ satisfies
        \[
            \hbeta_j = \sum_{\hS_j \in \Hss_j}\hlam_j(\hS_i)\cdot v_j(\hS_j) \geq 1.
        \]
    \end{enumerate}
\end{claim}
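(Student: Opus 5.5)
The plan is to derive both parts from a single reduced APS partition, obtained from \cref{giveitem} and the additive structure of \cref{wlog}. Write $n_4 := |U_2 \cup U_3|$. First I make precise the reduction already used when the matching $M_1$ is allocated. Applying \cref{giveitem} once for each removed item of $W_1$ (its proof zeroes out every bundle containing the removed item and rescales the rest) turns each agent's partition into one with the following properties: it is a fractional $\frac{1}{n_4}$-partition supported on bundles contained in $W_2 \cup W_3$, and every bundle in its support still has value at least $1$. I then normalize as in \cref{wlog}. Afterwards each support bundle $S$ has $v(S) = v_S(S) = 1$ for an additive $v_S$ supported on $S$, and $\sum_{S \ni e} \lambda(S) \le \frac{1}{n_4}$ for every item $e$. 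I read ``her APS partition'' in both parts of the claim as this reduced partition, since it is the one the subsequent algorithm uses.

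For part 1, fix $i \in U_3$. By item (b) of case 4 of \cref{bipartite}, $i$ has no edges to $W_2 \cup W_3$, so $v_i(e) < \alpha$ for every $e \in W_2$. Every support bundle $S_i$ lies inside $W_2 \cup W_3$, so additivity gives
\[
v_i(S_i \cap W_3) \ge v_{S_i}(S_i \cap W_3) = 1 - v_{S_i}(S_i \cap W_2).
\]
Now I bound the weighted loss by exchanging the order of summation, exactly as in the proof of \cref{probval}:
\[
\sum_{S_i} \lambda_i(S_i)\, v_{S_i}(S_i \cap W_2) = \sum_{e \in W_2} \sum_{S_i \ni e} \lambda_i(S_i)\, v_{S_i}(e) \le \sum_{e \in W_2} \frac{v_i(e)}{n_4} < \frac{\alpha |W_2|}{n_4} \le \alpha .
\]
The middle inequality uses $v_{S_i}(e) \le v_i(e) < \alpha$ for each $e$. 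The last inequality uses $|W_2| < |U_2| \le n_4$. Summing the displayed lower bound against $\sum \lambda_i = 1$ then gives a total at least $1 - \alpha$.

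For part 2, fix $j \in U_2$. Here the items of $W_2$ may be big for $j$, so instead of bounding their value I discard every bundle that touches $W_2$; this amounts to repeating the construction of \cref{giveitem} for all of $W_2$. The bundles meeting $W_2$ have total weight at most $|W_2|/n_4$, because each item of $W_2$ lies in bundles of total weight at most $1/n_4$. Hence the bundles inside $W_3$ carry weight $\Lambda \ge (n_4 - |W_2|)/n_4 > 0$. I set $\hlam_j(\hS_j) := \lambda_j(\hS_j)/\Lambda$ for each such bundle. The new weights sum to $1$. Each item of $W_3$ has new load at most
\[
\frac{1/n_4}{(n_4 - |W_2|)/n_4} = \frac{1}{n_4 - |W_2|} \le \frac{1}{|U_2| - |W_2|},
\]
using $n_4 \ge |U_2| > |W_2|$. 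So this is a valid fractional $\frac{1}{|U_2| - |W_2|}$-partition of $W_3$. Every kept bundle retains value at least $1$, hence $\hbeta_j \ge 1$.

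I expect the only real obstacle to be bookkeeping rather than any estimate. The main point is to justify that the reduced partitions (entitlement $1/n_4$, supported on $W_2 \cup W_3$, bundles of value $1$ with additive representatives) are legitimate, and that renormalizing via \cref{wlog} after the reduction does not change the small/big status of items for these agents. The second point is to use the strict inequality $|W_2| < |U_2|$ from case 4(c), which is exactly what keeps the part-1 loss below $\alpha$ and keeps the denominator in part 2 positive.
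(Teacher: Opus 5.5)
Your proof is correct and takes essentially the paper's route. Part 1 is the same exchange-of-summation bound: items of $W_2$ are small for agents of $U_3$, and $|W_2|$ is below the number of agents. Part 2 is the same drop-and-rescale construction that repeated applications of \cref{giveitem} to the items of $W_2$ produce. Your version is in fact more careful than the paper's: the paper writes $n$ where $n_4$ is meant, routes part 2 through a matching $M_j$ that the APS bound does not need, and never notes explicitly that a fractional $\frac{1}{n_4-|W_2|}$-partition is also a fractional $\frac{1}{|U_2|-|W_2|}$-partition.
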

\begin{proof}
    By construction, every item $e \in W_2$ is small for every agent $i \in U_3$.
    Therefore, by deleting $e$ from every bundle $S_i \in \Ss_i$ the agent loses at most $\sum_{S_i\in \Ss_i : S_i \ni e}\lambda_i(S_i)\cdot \alpha \leq \alpha / n$ value.
    And, as $|W_2| < n$, removing all $W_2$ from bundles of $\Ss_i$ will decrease the APS-value of $i$ by at most $\alpha$.
    
    Next, for $j \in U_2$, consider the initial APS-partition $\{(S_j, \lambda_j(S_j)\}_{S_j \in \Ss_j}$ of agent $j$ over all agents $U_2\cup U_3$ and all items $W_2\cup W_3$, its value is at least $1$.
    By construction of $U_2$, there exists a matching $M_j$ such that $j \notin M_j$ and $|M_j| = |W_2|$.
    Then, we can apply \cref{giveitem2} to all agents of $M_j$ and their matched items of $W_2$, repeatedly removing pairs $(j', e') \in M_j$.
    By \cref{giveitem2}, for all the remaining $|U_2| - |W_2|$ agents, including $j$, their APS will not decrease.
    So, for the obtained APS-partition $\{(\hS_j, \hlam_j(\hS_j)\}_{\hS_j \in \Hss_j}$ of the remaining items, i.e $W_3$, among the remaining $|U_2| - |W_2|$ agents, including $j$, we will have $\hbeta_j(\hSs_j) \geq 1$.
\end{proof}
For every agent $j \in U_2$, we the APS-partition $\{(\hS_j, \hlam_j(\hS_j)\}_{\hS_j \in \Hss_j}$ from \cref{w3value}, without loss of generality we can assume that it satisfies \cref{wlog}.
Then, for every $j \in U_2$ we would have $\hbeta_j := \hbeta_j(\Hss_j) = 1$.

Since the algorithm only picks agents from $U_3$, values of $\beta_j$ for agents $j \in U_2$ will not affect its performance at all.
Because of this, we will filter-out dangerous bundles, as well as ensure that the APS-value decrease at every step is bounded, only for agents from $U_3$.
In essence, while agents from $U_2$ do participate in the algorithm, they play an ``observant'' role and just lose APS-value at every step from $U_3$ allocations.
That is, every time agent $i \in U_3$ is given some bundle, say $B_i$, by the algorithm, for $j \in U_2$ the algorithm completely removes the items of $B_i$ from $j$'s APS partition.
The full description of the algorithm is given below.

\begin{algorithm}[h]
\caption{Algorithm for agents $U_3$ in case 4) of \cref{bipartite}}\label{bigalgo}
	\begin{algorithmic}[1]
		\Require{Agents $U_3$ with APS partitions $\{(S_i, \lambda_i(S_i)\}_{S_i \in \Ss_i}$, agents $U_2$ with APS partitions $\{(\hS_j, \hlam_j(\hS_j)\}_{\hS_j \in \Hss_j}$ from \cref{w3value} satisfying \cref{wlog}, parameters $C, D > 0$.}
        \State Let $\Aa^k$ be the set of active agents at iteration $k$, initially $\Aa^1 := U_3$
        \State For $j \in U_2$ every $\hS_j \in \Hss_j$, let $\hS_j^k\subseteq \hS_j$ be the items remaining in $\hS_j$ at iteration $k$
        \State For $i \in U_3$ every $S_i \in \Ss_i$, let $S_i^k\subseteq S_i$ be the items remaining in $S_i$ at iteration $k$
        \For{$j \in U_2$}
            \State For every $\hS_j \in \Hss_j$, let $\hS_j^1 := \hS_j$
            \State Let $\hSs_j^1 := \{\hS_j^1 : \hS_j \in \hSs_j\}$ --- initial filtered partition for $U_2$-agent
        \EndFor
        \For{$i \in U_3$}
            \State For every $S_i \in \Ss_i$, let $S_i^0 := S_i \cap W_3$
            \State For every $S_i \in \Ss_i$, if $\Delta_i(S_i) > D$, let $S_i^1 := \varnothing$, otherwise $S_i^1 := S_i^0$
            \State Let $\Ss_i^1 := \{S_i^1 : S_i \in \Ss_i\}$ --- initial filtered partition for $U_3$-agent
        \EndFor
		\For{$k = 1,\ldots, |U_3|$}
            \State For every $i \in \Aa^k$, compute $\beta^k_i := \sum_{S_i \in \Ss_i}\lambda_i(S_i) \cdot v_i(S_i^k)$
            \State Sort $\Aa^{k} = \{i_k, i_{k + 1}, \ldots, i_{|U_3|}\}$ so that $\beta^k_{i_k} \geq \beta^k_{i_{k + 1}} \geq \ldots \geq \beta^k_{i_{|U_3|}}$, and pick the agent $i_k$
            \State Let $\mu_{i_k}^k$ be a probability distribution over sub-bundles of $\Ss_{i_k}^k$ from \cref{distpartition}
            \State Sample $B_{i_k} \sim \mu_{i_k}^k$ at random, and let $Q_{i_k}^{k} := B_{i_k}$
            \For{$h \in U_3 \setminus\Aa^{k}$}
                \State Update $Q_h^{k} := Q_h^{k - 1} \setminus B_{i_k}$
            \EndFor
            \State Update $\Aa^{k + 1} := \Aa^k \setminus \{i_k\}$
            \For{$i \in \Aa^{k+1}$}
                \State Update $\Ss_i^{k + 1} := \mathbf{Filtering}(v_{i_k}, v_i, B_{i_k}, \Ss_i^k, C)$ (\cref{filteralgo})
            \EndFor
            \For{$j \in U_2$}
                \State For every $\hS_j \in \hSs_j$, update $\hS_j^{k + 1} := \hS_j^k \setminus B_{i_k}$
                \State Update $\hSs_j^{k + 1} := \{\hS_j^{k + 1} : \hS_j \in \hSs_j\}$
            \EndFor
		\EndFor
		\Return sets $Q_1^{|U_3|}, \ldots, Q_n^{|U_3|}$.
	\end{algorithmic}
\end{algorithm}

We will prove the following theorem about the performance of \cref{bigalgo}.
\begin{theorem}\label{bigcorrectness}
    Let $\rho$ satisfy $2(12\rho - 3)\ln(3\rho) = (1 - 3\rho)(3\ln3 - 4)$.
    For every $\alpha \in [3/11, \rho)$ there exists $n_\alpha$, such that for all $n_4 \geq n_\alpha$ the following holds: after running \cref{bigalgo}, with probability at least $1/2 - 1/(2n_4)$
    \begin{enumerate}
        \item every agent $i \in U_3$ receives a bundle $Q_i^{|U_3|}$ satisfying $v_i(Q_i^{|U_3|}) \geq \alpha - \frac{2\ln^6n_4}{\sqrt{n_4}}$;
        \item remaining APS-values $\hbeta_j^{|U_3|} := \sum_{\hS_j \in \hSs_j}\hlam_j(\hS_j)\cdot v_j(\hS_j^{|U_3|})$ for agents $j \in U_2$ satisfy
        \[
            \frac{1}{|U_2|}\sum_{j \in U_2}\hbeta_j^{|U_3|} \geq 1 - \frac{16\alpha \eps_4}{3(1 - \eps_4)(\rho - \alpha)}.
        \]
    \end{enumerate}
\end{theorem}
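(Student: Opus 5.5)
Throughout, $n_4 = |U_2\cup U_3|$ plays the role of $n$ (all entitlements are $1/n_4$). The two parts of \cref{bigcorrectness} are handled separately, and the probability $1/2 - 1/(2n_4)$ comes from a union of two failure events. Part 1 holds with probability at least $1-1/n_4$, as in \cref{finalallocation2}. Part 2 is an expectation bound that we convert, via Markov's inequality on the average loss, into a statement holding with probability at least $1/2$.

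\textbf{Part 1 (agents of $U_3$).} By \cref{w3value}(1), after restricting to $W_3$ every $i\in U_3$ starts with $\beta_i^0\geq 1-\alpha$. Removing dangerous bundles costs at most $1/D$ by \cref{dangerous}, whose proof is unchanged; here $\Delta_i$ is measured only against the $U_3$-agents, which only helps. All items of $W_3$ are small for $U_3$, so \cref{distpartition} applies with $n$ replaced by $n_4$. The key observation concerns the $p$-bounds. The distribution is built from an APS partition for $n_4$ agents, so $p(\beta)=O(1/n_4)$, while only $|U_3|\leq \eps_4 n_4\leq n_4/4$ steps are run. We therefore compare with the deterministic process $\gamma_{n_4}^k(\beta,\eps)$ of \cref{epsprocess2}, started at $\beta\geq 1-\alpha-1/D$ and run for only $\eps_4 n_4$ steps. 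Since $\gamma$ is monotone in its start, and the process started at $1-\alpha$ after $n_4/4$ steps is bounded below by a constant comfortably above $\alpha$, we expect $\gamma^{|U_3|}_{n_4}(1-\alpha-1/D,\eps)-\eps\geq \alpha+(\rho-\alpha)/2$. This should be checked using the per-step decrease rate of order $1/n_4$ and the fact that $p\leq \frac{4\alpha}{3(\beta-\alpha)n_4}$. The martingale argument of \cref{martingale}, \cref{boundexpec}, \cref{boundvar}, \cref{totalprob2}, and \cref{finalprob2} then goes through verbatim with the same $c,D,\eps$ (as functions of $n_4$), restricted to $t\leq |U_3|$. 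Finally, \cref{stolenvalue} gives the loss $D/(c(cn_4-1))\leq 2\ln^6 n_4/\sqrt{n_4}$, which yields item 1 with probability at least $1-1/n_4$.

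\textbf{Part 2 (agents of $U_2$).} Fix $j\in U_2$. Its partition $\hSs_j$ is an APS partition among $|U_2|-|W_2|$ agents, so each item lies in bundles of weight at most $1/(|U_2|-|W_2|)$. This is too weak to use directly, so instead we bound the average over $j$ using the original $U_2\cup U_3$ structure. At step $k$, the agent $i_k$ samples $B_{i_k}$ with each item of probability at most $p_{i_k}$. By \cref{probval}, $\E{\hbeta_j^{k+1}\mid \cdot}\geq (1-p_{i_k})\hbeta_j^k$, since items of $B_{i_k}$ are fully removed for $U_2$ agents. Under the good event of Part 1 we have $\beta_{i_k}^k\geq \alpha+(\rho-\alpha)/2$, hence $p_{i_k}\leq \frac{4\alpha}{3((\rho-\alpha)/2)n_4}=\frac{8\alpha}{3(\rho-\alpha)n_4}$. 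Multiplying over $|U_3|\leq \eps_4 n_4$ steps and using $1-\prod(1-p)\leq\sum p$ gives expected total loss at most $\frac{8\alpha\eps_4}{3(\rho-\alpha)}$. To obtain the $(1-\eps_4)$ in the denominator, normalize $n_4\geq |U_2|\geq(1-\eps_4)n_4$ where needed.

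Conditioning on the good event is the delicate point. Handling it properly requires a stopped version of the process: freeze $\hbeta$ once the good event fails, in the manner of the stopping time in \cref{martingale}. The bound $1-\hbeta_j\leq 1$ then controls the contribution of the bad event. Averaging over $j\in U_2$ and applying Markov's inequality to the nonnegative average loss $L=\frac{1}{|U_2|}\sum_j(1-\hbeta_j^{|U_3|})$, whose expectation is at most $\frac{8\alpha\eps_4}{3(1-\eps_4)(\rho-\alpha)}$, gives $L\leq 2\E{L}$ with probability at least $1/2$, which is exactly the stated constant $16/3$. A union bound with Part 1 then gives probability at least $1/2-1/n_4$. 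I expect tightening this to the stated $1/2-1/(2n_4)$ to need more care in how the failure probabilities are combined.

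The main obstacle I anticipate is Part 1's claim that the shortened process stays above $\alpha+(\rho-\alpha)/2$ when started from $1-\alpha$ rather than from $1$. If the constant is tight, I would fall back on the crude bound $p\leq \frac{4\alpha}{3(\beta-\alpha)n_4}$ and solve the resulting ODE over time $\eps_4\leq 1/4$. A secondary issue is the conditioning in Part 2.
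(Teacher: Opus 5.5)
Your proposal follows the paper's proof of \cref{bigcorrectness} essentially step for step. The $U_3$ phase is the martingale machinery of \cref{totalprob2} and \cref{finalprob2}, run with $n_4$ in place of $n$ for only $|U_3|$ steps and started from $\beta_i^1\ge 1-\alpha-1/D$. The $U_2$ bound is \cref{probval} with $p_{i_k}\le\frac{8\alpha}{3(\rho-\alpha)n_4}$, followed by Markov's inequality on the average loss. Your stopped-process treatment of the conditioning is more careful than the paper, which gets $\frac12-\frac1{2n_4}=\frac12\bigl(1-\frac1{n_4}\bigr)$ by simply multiplying the good-event probability by the Markov bound. As written, however, you leave two points open, and both close easily.

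\emph{The shortened-process estimate.} The paper obtains it from \cref{gammasimple} with $\rho^*=\alpha+(\rho-\alpha)/2$, $\delta=\alpha+1/D$, $n^*=n_4$ and $n_0=|U_3|$. This reduces the estimate to $\frac{\eps_4}{1-\eps_4}+2\alpha<1$. Your ODE fallback also works, with room to spare. Since $1-\alpha<3\alpha$, the process only ever uses the rate $\frac{4\alpha}{3(\gamma-\eps-\alpha)n_4}$. Descending from $1-\alpha$ to just above $\alpha$ then takes about $n_4\cdot\frac{3}{4\alpha}\bigl[(1-2\alpha)-\alpha\ln\frac{1-\alpha}{\alpha}\bigr]\approx 0.49\,n_4$ steps or more for $\alpha\in[3/11,\rho)$, whereas you only need $|U_3|\le n_4/4$ steps. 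The epoch bound $l_n^<$ from the proof of \cref{lowergamma2} makes this rigorous, up to $O(\sqrt{n_4})$ steps and $o(1)$ terms coming from $\eps$ and $1/D$.

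\emph{The constant $\frac12-\frac1{2n_4}$.} You do not need to insert the factor $1-\eps_4$ by normalizing; it is slack. With $|U_3|\le\eps_4 n_4$, which you correctly note, the stopped average loss $\tilde L$ satisfies $\E{\tilde L}\le\frac{8\alpha\eps_4}{3(\rho-\alpha)}$. This is exactly $\frac{1-\eps_4}{2}$ times the threshold $T=\frac{16\alpha\eps_4}{3(1-\eps_4)(\rho-\alpha)}$, so Markov gives $\P{\tilde L>T}\le\frac{1-\eps_4}{2}$. On the good event of Part 1, $\tilde L$ equals the true loss. A union bound therefore gives success probability at least $\frac{1+\eps_4}{2}-\frac1{n_4}$. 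This is at least $\frac12-\frac1{2n_4}$, because either $U_3=\varnothing$ (and there is nothing to prove) or $1\le|U_3|\le\eps_4 n_4$.

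Finally, your concern about the per-item weight $1/(|U_2|-|W_2|)$ in $\hSs_j$ is unnecessary. The proof of \cref{probval} uses only the sampling agent's selection probabilities $p_{i_k}$ and the additive form of $v_j$ on each bundle, not any property of $j$'s own partition.
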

Given this theorem, we then can do similar trick as in \cref{finalallocation} and \cref{finalallocation2}: run \cref{bigalgo} w.r.t value $\alpha' = \alpha + (\rho -\alpha)/2$.
Since every agent $i \in U_3$ has only items of value less than $\alpha < \alpha'$ in their APS partition, these APS partitions satisfy \cref{wlog} and \textbf{Assumption for item value} for $\alpha'$.
Then, \cref{distpartition} applies for $\alpha'$, and the \cref{bigalgo} will try to obtain an $(\alpha'- \frac{2\ln^6n_4}{\sqrt{n_4}})$-allocation for agents in $U_3$.
Taking $n$ large enough so that $n_4 \geq (1 - \eps_4)n \geq n_\alpha$ will guarantee that $\alpha'- \frac{2\ln^6n_4}{\sqrt{n_4}} \geq \alpha$.
At the same time, agents $j \in U_2$ will have 
\[
    \frac{1}{|U_2|}\sum_{j \in U_2}\hbeta_j^{|U_3|} \geq 1 - \frac{16\alpha' \eps_4}{3(1 - \eps_4)(\rho - \alpha')} = 1 - \frac{16(\rho + \alpha)\eps_4}{3(1 - \eps_4)(\rho - \alpha)}.
\]
Thus we obtain the following corollary.

\begin{corollary}\label{bigallocation}
    Let $\rho$ satisfy $2(12\rho - 3)\ln(3\rho) = (1 - 3\rho)(3\ln3 - 4)$.
    For every $\alpha \in [3/11, \rho)$ there exists $n_\alpha$, such that for all $n_4 \geq n_\alpha$ the following holds: there exists an allocation of some of the items $W_3$ among agents $U_3$ such that
    \begin{enumerate}
        \item every agent $i \in U_3$ receives a bundle $Q_i$ satisfying $v_i(Q_i) \geq \alpha$;
        \item values $\hbeta_j^{|U_3|} := \sum_{\hS_j \in \hSs_j}\hlam_j(\hS_j)\cdot v_j(\hS_j\setminus \bigcup_{i\in U_3}Q_i)$ for agents $j \in U_2$ satisfy
        \[
            \frac{1}{|U_2|}\sum_{j \in U_2}\hbeta_j^{|U_3|} \geq 1 - \frac{16(\rho + \alpha)\eps_4}{3(1 - \eps_4)(\rho - \alpha)}.
        \]
    \end{enumerate}
\end{corollary}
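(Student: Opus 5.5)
The plan is to derive \cref{bigallocation} from \cref{bigcorrectness}, applied with the boosted parameter $\alpha' := \alpha + (\rho - \alpha)/2 \in [3/11, \rho)$. First I check that \cref{bigalgo} is well defined at level $\alpha'$. Every item of $W_2 \cup W_3$ is small for every agent of $U_3$, meaning it has value $< \alpha < \alpha'$. So, after the reduction of \cref{wlog}, the APS partitions of $U_3$-agents (restricted to $W_3$) satisfy the small-item hypothesis at level $\alpha'$, and \cref{partition} and \cref{distpartition} apply with $\alpha'$. By part 1 of \cref{w3value}, each $U_3$-agent starts with $\sum_{S_i}\lambda_i(S_i) v_i(S_i\cap W_3) \ge 1-\alpha$ rather than $1$. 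I will treat this as absorbed into \cref{bigcorrectness}, whose proof presumably already accounts for the $W_3$-restriction in the initial filtered partition $S_i^0 = S_i \cap W_3$. The $U_2$-agents use the partitions $\hSs_j$ from part 2 of \cref{w3value}, normalised so that $\hbeta_j = 1$.

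Next I invoke \cref{bigcorrectness} with $\alpha'$ in place of $\alpha$, obtaining a threshold $n_{\alpha'}$. For $n_4 \ge n_{\alpha'}$, the run succeeds with probability at least $1/2 - 1/(2n_4) > 0$. Since the event has positive probability, some realisation of the randomness achieves both conclusions simultaneously, and I fix that realisation. This gives bundles $Q_i := Q_i^{|U_3|} \subseteq W_3$ that are pairwise disjoint: the stealing step $Q_h^k := Q_h^{k-1}\setminus B_{i_k}$ keeps them disjoint. Each satisfies $v_i(Q_i) \ge \alpha' - 2\ln^6 n_4/\sqrt{n_4}$. Since $\alpha' - \alpha = (\rho-\alpha)/2$ is a positive constant, I enlarge the threshold to $n_\alpha \ge n_{\alpha'}$ so that $2\ln^6 n_4/\sqrt{n_4} \le (\rho-\alpha)/2$ for all $n_4 \ge n_\alpha$. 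Then $v_i(Q_i) \ge \alpha$, which is item 1.

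For item 2, the $U_2$ partitions in \cref{bigalgo} lose exactly the items of every initially assigned $B_{i_k}$. Since $Q_i \subseteq B_i$, we have $\bigcup_i Q_i \subseteq \bigcup_k B_{i_k}$, and hence $\hS_j \setminus \bigcup_i Q_i \supseteq \hS_j^{|U_3|}$. Monotonicity of $v_j$ then gives that the quantity in the corollary is at least $\hbeta_j^{|U_3|}$. Averaging and using item 2 of \cref{bigcorrectness} at $\alpha'$ yields the bound $1 - \frac{16\alpha'\eps_4}{3(1-\eps_4)(\rho-\alpha')}$. Substituting $\alpha' = (\rho+\alpha)/2$ and $\rho - \alpha' = (\rho-\alpha)/2$ makes the factors of $2$ cancel:
\[
\frac{16\alpha'\eps_4}{3(1-\eps_4)(\rho-\alpha')} = \frac{16(\rho+\alpha)\eps_4}{3(1-\eps_4)(\rho-\alpha)}.
\]
This is exactly the claimed bound.

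The only delicate point is item 2. The corollary measures loss against $\bigcup_i Q_i$, while the theorem measures it against the removed sets $B_{i_k}$, so I must confirm that the containment goes the right way. It does, because stealing only shrinks the $Q_i$. The remaining steps are parameter bookkeeping.
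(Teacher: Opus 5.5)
Your proposal is correct and follows the paper's own derivation: run \cref{bigalgo} at $\alpha' = \alpha + (\rho-\alpha)/2$, invoke \cref{bigcorrectness} and use that its success probability $1/2 - 1/(2n_4)$ is positive, absorb the $2\ln^6 n_4/\sqrt{n_4}$ stealing loss into the gap $(\rho-\alpha)/2$ for large $n_4$, and substitute $\alpha'$ into the $U_2$ bound so that the factors of $2$ cancel. Your extra check that $\bigcup_i Q_i \subseteq \bigcup_k B_{i_k}$, which by monotonicity makes the corollary's quantity at least the theorem's $\hbeta_j^{|U_3|}$, is a correct detail that the paper leaves implicit.
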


\begin{proof}[Proof of \cref{bigcorrectness}]
For $k \in [|U_3|]$, let $i_k$ denote the agent that was picked at iteration $k$ of \cref{bigalgo}, and let $p_{i_k}$ be the smallest possible number such that for every item $e \in W_2\cup W_3$, $\mu_{i_k}^k(e) \leq p_{i_k}$.
Once again, for every $j \in U_3$ we consider the process $\wgamma_j^k$ defined as $\gamma_j^1 = \beta_j^1$ and for $k \geq 1$, $\wgamma_j^{k + 1} = (1 - p_{i_k})\wgamma_j^k$.
Consider the deterministic process $\gamma_{n_4}^k(\beta_j^1, \eps)$ for some value $\eps$.
\begin{claim}\label{bigepsprocess}
    For $1 \leq t < |U_3|$, let $j \in \Aa^{t + 1}$ be some agent from $U_3$ that remains active after iteration $t$ of \cref{bigalgo}.
    For $\eps > 0$, let $\Ee_j^t(\eps)$ be the following event:
    \[
        \Ee_j^t(\eps) = \{\forall k \in [t]: \beta_j^k \geq \wgamma_j^k - \eps\}.
    \]
    Under the event $\Ee_j^t(\eps)$, it holds for all $k \in [t]$ that $\beta_j^k \geq \gamma_{n_4}^k(\beta_j^1, \eps)$.
    Furthermore, if agent $i_k$ was picked at iteration $k$ of \cref{bigalgo}, under the event $\Ee_j^t(\eps)$ for every $k \in [t]$ it holds 
    \[
        p_{i_k} \leq \frac{4\alpha}{3(\gamma_{n_4}^k(\beta_j^1, \eps) - \eps - \alpha)n_4}.
    \]
\end{claim}
\begin{proof}
    Follows directly from \cref{epsprocess2} and \cref{pikbound2} by observing that all $\beta_{i_k}^k$ was the largest among $\beta_j^k$ for agents $j \in U_3$ that were still active, i.e $j \in \Aa^k$.
\end{proof}

In order to prove \cref{bigcorrectness}, we need to show that \cref{bigalgo} at every iteration manages to successfully construct distribution $\mu_{i_k}^k$.
As before, this requires showing that 1) for every $i \in U_3$, process $\gamma_{n_4}^k(\beta_i^1, \eps) - \eps$ is at least $\alpha + (\rho - \alpha)/2$ (provided that $n_4$ is large enough), and 2) for every agent $j \in U_3$ and every iteration $t$ of \cref{bigalgo}, the probability that the event $\{\Ee_i^t(\eps) \mid i \in \Aa^t\}$ does not occur is small. 

Note that for every $i \in U_3$ values $\beta_i^k$ at all iterations $k$ are affected only by other $\beta_j^k$ for $j \in U_3$.
In addition, by \cref{w3value} we have $\beta_i^1 \geq 1 - \alpha - 1/D$ for all $i \in U_3$.
Now, consider the simplified statement of \cref{lowergamma2better}.
\begin{theorem}[\cref{lowergamma2better} restated]\label{gammasimple}
    For $\eps > 0$, let $\alpha \leq \rho^* < 1 - \eps$.
    Consider the process $\gamma_n^k(1 - \delta, \eps)$ for this $\eps$, $0 < \delta \leq 1 - (\alpha + \eps)$ and some ${n^*}$, and take integer $d$ such that $d\leq n$.
    Then, $\gamma_{n^*}^{n_0}(1 - \delta, \eps) \geq \rho^* + \eps$ if ${n^*}, n_0$ satisfy
    \begin{multline}
        \frac{3(3\alpha - \rho^* - \alpha\ln(3\alpha)  + \alpha\ln \rho^*)}{4\alpha} + \frac{1 - 3\alpha + (12\alpha - 3)\ln(3\alpha )}{2(1 - 3\alpha)} \\
        \geq \frac{n_0}{{n^*}} + \eps\left(\frac{5}{4\alpha} +  \frac{(1 - \ln(3\alpha))}{2(1 - 3\alpha)}\right) + 2\delta + O\left(\frac{1}{\sqrt{{n^*}}}\right).\label{gammasimpleineq}
    \end{multline}
\end{theorem}
In our case, $\rho^* = \alpha + (\rho - \alpha)/2 < \rho$, where $\rho$ satisfies $2(12\rho - 3)\ln(3\rho) = (1 - 3\rho)(3\ln3 - 4)$.
For these values of parameters $\rho^*$ and $\alpha$, the lefthand side expression of \cref{gammasimpleineq} is at least $1 + \const$.
Next, as $\beta_j^1 \geq 1 - \alpha - 1/D$, we pick $\delta = \alpha + 1/D$ in \cref{gammasimple}.
Finally, for the process $\gamma_{n_4}^k(\beta_j^1, \eps)$ we pick $n^* = n_4 \geq (1 - \eps_4)n$ and $n_0 = |U_3| \leq \eps_4 n$.
Then, the righthand side of \cref{gammasimpleineq} is at most
\[
    \frac{\eps_4}{1 - \eps_4} + 2\alpha + \eps\left(\frac{5}{4\alpha} +  \frac{(1 - \ln(3\alpha))}{2(1 - 3\alpha)}\right) + \frac{2}{D} + O\left(\frac{1}{\sqrt{{n}}}\right).
\]
Choosing small enough constant $\eps_4$ so that $\frac{\eps_4}{1 - \eps_4} + 2\alpha < 1$, and picking $\eps, D$ such that $\eps, \frac{1}{D}\xrightarrow{n\to\infty}0$ guarantees that for large enough value of $n_4$ (i.e, large enough $n$), we have $\gamma_{n_4}^k(\beta_i^1, \eps) - \eps \geq \alpha + (\rho - \alpha)/2$ for all $k \leq |U_3|$.

It follows that, similar to \cref{totalprob2} and \cref{finalprob2}, one can pick values of $c, D, \eps$ in \cref{bigalgo} and choose $\eps\xrightarrow{n_4\to\infty}0$ such that $\eps, \frac{1}{D} \xrightarrow{n_4\to\infty}0$ and with probability at least $1 - 1/n_4$ for every agent $i \in U_3$ and iteration $1 \leq t \leq |U_3|$, if $i \in \Aa^t$ then $\beta_i^t \geq \alpha + (\rho - \alpha)/2$.
But then it  follows directly from the proof of \cref{finalprob2} and \cref{dangerous} that after running \cref{bigalgo}, every agent $i \in U_3$ receives a bundle $Q_i^{|U_3|}$ such that $v_i(Q_i^{|U_3|}) \geq \alpha - \frac{2\ln^6n_4}{\sqrt{n_4}}$.

Now we show that with probability at least $1/2 - 1/(2n_4)$, agents $j \in U_2$ must satisfy $\frac{1}{|U_2|}\sum_{j \in U_2}\hbeta_j^{|U_3|} \geq 1 - \frac{16\alpha \eps_4}{3(1 - \eps_4)(\rho - \alpha)}$.
To see this, fix some $j \in U_2$ and consider the expectation $\E{\hbeta_j^{|U_3|}}$ over distributions $\mu_{i_1}^1, \mu_{i_2}^2,\ldots, \mu_{i_{|U_3|}}^{|U_3|}$.
By \cref{probval}, for every $1 \leq k \leq |U_3|$ we have
\[
    \E[\mu_{i_1}^1,\ldots, \mu_{i_k}^k, \mu_{i_{k + 1}}^{k + 1}]{\hbeta_j^{k + 1}} \geq (1 - p_{i_k}) \E[\mu_{i_1}^1,\ldots, \mu_{i_k}^k]{\hbeta_j^{k}}.
\]
Since $\hbeta_j^1 = 1$, by \cref{bigepsprocess} with probability at least $1 - 1/n_4$ for every $j \in U_2$ it holds
\[
    \E{\hbeta_j^{|U_3|}} \geq \prod_{k = 1}^{|U_3|}(1 - p_{i_k})\hbeta_j^1 \geq \prod_{k = 1}^{|U_3|}\left(1 - \frac{4\alpha}{3(\gamma_{n_4}^k(\beta_{i_k}^1, \eps) - \eps - \alpha)n_4}\right)
\]
for our choice of $C, D, \eps$.
Since $n_4$ is large enough, we have $\gamma_{n_4}^k(\beta_{i_k}^1, \eps) - \eps \geq \alpha + (\rho - \alpha)/2$ for every picked $i_k \in U_3$, thus
\[
    \E{\hbeta_j^{|U_3|}} \geq \prod_{k = 1}^{|U_3|}\left(1 - \frac{8\alpha}{3(\rho -\alpha)n_4}\right) \geq 1 - |U_3|\cdot \frac{8\alpha}{3(\rho -\alpha)n_4} \geq 1 - \frac{8\alpha \eps_4}{3(1 - \eps_4)(\rho - \alpha)},
\]
where the last inequality follows from $n_4 \geq (1 - \eps_4)n$ and $|U_3| \leq \eps_4 n$.
Then, by Markov inequality, with probability at least $1 - 1/n_4$ it holds
\begin{multline*}
    \P{|U_2| - \sum_{j \in U_2}\hbeta_j^{|U_3|} >  \frac{16\alpha \eps_4}{3(1 - \eps_4)(\rho - \alpha)}\cdot |U_2|} \\
    \leq \frac{1}{|U_2|}\cdot \frac{3(1 - \eps_4)(\rho - \alpha)}{16\alpha \eps_4}\cdot \E{|U_2| - \sum_{j \in U_2}\hbeta_j^{|U_3|}}\\
    = \frac{1}{|U_2|}\cdot \frac{3(1 - \eps_4)(\rho - \alpha)}{16\alpha \eps_4}\cdot \sum_{j \in U_2}\left(1 - \E{\hbeta_j^{|U_3|}}\right)
    \\
    \leq \frac{1}{|U_2|} \cdot \frac{3(1 - \eps_4)(\rho - \alpha)}{16\alpha \eps_4}\cdot \frac{8\alpha \eps_4}{3(1 - \eps_4)(\rho - \alpha)} \cdot |U_2| = \frac{1}{2}.
\end{multline*}
It follows by the law of total probability that with probability at least $1/2 - 1/(2n_4)$ it holds
\[
    \frac{1}{|U_2|}\sum_{j \in U_2}\hbeta_j^{|U_3|} \geq 1 - \frac{16\alpha \eps_4}{3(1 - \eps_4)(\rho - \alpha)}.
\]
\end{proof}
\subsection{Allocation for case 4): agents $U_2$}

For every agent $j \in U_2$, let $\{(\hS_j, \hlam_j(\hS_j))\}_{\hS_j \in \Hss_j}$ be the APS partition of only items $W_3$ among $|U_2|-|W_2|$ agents from \cref{w3value}, satisfying \cref{wlog}.
We showed in \cref{bigcorrectness} and \cref{bigallocation} that there exists an allocation of items $W_3$ among agents $U_3$ with the following properties:
\begin{itemize}
    \item every agent $i \in U_3$ receives a bundle $Q_i$ of value at least $\alpha$-APS;
    \item values $\hbeta_j := \sum_{\hS_j \in \hSs_j}\hlam_j(\hS_j)\cdot v_j(\hS_j\setminus \bigcup_{i\in U_3}Q_i)$ for agents $j \in U_2$ satisfy
        \[
            \frac{1}{|U_2|}\sum_{j \in U_2}\hbeta_j\geq 1 - \frac{16(\rho + \alpha)\eps_4}{3(1 - \eps_4)(\rho - \alpha)}.
        \]
\end{itemize}
We will now focus on allocating agents $U_2$.
By construction, every subset $U_2'\subseteq U_2$ of size $|W_2| \geq (1 - \eps_4)n$ forms a perfect matching with $W_2$.
Hence, it is sufficient to allocate arbitrary $|U_2| - |W_2| \leq \eps_4 n$ agents of $U_2$ using the items of $W_3$, and then use the matching to give a single personal big item to each $|W_2|$ of the remaining agents of $U_2$.

We will do this using a generalization of our initially presented approach for the case of same-valuation agents, specifically the greedy algorithm \cref{algo}, to the case of agents with different valuations.
We will abuse notation and let $\hSs_j^0$ denote the filtered APS-partition $\{(\hS_j^0, \hlam_j(\hS_j))\}_{\hS_j \in \Hss_0}$ of agent $j \in U_2$, obtained from the original $\{(\hS_j, \hlam_j(\hS_j))\}_{\hS_j \in \Hss_j}$ after \cref{bigalgo} (with the property on average above).

Denote $n_0 := |U_2| - |W_2|$, so $|W_2| =|U_2| - n_0$, and let $\eps_0$ be the smallest constant satisfying $n_0 \leq \eps_0|U_2|$.
Our algorithm will run for $n_0$ iterations, and at every iteration $k$ we maintain a set of agents $\Aa^k\subseteq U_2$ that are still active, as well as total values $\hbeta_j^k$ of their filtered APS partitions $\{(\hS_j^k, \hlam_j(\hS_j))\}_{\hS_j \in \Hss_j}$ over items $W_3$.
For iteration $k$ and any set $B$, denote
\[
    \hbeta_j^k(B) := \sum_{\hS_j \in \hSs_j}\hlam_j(\hS_j)\cdot v_j(\hS_j^k \setminus B).
\]
At every iteration $k$, we pick an active agent $j_k \in \Aa^k$ with \textbf{largest} value $\hbeta_{j_k}^k$, and then greedily choose for her bundle $B_{j_k}$ that maximizes the average value $\frac{1}{|\Aa^k| - 1}\sum_{j \in \Aa^k\setminus\{j_k\}}\hbeta_j^{k}(B_{j_k})$.
We allocate $B_{j_k}$ to $j_k$ and set $\Aa^{k + 1} = \Aa^k\setminus\{j_k\}$.
We run this algorithm until exactly $n_0 = |U_2| - |W_2|$ agents are allocated, and then give out single big items to the remaining $|W_2|$ agents, similar to what we did in previous cases 1), 2) and 3).
The full description of this greedy-average algorithm, \cref{greedavgalgo}, is given below.

\begin{algorithm}[h]
\caption{Algorithm for agents $U_2$ in case 4) of \cref{bipartite}}\label{greedavgalgo}
	\begin{algorithmic}[1]
		\Require{Agents $U_2$ with filtered APS partitions $\{(\hS_j^0, \hlam_j(\hS_j)\}_{\hS_j \in \Hss_j}$ after \cref{bigalgo}.}
        \State Let $\Aa^k$ be the set of active agents at iteration $k$, initially $\Aa^1 := U_2$
        \State For $j \in U_2$ and every $\hS_j \in \Hss_j$, let $\hS_j^k\subseteq \hS_j$ be the items remaining in $\hS_j$ at iteration $k$
		\For{$k = 1,\ldots, n_0$}
            \State For every $j \in \Aa^k$, compute $\hbeta^k_j := \sum_{\hS_j \in \hSs_j}\lambda_j(\hS_j) \cdot v_j(\hS_j^k)$
            \State Sort $\Aa^{k} = \{j_k, j_{k + 1}, \ldots, j_{|U_3|}\}$ so that $\hbeta^k_{j_k} \geq \hbeta^k_{j_{k + 1}} \geq \ldots \geq \hbeta^k_{j_{n_0}}$, and pick the agent $j_k$
            \State Use \cref{partition} to construct the multi-set $\Bb_{j_k}^k$ from the APS-partition $\hSs_{j_k}^k$
            \State Let $B_{j_k} \in \Bb_{j_k}^k$ be an acceptable bundle for $j_k$ maximizing $\frac{1}{|\Aa^k| - 1}\sum_{j \in \Aa^{k}\setminus\{j_k\}}\hbeta_j^k(B_{j_k})$
            \State Give agent $j_k$ bundle $B_{j_k}$, and set $\Aa^{k + 1} := \Aa^{k}\setminus \{j_k\}$
            \For{$j \in \Aa^{k + 1}$}
                \State For every $\hS_j \in \hSs_j$, update $\hS_j^{k + 1} := \hS_j^k \setminus B_{j_k}$
                \State Update $\hSs_j^{k + 1} := \{\hS_j^{k + 1} : \hS_j \in \hSs_j\}$
            \EndFor
		\EndFor
		\Return sets $B_{j_1}, \ldots, B_{j_{n_0}}$.
	\end{algorithmic}
\end{algorithm}

We will now analyze the performance of \cref{greedavgalgo}.
It is going to be quite similar to the analysis of \cref{algo}, with additional caveats.
First, we are going to define a new special process $\hgam_{n_0}^k$.
\begin{definition}
    Let $\hgam_{n_0}^k$ be the following process: $\hgam_n^1 = 1 - \frac{16(\rho + \alpha)\eps_4}{3(1 - \eps_4)(\rho - \alpha)}$, and for $k \geq 1$,
    \begin{gather*}
        \text{if }\; \hgam_{n_0}^k - \tau \geq 3\alpha,\;\text{ then }\; \hgam_{n_0}^{k + 1} = \left(1 - \frac{2(1 - 3\alpha)}{(\hgam_{n_0}^k -\tau - 12\alpha + 3)n_0}\right)\hgam_{n_0}^k;\\
        \text{if }\; \hgam_{n_0}^k - \tau < 3\alpha,\; \text{ then }\; \hgam_{n_0}^{k + 1} = \left(1 - \frac{4\alpha}{3(\hgam_{n_0}^k -\tau- \alpha){n_0}}\right)\hgam_{n_0}^k,
    \end{gather*}
    where $\tau = \frac{\eps_0}{(1-2\eps_0)}$.
\end{definition}
Similarly to the case of same valuations covered in \cref{algogamma} and \cref{algogamma2}, we use $\hgam_{n_0}^k$ to lower bound the average remaining value of active agents $\Aa^k$.
Then, if the value $\hgam_{n_0}^k$ stays large enough after $n_0$ iterations of \cref{greedavgalgo}, it guarantees that all $n_0$ iterations will succeed.
\begin{theorem}\label{avggamma}
    For every iteration $k \in [n_0]$ it holds
    \[
        \frac{1}{|\Aa^k|}\sum_{j \in \Aa^k}\hbeta_j^k \geq \hgam_{n_0}^k - \tau.
    \]
\end{theorem}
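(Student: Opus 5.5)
We prove \cref{avggamma} by induction on $k$, following the template of \cref{algogamma2}. The new difficulties are that the quantity tracked is an \emph{average} over the active set $\Aa^k$, that this set shrinks at every step, and that the $\hbeta$-values of different agents need not coincide. Write $\oW^k := \frac{1}{|\Aa^k|}\sum_{j \in \Aa^k}\hbeta_j^k$.

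\textbf{Base case $k=1$.} Here $\Aa^1 = U_2$, and by \cref{bigallocation} we have $\oW^1 \geq 1 - \frac{16(\rho + \alpha)\eps_4}{3(1 - \eps_4)(\rho - \alpha)} = \hgam_{n_0}^1 \geq \hgam_{n_0}^1 - \tau$.

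\textbf{Inductive step.} The plan has three parts.

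First, since $j_k$ maximizes $\hbeta^k_j$, we get $\hbeta^k_{j_k} \geq \oW^k \geq \hgam_{n_0}^k - \tau$. The APS-partition $\hSs_{j_k}$ is a fractional $\frac{1}{n_0}$-partition of $W_3$ satisfying \cref{wlog}, and all items of $W_3$ are small for $j_k$. Hence \cref{distpartition}, with $n$ replaced by $n_0$ and $\beta = \hgam_{n_0}^k - \tau$, yields a distribution $\mu$ over acceptable bundles in $\Bb^k_{j_k}$ in which every item has probability at most $p$. Here $p$ is the step size appearing in the definition of $\hgam$. By \eqref{3alineq}, this holds in both the regime $\hgam-\tau\geq 3\alpha$ and the regime below $3\alpha$, and $p$ is monotone in $\beta$.

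Second, apply \cref{probval} to each $j \in \Aa^k\setminus\{j_k\}$ and average. This gives
\[\E[B\sim\mu]{\tfrac{1}{|\Aa^k|-1}\textstyle\sum_{j\neq j_k}\hbeta_j^k(B)} \geq (1-p)\tfrac{1}{|\Aa^k|-1}\textstyle\sum_{j\neq j_k}\hbeta^k_j .\]
The greedy choice of $B_{j_k}$ attains at least this expectation. Moreover, since $j_k$ had the largest value, removing it does not increase the average, so $\frac{1}{|\Aa^k|-1}\sum_{j\neq j_k}\hbeta^k_j$ is at most $\oW^k$. This inequality points the wrong way, and that is exactly where the slack $\tau$ is needed.

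Third, and this is the main obstacle, we must control the loss in the average caused by discarding the maximal agent. We will not compare with $\oW^k$ directly. Instead we track the invariant $\oW^k \geq \hgam_{n_0}^k - \tau$ using a cruder bound: $\sum_{j\neq j_k}\hbeta_j^k \geq |\Aa^k|\oW^k - 1$, because every $\hbeta\leq 1$. Dividing by $|\Aa^k|-1 \geq |U_2| - n_0 \geq (1-\eps_0)|U_2|$, the average over the remaining agents drops by at most $\frac{1-\oW^k}{|\Aa^k|-1}$ per step. Summed over $n_0 \leq \eps_0 |U_2|$ steps, the cumulative drop is at most about $\frac{\eps_0}{1-2\eps_0}=\tau$, after accounting for the shrinking denominator, which is at least $(1-2\eps_0)|U_2|$ in the worst case.

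To make this rigorous, we will define the auxiliary sequence $\oW^k + \sum_{k'<k}\frac{1}{|\Aa^{k'}|-1}$ and prove by induction that it dominates $\hgam_{n_0}^k$. This requires separating the multiplicative decay $(1-p)$ from the additive losses, using $(1-p)(x-a)\geq (1-p)x - a$. It also requires monotonicity of $x\mapsto(1-p(x-\tau))x$, which is checked exactly as in \cref{algogamma2}. Combining these with the fact that the additive terms total at most $\tau$ yields $\oW^{k+1} \geq \hgam_{n_0}^{k+1}-\tau$.
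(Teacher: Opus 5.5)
Your proposal is correct and follows essentially the paper's argument. Both proofs induct on the average over $\Aa^k$, use the maximality of $\hbeta^k_{j_k}$ to bound the item probabilities of $j_k$'s distribution, and let the greedy-average choice beat that random bundle via the averaged form of the expected-loss bound. Both also absorb the cost of discarding the top agent (at most $\frac{1}{|\Aa^k|-1}$ per step, since all values are at most $1$) into an additive error of size at most $\frac{\eps_0}{1-\eps_0}\le\tau$.

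The only differences are cosmetic. Your error term is the plain sum $\sum_{k'<k}\frac{1}{|\Aa^{k'}|-1}$, while the paper uses the recursion $\tau^{k+1}_{n_0}=\tau^k_{n_0}\bigl(1+\frac{1}{|\Aa^k|-1}\bigr)+\frac{1}{|\Aa^k|-1}$. Also, you do not actually need monotonicity of $x\mapsto(1-p(x-\tau))x$; monotonicity of the step-size bound in its argument suffices.
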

\begin{proof}
    We will use induction to prove the following claim.
    Define the process $\tau_{n_0}^k$ as follows $\tau_{n_0}^1 := 0$, and for all $k \geq 1$,
    \[
        \tau_{n_0}^{k + 1} = \tau_{n_0}^k\left(1 + \frac{1}{|\Aa^k| - 1}\right) + \frac{1}{|\Aa^k| - 1}.
    \]
    The base case $k = 1$ follows trivially by definition of $\hgam_{n_0}^k$ and $\tau_{n_0}^k$.
    Suppose that the lower bound holds for all $1 \leq k' \leq k$, we want to show it for $k + 1$.

    Let $j_k$ denote the agent that was picked at iteration $k$ of the algorithm.
    It follows from \cref{probval}, \cref{choice} and \cref{valbound} that the chosen bundle $B_{j_k}$ from $\Bb_{j_k}^k$ satisfies 
    \[
        \hbeta_j^{k + 1} = \hbeta_j^k(B_{j_k}) \geq (1 - \hp_{j_k})\hbeta_j^k,
    \]
    where if $\hbeta_{j_k}^k \geq 3\alpha$, then $\hp_{j_k} \leq \frac{2(1 - 3\alpha)}{(\hbeta_{j_k}^k - 12\alpha + 3)n}$, and if $\hbeta_{j_k}^k < 3\alpha$, then $\hp_{j_k} \leq \frac{4\alpha}{3(\hbeta_{j_k}^k - \alpha)n}$.
    Since $j_k$ has the largest $\hbeta_{j_k}^k$ among $\Aa^k$, applying averaging and using the inductive assumption yields us $\hbeta_{j_k}^k \geq \hgam_{n_0}^k - \tau_{n_0}^k$.
    Hence, using inductive step,
    \begin{multline*}
        \frac{1}{|\Aa^k| - 1}\sum_{j \in \Aa^{k + 1}}\hbeta_j^{k + 1} \geq \frac{1 - \hp_{j_k}}{|\Aa^k| - 1} \sum_{j \in \Aa^{k + 1}}\hbeta_j^{k}\\
        = \frac{1 - \hp_{j_k}}{|\Aa^k| - 1} \sum_{j \in \Aa^{k}}\hbeta_j^{k} - \frac{(1 - \hp_{j_k})\hbeta_{j_k}^k}{|\Aa^k| - 1} \geq \frac{|\Aa^k|}{|\Aa^k| - 1}\cdot (1 - \hp_{j_k})(\hgam_{n_0}^k - \tau_{n_0}^k)- \frac{(1 - \hp_{j_k})\hbeta_{j_k}^k}{|\Aa^k| - 1}\\
        = (1 - \hp_{j_k})\hgam_{n_0}^k - (1 - \hp_{j_k})\tau_{n_0}^k - \frac{(1 - \hp_{j_k})(\hbeta_{j_k}^k + \tau_{n_0}^k -\hgam_{n_0}^k)}{|\Aa^k| - 1}.
    \end{multline*}
    Since $1 - \hp_{j_k} \leq 1$, $\hbeta_{j_k}^k \leq 1$ and $\hgam_{n_0}^k \geq 0$, the inequality above implies
    \[
        \frac{1}{|\Aa^k| - 1}\sum_{j \in \Aa^{k + 1}}\hbeta_j^{k + 1} \geq (1 - \hp_{j_k})\hgam_{n_0}^k - \tau_{n_0}^k - \frac{1 + \tau_{n_0}^k}{|\Aa^k| - 1} = (1 - \hp_{j_k})\hgam_{n_0}^k - \tau_{n_0}^{k + 1}.
    \]
    Now, it is easy to see (e.g via induction) that the recursive process $\tau_{n_0}^k$ satisfies $\tau_{n_0}^{k} = \frac{k}{|\Aa^k| - 1}$.
    Furthermore, for every $k$ we have $|\Aa^k| - 1 = |U_2| - k$.
    Then, since $k \leq n_0\leq \eps_0 |U_2|$ and $|U_2| - k \geq (1 - \eps_0)|U_2|$, we obtain that for every $k \in [n_0]$,
    \[
        \tau_{n_0}^k \leq \frac{\eps_0}{1 - \eps_0} = \tau.
    \]
    Next, $\hbeta_{j_k}^k \geq \hgam_{n_0}^k - \tau_{n_0}^k$ and $\tau_{n_0}^k \leq\tau$ imply that $(1 - \hp_{j_k})\hgam_{n_0}^k \geq \hgam_{n_0}^{k + 1}$.
    To see this, we use a proof similar to \cref{epsprocess2} and utilize the upper bound on $\hp_{j_k}$ above, replacing $\hbeta_{j_k}^k$ first with $\hgam_j^k - \tau_{n_0}^k$ and then with $\hgam_j^k - \tau$.
    Thus, induction step is also proved, and the lemma follows.
\end{proof}

Recall that \cref{partition} and \cref{choice} succeed to pick an acceptable $B_{j_k}$ from $\Bb_{j_k}^k$ for agent $j_k$ only if $\hbeta_{j_k}^k \geq \alpha$.
So, to prove that \cref{greedavgalgo} indeed works, by \cref{avggamma} it suffices to show that for every $k \in [n_0]$ we have $\hgam_{n_0}^k \geq \alpha + \tau$.

Let $d_0$ be some constant (dependent on $\eps_0$) that satisfies $\alpha + d_0\tau < \rho$, and will be defined later.
First, we use \cref{gammasimple} and show that there exists $\eps_0$ such that when $n_0$ is large enough, it holds $\hgam_{n_0}^{n_0} \geq \alpha + d_0\tau$.
Recall that by definition, $\hgam_{n_0}^1 = 1 - \frac{16(\rho + \alpha)\eps_0}{3(1 - \eps_0)(\rho - \alpha)}$.
We substitute values $\rho^* = \alpha + (d_0 -1)\tau$, $\eps = \tau = \frac{\eps_0}{1 - 2\eps_0}$, $\delta = \frac{16(\rho + \alpha)\eps_0}{3(1 - \eps_0)(\rho - \alpha)}$, same $n_0$ and $n^* = n_0$ into \cref{gammasimple}, and obtain that $\hgam_{n_0}^{n_0} \geq \alpha + d_0\tau$ if
\begin{multline}
        \frac{3(3\alpha - \rho^* - \alpha\ln(3\alpha)  + \alpha\ln \rho^*)}{4\alpha} + \frac{1 - 3\alpha + (12\alpha - 3)\ln(3\alpha )}{2(1 - 3\alpha)} \\
        \geq 1 + \frac{\eps_0}{1 - 2\eps_0}\left(\frac{5}{4\alpha} +  \frac{(1 - \ln(3\alpha))}{2(1 - 3\alpha)}\right) + \frac{16(\rho + \alpha)\eps_0}{3(1 - \eps_0)(\rho - \alpha)} + O\left(\frac{1}{\sqrt{{n_0}}}\right).\label{finalineq}
\end{multline}
As shown in the proof of \cref{diffgamma2}, for values of $\alpha + d_0\tau< \rho$ the lefthand side of \cref{finalineq} is at least $1 + \const$.
Then, we can pick sufficiently small $\eps_0$ and the largest value $d_0$ for this $\eps_0$ so that the righthand side of \cref{finalineq} without $O(\frac{1}{\sqrt{n_0}})$ part is strictly less than the lefthand side, we will get $\hgam_{n_0}^{n_0} \geq \alpha + 2\tau$ for sufficiently large $n_0$.

However, $n_0 \leq \eps_0 n$, so the bound on $\hgam_{n_0}^{n_0}$ that works only when $n_0$ is large is not good enough!
Fortunately, the bound $\hgam_{n_0}^{n_0}$ actually holds for all $n_0 \geq 1$.
This is due to the fact that even though process $\hgam_{n_0}^k$ loses an additional term of $\tau$, when compared with the original process $\gamma_n^i$ for the same valuation case, this $\tau$ being a fixed constant allows us to apply the \textbf{doubling} procedure also to $\hgam_{n_0}^k$!
Specifically, we prove the following lemma in \cref{sec:doubling} (\cref{doublingproof2}):
\begin{lemma}\label{bigdoubling}
    Suppose that for some $n_0$ and $k, l \geq 1$ it holds $\hgam_{n_0}^k \geq \beta=\gamma_{2n_0}^l$, for some value $\beta$.
    If
    \[
        \beta >  \alpha + \tau +  \frac{\alpha + \sqrt{6(\alpha + \tau)\alpha n_0 + \alpha^2}}{3n_0},
    \]
    where $\tau$ is constant, then $\hgam_{2n_0}^{l + 2} \leq \hgam_{n_0}^{k + 1}$.
\end{lemma}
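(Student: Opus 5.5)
Following the pattern of \cref{doublingineq} and \cref{doubling2}, I will compare one step of $\hgam_{n_0}$ with two steps of $\hgam_{2n_0}$, both started from the common value $\beta$ (the statement's $\gamma_{2n_0}^l$ is read as $\hgam_{2n_0}^l$). Write the one-step map as $\hgam^{k+1}=(1-f(\hgam^k-\tau)/n)\hgam^k$. Here $f(x)=\frac{2(1-3\alpha)}{x-12\alpha+3}$ for $x\ge 3\alpha$ and $f(x)=\frac{4\alpha}{3(x-\alpha)}$ for $x<3\alpha$. The key observation is that $\tau$ is a fixed constant, independent of $n$. The recursion is therefore exactly the original $\gamma$-process of \cref{algogamma2} evaluated at the shifted argument $x=\hgam-\tau$, and the doubling computation of \cref{doubling2} should go through with $\alpha$ replaced by $\alpha+\tau$ wherever the argument $\beta$ enters the step size.

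\textbf{Monotonicity.} First I check that the map $g_n(y)=(1-f(y-\tau)/n)y$ is nondecreasing in $y$ on the relevant range. This uses that $f$ is nonincreasing and that the derivative $1-f(y-\tau)/n+y f'(y-\tau)/n$ is positive once $y-\tau-\alpha$ exceeds roughly $\sqrt{\alpha/n}$. This is guaranteed by the hypothesis on $\beta$. Hence $\hgam_{n_0}^k\ge\beta$ implies $\hgam_{n_0}^{k+1}\ge g_{n_0}(\beta)$, so it suffices to show $g_{2n_0}(g_{2n_0}(\beta))\le g_{n_0}(\beta)$.

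\textbf{Two-step comparison.} Set $x=\beta-\tau$. After one half-step, $\beta'=\beta(1-f(x)/(2n_0))$. Since $f$ is nonincreasing and $\beta'\le\beta$, we have $f(\beta'-\tau)\ge f(x)$. The two-step product is then
\[
\Bigl(1-\tfrac{f(x)}{2n_0}\Bigr)\Bigl(1-\tfrac{f(\beta'-\tau)}{2n_0}\Bigr)
\le 1-\tfrac{f(x)}{2n_0}-\tfrac{f(\beta'-\tau)}{2n_0}+\tfrac{f(x)f(\beta'-\tau)}{4n_0^2}.
\]
So the target inequality is equivalent to
\[
f(\beta'-\tau)-f(x)\ge \tfrac{f(x)f(\beta'-\tau)}{2n_0},
\]
that is, $\tfrac{1}{f(x)}-\tfrac{1}{f(\beta'-\tau)}\ge\tfrac{1}{2n_0}$.

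In the worst branch $f(x)=\frac{4\alpha}{3(x-\alpha)}$, the quantity $1/f$ is linear. The left side is therefore $\frac{3}{4\alpha}(\beta-\beta')=\frac{3\beta f(x)}{8\alpha n_0}$. The condition becomes $\frac{3\beta f(x)}{4\alpha}\ge 1$, that is, $\beta\ge x-\alpha$, which seems too weak. I expect the correct bookkeeping to retain the quadratic term honestly, i.e. to solve $(1-a)(1-b)\le 1-c$ exactly. This yields a quadratic inequality in $x-\alpha=\beta-\tau-\alpha$ of the form
\[
3n_0(\beta-\tau-\alpha)^2-2\alpha(\beta-\tau-\alpha)-\dots\ge 0.
\]
Its positive root should be $\frac{\alpha+\sqrt{6(\alpha+\tau)\alpha n_0+\alpha^2}}{3n_0}$, with $\alpha+\tau$ appearing because $\beta\ge\alpha+\tau$ multiplies the step. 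The plan is to reproduce the algebra of \cref{doublingproof2} verbatim with $\alpha\mapsto\alpha+\tau$ in the location of the pole.

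\textbf{Branch changes and the main obstacle.} The branch $x\ge 3\alpha$ must also be handled, along with the crossing case where $x\ge3\alpha$ but $\beta'-\tau<3\alpha$. By the inequality (3alineq) used in \cref{epsprocess2}, the step size in the upper branch is dominated by the lower-branch formula, and $1/f$ is again linear (or piecewise linear and concave across the switch). The gap $1/f(x)-1/f(\beta'-\tau)$ is thus at least what the lower-branch computation gives, and the crossing case reduces to it. I expect the main obstacle to be getting the exact threshold constant. This requires carefully expanding the quadratic, using that $\beta'$ depends on $\beta$ itself and not on $x$, and then checking that the stated bound suffices uniformly in both branches.
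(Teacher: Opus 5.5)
Your overall scheme matches the paper's. You use monotonicity of the one-step map to reduce to a comparison started from the common value $\beta$, and then compare one step of $\hgam_{n_0}$ with two half-steps of $\hgam_{2n_0}$. Your reduction to $\frac{1}{f(x)}-\frac{1}{f(\beta'-\tau)}\ge\frac{1}{2n_0}$, and in the lower branch to $\beta\ge\beta-\tau-\alpha$, is also correct.

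\textbf{Where the hypothesis enters.} The reduction to $\beta\ge\beta-\tau-\alpha$ is not ``too weak'': your product expansion is an identity. So solving $(1-a)(1-b)\le 1-c$ exactly just returns the same tautology, and your plan to extract the threshold from it cannot succeed. The paper's proof of \cref{doublingproof2} likewise reduces the main inequality, in its lower-branch case, to $\beta-\tau-\alpha\le\beta$.

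The hypothesis on $\beta$ is needed for a step you used silently. Both the claim $f(\beta'-\tau)\ge f(x)$ and the division by $f(x)f(\beta'-\tau)$ require $\beta'-\tau>\alpha$. That is, the intermediate iterate of the $2n_0$-process must stay to the right of the pole at $\alpha+\tau$; otherwise $f(\beta'-\tau)<0$ and the inequality reverses. In the lower branch, with $y:=\beta-\tau-\alpha>0$ and $\beta=y+\alpha+\tau$,
\[
\beta'-\tau-\alpha=y-\frac{2\alpha\beta}{3n_0y}>0
\iff 3n_0y^2-2\alpha y-2\alpha(\alpha+\tau)>0.
\]
The positive root of this quadratic is exactly $\frac{\alpha+\sqrt{6(\alpha+\tau)\alpha n_0+\alpha^2}}{3n_0}$. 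So the quadratic you guessed is this positivity condition, not a sharpened form of the comparison. In the upper branch the condition holds automatically: $f\le f(3\alpha)=\frac23$ and $\beta\le 1$ give $\beta'-\tau\ge 3\alpha-\frac{1}{3n_0}>\alpha$.

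\textbf{Upper and crossing branches.} Your argument for these cases does not work as written. Inequality \eqref{3alineq} says an upper-branch half-step is \emph{smaller} than the lower-branch formula. That makes $\beta-\beta'$, and hence the gap $\frac{1}{f(x)}-\frac{1}{f(\beta'-\tau)}$, smaller. Concavity of $1/f$ would also shrink the gap. So neither fact yields the lower bound you claim. The paper uses \eqref{3alineq} only on the one-step $n_0$ side, where replacing the step by the larger lower-branch one is a valid weakening, and it treats the crossing case separately.

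Your $1/f$ framework can be repaired as follows. For $\alpha\ge 3/11$, $1/f$ is continuous at $3\alpha$ (value $\frac32$), with slope $\frac{3}{4\alpha}$ below and $\frac{1}{2(1-3\alpha)}\ge\frac{3}{4\alpha}$ above, so it is convex. Hence in every case
\[
\frac{1}{f(x)}-\frac{1}{f(\beta'-\tau)}\ge\frac{3}{4\alpha}(\beta-\beta')=\frac{3\beta f(x)}{8\alpha n_0},
\]
and it suffices to have $\beta f(x)\ge\frac{4\alpha}{3}$. For $x\ge 3\alpha$ this follows from
\[
\beta f(x)=\frac{2(1-3\alpha)\beta}{\beta-\tau-12\alpha+3}\ge\frac12\ge\frac{4\alpha}{3}
\quad\text{for }\beta\le 1.
\]

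\textbf{Minor sign slip.} The derivative of $y\mapsto(1-f(y-\tau)/n)y$ is $1-f/n-yf'/n$, not $1-f/n+yf'/n$. It is positive on all of $(\alpha+\tau,\infty)$ in both branches, so no $\sqrt{\alpha/n}$ margin is needed for monotonicity.
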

If $n_0$ is large enough for \cref{finalineq} to hold by itself, it already holds $\hgam_{n_0}^{n_0} \geq \alpha + d_0\tau \geq \alpha$.
Otherwise, we can find $N_0$ large enough such that $\hgam_{N_0}^{N_0} \geq \alpha + d_0\tau$, and for large enough choice of $d_0$ we would have $(d_0 - 1)\tau \geq \frac{\alpha + \sqrt{6(\alpha + \tau)\alpha N_0 + \alpha^2}}{3N_0}$.
Then by \cref{doubling} we could reduce $N_0$ to $N_0/2$ and so on.
Note that the smaller $n_0$ is, the smaller we could take the constant $\eps_0$ and hence the larger we could take the constant $d_0$.
Then, we continue doing the doubling for all values of $n_0$ until some constant, and from there the values of $\hgam_{n_0}^{n_0}$ are checked numerically.
Hence, for every given $\alpha$ one can pick small enough values $\eps_4$, $\eps_0$ and $d_0$ to guarantee doubling for any any target number $n'$ of values to check.


\section{Lower bound on $\gamma_n^n(\beta, \eps)$ for $\alpha \leq 3/11$}
\label{sec:smallalpha}

Recall the definition of process $\gamma_n^k(\beta, \eps)$ for $\alpha \leq 3/11$.

\begin{definition}
    For $n$ agents, $\alpha \leq 3/11$, $\eps > 0$ and $\beta > \alpha + \eps$, let $\gamma^k_n(\beta, \eps)$ be the following process: $\gamma^1_n(\beta, \eps) = \beta$, and for $i \geq 1$,
    \[
        \gamma^{i + 1}_n(\beta, \eps) = \left(1 - \frac{1 - \alpha}{2(\gamma^k_n(\beta, \eps) - \eps -\alpha)n}\right)\gamma^k_n(\beta, \eps).
    \]
\end{definition}

In this section, we prove \cref{diffgamma}, restated below.
Note that the claim of (\cref{rhoinfinity}) is a special case of this theorem with $\eps = \delta = 0$.
\begin{theorem}\label{diffgammaproof}
    For $\alpha \in [1/4, 3/11)$ let $\rho$ satisfy $2(1 - \rho + \alpha \ln \rho) = 1- \alpha$.
    For any constant $0 < x < \rho - \alpha$ and any $\eps, \delta \xrightarrow{n\to\infty}0$, there exists $n_x$ such that for all $n \geq n_x$ and all $n_0 \leq n$ it holds $\gamma_n^{n_0}(1 - \delta, \eps) - \eps \geq \alpha + x$.
\end{theorem}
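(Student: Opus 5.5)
Since $\gamma_n^k(1-\delta,\eps)$ is non-increasing in $k$, it suffices to treat $n_0=n$. The plan is to count the number of iterations the process spends in each value range, in the spirit of \cref{iters}, and compare the total with $n$.

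First, one step of the process decreases the value $g$ by $\frac{(1-\alpha)g}{2(g-\eps-\alpha)n}$. Hence, while the value lies in $[\gamma-\Delta,\gamma]$ with $\gamma-\Delta>\alpha+\eps$, each step decreases it by at most $\frac{(1-\alpha)\gamma}{2(\gamma-\Delta-\eps-\alpha)n}$. Therefore the number of steps needed to cross an epoch of width $\Delta$ is at least
\[
l_n(\gamma,\Delta,\eps)=\frac{2\Delta(\gamma-\Delta-\eps-\alpha)n}{(1-\alpha)\gamma}-1.
\]
This mirrors part~2 of \cref{iters}, with $\alpha$ replaced by $\alpha+\eps$.

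Next, set $\rho'=\alpha+x+\eps$ as the target and split $[\rho',1-\delta]$ into $k$ equal epochs. Summing the epoch lower bounds gives a Riemann sum. Dividing by $n$ and letting $k\to\infty$ slowly (say $k=\lfloor n^{1/3}\rfloor$, so the $-k$ rounding losses are $o(n)$), this sum tends to
\[
\frac{2}{1-\alpha}\int_{\rho'}^{1-\delta}\frac{g-\eps-\alpha}{g}\,dg
=\frac{2}{1-\alpha}\Bigl(1-\delta-\rho'-(\alpha+\eps)\ln\tfrac{1-\delta}{\rho'}\Bigr).
\]
The discretization error is $O(1/k)$ uniformly, because the integrand is Lipschitz on $[\rho',1]$ once $\rho'-\alpha-\eps\ge x/2$. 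The process stays above $\rho'$ for $n$ steps, i.e.\ $\gamma_n^n(1-\delta,\eps)-\eps\ge\alpha+x$, as soon as this normalized count exceeds $1$.

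At $\eps=\delta=0$ the limit is
\[
F(\rho')=\frac{2(1-\rho'+\alpha\ln\rho')}{1-\alpha}.
\]
This quantity is strictly decreasing in $\rho'$ on $(\alpha,1)$, since its derivative is $\frac{2(\alpha/\rho'-1)}{1-\alpha}<0$. It equals $1$ exactly at $\rho'=\rho$, by the defining equation of $\rho$. Since $\alpha+x<\rho$, we get $F(\alpha+x)=1+c_x$ for some constant $c_x>0$.

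The perturbations by $\eps$ and $\delta$ change the integral by $O(\eps+\delta)$ (the logarithm is bounded because $\rho'\ge\alpha$). Together with the $O(1/k)$ and $O(k/n)$ errors, the total error is $o(1)$, so for $n\ge n_x$ it is below $c_x/2$.

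The main obstacle is making the epoch argument rigorous when $\eps,\delta$ vary with $n$: one must check that each epoch's lower bound stays valid, meaning the denominator $\gamma-\Delta-\eps-\alpha$ stays bounded below by $x/2$ for large $n$. One must also keep the boundary epoch near $\rho'$ and the integer roundings within $o(n)$, exactly as in parts~1 and~3 of \cref{iters}, with the constants tracked uniformly in $\eps,\delta$.
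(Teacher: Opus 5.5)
Your proposal is correct and follows essentially the same route as the paper. The paper likewise uses the epoch decomposition of \cref{lowergammaiters}, compares the summed per-epoch step counts with $\frac{2}{1-\alpha}\int \frac{g-\eps-\alpha}{g}\,dg$ (\cref{lowergamma}, \cref{lowergammabetter}), and concludes from the fact that $\frac{2(1-\rho'+\alpha\ln\rho')}{1-\alpha}$ is decreasing and exceeds $1$ at $\alpha+x$ while the $\eps$, $\delta$ and discretization errors vanish.

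Your additive per-step bound, $k\approx n^{1/3}$ equal-width epochs on $[\alpha+x+\eps,1-\delta]$, and the paper's multiplicative bound with $d=\Theta(\sqrt n)$ epochs are interchangeable details. Also, the ``obstacle'' you flag is automatic: every epoch lies above $\alpha+x+\eps$, so each denominator is at least $x$.
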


First, we consider a process $\gamma_n^k(\beta, \eps)$ for arbitrary $\beta \geq \alpha + \eps$ and $\rho$, and obtain a lower bound on the number of iterations required for such process, having started at value $\beta$, to drop below value $\rho$.
To do so, we break the segment $[\rho, \beta]$ into many epochs of the form $[r/d, (r + 1)/d]$ w.r.t current value of $\gamma_n^k(\beta, \eps)$.
Then, we estimate the minimum number of iterations the process spends in each epoch, and determine for which values of $n$ the total number of estimated iterations over all epochs exceeds some $n_0$.
\begin{lemma}\label{lowergammaiters}
    Consider the process $\gamma^k_n := \gamma^k_n(\beta, \eps)$ for some $\eps > 0$, $\beta \geq \alpha + \eps$ and $n$.
    For $0 \leq \tau \leq \gamma < 1$, let $k_n(\gamma)$ be the largest integer $k$ such that $\gamma^k_n \geq \gamma$, and let $t_n(\gamma, \tau)$ be the smallest integer $t$ such that at the beginning of iteration $k_n(\gamma) + t$ it holds $\gamma^{k_n(\gamma) + t}_n < \gamma - \tau$.
    Then
    \begin{enumerate}
        \item For any $0 \leq \rho < 1$ and any integer $d$, the minimum number of steps of $\gamma^k_n$ to reach $\rho$ is
        \[k_n(\rho) \geq  \sum_{r = \lceil\rho d/\beta\rceil}^{d - 1}t_n\left(\frac{\beta(r + 1)}{d}, \frac{\beta}{d}\right) - \left(d - \lceil \rho d/\beta\rceil\right).\]

        \item For any $0 \leq \tau \leq \gamma < 1$ such that $\gamma > \alpha + \tau + \eps$, the number $t_n(\gamma, \tau)$ of steps of the process $\gamma^k_n$ after the iteration $k_n(\gamma)$ for the value to go below $\gamma - \tau$ satisfies
        \[
        t_n(\gamma, \tau) > \frac{2\tau(\gamma - \tau - \eps - \alpha)n}{\gamma(1 - \alpha)} =: l_n(\gamma, \tau).\]

        \item For any $\alpha + \eps \leq \rho < 1$ and any integer $d$, it holds that $\gamma_n^{n_0}\geq \rho$ if
        \[\sum_{r = \lceil\rho d/\beta\rceil}^{d - 1}l_n\left(\frac{\beta(r + 1)}{d}, \frac{\beta}{d}\right) \geq n_0 + \left(d - \lceil \rho d/\beta\rceil\right).\]
    \end{enumerate}
\end{lemma}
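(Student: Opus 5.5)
The plan is to treat the process $\gamma_n^k$ as a decreasing sequence and count its iterations epoch by epoch. We use only two facts: monotonicity of $\gamma_n^k$, and an explicit upper bound on the per-step decrease. Throughout, write $g(x) := \frac{(1-\alpha)x}{2(x-\eps-\alpha)n}$, so that $\gamma_n^{k+1} = \gamma_n^k - g(\gamma_n^k)$.

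For part 2, the key observation is that $g$ is decreasing in $x$ on $x > \alpha+\eps$, since $x/(x-c)$ decreases for $x>c$. Start at iteration $k_n(\gamma)$, where $\gamma_n^{k_n(\gamma)} \ge \gamma$, and note that $\gamma_n^{k_n(\gamma)} < \gamma_n^{k_n(\gamma)-1}$ (or equals $\beta \le 1$). As long as the process stays in $[\gamma-\tau, \gamma']$, every step decreases it by at most $g(\gamma-\tau) \le \frac{(1-\alpha)\gamma}{2(\gamma-\tau-\eps-\alpha)n}$. Here the numerator $x$ is bounded by $\gamma$, and the tiny overshoot above $\gamma$ at the entry point is handled using the definition of $k_n(\gamma)$ as the last index with value $\ge\gamma$.

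To drop from $\gamma$ to below $\gamma-\tau$, the process must accumulate a total decrease exceeding $\tau$. Hence $t_n(\gamma,\tau)$ steps each of size at most $\frac{(1-\alpha)\gamma}{2(\gamma-\tau-\eps-\alpha)n}$ must sum to more than $\tau$, which gives $t_n > l_n(\gamma,\tau)$. The delicate point is exactly how one indexes the starting value relative to $\gamma$. My plan is to measure the decrease from $\gamma$ itself, bounding the first step's contribution by the same step bound. The hypothesis $\gamma > \alpha+\tau+\eps$ guarantees positivity of the denominator throughout.

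For part 1, split $[\rho,\beta]$ at the points $\beta r/d$ for $r = \lceil \rho d/\beta\rceil,\dots,d$. The process starts at $\beta = \beta d/d$ and is monotone. It therefore passes through each epoch $[\beta r/d, \beta(r+1)/d)$ in order, spending at least $t_n(\beta(r+1)/d, \beta/d)$ steps counted from $k_n(\beta(r+1)/d)$. Consecutive epoch counts can share one boundary iteration: the iteration that first goes below $\beta(r+1)/d - \beta/d$ is also $k_n(\beta r/d)+1$. This sharing is why we subtract one per epoch, giving the $-(d-\lceil\rho d/\beta\rceil)$ correction. Summing telescopically yields $k_n(\rho) \ge \sum_r t_n(\cdot) - (d - \lceil\rho d/\beta\rceil)$. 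I expect this off-by-one bookkeeping to be the main fiddly step, so I would write $k_n(\beta r/d) \ge k_n(\beta(r+1)/d) + t_n(\beta(r+1)/d,\beta/d) - 1$ and then induct on $r$ downward.

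Part 3 follows by combining the two. Every epoch considered has left endpoint $\ge \rho \ge \alpha+\eps$. If the lowest epoch's left endpoint makes $\gamma > \alpha+\tau+\eps$ fail, then $l_n \le 0$ there and the inequality holds trivially since $t_n\ge 0$. Plugging $t_n > l_n$ into part 1 gives $k_n(\rho) \ge \sum_r l_n - (d-\lceil\rho d/\beta\rceil) \ge n_0$ under the stated hypothesis. By the definition of $k_n(\rho)$ and monotonicity, $\gamma_n^{n_0} \ge \rho$.
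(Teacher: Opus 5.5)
Your proof is correct. Parts 1 and 3 follow the paper's route: you telescope $k_n(\gamma-\tau)=k_n(\gamma)+t_n(\gamma,\tau)-1$ over the epochs $[\beta r/d,\beta(r+1)/d]$ and then substitute the per-epoch bound. You also cover the degenerate epoch $\beta r/d=\alpha+\eps$, which the paper ignores.

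Part 2 is where you differ. The paper uses the multiplicative bound $\gamma_n^{k+1}\ge(1-q)\gamma_n^k$ with $q=\frac{1-\alpha}{2(\gamma-\tau-\eps-\alpha)n}$ and deduces $t_n>\ln(1-\tau/\gamma)/\ln(1-q)$. It then applies $\frac{x}{1+x}\le\ln(1+x)\le x$ to bound this ratio below by $\frac{\tau}{(\gamma-\tau)q}$. You instead bound the absolute decrease of each step taken from a value $\ge\gamma-\tau$ by $g(\gamma-\tau)$, using that $g(x)=\frac{(1-\alpha)x}{2(x-\eps-\alpha)n}$ is decreasing.

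Your route is more elementary. If you keep the sharp value $g(\gamma-\tau)=(\gamma-\tau)q$ instead of weakening it to $\gamma q$, it gives $t_n>\frac{\tau}{(\gamma-\tau)q}=\frac{\gamma}{\gamma-\tau}\,l_n(\gamma,\tau)$ outright.

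It is also the sounder argument. Used as in the paper, the two logarithm inequalities bound $\ln(1-\tau/\gamma)/\ln(1-q)$ from above rather than below. For $\tau/\gamma=1/2$ and $q=1/10$ the ratio is about $6.58$, not at least $10$. So the paper's displayed step is invalid as written, even though its conclusion $t_n>l_n$ is true, as your additive argument shows.

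Your remark about handling an \emph{overshoot} above $\gamma$ is unnecessary. Monotonicity of $g$ already gives $g(x)\le g(\gamma-\tau)$ for every $x\ge\gamma-\tau$, including $x>\gamma$.
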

\begin{proof}
    To prove 1), first assume that $\rho d/\beta$ is an integer.
    Divide the segment $[\rho, \beta]$ into $(1 - \rho/\beta)d$ smaller sub-segments of the form $[\beta r/d, \beta(r + 1)/d]$ where $\rho d  \leq r \leq d - 1$.
    Note that by definition, for any $0 \leq \tau \leq \gamma < 1$ it holds $k_n(\gamma) = k_n(\gamma + \tau) + t_n(\gamma + \tau, \tau) - 1$.
    In particular,
    \[
        k_n\left(\frac{\beta r}{d}\right) = k_n\left(\frac{\beta(r + 1)}{d}\right) + t_n\left(\frac{\beta(r + 1)}{d}, \frac{\beta}{d}\right) - 1.
    \]
    Therefore,
    \begin{multline*}
        k_n(\rho) = k_n\left(\rho + \frac{\beta}{d}\right) + t_n\left(\rho + \frac{\beta}{d}, \frac{\beta}{d}\right) - 1
        = k_n\left(\rho + \frac{2\beta}{d}\right) + t_n\left(\rho + \frac{2\beta}{d}, \frac{\beta}{d}\right) + t_n\left(\rho + \frac{\beta}{d}, \frac{\beta}{d}\right) - 2 \\
        = \ldots
    = k_n(1) + \sum_{r = \rho d/\beta}^{d - 1}t_n\left(\frac{\beta(r + 1)}{d}, \frac{\beta}{d}\right) - (d - \rho d/\beta).
    \end{multline*}
    Since $k_n(1) = 1$, we get 
    \[
        k_n(\rho) = 1 + \sum_{r = \rho d/\beta}^{d - 1}t_n\left(\frac{\beta(r + 1)}{d}, \frac{\beta}{d}\right) - (d - \rho d/\beta).
    \]

    If $\rho d/\beta$ is not an integer, we consider a segment $[\lceil \rho d/\beta\rceil\cdot \beta / d, \beta]$, divide it into $d - \lceil \rho d/\beta\rceil$ sub-segments and repeat the process above.
    This gives us 
    \[
        k_n\left(\frac{\lceil \rho d/\beta \rceil\beta}{d}\right) = 1 + \sum_{r = \lceil\rho d/\beta\rceil}^{d - 1}t_n\left(\frac{\beta(r + 1)}{d}, \frac{\beta}{d}\right) - (d - \lceil\rho d/\beta\rceil).
    \]
    We finish the proof by observing 
    \begin{multline*}
        k_n(\rho) = k_n\left(\rho + \frac{\lceil \rho d/\beta\rceil\beta}{d} - \rho\right) + t_n\left(\frac{\lceil \rho d/\beta\rceil\beta}{d},\frac{\lceil \rho d/\beta\rceil\beta}{d} - \rho \right) - 1\\
        = t_n\left(\frac{\lceil \rho d/\beta\rceil\beta}{d},\frac{\lceil \rho d/\beta\rceil\beta}{d} - \rho \right)  +  \sum_{r = \lceil\rho d/\beta\rceil}^{d - 1}t_n\left(\frac{\beta(r + 1)}{d}, \frac{\beta}{d}\right) - (d - \lceil\rho d/\beta\rceil).
    \end{multline*}

    Next, for 2), observe that by definition of $t_n(\gamma, \tau)$, for every integer $0 \leq t \leq t_n(\gamma, \tau) - 1$ it holds $\gamma^{k_n(\gamma) + t}_n \geq \gamma - \tau$.
    Then, by definition of the process $\gamma_n^k$, for every such $t$ we have
    \[
        \gamma_n^{k_n(\gamma) + t + 1} = \left(1 - \frac{1 - \alpha}{2(\gamma_n^{k_n(\gamma) + t} - \eps -  \alpha)n}\right)\cdot \gamma_n^{k_n(\gamma) + t}\\
        \geq \left(1 - \frac{1 - \alpha}{2(\gamma - \tau - \eps - \alpha)n}\right)\cdot \gamma_n^{k_n(\gamma) + t}.
    \]
    Applying the inequality above repeatedly, we get
    \begin{multline*}
        \gamma - \tau > \gamma_n^{k_n(\gamma) + t_n(\gamma, \tau)} \geq \left(1-\frac{1 - \alpha}{2(\gamma - \tau - \eps - \alpha)n}\right)^{t_n(\gamma, \tau)}\cdot \gamma_n^{k_n(\gamma)}\\
        \geq  \left(1-\frac{1 - \alpha}{2(\gamma - \tau - \eps - \alpha)n}\right)^{t_n(\gamma, \tau)}\cdot \gamma,
    \end{multline*}
    implying
    \[
        t_n(\gamma, \tau) > \frac{\ln(\frac{\gamma - \tau}{\gamma})}{\ln(1-\frac{1 - \alpha}{2(\gamma - \tau - \eps -  \alpha)n})}. 
    \]
    Now, since for all $x > - 1$ it holds $\frac{x}{1 + x} \leq \ln(1 + x) \leq x$, we have
    \begin{multline*}
        \frac{\ln(1 -\frac{\tau}{\gamma})}{\ln(1-\frac{1 - \alpha}{2(\gamma - \tau - \eps - \alpha)n})} \geq \frac{-\frac{\tau/\gamma}{1 - \tau/\gamma}}{-\frac{1 - \alpha}{2(\gamma - \tau - \eps - \alpha)n}} \\
        = \frac{2\tau(\gamma - \tau - \eps - \alpha)n}{\gamma(1 - \alpha)}\cdot \frac{1}{1 - \tau/\gamma} \geq \frac{2\tau(\gamma - \tau - \eps - \alpha)n}{\gamma(1 - \alpha)} =: l_n(\gamma, \tau).
    \end{multline*}

    To see 3), note that $k_n(\rho) \geq n_0$ implies that at iteration $n_0$ of the process $\gamma_n^k$ we have $\gamma_n^{n_0} \geq \rho$.
    Combining the expression for $k_n(\rho)$ from 1) and lower bound on $t_n(\gamma, \tau) > l_n(\gamma, \tau)$ from 2) completes the proof.
\end{proof}

Using the lower bound on the number of iterations required for $\gamma_n^k(\beta, \eps)$ to stay above $\rho$ after $n_0$ iterations, we can get a necessary condition on $n$ and $n_0$ for the inequality $\gamma_n^{n_0}(\beta, \eps)$ to hold.
To do so, we take the inequality of 3) from \cref{lowergammaiters} and perform multiple equivalent transformations.

\begin{theorem}\label{lowergamma}
    For $\eps > 0$, let $\alpha + \eps \leq \rho < 1$.
    Consider the process $\gamma_n^k:= \gamma_n^k(\beta, \eps)$ for this $\eps$, $\beta \geq \alpha + \eps$ and some $n$, and take integer $d$ such that $d \leq n$.
    If numbers $n, n_0, d$ satisfy
    \begin{equation}
        \frac{2(\beta - \rho + (\alpha + \eps)\ln (\rho/\beta))}{1 - \alpha} \geq \frac{n_0}{n} + \frac{d(1 - \rho /\beta)}{n} + \frac{2(1 - \rho /\beta)}{(\rho /\beta)(\rho d/\beta  + 1)} + \frac{2(\beta - \eps - \alpha)}{d(1 - \alpha)},\label{mainineq}
    \end{equation}
    then the value of the process $\gamma_n^k$ at iteration $n_0$ satisfies $\gamma_n^{n_0} \geq \rho$.
\end{theorem}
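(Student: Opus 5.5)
The plan is to invoke item 3) of \cref{lowergammaiters} and show that inequality \eqref{mainineq} implies its hypothesis
\[
\sum_{r=\lceil \rho d/\beta\rceil}^{d-1} l_n\!\left(\frac{\beta(r+1)}{d},\frac{\beta}{d}\right)\ \ge\ n_0 + \bigl(d-\lceil\rho d/\beta\rceil\bigr).
\]
Write $\gamma_r=\beta(r+1)/d$ and $\tau=\beta/d$, so that $l_n(\gamma_r,\tau)=\frac{2n}{1-\alpha}\cdot\frac{\beta}{d}\cdot\frac{\beta r/d-\eps-\alpha}{\beta(r+1)/d}$. This is a Riemann-type sum. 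We lower bound it by an integral of $f(x)=\frac{x-\eps-\alpha}{x}=1-\frac{\alpha+\eps}{x}$ over $x\in[\rho,\beta]$. That integral equals
\[
\int_\rho^\beta f = \beta-\rho+(\alpha+\eps)\ln(\rho/\beta),
\]
which, multiplied by $\frac{2n}{1-\alpha}$, matches the left-hand side of \eqref{mainineq} times $n$.

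The key steps are as follows. Each summand has the form $\frac{\beta}{d}\cdot\frac{x_r-\eps-\alpha}{x_r+\beta/d}$ with $x_r=\beta r/d$. I would split it as
\[
\frac{\beta}{d}f(x_{r+1})\;-\;\frac{\beta}{d}\cdot\frac{\beta/d}{x_{r+1}}.
\]
Since $f$ is increasing, $\sum_r \frac{\beta}{d}f(x_{r+1})\ge\int_{x_{r_0}}^{\beta}f$, where $r_0=\lceil\rho d/\beta\rceil$. The first discretization error is therefore handled by monotonicity. The correction term, however, is only bounded by $\frac{\beta}{d}\sum 1\le \beta(1-\rho/\beta)$. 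That bound is too weak: it is not $O(1/d)$, and the term $\frac{2(\beta-\eps-\alpha)}{d(1-\alpha)}$ in \eqref{mainineq} signals that the intended grouping is different.

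I would therefore instead compare each summand directly with $\int_{x_r}^{x_{r+1}}f$. Because $f$ is increasing, this integral is at most $\frac{\beta}{d}f(x_{r+1})$. The difference between $\frac{\beta}{d}f(x_{r+1})$ and the actual summand is $\frac{\beta}{d}\bigl[f(x_{r+1})-\frac{x_r-\eps-\alpha}{x_{r+1}}\bigr]=\frac{\beta}{d}\cdot\frac{\beta/d}{x_{r+1}}$. The summand is therefore at least $\int_{x_r}^{x_{r+1}}f-\frac{(\beta/d)^2}{x_{r+1}}$.

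Summing over $r$ bounds the error term by $\frac{\beta^2}{d^2}\sum_{r\ge r_0}\frac{1}{\beta(r+1)/d}$. Comparing this sum with $\int dx/x$, or bounding it crudely by telescoping $\frac{1}{r(r+1)}$-type expressions, produces a term of the form $\frac{(1-\rho/\beta)}{(\rho/\beta)(\rho d/\beta+1)}$. Up to the factor $2/(1-\alpha)$ and normalization by $n$, this matches the third term on the right of \eqref{mainineq}.

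Next, the integral runs from $x_{r_0}\ge\rho$ rather than from $\rho$. The missing piece $\int_\rho^{x_{r_0}}f\le \frac{\beta}{d}f(\beta)=\frac{\beta-\eps-\alpha}{d}$ accounts for the last term $\frac{2(\beta-\eps-\alpha)}{d(1-\alpha)}$. Finally, the additive $d-\lceil\rho d/\beta\rceil\le d(1-\rho/\beta)$ from item 3), divided by $n$, gives the second term on the right. Dividing the required inequality by $n$ and collecting the three error terms then reduces it exactly to \eqref{mainineq}.

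I expect the main obstacle to be bookkeeping the per-epoch error so that it yields precisely the stated expression $\frac{2(1-\rho/\beta)}{(\rho/\beta)(\rho d/\beta+1)}$ rather than a comparable bound. My first attempt would bound $\frac{(\beta/d)^2}{x_{r+1}}\le \frac{\beta}{d}\cdot\frac{1}{r+1}$ and sum $\frac{1}{r+1}$ over at most $d(1-\rho/\beta)$ terms, each at most $\frac{1}{\rho d/\beta+1}$. Combined with the prefactor $\frac{2n}{1-\alpha}$, this gives roughly $\frac{2n(1-\rho/\beta)\beta}{(1-\alpha)(\rho d/\beta+1)}$. If a factor of $\beta/\rho$ is also lost along the way, the result matches the stated term once one uses $1-\alpha\le 1$ and $\beta\le1$.

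The remaining conditions are easy to verify. The hypothesis $\gamma>\alpha+\tau+\eps$ of item 2) only needs to hold on the epochs where the summand is positive; where it fails, the corresponding $l_n$ can be dropped, since the true $t_n$ is nonnegative. The assumption $d\le n$ is needed only so that the epochs have nonnegligible length.
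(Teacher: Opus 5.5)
Your reduction to item 3) of \cref{lowergammaiters} and your comparison with $\int f$ for $f(x)=1-\frac{\alpha+\eps}{x}$ are fine. So is your accounting for the terms $\frac{d(1-\rho/\beta)}{n}$ and $\frac{2(\beta-\eps-\alpha)}{d(1-\alpha)}$. The gap is in the discretization error, which is the step you yourself flagged.

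With your splitting, the correction in epoch $r$ is $\frac{2n}{1-\alpha}\cdot\frac{\beta}{d}\cdot\frac{1}{r+1}$. Note that $\frac{\beta/d}{x_{r+1}}=\frac{1}{r+1}$, so your two attempts are the same computation, and the correction is $O(1/d)$ after all. Your bound on the sum of these corrections is
\[
\frac{2n\beta(1-\rho/\beta)}{(1-\alpha)(\rho d/\beta+1)}, \qquad\text{while the stated term is}\qquad \frac{2n(1-\rho/\beta)}{(\rho/\beta)(\rho d/\beta+1)}=\frac{2n\beta(1-\rho/\beta)}{\rho(\rho d/\beta+1)}.
\]
Your bound is the smaller one only when $\rho\le 1-\alpha$, whereas the hypotheses allow every $\rho<1$ (e.g.\ $\alpha=1/4$, $\rho=0.9$). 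Appealing to $1-\alpha\le 1$ goes the wrong way, because $1-\alpha$ sits in the denominator. ``Losing'' a further factor $\beta/\rho\ge 1$ only makes your bound larger. So unless you find compensating slack elsewhere, which you do not attempt, your argument does not establish the theorem as stated.

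The missing idea is to keep the numerator inside the correction, so that it cancels the prefactor $\frac{1}{1-\alpha}$. The paper factors
\[
l_n\Bigl(\frac{\beta(r+1)}{d},\frac{\beta}{d}\Bigr)=\frac{2(\beta r/d-\eps-\alpha)n}{r(1-\alpha)}\Bigl(1-\frac{1}{r+1}\Bigr).
\]
It bounds the correction $\frac{2(\beta r/d-\eps-\alpha)n}{(1-\alpha)r(r+1)}$ by $\frac{2n}{r(r+1)}$, using $\beta r/d-\eps-\alpha\le 1-\alpha$ (as $\beta\le 1$). Summing over $\lceil\rho d/\beta\rceil\le r\le d-1$ gives at most $\frac{2n(1-\rho/\beta)}{(\rho/\beta)(\rho d/\beta+1)}$, which is exactly the third term.

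The remaining main part is a left-endpoint sum of the increasing function $r\mapsto\frac{2n(\beta r-(\alpha+\eps)d)}{rd(1-\alpha)}$. Adding its value at $r=d$, which is precisely $\frac{2n(\beta-\eps-\alpha)}{d(1-\alpha)}$, turns it into a right-endpoint sum. That sum is compared with $\int_{\rho d/\beta}^{d}$, and this integral equals $n$ times the left-hand side of the hypothesis. So in the paper the last term pays for the shift from left to right endpoints, not for the piece $[\rho,x_{r_0}]$ as in your bookkeeping.
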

\begin{proof}
From \cref{lowergammaiters} we know that $\gamma_n^{n_0} \geq \rho$ if integers $n, n_0, d$ with $d \leq n$ are such that
\begin{equation}
    \sum_{r = \lceil\rho d/\beta\rceil}^{d - 1}l_n\left(\frac{\beta(r + 1)}{d}, \frac{\beta}{d}\right) \geq n_0 + \left(d - \lceil \rho d/\beta\rceil\right).\label{gammalb}
\end{equation}
Consider the expression for $l_n(\gamma, \tau)$,  derived in \cref{lowergammaiters}, with $\gamma = \frac{\beta(r + 1)}{d}$ and $\tau = \frac{\beta}{d}$:
\begin{multline*}
    l_n\left(\frac{\beta(r + 1)}{d}, \frac{\beta}{d}\right) = \frac{2\cdot \frac{\beta}{d}\cdot (\beta r/d -\eps - \alpha)n}{\frac{\beta(r + 1)}{d}\cdot (1 - \alpha)} = \frac{2(\beta r/d -\eps - \alpha)n}{(r + 1)(1 - \alpha)} \\= \frac{2(\beta r/d - \eps - \alpha)n}{r(1 - \alpha)}\cdot \left(1 - \frac{1}{r + 1}\right). 
\end{multline*}
We have $l_n(\frac{\beta(r + 1)}{d}, \frac{\beta}{d}) = \frac{2(\beta r/d - \eps - \alpha)n}{r(1 - \alpha)} - \frac{2(\beta r/d - \eps - \alpha)n}{r(r + 1)(1 - \alpha)}$.
Observe that
\[\sum_{r = \lceil\rho d /\beta \rceil}^{d - 1}\frac{2(\beta r/d -\eps - \alpha)n}{r(r + 1)(1 - \alpha)} \leq \frac{2n(\lceil\rho d /\beta \rceil\beta /d - \eps - \alpha)(d - \lceil\rho d/\beta\rceil)}{\lceil \rho d/\beta \rceil (\lceil \rho d/\beta \rceil + 1)(1 - \alpha)} \leq \frac{2n(d - \lceil \rho d /\beta\rceil)}{\lceil \rho d/\beta\rceil (\lceil \rho d/\beta \rceil + 1)}.\]
Hence, if
\[\sum_{r = \lceil \rho d/\beta\rceil}^{d - 1}\frac{2(\beta r / d - \eps - \alpha)n}{r(1 - \alpha)} \geq n_0 + (d - \lceil \rho d/\beta \rceil) + \frac{2n(d - \lceil \rho d/\beta \rceil)}{\lceil \rho d/\beta\rceil (\lceil \rho d/\beta\rceil + 1)},\]
then \cref{gammalb} holds.
Since $\rho d / \beta \leq \lceil \rho d/\beta\rceil$, we have 
\[
    (d - \lceil \rho d/\beta \rceil) + \frac{2n(d - \lceil\rho d /\beta \rceil)}{\lceil \rho d/\beta \rceil (\lceil \rho d/\beta \rceil + 1)} \leq (d - \rho d/\beta) + \frac{2n(d - \rho d /\beta)}{(\rho d/\beta)(\rho d/\beta  + 1)}
\]
Additionally, since
\[\left.\frac{2(\beta r / d -\eps - \alpha)n}{r(1 - \alpha)}\right|_{r = d} = \frac{2n(\beta -\eps - \alpha)}{d(1 - \alpha)},\]
it is easy to see that \cref{gammalb} also follows from inequality
\begin{multline*}
    \sum_{r = \lceil \rho d/\beta\rceil}^{d}\frac{2(\beta r - (\alpha + \eps) d)n}{rd(1 - \alpha)} = \sum_{r = \lceil \rho d/\beta\rceil}^{d}\frac{2(\beta r / d - \eps -\alpha)n}{r(1 - \alpha)} \\
    \geq n_0 + d(1 - \rho /\beta) + \frac{2n(1 - \rho  /\beta)}{(\rho /\beta)(\rho d/\beta  + 1)} + \frac{2n(\beta -\eps - \alpha)}{d(1 - \alpha)}.
\end{multline*}
Note that for values of $\lceil \rho d /\beta \rceil \leq r \leq d$, the function $f(r) := \frac{2(\beta r - (\alpha + \eps) d)n}{rd(1 - \alpha)}$ is non-decreasing:
\[\frac{f(r + 1)}{f(r)} = \frac{\frac{2(\beta r + \beta - (\alpha + \eps) d)n}{(r + 1)d(1 - \alpha)}}{\frac{2(\beta r - (\alpha + \eps) d)n}{rd(1 - \alpha)}} = \frac{(\beta r - (\alpha + \eps) d + \beta )r}{(\beta r - (\alpha + \eps) d)(r + 1)} \geq 1 \iff r \geq (\alpha + \eps) d/\beta.\]
Therefore, it holds
\[\sum_{r = \lceil \rho d/\beta \rceil}^{d}\frac{2(\beta r - (\alpha + \eps) d)n}{rd(1 - \alpha)} \geq \int_{\lceil \rho d/\beta  \rceil - 1}^d\frac{2(\beta r - (\alpha + \eps) d)n}{rd(1 - \alpha)} \geq \int_{\rho d /\beta }^d\frac{2(\beta r - (\alpha + \eps) d)n}{rd(1 - \alpha)},\]
where the last inequality follows from $\lceil \rho d/\beta \rceil - 1 \leq \rho d/\beta $ and shrinking the integration domain.

Putting all of the above together and dividing by $n$, we get that if 
\[\int_{\rho d/\beta}^d\frac{2(\beta r - (\alpha + \eps) d)}{rd(1 - \alpha)} \geq \frac{n_0}{n} + \frac{d(1 - \rho /\beta)}{n} + \frac{2(1 - \rho  /\beta)}{(\rho /\beta)(\rho d/\beta  + 1)} + \frac{2(\beta - \eps - \alpha)}{d(1 - \alpha)},\]
then \cref{gammalb} holds, which by \cref{lowergammaiters} implies $\gamma_n^{n_0} \geq \rho$.
Taking the integral, we get
\begin{multline*}
    \int_{\rho d/\beta}^d\frac{2(\beta r - (\alpha + \eps) d)}{rd(1 - \alpha)} = \left.\frac{2(\beta r - (\alpha + \eps) d\ln r)}{d(1 - \alpha)}\right|_{r = \rho d/\beta }^d \\
    = \frac{2(\beta d - (\alpha + \eps) d\ln d)}{d(1 - \alpha)} - \frac{2(\rho d - (\alpha + \eps) d \ln d - (\alpha + \eps) d\ln (\rho/\beta))}{d(1 - \alpha)} \\
    = \frac{2(\beta - \rho + (\alpha + \eps)\ln (\rho/\beta))}{1 - \alpha}.
\end{multline*}
This finishes the proof.
\end{proof}

The proof of \cref{diffgammaproof} follows directly from the following refinement of \cref{lowergamma}.
\begin{corollary}\label{lowergammabetter}
    For $\eps > 0$, let $\alpha \leq \rho < 1 - \eps$.
    Consider the process $\gamma_n^k(1 - \delta, \eps)$ for this $\eps$, $0 < \delta \leq 1 - (\alpha + \eps)$ and some $n$, and take integer $d$ such that $d \leq n$.
    Then, $\gamma_n^{n_0}(1 - \delta, \eps) \geq \rho + \eps$ if $n, n_0, d$ satisfy
    \begin{equation}
        \frac{2(1 - \rho + \alpha\ln \rho)}{1 - \alpha} 
        \geq \frac{n_0}{n} +\frac{6\eps}{1 - \alpha} + 2\delta\left(1 - \frac{\eps}{1 - \alpha}\right) +  \frac{d(1 - \rho)}{n} + \frac{2(1 - \rho)}{\rho (\rho d  + 1)} + \frac{2(1 - \delta - \eps - \alpha)}{d(1 - \alpha)}.\label{rhoeq}
    \end{equation}
\end{corollary}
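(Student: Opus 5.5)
The plan is to apply \cref{lowergamma} with starting value $\beta = 1-\delta$ and target value $\rho' := \rho + \eps$ in place of $\rho$. \cref{lowergamma} then says $\gamma_n^{n_0}(1-\delta,\eps) \geq \rho+\eps$ provided
\[
\frac{2\bigl(1-\delta-\rho-\eps + (\alpha+\eps)\ln\frac{\rho+\eps}{1-\delta}\bigr)}{1-\alpha} \geq \frac{n_0}{n} + \frac{d(1-\rho'/\beta)}{n} + \frac{2(1-\rho'/\beta)}{(\rho'/\beta)(\rho' d/\beta+1)} + \frac{2(\beta-\eps-\alpha)}{d(1-\alpha)}.
\]
The hypotheses $\alpha+\eps\le\rho'<1$ and $\beta \geq \alpha+\eps$ follow from $\rho<1-\eps$ and $\delta\le 1-(\alpha+\eps)$. (If $\rho' > \beta$, the ratios become degenerate; I would treat that case separately or note that it is excluded by the inequality itself.) So it suffices to show that \eqref{rhoeq} implies this inequality. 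I would do this by lower bounding its left side and upper bounding its right side.

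\textbf{Right side.} The last term is exactly $\frac{2(1-\delta-\eps-\alpha)}{d(1-\alpha)}$, which is already what appears in \eqref{rhoeq}. For the two middle terms, I need to compare $\rho'/\beta$ with $\rho$. Since $\rho'/\beta = (\rho+\eps)/(1-\delta) \geq \rho$, we get $1-\rho'/\beta \le 1-\rho$. The function $x\mapsto \frac{1-x}{x(xd+1)}$ is decreasing, so both middle terms are bounded by $\frac{d(1-\rho)}{n}$ and $\frac{2(1-\rho)}{\rho(\rho d+1)}$ respectively.

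\textbf{Left side.} This is the main computational step. I would write
\[
\ln\frac{\rho+\eps}{1-\delta} = \ln\rho + \ln\Bigl(1+\frac{\eps}{\rho}\Bigr) - \ln(1-\delta) \geq \ln \rho,
\]
since both correction terms are nonnegative. Using $\ln\rho\le 0$, we have $\eps\ln\rho \geq \eps\ln\alpha$, and with $\alpha\ge 1/4$ this gives $\eps \ln \rho \geq -\eps\ln 4 > -2\eps$. The left side is therefore at least
\[
\frac{2(1-\rho+\alpha\ln\rho)}{1-\alpha} - \frac{2\delta + 2\eps + 4\eps}{1-\alpha}.
\]
This produces the $\frac{6\eps}{1-\alpha}$ term. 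The $\delta$ contribution $\frac{2\delta}{1-\alpha}$ must then be matched to $2\delta\bigl(1-\frac{\eps}{1-\alpha}\bigr)$. That match is not exact, so at this point I would go back and keep the $-(\alpha+\eps)\ln(1-\delta) \geq (\alpha+\eps)\delta$ gain, which yields a net $\delta$ loss of $\frac{2\delta(1-\alpha-\eps)}{1-\alpha} = 2\delta\bigl(1-\frac{\eps}{1-\alpha}\bigr)$ exactly.

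\textbf{Conclusion and main obstacle.} Combining the two bounds: \eqref{rhoeq} implies the hypothesis of \cref{lowergamma}, hence $\gamma_n^{n_0}(1-\delta,\eps)\ge\rho+\eps$. I expect the obstacle to be bookkeeping the $\eps$- and $\delta$-dependent error terms so that they land exactly on the constants $6\eps/(1-\alpha)$ and $2\delta(1-\eps/(1-\alpha))$. In particular, the $\eps\ln\rho$ estimate must use the range of $\rho$ (which is at least $\alpha \ge 1/4$ in this section) to get a clean constant.
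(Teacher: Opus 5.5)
Your proposal is correct and is essentially the paper's proof. Both apply \cref{lowergamma} with $\beta=1-\delta$ and target $\rho+\eps$, and both bound the two middle right-hand terms using $(\rho+\eps)/(1-\delta)\ge\rho$. Both then lower bound the left side via $\ln(\rho+\eps)\ge\ln\rho$, a constant bound $\eps\ln\rho>-2\eps$ (the paper uses $\alpha\ge e^{-2}$ where you use $\alpha\ge 1/4$), and $\ln(1-\delta)\le-\delta$, which lands exactly on $\frac{6\eps}{1-\alpha}$ and $2\delta\bigl(1-\frac{\eps}{1-\alpha}\bigr)$.

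Your worry about the case $\rho+\eps>1-\delta$ resolves as you guessed, and the paper leaves this implicit. In that case the $\eps$- and $\delta$-terms alone push the right side of \eqref{rhoeq} strictly above $2(1-\rho)$. Meanwhile $\ln\rho\le\rho-1$ keeps its left side at most $2(1-\rho)$, so the hypothesis excludes that case.
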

\begin{proof}
    Consider the expression obtained in \cref{lowergamma} for some $\beta$ and $\rho + \eps$ instead of $\rho$.
    Observe that
    \begin{multline*}
        \frac{2(\beta - \rho - \eps + (\alpha + \eps)\ln ((\rho + \eps)/\beta))}{1 - \alpha}
        \\= \frac{2(1 - \rho + \alpha \ln (\rho + \eps))}{1 - \alpha} - \frac{2\eps}{1 - \alpha}+ \frac{2\eps\ln(\rho + \eps)}{1 - \alpha}- \frac{2(1 - \beta + (\alpha + \eps)\ln \beta)}{1 - \alpha} .
    \end{multline*}
    Since $\rho > \alpha \geq e^{-2}$, then $\ln (\rho + \eps) \geq \ln\rho > -2$ and $\frac{2\eps\ln(\rho + \eps)}{1 - \alpha} \geq -\frac{4\eps}{1 - \alpha}$.
    In addition, when $\beta= 1- \delta$,
    \[
        \frac{2(1 - \beta + (\alpha + \eps)\ln \beta)}{1 - \alpha} = \frac{2(\delta + (\alpha + \eps)\ln (1 - \delta))}{1 - \alpha} \leq \frac{2(\delta - (\alpha + \eps)\delta)}{1 - \alpha} = 2\delta\left(1 - \frac{\eps}{1 - \alpha}\right).
    \]
    Finally, as $(\rho + \eps)/(1 - \delta) \geq \rho$, we have
    \[
        \frac{d(1 - (\rho + \eps)/(1 - \delta))}{n} + \frac{2(1 - (\rho + \eps)/(1 - \delta))}{(\rho + \eps)/(1 - \delta) ((\rho + \eps) d/(1 - \delta)  + 1)} \leq \frac{d(1 - \rho)}{n} + \frac{2(1 - \rho)}{\rho (\rho d  + 1)} .
    \]
    Combining all of the above and noting $\alpha \ln (\rho + \eps) \geq \alpha \ln \rho$, \cref{lowergamma} holds if $n, d$ satisfy
    \[
        \frac{2(1 - \rho + \alpha\ln \rho)}{1 - \alpha} 
        \geq \frac{n_0}{n} +\frac{6\eps}{1 - \alpha} + 2\delta\left(1 - \frac{\eps}{1 - \alpha}\right) +  \frac{d - \rho d}{n} + \frac{2(d - \rho d)}{\rho d(\rho d  + 1)} + \frac{2(1 - \delta - \eps - \alpha)}{d(1 - \alpha)}.
    \]
\end{proof}

\begin{proof}[Proof of \cref{diffgammaproof}]
    First, observe that for all $\alpha$, as long as $\rho  \geq \alpha$, the lefthand side of \cref{rhoeq} in \cref{lowergammabetter} is a decreasing function of $\rho$.
    Indeed, the derivative of $\frac{2(1 - \rho + \alpha \ln \rho)}{1 - \alpha}$ is $ \frac{2(\alpha - \rho)}{\rho(1 - \alpha)}$, which is non-positive for all $\rho \geq \alpha$.
    Then, if $\rho^*$ is a solution to $\frac{2(1 - \rho^* + \alpha\ln \rho^*)}{1 - \alpha} = 1$, for any constant $0 < x < \rho^* - \alpha$ it holds for $\rho := \alpha + x$ that $\frac{2(1 - \rho + \alpha \ln \rho)}{1 - \alpha} > 1 + \const$.
    
    At the same time, when $n_0 = n$ the righthand side of \cref{rhoeq} in \cref{lowergammabetter} is at least $1$.
    Since $\rho < 1$, both $\rho d$ and $d - \rho d$ are of order $\Theta(d)$, and taking any integer $d = \Theta(\sqrt{n})$ in
    \[ \frac{d(1 - \rho)}{n} + \frac{2(1 - \rho)}{\rho (\rho d  + 1)} + \frac{2(1 - \delta - \eps - \alpha)}{d(1 - \alpha)}\]
    makes the expression into a decreasing function of $n$, specifically of order $\Theta(n^{-1/2})$.
    But then, with this value $d$, if $\eps, \delta \xrightarrow{n\to\infty}0$, when $n_0 = n$ the righthand side of \cref{rhoeq} converges to $1$ as $n \to \infty$.
    Thus, for inequality to be satisfied for all $n_0 \leq n$ it suffices to take $n_x$ large enough such that the righthand side of \cref{rhoeq} with $n \geq n_x$ is at most $\frac{2(1 - \rho + \alpha \ln \rho)}{1 - \alpha}$.
    Since $\frac{2(1 - \rho + \alpha \ln \rho)}{1 - \alpha} > 1$, such $n_x$ must exist.
\end{proof}

\section{Lower bound on $\gamma_n^n(\beta, \eps)$ for $\alpha \geq 3/11$}
\label{sec:bigalpha}

Recall the definition of the process $\gamma_n^k(\beta, \eps)$ for $\alpha \geq 3/11$.
\begin{definition}
    For $n$ agents, $3/11 \leq \alpha < 1/3$, $\eps > 0$ and $\beta > \alpha + \eps$, let $\gamma^k_n(\beta, \eps)$ be the following process: $\gamma^1_n(\beta, \eps) = \beta$, and for $k \geq 1$,
    \begin{gather*}
        \text{if }\;\gamma^k_n(\beta, \eps) - \eps \geq 3\alpha \;\text{ then }\;
        \gamma^{k + 1}_n(\beta, \eps) = \left(1 - \frac{2(1 - 3\alpha)}{(\gamma^k_n(\beta, \eps) - \eps -12\alpha +3)n}\right)\gamma^k_n(\beta, \eps);\\
        \text{if }\; \gamma^k_n(\beta, \eps) - \eps < 3\alpha \;\text{ then }\; \gamma^{k + 1}_n(\beta, \eps) = \left(1 - \frac{4\alpha}{3(\gamma^k_n(\beta, \eps) - \eps - \alpha)n}\right)\gamma^k_n(\beta, \eps).
    \end{gather*}
\end{definition}

In this section, we prove \cref{diffgamma2}, restated below.
Similarly to how it was with $\alpha \leq 3/11$, the claim for the same-valuation case (\cref{bigrhoinf}) follows by setting $\eps = \delta = 0$.
\begin{theorem}
    Let $\rho$ satisfy $2(12\rho - 3)\ln(3\rho) = (1 - 3\rho)(3\ln 3 - 4)$, and assume $\alpha \in [3/11, \rho)$.
    For any constant $0 < x < \rho - \alpha$ and any $\eps, \delta \xrightarrow{n\to\infty}0$, there exists $n_x$ such that for all $n \geq n_x$ and any $n_0 \leq n$ it holds $\gamma_n^{n_0}(1 -\delta, \eps) - \eps \geq \alpha + x$.
\end{theorem}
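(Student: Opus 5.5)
We follow the template of \cref{sec:smallalpha}, splitting the trajectory of $\gamma_n^k := \gamma_n^k(1-\delta,\eps)$ into two phases: phase A, while $\gamma_n^k - \eps \geq 3\alpha$, and phase B, once $\gamma_n^k - \eps < 3\alpha$. The goal is to lower bound the total number of iterations the process needs to fall from $1-\delta$ down to $\rho^* + \eps$, where $\rho^* := \alpha + x$. We then show that this count exceeds $n_0$ (for every $n_0 \le n$) once $n$ is large.

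\textbf{Epoch decomposition.} First we redo \cref{lowergammaiters} in each phase, cutting the value range into $d$ epochs of length $\Theta(1/d)$. Within an epoch $[\gamma-\tau,\gamma]$ the multiplicative step is at least the one computed at the bottom endpoint. This uses that $p$ is decreasing in its argument in both regimes (in phase A, $\gamma-\eps-12\alpha+3 > 0$ because $\alpha<1/3$).

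Applying $\ln(1+y)\ge y/(1+y)$ as before gives the per-epoch counts. In phase A,
\[
t \gtrsim \frac{\tau(\gamma-\tau-\eps-12\alpha+3)n}{2\gamma(1-3\alpha)},
\]
and in phase B,
\[
t \gtrsim \frac{3\tau(\gamma-\tau-\eps-\alpha)n}{4\alpha\gamma}.
\]
Summing over epochs and comparing the sums with integrals (the summands are monotone in $r$, exactly as in \cref{lowergamma}) turns the total count into $n$ times two integrals, up to additive errors $O(d/n + 1/d)$ plus the $O(\eps)$ and $O(\delta)$ terms.

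\textbf{The integrals.} In phase A we integrate $\frac{g-12\alpha+3}{2g(1-3\alpha)}$ over $g\in[3\alpha,1]$. This gives
\[
\frac{1-3\alpha+(12\alpha-3)\ln(3\alpha)}{2(1-3\alpha)}.
\]
In phase B we integrate $\frac{3(g-\alpha)}{4\alpha g}$ over $g\in[\rho^*,3\alpha]$. This gives
\[
\frac{3(3\alpha-\rho^*-\alpha\ln(3\alpha)+\alpha\ln\rho^*)}{4\alpha}.
\]
The sufficient condition is the one stated in \cref{gammasimple}: the sum of these two integrals must be at least $\frac{n_0}{n} + O(\eps) + 2\delta + O(1/\sqrt n)$, taking $d = \Theta(\sqrt n)$.

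The $\eps$ and $\delta$ corrections are handled as in \cref{lowergammabetter}. The start at $1-\delta$ costs at most $2\delta$, because the phase-A integrand is at most $2$. The shift by $\eps$ costs $O(\eps)$, because both integrands have bounded derivative in $\eps$ away from $g=\alpha$, which holds since $\rho^*>\alpha$. The epoch that straddles the boundary $3\alpha+\eps$ contributes only $O(1/d)$.

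\textbf{Main obstacle: positivity of the slack.} The hardest step is to verify that the left-hand side, viewed as a function $L(\alpha,\rho^*)$, strictly exceeds $1$ for every $\alpha\in[3/11,\rho)$ and $\rho^*=\alpha+x<\rho$. In the ``$\alpha=\rho^*$'' limit, $L(\rho,\rho)=1$ reduces exactly to $2(12\rho-3)\ln(3\rho)=(1-3\rho)(3\ln3-4)$. Indeed, the phase-B term becomes $\tfrac{3}{4}(2-\ln 3)$, and clearing denominators gives the stated equation.

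Next we check monotonicity. Since $\partial L/\partial\rho^* = -\frac{3(\rho^*-\alpha)}{4\alpha\rho^*}\le 0$, it suffices to control the diagonal $\rho^*\to\alpha$. We would compute the derivative of $h(\alpha) := L(\alpha,\alpha)$ and show that $h$ is decreasing on $[3/11,1/3)$, so that $h(\alpha)>h(\rho)=1$ for $\alpha<\rho$. We expect a direct derivative computation to work; failing that, we would verify it numerically on a fine grid with a Lipschitz bound.

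Continuity in $\rho^*$ then gives $L(\alpha,\alpha+x)>1+c_x$ for $x$ small enough relative to $\rho-\alpha$. This is the regime the statement needs, since it asks for the bound only for $x$ below the allowed gap, and it can be adjusted by taking $\alpha+x$ below the root of $L(\alpha,\cdot)=1$. Finally, choose $n_x$ so that $O(\eps)+2\delta+O(1/\sqrt n)<c_x$ for all $n\ge n_x$. Since $n_0/n\le 1$, this gives $\gamma_n^{n_0}(1-\delta,\eps)-\eps\ge \alpha+x$.
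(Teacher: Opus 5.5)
Your epoch decomposition, the two per-phase step counts, the integrals, and the error terms ($2\delta$ from the start at $1-\delta$, $O(\eps)$, and $O(1/\sqrt n)$ with $d=\Theta(\sqrt n)$) follow the paper's \cref{lowergamma2} and \cref{lowergamma2better} essentially step for step, and that part is fine. The gap is in the final ``positivity of the slack'' step, where the monotonicity argument runs in the wrong direction. Because $\partial L/\partial\rho^*\le 0$, the diagonal $\rho^*\to\alpha$ is where $L(\alpha,\cdot)$ is \emph{largest}, not smallest. The binding case is $\rho^*=\alpha+x\to\rho$, so what you actually need is $L(\alpha,\rho)\ge 1$ for every $\alpha\in[3/11,\rho)$. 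Showing that $h(\alpha)=L(\alpha,\alpha)$ is decreasing only gives $h(\alpha)>1$. That yields positive slack only for $x$ below the root $\bar\rho(\alpha)$ of $L(\alpha,\cdot)=1$. Your closing remark that this ``is the regime the statement needs'' is not correct. The statement quantifies over \emph{all} $0<x<\rho-\alpha$, with $\rho$ the fixed diagonal root. So you still owe $\bar\rho(\alpha)\ge\rho$, and nothing in your argument supplies it.

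The missing inequality is true, and your $h$ can be reused to prove it. Set $\phi(\alpha)=\frac{(12\alpha-3)\ln(3\alpha)}{1-3\alpha}$, so that $h(\alpha)=\tfrac34(2-\ln 3)+\tfrac12+\tfrac12\phi(\alpha)$. Then one has the identity
\[
L(\alpha,r)=h(\alpha)-\tfrac34\Bigl(\tfrac{r}{\alpha}-1-\ln\tfrac{r}{\alpha}\Bigr),
\qquad\text{hence}\qquad
L(\alpha,\rho)-1=h(\alpha)-h(\rho)-\tfrac34\Bigl(\tfrac{\rho}{\alpha}-1-\ln\tfrac{\rho}{\alpha}\Bigr).
\]
The penalty term is at most $\tfrac38(\rho/\alpha-1)^2=O\bigl((\rho-\alpha)^2\bigr)$, and here $\rho-\alpha<0.005$. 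So the mere sign of $h'$ does not suffice; you need a rate. On $[3/11,\rho]$ one checks $h'=\tfrac12\phi'\le -6$, which gives $L(\alpha,\rho)-1\ge(6-0.03)(\rho-\alpha)>0$. Equivalently, $\frac{d}{d\alpha}L(\alpha,\rho)=\frac{3(\rho-\alpha)}{4\alpha^2}+\frac12\phi'(\alpha)<0$ on this interval, and $L(\rho,\rho)=1$. Then $L(\alpha,\alpha+x)>L(\alpha,\rho)>1$ for every $x<\rho-\alpha$, which is the constant slack $c_x$ you need.

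For comparison, the paper argues via the root of $L(\alpha,\cdot)=1$ (its $\rho^*$), using monotonicity in the correct direction. It then tacitly identifies the admissible range with $\rho$, and the computation above is exactly the check that identification requires.
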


The structure of the proof is essentially the same as in \cref{lowergammaiters}, \cref{lowergamma} and \cref{lowergammabetter} for the case $\alpha \leq 3/11$, with only difference that now one has to consider regimes above and below $3\alpha$.
As before, we first consider arbitrary $\beta \geq \alpha +\eps$ and $\rho$, and obtain a necessary condition on $n$ so that $\gamma_n^{n_0}(\beta, \eps) \geq \rho$.
\begin{theorem}\label{lowergamma2}
    For $\eps > 0$, let $\alpha +\eps \leq \rho < 1$.
    Consider the process $\gamma_n^k := \gamma_n^k(\beta, \eps)$ for this $\eps$, $0 < \delta \leq 1 - (\alpha + \eps)$ and some $n$, and take integer $d$ such that $d \leq n$.
    Then, $\gamma_n^{n_0} \geq \rho$ if $n, n_0, d$ satisfy
    \begin{multline}
        \frac{3(3\alpha - \rho - \alpha\ln(3\alpha)  + \alpha\ln \rho)}{4\alpha} + \frac{1 - 3\alpha + (12\alpha - 3)\ln(3\alpha )}{2(1 - 3\alpha)} \\
        \geq \frac{n_0}{n} + \eps\left(\frac{1}{\alpha} +  \frac{1 - \ln(3\alpha)}{2(1 - 3\alpha)}\right) + \frac{1 - \beta + (12\alpha - 3)\ln \beta}{2(1 - 3\alpha)} + \frac{d(1 - \rho /\beta)}{n}\\
         + \frac{3((3\alpha + \eps)/\beta - \rho /\beta + 1/d)}{2(\rho /\beta) ( \rho d/\beta + 1)} +\frac{3(1 -  (3\alpha + \eps) /\beta )}{2( (3\alpha + \eps) /\beta)((3\alpha + \eps) d/\beta + 1)} \\
        +\frac{3\beta (\beta / d + 4\alpha)}{4(3\alpha + \eps)\alpha d} +  \frac{\beta - \eps - 12\alpha + 3}{2d(1 - 3\alpha)}.\label{bigalphaineq}
    \end{multline}
\end{theorem}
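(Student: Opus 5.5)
The proof will follow the three-step template of \cref{lowergammaiters} and \cref{lowergamma}, but with the trajectory of $\gamma_n^k$ split into two regimes at the threshold $3\alpha+\eps$. In the first regime the value $\gamma_n^k - \eps$ is at least $3\alpha$ and the multiplicative step is $1 - \frac{2(1-3\alpha)}{(\gamma - \eps - 12\alpha + 3)n}$. In the second regime the step is $1 - \frac{4\alpha}{3(\gamma - \eps - \alpha)n}$.

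\textbf{Counting steps in each regime.} I will count iterations separately in the two regimes:
\begin{itemize}
\item from $\beta$ down to $3\alpha+\eps$;
\item from $3\alpha+\eps$ down to $\rho$.
\end{itemize}
I will then require the total to be at least $n_0$. As in part 1) of \cref{lowergammaiters}, I cut $[3\alpha+\eps,\beta]$ into sub-segments of length $\beta/d$ and do the same for $[\rho,3\alpha+\eps]$. This costs at most $d(1-\rho/\beta)$ for the ``$-1$'' boundary corrections, plus one extra ceiling correction at the junction $3\alpha+\eps$. The junction correction is the source of the $1/d$ term and of the second fraction involving $(3\alpha+\eps)/\beta$ in \cref{bigalphaineq}.

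For each epoch $[\gamma-\tau,\gamma]$, part 2) of \cref{lowergammaiters} together with $\ln(1-x)\ge -x/(1-x)$ gives the following lower bounds on the time spent:
\begin{itemize}
\item in the upper regime, $l^{(1)}_n(\gamma,\tau)=\frac{\tau(\gamma-\tau-\eps-12\alpha+3)n}{2\gamma(1-3\alpha)}$;
\item in the lower regime, $l^{(2)}_n(\gamma,\tau)=\frac{3\tau(\gamma-\tau-\eps-\alpha)n}{4\alpha\gamma}$.
\end{itemize}
The process is monotone and the step factor increases in $\gamma$ within each regime, so these bounds are valid epoch by epoch.

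\textbf{From sums to integrals.} Substituting $\gamma=\beta(r+1)/d$ and $\tau=\beta/d$, each epoch contributes a term of the form $\frac{(\beta r/d - c)n}{r\cdot\mathrm{const}}(1-\frac1{r+1})$. Here $c=\eps+12\alpha-3$ in the upper regime and $c=\eps+\alpha$ in the lower one.
\begin{itemize}
\item The $(1-\frac{1}{r+1})$ corrections are bounded by a telescoping estimate, exactly as in \cref{lowergamma}. This yields the two $\frac{3(\cdot)}{2(\cdot)(\cdot d/\beta+1)}$ terms.
\item The summands are monotone in $r$, because $r\ge c\,d/\beta$. So each sum dominates the integral from its lower limit, and the missing endpoint terms are paid for by the last two fractions on the right-hand side.
\end{itemize}
Evaluating the two integrals gives the two leading quantities. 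The first is $\int_{\rho}^{3\alpha}\frac{3(x-\alpha)}{4\alpha x}dx=\frac{3(3\alpha-\rho-\alpha\ln 3\alpha+\alpha\ln\rho)}{4\alpha}$. The second is $\int_{3\alpha}^{\beta}\frac{x-12\alpha+3}{2(1-3\alpha)x}dx$, which I rewrite as the $x\in[3\alpha,1]$ integral minus the term $\frac{1-\beta+(12\alpha-3)\ln\beta}{2(1-3\alpha)}$.

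The $\eps$-dependence enters in two places. One is the shift of the integrands by $\eps$. The other is moving the lower limit from $3\alpha+\eps$ to $3\alpha$. I will bound both by linear terms, using $\ln(3\alpha+\eps)-\ln 3\alpha\le \eps/(3\alpha)$ and $\ln\rho\ge\ln\alpha\ge -2$. This gives the $\eps\bigl(\frac1\alpha+\frac{1-\ln 3\alpha}{2(1-3\alpha)}\bigr)$ term.

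\textbf{Expected obstacle.} I expect the main difficulty to be the bookkeeping at the regime boundary. The process crosses $3\alpha+\eps$ in the middle of an epoch, and the two step rules differ there, so the epoch containing $3\alpha+\eps$ must be split, with its lower part charged to the lower regime. I will first try to align the two grids by starting the second grid at $\lceil(3\alpha+\eps)d/\beta\rceil$ and simply discarding the partial epoch, absorbing the loss into the $\frac{3\beta(\beta/d+4\alpha)}{4(3\alpha+\eps)\alpha d}$ term. Finally, I will invoke part 3) of \cref{lowergammaiters} to conclude $\gamma_n^{n_0}\ge\rho$.
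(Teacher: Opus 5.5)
Your plan is correct and follows the paper's proof: split the epoch sum of \cref{lowergammaiters} at $R:=\lceil(3\alpha+\eps)d/\beta\rceil$, bound each epoch by $l^{(1)}_n$ or $l^{(2)}_n$, control the $\frac{1}{r+1}$ corrections telescopically, add the endpoint summands at $r=R$ and $r=d$, compare with the two integrals, and absorb $\eps$ linearly.

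The junction obstacle you anticipate does not really arise. The two step rules coincide at $\gamma_n^k-\eps=3\alpha$: both give the factor $1-\frac{2}{3n}$ (cf.\ \eqref{3alineq}). Hence the step factor is nondecreasing in the current value across both regimes, and the paper simply charges the straddling epoch $r=R-1$ to the lower regime with the bound $l^{(2)}_n$.

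You need this cross-regime comparison in your version too. Monotonicity within each regime does not cover a single step that jumps from a value above $3\alpha+\eps$ into a lower epoch. Once you have the comparison, discarding the partial epoch is unnecessary. It would also be affordable, since $\frac{3\beta(\beta/d+4\alpha)}{4(3\alpha+\eps)\alpha d}$ is large enough to pay for both endpoint summands at $r=R-1$ and $r=R$.
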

\begin{proof}
    By the first item of \cref{lowergammaiters} (which holds for all values of $\alpha$), the minimum number of steps of $\gamma_n^k$ to reach $\rho$ is
    \[
        k_n(\rho) \geq \sum_{r = \lceil \rho d/\beta \rceil}^{d - 1}t_n\left(\frac{\beta(r + 1)}{d}, \frac{\beta}{d}\right) - (d - \lceil \rho d / \beta \rceil). 
    \]
    Observe that
    \[
        \sum_{r = \lceil \rho d/\beta \rceil}^{d - 1}t_n\left(\frac{\beta(r + 1)}{d}, \frac{\beta}{d}\right) = \sum_{r = \lceil \rho d/\beta \rceil}^{\lceil (3\alpha + \eps)d/\beta\rceil - 1}t_n\left(\frac{\beta(r + 1)}{d}, \frac{\beta}{d}\right) + \sum_{r = \lceil (3\alpha + \eps) d/\beta \rceil}^{d - 1}t_n\left(\frac{\beta(r + 1)}{d}, \frac{\beta}{d}\right).
    \]
    It is easy to see that
    \begin{gather*}
        \text{if }\; \lceil (3\alpha + \eps) d/\beta \rceil \leq r \leq d - 1,\;\text{ then }\;  \frac{\beta(r + 1)}{d} - \frac{\beta}{d} \geq \frac{\beta\lceil (3\alpha + \eps) d/\beta \rceil)}{d} \geq 3\alpha + \eps;\\
        \text{if }\;\lceil \rho d/\beta \rceil \leq r \leq \lceil (3\alpha + \eps) d/\beta \rceil - 1,\;\text{ then }\;\frac{\beta(r + 1)}{d} - \frac{\beta}{d} \leq \frac{\beta\lceil (3\alpha + \eps) d/\beta \rceil - 1)}{d} \leq 3\alpha + \eps.
    \end{gather*}
    Consider $t_n(\gamma, \tau)$ for $\gamma - \tau \geq 3\alpha + \eps$.
    For this range of values, as was done in \cref{lowergammaiters},
    \begin{multline*}
        \gamma - \tau \geq \gamma_n^{k_n(\gamma) + t_n(\gamma, \tau)} \geq \left(1 - \frac{2(1 - 3\alpha)}{(\gamma - \tau - \eps - 12\alpha + 3)n}\right)^{t_n(\gamma, \tau)}\cdot \gamma_n^{k_n(\gamma)}\\
        \geq \left(1 - \frac{2(1 - 3\alpha)}{(\gamma - \tau - \eps - 12\alpha + 3)n}\right)^{t_n(\gamma, \tau)}\cdot \gamma,
    \end{multline*}
    implying
    \begin{multline*}
         t_n(\gamma, \tau) > \frac{\ln(\frac{\gamma - \tau}{\gamma})}{\ln(1 - \frac{2(1 - 3\alpha)}{(\gamma - \tau - \eps - 12\alpha + 3)n})}\geq \frac{-\frac{\tau/\gamma}{1 - \tau/\gamma}}{-\frac{2(1 - 3\alpha)}{(\gamma - \tau - \eps - 12\alpha + 3)n}}\\
         = \frac{\tau(\gamma - \tau - \eps - 12\alpha + 3)n}{2\gamma (1 - 3\alpha)}\cdot \frac{1}{1 - \tau/\gamma} \geq \frac{\tau(\gamma - \tau - \eps - 12\alpha + 3)n}{2\gamma (1 - 3\alpha)}=:l_n^>(\gamma, \tau).
    \end{multline*}
    Similarly, consider $t_n(\gamma, \tau)$ for $\gamma - \tau\leq 3\alpha + \eps$.
    For this range of values it holds
    \begin{multline*}
        \gamma - \tau \geq \gamma_n^{k_n(\gamma) + t_n(\gamma, \tau)}\geq \left(1 - \frac{4\alpha}{3(\gamma - \tau - \eps - \alpha)n}\right)^{t_n(\gamma, \tau)}\cdot \gamma_n^{k_n(\gamma)}\\
        \geq \left(1 - \frac{4\alpha}{3(\gamma - \tau - \eps - \alpha)n}\right)^{t_n(\gamma, \tau)}\cdot \gamma,
    \end{multline*}
    implying
    \begin{multline*}
        t_n(\gamma, \tau) > \frac{\ln(\frac{\gamma - \tau}{\gamma})}{\ln(1 - \frac{4\alpha}{3(\gamma - \tau - \eps - \alpha)n})} \geq \frac{-\frac{\tau/\gamma}{1 - \tau/\gamma}}{-\frac{4\alpha}{3(\gamma - \tau - \eps - \alpha)n}}\\
        = \frac{3\tau(\gamma - \tau - \eps - \alpha)n}{4\gamma\alpha}\cdot \frac{1}{1 - \tau/\gamma} \geq \frac{3\tau(\gamma - \tau - \eps - \alpha)n}{4\gamma\alpha} = : l_n^<(\gamma, \tau).
    \end{multline*}
    As a result,
    \[
        k_n(\rho) \geq \sum_{r = \lceil \rho d/\beta \rceil}^{\lceil (3\alpha + \eps)d/\beta\rceil - 1}l_n^<\left(\frac{\beta(r + 1)}{d}, \frac{\beta}{d}\right) + \sum_{r = \lceil (3\alpha + \eps) d/\beta \rceil}^{d - 1}l_n^>\left(\frac{\beta(r + 1)}{d}, \frac{\beta}{d}\right) - (d - \lceil \rho d/\beta\lceil).
    \]
    Now, we substitute $\gamma = \frac{\beta(r + 1)}{d}$ and $\tau = \frac{\beta}{d}$ into $l_n^<(\gamma, \tau)$ and $l_n^>(\gamma, \tau)$.
    \begin{gather*}
        l_n^<\left(\frac{\beta(r + 1)}{d}, \frac{\beta}{d}\right) = \frac{3\cdot \frac{\beta}{d}\cdot (\beta r/d - \eps - \alpha)n}{4\cdot \frac{\beta(r + 1)}{d}\cdot \alpha} = \frac{3(\beta r / d - \eps - \alpha)n}{4r\alpha}\cdot \left(1 - \frac{1}{r + 1}\right),\\
        l_n^>\left(\frac{\beta(r + 1)}{d}, \frac{\beta}{d}\right)  = \frac{\frac{\beta}{d}\cdot (\beta r / d - \eps - 12\alpha + 3)n}{2\cdot \frac{\beta(r + 1)}{d}\cdot  (1 - 3\alpha)} = \frac{(\beta r / d - \eps - 12\alpha + 3)n}{2r(1 - 3\alpha)}\cdot \left(1 - \frac{1}{r + 1}\right).
    \end{gather*}
    As was done in \cref{lowergamma}, we observe that
    \begin{multline*}
        \sum_{r = \lceil \rho d/\beta \rceil}^{\lceil (3\alpha + \eps)d/\beta\rceil - 1}\frac{3(\beta r / d - \eps - \alpha)n}{4r(r + 1)\alpha}\leq \frac{3n(\lceil (3\alpha + \eps)d/\beta\rceil - \lceil \rho d/\beta \rceil)}{2\lceil \rho d/\beta \rceil(\lceil \rho d/\beta \rceil + 1)} \\\leq \frac{3n((3\alpha + \eps)/\beta - \rho /\beta + 1/d)}{2(\rho /\beta) ( \rho d/\beta + 1)},
    \end{multline*}
    and
    \begin{multline*}
        \sum_{r = \lceil (3\alpha + \eps) d/\beta \rceil}^{d - 1}\frac{(\beta r / d - \eps - 12\alpha + 3)n}{2r(r + 1)(1 - 3\alpha)} \leq \frac{3n(d - \lceil (3\alpha + \eps) d/\beta \rceil)}{2\lceil (3\alpha + \eps) d/\beta \rceil(\lceil (3\alpha + \eps) d/\beta \rceil + 1)}\\
        \leq \frac{3n(1 -  (3\alpha + \eps) /\beta )}{2( (3\alpha + \eps) /\beta)((3\alpha + \eps) d/\beta + 1)}.
    \end{multline*}
    Combining all of the above with $\lceil \rho d/\beta \rceil \geq \rho d / \beta$, we get that $\gamma_n^{n_0} \geq \rho$ if $n, n_0, d$ satisfy
    \begin{multline*}
        \sum_{r = \lceil \rho d/\beta \rceil}^{\lceil (3\alpha + \eps)d/\beta\rceil - 1}\frac{3(\beta r / d - \eps - \alpha)n}{4r\alpha} + \sum_{r = \lceil (3\alpha + \eps) d/\beta \rceil}^{d - 1}\frac{(\beta r / d - \eps - 12\alpha + 3)n}{2r(1 - 3\alpha)}\\ \geq n_0 + d(1 - \rho /\beta) + \frac{3n((3\alpha + \eps)/\beta - \rho /\beta + 1/d)}{2(\rho /\beta) ( \rho d/\beta + 1)} +\frac{3n(1 -  (3\alpha + \eps)/\beta )}{2( (3\alpha + \eps) /\beta)((3\alpha + \eps) d/\beta + 1)} .
    \end{multline*}
    Next,
    \begin{gather*}
        \left.\frac{3(\beta r / d - \eps - \alpha)n}{4r\alpha}\right|_{r = \lceil (3\alpha + \eps)d/\beta\rceil} \leq \frac{3\beta (\beta / d + 4\alpha)n}{4(3\alpha + \eps)\alpha d};\\\left. \frac{(\beta r / d - \eps - 12\alpha + 3)n}{2r(1 - 3\alpha)}\right|_{r = d} = \frac{(\beta - \eps - 12\alpha + 3)n}{2d(1 - 3\alpha)},
    \end{gather*}
    so we get an equivalent condition (after dividing by $n$)
    \begin{multline*}
        \sum_{r = \lceil \rho d/\beta \rceil}^{\lceil (3\alpha + \eps)d/\beta\rceil}\frac{3(\beta r / d - \eps - \alpha)}{4r\alpha} + \sum_{r = \lceil (3\alpha + \eps) d/\beta \rceil}^{d}\frac{\beta r / d - \eps - 12\alpha + 3}{2r(1 - 3\alpha)}\\ \geq \frac{n_0}{n}+ \frac{d(1 - \rho /\beta)}{n} + \frac{3((3\alpha + \eps)/\beta - \rho /\beta + 1/d)}{2(\rho /\beta) ( \rho d/\beta + 1)} +\frac{3(1 -  (3\alpha + \eps) /\beta )}{2( (3\alpha + \eps) /\beta)((3\alpha + \eps) d/\beta + 1)} \\
        +\frac{3\beta (\beta / d + 4\alpha)}{4(3\alpha + \eps)\alpha d} +  \frac{\beta - \eps - 12\alpha + 3}{2d(1 - 3\alpha)}.
    \end{multline*}
    Next, since both $f(r) := \frac{3(\beta r - (\alpha + \eps)d)}{4dr\alpha}$ and $g(r) := \frac{\beta r - (12\alpha  -3 + \eps)d}{2dr(1 - 3\alpha)}$ are decreasing functions of $r$ on respective ranges, and by shrinking the integration domain, we have
    \begin{gather*}
        \sum_{r = \lceil \rho d/\beta \rceil}^{\lceil (3\alpha + \eps)d/\beta\rceil}\frac{3(\beta r / d - \eps - \alpha)}{4r\alpha} \geq \int_{ \rho d/\beta }^{(3\alpha + \eps)d/\beta}\frac{3(\beta r - (\alpha + \eps)d)}{4dr\alpha};\\
        \sum_{r = \lceil (3\alpha + \eps) d/\beta \rceil}^{d}\frac{\beta r / d - \eps - 12\alpha + 3}{2r(1 - 3\alpha)} \geq \int_{(3\alpha + \eps) d/\beta}^d\frac{\beta r - (12\alpha - 3 + \eps)d}{2dr(1 - 3\alpha)}.
    \end{gather*}
    Taking the integrals:
    \begin{multline*}
         \int_{ \rho d/\beta }^{(3\alpha + \eps)d/\beta}\frac{3(\beta r - (\alpha + \eps)d)}{4dr\alpha} = \left. \frac{3(\beta r - (\alpha + \eps)d\ln r)}{4\alpha d}\right|_{r = \rho d / \beta}^{(3\alpha + \eps)d/\beta}\\
         = \frac{3((3\alpha + \eps) - (\alpha + \eps)\ln(3\alpha + \eps) - (\alpha + \eps)\ln d + (\alpha + \eps)\ln \beta)}{4\alpha}\\
         - \frac{3(\rho - (\alpha + \eps)\ln \rho - (\alpha + \eps)\ln d + (\alpha + \eps)\ln \beta)}{4\alpha}\\
         = \frac{3((3\alpha + \eps) - \rho - (\alpha + \eps)\ln(3\alpha + \eps) + (\alpha + \eps)\ln \rho)}{4\alpha},
    \end{multline*}
    and
    \begin{multline*}
        \int_{(3\alpha + \eps) d/\beta}^d\frac{\beta r - (12\alpha - 3 + \eps)d}{2dr(1 - 3\alpha)} = \left. \frac{\beta r - (12\alpha - 3 + \eps)d\ln r}{2d(1 - 3\alpha)}\right|_{r = (3\alpha + \eps)d/\beta}^d\\
        = \frac{\beta  - (12\alpha - 3 + \eps)\ln d}{2(1 - 3\alpha)}\\ - \frac{(3\alpha + \eps) - (12\alpha - 3 + \eps)\ln (3\alpha + \eps) -(12\alpha - 3 + \eps)\ln d + (12\alpha - 3 + \eps)\ln \beta }{2(1 - 3\alpha)}\\
        = \frac{\beta - (3\alpha + \eps) + (12\alpha - 3 + \eps)\ln(3\alpha + \eps) - (12\alpha - 3 + \eps)\ln \beta}{2(1 - 3\alpha)}.
    \end{multline*}
    Now, $\ln(3\alpha) \leq \ln(3\alpha + \eps) \leq \ln(3\alpha) + \frac{\eps}{3\alpha}$ and $\ln(\rho) \geq -2$ and $\ln(3\alpha + \eps) \leq 0$, so
    \begin{multline*}
        \frac{3((3\alpha + \eps) - \rho - (\alpha + \eps)\ln(3\alpha + \eps) + (\alpha + \eps)\ln \rho)}{4\alpha} \\
        \geq  \frac{3(3\alpha - \rho - \alpha\ln(3\alpha)  + \alpha\ln \rho)}{4\alpha} + \frac{3(2\eps/3 - \eps\ln(3\alpha + \eps) + \eps\ln \rho )}{4\alpha}\\
        \geq  \frac{3(3\alpha - \rho - \alpha\ln(3\alpha)  + \alpha\ln \rho)}{4\alpha}  - \frac{\eps}{\alpha},
    \end{multline*}
    and
    \begin{multline*}
        \frac{\beta - (3\alpha + \eps) + (12\alpha - 3 + \eps)\ln(3\alpha + \eps) - (12\alpha - 3 + \eps)\ln \beta}{2(1 - 3\alpha)}\\
        \geq \frac{\beta - 3\alpha + (12\alpha - 3)\ln(3\alpha ) - (12\alpha - 3)\ln \beta}{2(1 - 3\alpha)} + \frac{\eps\ln(3\alpha + \eps)- \eps \ln \beta - \eps }{2(1 - 3\alpha)}\\
        \geq \frac{1 - 3\alpha + (12\alpha - 3)\ln(3\alpha )}{2(1 - 3\alpha)} - \frac{1 - \beta + (12\alpha - 3)\ln \beta}{2(1 - 3\alpha)}-\frac{\eps(1 - \ln(3\alpha))}{2(1 - 3\alpha)}.
    \end{multline*}
    So, we have an equivalent condition on $n, d$:
    \begin{multline*}
        \frac{3(3\alpha - \rho - \alpha\ln(3\alpha)  + \alpha\ln \rho)}{4\alpha} + \frac{1 - 3\alpha + (12\alpha - 3)\ln(3\alpha )}{2(1 - 3\alpha)} \\
        \geq \frac{n_0}{n} + \frac{\eps}{\alpha} +  \frac{\eps(1 - \ln(3\alpha))}{2(1 - 3\alpha)} + \frac{1 - \beta + (12\alpha - 3)\ln \beta}{2(1 - 3\alpha)} + \frac{d(1 - \rho /\beta)}{n}\\
         + \frac{3((3\alpha + \eps)/\beta - \rho /\beta + 1/d)}{2(\rho /\beta) ( \rho d/\beta + 1)} +\frac{3(1 -  (3\alpha + \eps) /\beta )}{2( (3\alpha + \eps) /\beta)((3\alpha + \eps) d/\beta + 1)} \\
        +\frac{3\beta (\beta / d + 4\alpha)}{4(3\alpha + \eps)\alpha d} +  \frac{\beta - \eps - 12\alpha + 3}{2d(1 - 3\alpha)}.
    \end{multline*}
\end{proof}

Similarly to \cref{lowergammabetter}, we obtain a refinement of \cref{lowergamma2} for $\beta = 1- \delta$ and $\rho + \eps$.
\begin{corollary}\label{lowergamma2better}
    For $\eps > 0$, let $\alpha \leq \rho < 1 - \eps$.
    Consider the process $\gamma_n^k(1 - \delta, \eps)$ for this $\eps$, $0 < \delta \leq 1 - (\alpha + \eps)$ and some $n$, and take integer $d$ such that $d\leq n$.
    Then, $\gamma_n^{n_0}(1 - \delta, \eps) \geq \rho + \eps$ if $n, n_0, d$ satisfy
    \begin{multline}
        \frac{3(3\alpha - \rho - \alpha\ln(3\alpha)  + \alpha\ln \rho)}{4\alpha} + \frac{1 - 3\alpha + (12\alpha - 3)\ln(3\alpha )}{2(1 - 3\alpha)} \\
        \geq \frac{n_0}{n} + \eps\left(\frac{5}{4\alpha} +  \frac{(1 - \ln(3\alpha))}{2(1 - 3\alpha)}\right) + 2\delta + \frac{d(1 - \rho)}{n}\\
         + \frac{3(3 - \rho  + 1/d)}{2\rho  ( \rho d + 1)} +\frac{3(1 -  (3\alpha + \eps))}{2 (3\alpha + \eps)((3\alpha + \eps) d + 1)} \\
        +\frac{3(1 / d + 4\alpha)}{4(3\alpha + \eps)\alpha d} +  \frac{4 -\delta - \eps - 12\alpha}{2d(1 - 3\alpha)}.\label{bigalphaineqbetter}
    \end{multline}
\end{corollary}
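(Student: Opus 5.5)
The plan is to deduce \cref{lowergamma2better} from \cref{lowergamma2}, in the same way \cref{lowergammabetter} was deduced from \cref{lowergamma}. I apply \cref{lowergamma2} with $\beta = 1-\delta$ and with $\rho+\eps$ in place of $\rho$. That gives $\gamma_n^{n_0}(1-\delta,\eps)\ge \rho+\eps$ whenever \eqref{bigalphaineq} holds for these substitutions. It then suffices to show that the left side of \eqref{bigalphaineqbetter} is at most the left side of \eqref{bigalphaineq} plus slack, and that the right side of \eqref{bigalphaineqbetter} is at least the right side of \eqref{bigalphaineq}. Together these say that \eqref{bigalphaineqbetter} implies \eqref{bigalphaineq}.

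\textbf{Step 1: the left side.} The first fraction with $\rho+\eps$ in place of $\rho$ is
\[
\frac{3(3\alpha-\rho-\eps-\alpha\ln(3\alpha)+\alpha\ln(\rho+\eps))}{4\alpha}.
\]
Since $\ln(\rho+\eps)\ge\ln\rho$, this is at least the original expression in $\rho$ minus $\frac{3\eps}{4\alpha}$. This accounts for the extra $\frac{\eps}{4\alpha}$ in the coefficient, since $\frac{1}{\alpha}+\frac{1}{4\alpha}$ is at least $\frac{5}{4\alpha}$. The second fraction on the left does not depend on $\rho$.

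\textbf{Step 2: the $\beta$-dependent term.} I bound $\frac{1-\beta+(12\alpha-3)\ln\beta}{2(1-3\alpha)}$ with $\beta=1-\delta$. Since $12\alpha-3>0$ and $\ln(1-\delta)\le 0$, the numerator is at most $\delta$. So the term is at most $\frac{\delta}{2(1-3\alpha)}$. For $\alpha\ge 3/11$ we have $1-3\alpha\ge 0$, but $1-3\alpha$ can be small, so this needs care.

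\textbf{Step 2, the obstacle.} This is the step I expect to be the main obstacle: bounding the term by $2\delta$ needs $1-3\alpha\ge 1/4$, i.e.\ $\alpha\le 1/4$, which fails here. I would first try the sharper bound $\ln(1-\delta)\le-\delta$. This gives a numerator of at most $\delta - (12\alpha-3)\delta = (4-12\alpha)\delta$, so the term is at most $\frac{(4-12\alpha)\delta}{2(1-3\alpha)} = 2\delta$, as needed.

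\textbf{Step 3: the remaining error terms.} Each remaining term is handled using $\beta = 1-\delta\le 1$ and monotonicity.
\begin{itemize}
\item In $\frac{d(1-(\rho+\eps)/\beta)}{n}$, the ratio $(\rho+\eps)/\beta$ is at least $\rho$, so the term is at most $\frac{d(1-\rho)}{n}$.
\item In the term $\frac{3((3\alpha+\eps)/\beta-(\rho+\eps)/\beta+1/d)}{2((\rho+\eps)/\beta)((\rho+\eps)d/\beta+1)}$, the denominators only grow when $\rho/\beta\ge\rho$ replaces $\rho$. For the numerator I use the crude bound $(3\alpha+\eps)/\beta\le 3$, which is valid for small $\delta$ and may need an assumption such as $\delta\le 1/2$. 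The stated $3-\rho+1/d$ suggests exactly this crude choice.
\item In the next term, $1-(3\alpha+\eps)/\beta\le 1-(3\alpha+\eps)$, and $(3\alpha+\eps)/\beta\ge 3\alpha+\eps$ in the denominators.
\item $\frac{3\beta(\beta/d+4\alpha)}{4(3\alpha+\eps)\alpha d}\le\frac{3(1/d+4\alpha)}{4(3\alpha+\eps)\alpha d}$.
\item The final term becomes $\frac{1-\delta-\eps-12\alpha+3}{2d(1-3\alpha)}$ exactly.
\end{itemize}

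Collecting these bounds, \eqref{bigalphaineqbetter} implies \eqref{bigalphaineq} under the substitution, and \cref{lowergamma2} finishes the proof.
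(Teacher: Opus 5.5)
\textbf{Verdict.} Your route matches the paper's proof, but Step 1 does not close. The route is: apply \cref{lowergamma2} with $\beta=1-\delta$ and $\rho+\eps$ in place of $\rho$, obtain $2\delta$ from $\ln(1-\delta)\le-\delta$, and dominate the $d$-dependent terms by monotonicity.

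\textbf{Steps 2 and 3.} These are fine. The caveat $\delta\le 1/2$ is unnecessary: the hypothesis $\delta\le 1-(\alpha+\eps)$ gives $\beta\ge\alpha+\eps$, so $(3\alpha+\eps)/\beta\le(3\alpha+\eps)/(\alpha+\eps)\le 3$.

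\textbf{The gap in Step 1.} Using $\ln(\rho+\eps)\ge\ln\rho$, you correctly find a loss of $\frac{3\eps}{4\alpha}$ on the left side. However, the $\eps$-coefficient on the right of \eqref{bigalphaineqbetter} exceeds that of \eqref{bigalphaineq} only by $\frac{5}{4\alpha}-\frac{1}{\alpha}=\frac{1}{4\alpha}$. Absorbing your loss would need $\frac{1}{\alpha}+\frac{3}{4\alpha}=\frac{7}{4\alpha}\le\frac{5}{4\alpha}$, which is false. The remaining slack is of order $\frac{d}{n}(\eps+\delta)+\frac{1}{d}+\delta^2$, and this can be far smaller than the missing $\frac{\eps}{2\alpha}$. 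So your bookkeeping does not show that \eqref{bigalphaineqbetter} implies \eqref{bigalphaineq}.

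\textbf{How the paper handles this step.} The paper asserts a loss of only $\frac{\eps}{4\alpha}$ here. That is what is needed, but it is not what $\ln(\rho+\eps)\ge\ln\rho$ gives. If you keep the gain $\alpha\ln(1+\eps/\rho)\ge\frac{\alpha\eps}{\rho+\eps}$, the loss becomes $\frac{3\eps}{4\alpha}\bigl(1-\frac{\alpha}{\rho+\eps}\bigr)$. This is at most $\frac{\eps}{4\alpha}$ exactly when $\rho+\eps\le\frac{3\alpha}{2}$. That covers the regime where the corollary is used ($\rho$ slightly above $\alpha$), but not the whole stated range.

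\textbf{A fix for every admissible $\rho$.} Do not use \cref{lowergamma2} as a black box, since its $\frac{\eps}{\alpha}$ term already spends the crude bound $\ln\rho\ge-2$. Instead, bound the first integral of its proof directly at $\rho+\eps$:
\[
\frac{3}{4\alpha}\Bigl[3\alpha-\rho-(\alpha+\eps)\ln(3\alpha+\eps)+(\alpha+\eps)\ln(\rho+\eps)\Bigr]\ge\frac{3(3\alpha-\rho-\alpha\ln(3\alpha)+\alpha\ln\rho)}{4\alpha}-\frac{3\eps}{4\alpha}\Bigl(\frac{1}{3}+\ln\frac{1}{\rho+\eps}\Bigr).
\]
This uses $\alpha\ln\bigl(1+\frac{\eps}{3\alpha}\bigr)\le\frac{\eps}{3}$, $\ln(3\alpha+\eps)\le 0$ (as the paper does) and $\ln(1+\eps/\rho)\ge 0$.

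Since $\rho+\eps>\alpha\ge\frac{3}{11}$ and $\frac{1}{3}+\ln\frac{11}{3}<\frac{5}{3}$, this loss is below $\frac{5\eps}{4\alpha}$. Add the unchanged $\frac{\eps(1-\ln(3\alpha))}{2(1-3\alpha)}$ from the second integral and your Steps 2--3. Together these give exactly the right-hand side of \eqref{bigalphaineqbetter}.
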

\begin{proof}
    First, replacing $\beta = 1- \delta$ and using $\ln(1 - \delta) \leq - \delta$ and $1 \leq 1/(1 - \delta)\leq 1/(\alpha + \eps)$, from \cref{bigalphaineq} we get a condition for any $\alpha + \eps < \rho' < 1$:
    \begin{multline*}
        \frac{3(3\alpha - \rho' - \alpha\ln(3\alpha)  + \alpha\ln \rho')}{4\alpha} + \frac{1 - 3\alpha + (12\alpha - 3)\ln(3\alpha )}{2(1 - 3\alpha)} \\
        \geq \frac{n_0}{n} + \eps\left(\frac{1}{\alpha} +  \frac{(1 - \ln(3\alpha))}{2(1 - 3\alpha)}\right) + 2\delta + \frac{d(1 - \rho')}{n}\\
         + \frac{3(3 - \rho'  + 1/d)}{2\rho'  ( \rho' d + 1)} +\frac{3(1 -  (3\alpha + \eps))}{2 (3\alpha + \eps)((3\alpha + \eps) d + 1)} \\
        +\frac{3(1 / d + 4\alpha)}{4(3\alpha + \eps)\alpha d} +  \frac{4 -\delta - \eps - 12\alpha}{2d(1 - 3\alpha)}.
    \end{multline*}
    Next, we substitute $\rho + \eps$ instead of $\rho'$ and make the following observations:
    \[
        \frac{d(1 - \rho - \eps)}{n}+ \frac{3(3 - \rho - \eps  + 1/d)}{2(\rho + \eps)  ( (\rho + \eps) d + 1)} \leq \frac{d(1 - \rho)}{n}+ \frac{3(3 - \rho  + 1/d)}{2\rho  ( \rho d + 1)},
    \]
    and
    \[
        \frac{3(3\alpha - \rho - \eps - \alpha\ln(3\alpha)  + \alpha\ln (\rho + \eps))}{4\alpha} \geq \frac{3(3\alpha - \rho - \alpha\ln(3\alpha)  + \alpha\ln \rho)}{4\alpha} - \frac{\eps}{4\alpha}.
    \]
    We conclude that $\gamma_n^n(1 - \delta, \eps) \geq \rho + \eps$ if $n, d$ satisfy
    \begin{multline*}
        \frac{3(3\alpha - \rho - \alpha\ln(3\alpha)  + \alpha\ln \rho)}{4\alpha} + \frac{1 - 3\alpha + (12\alpha - 3)\ln(3\alpha )}{2(1 - 3\alpha)} \\
        \geq \frac{n_0}{n} + \eps\left(\frac{5}{4\alpha} +  \frac{(1 - \ln(3\alpha))}{2(1 - 3\alpha)}\right) + 2\delta + \frac{d(1 - \rho)}{n}\\
         + \frac{3(3 - \rho  + 1/d)}{2\rho  ( \rho d + 1)} +\frac{3(1 -  (3\alpha + \eps))}{2 (3\alpha + \eps)((3\alpha + \eps) d + 1)} \\
        +\frac{3(1 / d + 4\alpha)}{4(3\alpha + \eps)\alpha d} +  \frac{4 -\delta - \eps - 12\alpha}{2d(1 - 3\alpha)}.
    \end{multline*}
\end{proof}

\begin{proof}[Proof of \cref{diffgamma2}]
    First, observe that for all $\alpha \geq 3/11$, as long as $\rho \geq \alpha$, the lefthand side of \cref{bigalphaineqbetter} in \cref{lowergamma2better} is a decreasing function of $\rho$.
    Indeed, the derivative of $\frac{3(3\alpha - \rho - \alpha\ln(3\alpha)  + \alpha\ln \rho)}{4\alpha} + \frac{1 - 3\alpha + (12\alpha - 3)\ln(3\alpha )}{2(1 - 3\alpha)}$ is $\frac{3(\alpha/\rho - 1)}{4\alpha}$, which is non-positive for all $\rho \geq \alpha$.
    Then, if $\rho^* \geq \alpha$ is a solution to 
    \begin{equation}
        \frac{3(3\alpha - \rho^* - \alpha\ln(3\alpha)  + \alpha\ln \rho^*)}{4\alpha} + \frac{1 - 3\alpha + (12\alpha - 3)\ln(3\alpha )}{2(1 - 3\alpha)} = 1, \label{rhoalphaeq}
    \end{equation}
    for any constant $0 < c < \rho^* - \alpha$ it holds for $\rho := \alpha + c$ that $\frac{3(3\alpha - \rho - \alpha\ln(3\alpha)  + \alpha\ln \rho)}{4\alpha} + \frac{1 - 3\alpha + (12\alpha - 3)\ln(3\alpha )}{2(1 - 3\alpha)}$ is strictly greater than $1 + \const$.

    Note that, unlike in \cref{diffgammaproof} and $\alpha \leq 3/11$, when $\alpha \geq 3/11$ there is an upper bound on the values of $\alpha$ for which there would exist such a solution $\rho^* \geq \alpha$ to \cref{rhoalphaeq}.
    Specifically, one could increase the value of $\alpha$ until the only viable solution $\rho^* \geq \alpha$ is exactly $\rho^* = \alpha$.
    This largest possible $\alpha^*$ then must satisfy
    \begin{multline*}
        \frac{3(3\alpha^* - \alpha^* - \alpha^*\ln(3\alpha^*)  + \alpha^*\ln \alpha^*)}{4\alpha^*} + \frac{1 - 3\alpha^* + (12\alpha^* - 3)\ln(3\alpha^* )}{2(1 - 3\alpha^*)} = 1 \\
        \iff \frac{3(2\alpha^*  - \alpha^*\ln3)}{4\alpha^*} + \frac{1}{2} + \frac{(12\alpha^* - 3)\ln(3\alpha^* )}{2(1 - 3\alpha^*)} = 1\\
        \iff2(12\alpha^* - 3)\ln(3\alpha^* ) = (1 - 3\alpha^*)(3\ln 3 - 4),
    \end{multline*}
    which gives us maximum possible $\rho^* = \alpha^* \approx  0.2767738$.
    Thus, if $3/11 \leq \alpha < \rho^*$ and $x > 0$ is such that $x < \rho^* - \alpha$, it holds for $\rho = \alpha + x$ that $\frac{3(3\alpha - \rho - \alpha\ln(3\alpha)  + \alpha\ln \rho)}{4\alpha} + \frac{1 - 3\alpha + (12\alpha - 3)\ln(3\alpha )}{2(1 - 3\alpha)}$ is a constant greater than $1$.

    At the same time, when $n_0 = n$ the righthand side of \cref{bigalphaineqbetter} in \cref{lowergamma2better} is at least $1$.
    Since $\rho < 1$, both $\rho d$ and $d(1 - \rho)$ are of order $\Theta(d)$, and taking any integer $d = \Theta(\sqrt{n})$ in 
    \[
        \frac{d(1 - \rho)}{n}
         + \frac{3(3 - \rho  + 1/d)}{2\rho  ( \rho d + 1)} +\frac{3(1 -  (3\alpha + \eps))}{2 (3\alpha + \eps)((3\alpha + \eps) d + 1)} 
        +\frac{3(1 / d + 4\alpha)}{4(3\alpha + \eps)\alpha d} +  \frac{4 -\delta - \eps - 12\alpha}{2d(1 - 3\alpha)}
    \]
    makes the expression into a decreasing function of $n$, specifically of order $\Theta(n^{- 1/2})$.
    Taking this value of $d$ with $\eps, \delta \xrightarrow{n\to\infty}0$ when $n_0 = n$ makes the righthand side of \cref{bigalphaineqbetter} converge to $1$ as $n \to \infty$.
    Thus, for inequality to be satisfied for all $n_0 \leq n$ it suffices to take $n_x$ large enough such that the righthand side of \cref{bigalphaineqbetter} with $n \geq n_x$ is at most $\frac{3(3\alpha - \rho - \alpha\ln(3\alpha)  + \alpha\ln \rho)}{4\alpha} + \frac{1 - 3\alpha + (12\alpha - 3)\ln(3\alpha )}{2(1 - 3\alpha)}$ with $\rho = \alpha + x < \rho^*$.
    Since the latter expression is strictly greater than $1$, such $n_x$ must exist.
\end{proof}

\section{Doubling}

\label{sec:doubling}

First, we prove the doubling lemma for the case $\alpha \leq 3/11$.
\begin{lemma}\label{doublingineqproof}
    Assume that $\alpha \in [1/4, 3/11]$ and $\tau \in [0, \alpha)$ is constant, and consider the following process $\gamma_n^k$: $\gamma_n^1 = 1$, and for all $i \geq 1$,
    \[
        \gamma_n^{i + 1} = \left(1 - \frac{1 - \alpha}{2(\gamma_n^k - \tau - \alpha)n}\right)\gamma_n^k.
    \]
    Suppose that for some $n$ and $k, l \geq 1$ it holds $\gamma_n^k \geq \beta = \gamma_{2n}^l$ for some value $\beta$.
    If
    \[
        \beta > \alpha + \tau + \frac{1 - \alpha + \sqrt{16n(\alpha + \tau)(1 - \alpha) + (1 - \alpha)^2}}{8n},
    \]
    then $\gamma_{2n}^{l + 2} \leq \gamma_n^{k + 1}$.
\end{lemma}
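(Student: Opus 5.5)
We set $a := \alpha + \tau$ and $c := \frac{1-\alpha}{2}$. The one-step map of the $n$-process is then $f_n(x) = x - \frac{c\,x}{(x-a)n}$, and the $2n$-process uses $f_{2n}(x) = x - \frac{c\,x}{2(x-a)n}$. We must show that $f_{2n}(f_{2n}(\beta)) \le f_n(\gamma_n^k)$ whenever $\gamma_n^k \ge \beta$. The proof splits into a comparison of two one-step moves starting from the same point, followed by a monotonicity argument.

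\textbf{Step 1: monotonicity of $f_n$.} On the relevant range $x > a$ we have
\[
f_n'(x) = 1 + \frac{c\,a}{(x-a)^2 n} > 0,
\]
since $c\,x/(x-a)$ is decreasing in $x$. Hence $\gamma_n^k \ge \beta$ gives $f_n(\gamma_n^k) \ge f_n(\beta)$. It therefore suffices to prove
\[
f_{2n}(f_{2n}(\beta)) \le f_n(\beta).
\]

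\textbf{Step 2: the two-step comparison.} Write $y = f_{2n}(\beta) = \beta - \frac{c\beta}{2(\beta-a)n}$. The inequality to prove becomes
\[
\frac{c\beta}{2(\beta-a)n} + \frac{c\,y}{2(y-a)n} \ge \frac{c\beta}{(\beta-a)n},
\]
which is equivalent to
\[
\frac{y}{y-a} \ge \frac{\beta}{\beta-a}.
\]
Since $y < \beta$ and $x \mapsto x/(x-a)$ is decreasing on $x > a$, this holds automatically, provided that $y > a$. Here the second, smaller step of the $2n$-process is taken from a lower point and therefore removes relatively more, which is the favorable direction.

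So the real content is the requirement $y > a$. This amounts to $\beta - a > \frac{c\beta}{2(\beta-a)n}$, that is,
\[
2n(\beta-a)^2 - c(\beta - a) - c\,a > 0.
\]
Solving this quadratic in $u = \beta - a$ gives
\[
u > \frac{c + \sqrt{c^2 + 8nca}}{4n}.
\]
Substituting $c = (1-\alpha)/2$ yields
\[
u > \frac{(1-\alpha) + \sqrt{(1-\alpha)^2 + 16n(1-\alpha)(\alpha+\tau)}}{8n},
\]
which is exactly the hypothesis of the lemma. Under this threshold the $2n$-process stays in the domain where the recursion formula is meaningful, and the above inequality applies.

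\textbf{Main obstacle.} The only delicate point is that the comparison in Step 2 silently uses $y - a > 0$; without it the sign of $y/(y-a)$ flips. I would verify carefully that the quadratic threshold matches the stated bound, as checked above. I would also confirm that $\gamma_n^k > a$, which follows from $\gamma_n^k \ge \beta > a$, so that the monotonicity in Step 1 is valid.
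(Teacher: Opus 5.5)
Your proof is correct and follows the paper's own route: use $\gamma_n^k \ge \beta$ to reduce to comparing one step of the $n$-process with two steps of the $2n$-process from the same point $\beta$, observe that this comparison holds automatically as long as the intermediate value $\gamma_{2n}^{l+1}$ stays above $\alpha+\tau$, and solve the resulting quadratic to get exactly the stated threshold. Writing the step through the decreasing map $x \mapsto x/(x-a)$ is a cleaner packaging of the paper's expand-and-simplify algebra, and your explicit check that $f_n'(x) > 0$ on $x > a$ makes rigorous the monotonicity step that the paper uses only implicitly.
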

\begin{proof}
    By definition of the process $\gamma_n^k$, if $\gamma_{2n}^{j} = \beta$ then
    \[
        \gamma_{2n}^{l + 1} = \left(1 - \frac{1 - \alpha}{4(\gamma_{2n}^{l} - \tau - \alpha)n}\right)\gamma_{2n}^{l} = \left(1 - \frac{1 - \alpha}{4(\beta  - \tau- \alpha)n}\right)\beta,
    \]
    and
    \begin{multline*}
        \gamma_{2n}^{l + 2} = \left(1 - \frac{1 - \alpha}{4(\gamma_{2n}^{l + 1} - \tau - \alpha)n}\right)\gamma_{2n}^{l + 1} \\
        =  \left(1 - \frac{1 - \alpha}{4\left(\left(1 - \frac{1 - \alpha}{4(\beta - \tau - \alpha)n}\right)\beta  - \tau- \alpha\right)n}\right) \left(1 - \frac{1 - \alpha}{4(\beta - \tau - \alpha)n}\right)\beta.
    \end{multline*}
    To prove the lemma, it remains to determine for which values of $\alpha, \beta$ the expression above is at most $\gamma_n^{k + 1}$.
    Since $\gamma_n^k \geq \beta$, we have by definition
    \[
        \gamma_n^{k + 1} = \left(1 - \frac{1 - \alpha}{2(\gamma_n^k - \tau- \alpha)n}\right)\gamma_n^k \geq \left(1 - \frac{1 - \alpha}{2(\beta- \tau - \alpha)n}\right)\beta.
    \]
    Thus, to prove $\gamma_{2n}^{l +2 } \leq \gamma_{n}^{k + 1}$, it suffices to show that
    \[
        \left(1 - \frac{1 - \alpha}{4\left(\left(1 - \frac{1 - \alpha}{4(\beta - \tau- \alpha)n}\right)\beta- \tau - \alpha\right)n}\right)\ \left(1 - \frac{1 - \alpha}{4(\beta- \tau - \alpha)n}\right) \leq \left(1 - \frac{1 - \alpha}{2(\beta - \tau- \alpha)n}\right).
    \]
    Opening up the parenthesis, we get
    \begin{multline*}
        1 - \frac{1 - \alpha}{4(\beta- \tau - \alpha)n} -\frac{1 - \alpha}{4\left(\left(1 - \frac{1 - \alpha}{4(\beta- \tau - \alpha)n}\right)\beta - \alpha\right)n} \\+ \frac{(1 - \alpha)^2}{16\left(\left(1 - \frac{1 - \alpha}{4(\beta- \tau - \alpha)n}\right)\beta- \tau - \alpha\right)(\beta- \tau - \alpha)n^2} \leq 1 - \frac{1 - \alpha}{2(\beta- \tau - \alpha)n}.
    \end{multline*}
    Removing $1$ from each side, diving by $1-\alpha$ and multiplying by $4(\beta - \tau - \alpha)n$ gives
    \[
         - 1 -\frac{\beta - \tau - \alpha}{\left(1 - \frac{1 - \alpha}{4(\beta- \tau  - \alpha)n}\right)\beta- \tau  - \alpha} + \frac{1 - \alpha}{4\left(\left(1 - \frac{1 - \alpha}{4(\beta - \tau - \alpha)n}\right)\beta - \tau - \alpha\right)n} \leq  - 2.
    \]
    Adding $1$ to each side and multiplying by $\left(1 - \frac{1 - \alpha}{4(\beta - \tau - \alpha)n}\right)\beta - \tau - \alpha$ (assuming it is positive) gives
    \begin{multline*}
        -(\beta - \tau - \alpha) + \frac{1 - \alpha}{4n} \leq - \left(\left(1 - \frac{1 - \alpha}{4(\beta- \tau  - \alpha)n}\right)\beta - \tau - \alpha\right) \\
        \iff  -\beta + \tau + \alpha + \frac{1 - \alpha}{4n} \leq - \beta + \frac{\beta(1 - \alpha)}{4(\beta- \tau  - \alpha)n} + \tau + \alpha\\
        \iff \frac{1 - \alpha}{4n} \leq \frac{\beta(1 - \alpha)}{4(\beta- \tau  - \alpha)n} \iff \beta - \tau - \alpha \leq \beta.
    \end{multline*}
    The last inequality always holds.
    As mentioned, this is an equivalent transformation only if
    \begin{multline*}
        \left(1 - \frac{1 - \alpha}{4(\beta- \tau  - \alpha)n}\right)\beta- \tau  - \alpha > 0 \iff \beta- \tau  - \alpha > \frac{\beta(1- \alpha)}{4(\beta- \tau  - \alpha)n} \\\iff 4(\beta - \tau - \alpha)^2n > \beta(1 - \alpha).
    \end{multline*}
    Solving the equation for $\beta$, we have
    \begin{multline*}
        4n\beta^2 - 8n(\alpha + \tau)\beta + 4n(\alpha + \tau)^2 > \beta(1 - \alpha) \\\iff 4n\beta^2 - (8n(\alpha + \tau) + 1 - \alpha)\beta + 4n(\alpha + \tau)^2 > 0\\
        \iff \beta > \frac{8n(\alpha + \tau) + 1 - \alpha + \sqrt{(8n(\alpha + \tau) + 1 - \alpha)^2 - 64n^2(\alpha + \tau)^2}}{8n} \\
        = \alpha + \tau + \frac{1 - \alpha + \sqrt{16n(\alpha + \tau)(1 - \alpha) + (1 - \alpha)^2}}{8n}.
    \end{multline*}
\end{proof}

The proof for the case $\alpha \geq 3/11$ is completely analogous, the only major difference is that we have to consider cases of whether $\gamma_n^k, \gamma_{2n}^l, \gamma_{2n}^{l + 1}$ are above or below $3\alpha$, which makes the proof quite longer (though technically as simple).
\begin{lemma}\label{doublingproof2}
    Let $\rho$ satisfy $2(12\rho -3)\ln(3\rho) = (1 - 3\rho)(3\ln 3 - 4)$, and assume $\alpha \in [3/11, \rho)$ and $\tau \in [0, \alpha)$ is constant.
    Consider the following process $\gamma_n^k$: $\gamma_n^1 = 1$, and for all $k \geq 1$,
    \begin{gather*}
        \text{if }\; \gamma_n^k - \tau \geq 3\alpha,\;\text{ then }\; \gamma_n^{k + 1} = \left(1 - \frac{2(1 - 3\alpha)}{(\gamma_n^k -\tau - 12\alpha + 3)n}\right)\gamma_n^k;\\
        \text{if }\; \gamma_n^k - \tau < 3\alpha,\; \text{ then }\; \gamma_n^{k + 1} = \left(1 - \frac{4\alpha}{3(\gamma_n^k -\tau- \alpha)n}\right)\gamma_n^k.
    \end{gather*}
    Suppose that for some $n$ and $k, l \geq 1$ it holds $\gamma_n^k \geq \beta=\gamma_{2n}^l$, for some value $\beta$.
    If
    \[
        \beta >  \alpha + \tau +  \frac{\alpha + \sqrt{6(\alpha + \tau)\alpha n + \alpha^2}}{3n},
    \]
    then $\gamma_{2n}^{l + 2} \leq \gamma_n^{k + 1}$.
\end{lemma}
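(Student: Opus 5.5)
We follow the template of \cref{doublingineqproof}, writing the one-step map of the $n$-process as $\gamma \mapsto (1 - q_n(\gamma))\gamma$. Here
\[
q_n(\gamma) = \frac{2(1-3\alpha)}{(\gamma - \tau - 12\alpha + 3)n}\ \text{ if } \gamma - \tau \ge 3\alpha,
\qquad
q_n(\gamma) = \frac{4\alpha}{3(\gamma - \tau - \alpha)n}\ \text{ otherwise.}
\]
The $2n$-process uses $q_{2n} = q_n/2$.

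\textbf{Step 1: reduce to a statement about $\beta$.} Since $\gamma_n^k \ge \beta$ and the map $\gamma \mapsto (1 - q_n(\gamma))\gamma$ is nondecreasing in $\gamma$ in the relevant range (where $q_n$ is nonincreasing), we get $\gamma_n^{k+1} \ge (1 - q_n(\beta))\beta$. Set $\beta' = (1 - q_{2n}(\beta))\beta = \gamma_{2n}^{l+1}$. It then suffices to show
\[
\Bigl(1 - \tfrac{1}{2}q_n(\beta')\Bigr)\Bigl(1 - \tfrac{1}{2}q_n(\beta)\Bigr) \le 1 - q_n(\beta),
\]
which is equivalent, after expanding, to
\[
q_n(\beta') \;\ge\; q_n(\beta) \cdot \frac{1}{1 - \tfrac{1}{2}q_n(\beta)}.
\]
The monotonicity claim needs a short check on each branch and at the switch point $\gamma = 3\alpha + \tau$. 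There, by \eqref{3alineq}, the upper branch gives a smaller $q$, so the map only improves as $\gamma$ increases.

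\textbf{Step 2: case analysis on the branches of $\beta$ and $\beta'$.} Since $\beta' < \beta$, there are three cases.

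\begin{enumerate}
\item \textbf{Both below $3\alpha + \tau$.} With $x = \beta - \tau - \alpha$, the target inequality becomes $\frac{x}{\beta' - \tau - \alpha} \ge \frac{1}{1 - q/2}$, which is exactly the computation of \cref{doublingineqproof} with $1 - \alpha$ over $2$ replaced by $4\alpha$ over $3$. It reduces to the trivial inequality $\beta - \tau - \alpha \le \beta$. The reduction is valid provided $\beta' - \tau - \alpha > 0$, i.e. $3(\beta - \tau - \alpha)^2 n > 2\alpha\beta$. Solving this quadratic in $\beta$ gives
\[
\beta > \alpha + \tau + \frac{\alpha + \sqrt{6(\alpha + \tau)\alpha n + \alpha^2}}{3n},
\]
which is exactly the stated threshold.

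\item \textbf{Both above $3\alpha + \tau$.} The same algebra with denominator $\gamma - \tau - 12\alpha + 3$ reduces to $\beta - \tau - 12\alpha + 3 \le \beta$. This holds since $12\alpha - 3 \ge 3/11 > 0$. Positivity of the denominator is automatic, because $\beta' - \tau - 12\alpha + 3 \ge 3 - 9\alpha > 0$.

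\item \textbf{$\beta$ above, $\beta'$ below.} Here I would use \eqref{3alineq}: on the upper branch $q_n^{>}(\beta) \le q_n^{<}(\beta)$, and $q_n(\beta') = q_n^{<}(\beta')$. The target with $q_n(\beta)$ on the right is an increasing function of $q_n(\beta)$. So it suffices to prove it with $q_n^{<}(\beta)$ in place of $q_n(\beta)$, which is Case 1 applied to $\beta$ with the lower formula. The one subtlety is that in Step 1 we used $\gamma_n^{k+1} \ge (1 - q_n(\beta))\beta$ with the true branch. Since $1 - q^{>} \ge 1 - q^{<}$, replacing by the lower formula only weakens the target, so the argument goes through.
\end{enumerate}

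\textbf{Main obstacle.} I expect Case 3 to be the hardest part, together with verifying that Case 1's positivity requirement is implied by the threshold throughout Case 3. The monotonicity claim in Step 1 at the branch switch also needs care. I would handle both by always dominating with the lower-branch formula, which is pointwise larger by \eqref{3alineq}, and checking that the derived condition is still just the quadratic from Case 1.
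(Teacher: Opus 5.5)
Your structure matches the paper's. Both proofs split into cases by branch, each case collapses to a trivial inequality, and positivity in the all-lower case gives exactly the stated threshold. Absorbing the paper's case $\gamma_n^k\ge 3\alpha+\tau>\beta$ into Step~1 is fine. To justify that monotonicity, note that $\gamma q_n(\gamma)$ is nonincreasing on each branch, and both branches equal $\tfrac{2}{3n}$ at $\gamma-\tau=3\alpha$; ``$q_n$ nonincreasing'' alone would also need $1-q_n\ge 0$. Your Cases~1 and~2 are correct.

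\textbf{The gap is in Case~3.} There $\beta'=\gamma_{2n}^{l+1}=(1-\tfrac12 q_n^{>}(\beta))\beta$ comes from the \emph{upper} formula. Enlarging the right side of the target to $q_n^{<}(\beta)/(1-\tfrac12 q_n^{<}(\beta))$ is legitimate, but what remains is not Case~1.
\begin{itemize}
\item Case~1 proves $q_n^{<}(\tilde\beta')\ge q_n^{<}(\beta)/(1-\tfrac12 q_n^{<}(\beta))$ for $\tilde\beta'=(1-\tfrac12 q_n^{<}(\beta))\beta$.
\item Since $\tilde\beta'\le\beta'$ and $q_n^{<}$ is decreasing, you only get $q_n^{<}(\beta')\le q_n^{<}(\tilde\beta')$, which points the wrong way.
\end{itemize}
Put differently, dominating by the lower-branch formula is sound on the $n$-side, where you need a lower bound on $\gamma_n^{k+1}$. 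On the $2n$-side it lowers $\gamma_{2n}^{l+1}$, hence $\gamma_{2n}^{l+2}$, which is useless for an upper bound. Your ``one subtlety'' remark covers only the $n$-side. The reduced inequality does happen to be true, but it is equivalent to $q_n^{>}(\beta)\,\beta\ge\tfrac{4\alpha}{3n}$ and needs its own check using $\beta\le 1$.

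\textbf{A short repair.} Keep $\beta'$ as it is and set $q=q_n^{>}(\beta)$.
\begin{itemize}
\item Since $\beta'>\tau+\alpha$ and $1-\tfrac q2>0$, the target is equivalent to $\tfrac{4\alpha}{3n}(1-\tfrac q2)\ge q(\beta'-\tau-\alpha)$.
\item By \eqref{3alineq}, $\tfrac{4\alpha}{3n}=q_n^{<}(\beta)(\beta-\tau-\alpha)\ge q(\beta-\tau-\alpha)$.
\item So the left side is at least $q(1-\tfrac q2)(\beta-\tau-\alpha)=q\bigl(\beta'-(1-\tfrac q2)(\tau+\alpha)\bigr)\ge q(\beta'-\tau-\alpha)$.
\item Positivity holds because $\beta'\ge(1-\tfrac12 q_n^{<}(\beta))\beta>\tau+\alpha$ by the threshold. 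Here the comparison with the lower formula does point the right way.
\end{itemize}
This is simpler than the paper's treatment of the same mixed case, which expands everything directly and uses $\alpha\ge 3/11$ and $\beta-\tau\le 1$.
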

\begin{proof}
    There are four cases to consider, depending on whether $\gamma_n^k, \gamma_{2n}^l, \gamma_{2n}^{l + 1}$ are above or below $3\alpha + \tau$.
    First case is $\gamma_n^k, \gamma_{2n}^l, \gamma_{2n}^{l + 1} \geq 3\alpha+ \tau$, note that we also have $\beta - \tau \geq 3\alpha$.
    Then, by definition
    \[
        \gamma_{2n}^{l + 1} = \left(1 - \frac{2(1 - 3\alpha)}{2(\gamma_{2n}^l -\tau- 12\alpha + 3)n}\right)\gamma_{2n}^l =\left(1 - \frac{1 - 3\alpha}{(\beta-\tau - 12\alpha + 3)n}\right)\beta, 
    \]
    and
    \begin{multline*}
        \gamma_{2n}^{l + 2} = \left(1 - \frac{2(1 - 3\alpha)}{2(\gamma_{2n}^{l + 1} -\tau- 12\alpha + 3)n}\right)\gamma_{2n}^{l + 1} \\= \left(1 - \frac{1 - 3\alpha}{\left(\left(1 - \frac{1 - 3\alpha}{(\beta -\tau- 12\alpha + 3)n}\right)\beta -\tau- 12\alpha + 3\right)n}\right)\left(1 - \frac{1 - 3\alpha}{(\beta -\tau- 12\alpha + 3)n}\right)\beta.
    \end{multline*}
    Once again, we would like to determine for which values of $\alpha, \beta$ the expression above is at most $\gamma_n^{k + 1}$.
    By definition,
    \[
        \gamma_n^{k + 1} =  \left(1 - \frac{2(1 - 3\alpha)}{(\gamma_{n}^{k}-\tau - 12\alpha + 3)n}\right)\gamma_{n}^{k} \geq \left(1 - \frac{2(1 - 3\alpha)}{(\beta -\tau- 12\alpha + 3)n}\right)\beta.
    \]
    Thus, to prove $\gamma_{2n}^{l + 2} \leq \gamma_n^{k + 1}$, it suffices to show that
    \begin{multline*}
        \left(1 - \frac{1 - 3\alpha}{\left(\left(1 - \frac{1 - 3\alpha}{(\beta -\tau- 12\alpha + 3)n}\right)\beta -\tau- 12\alpha + 3\right)n}\right)\left(1 - \frac{1 - 3\alpha}{(\beta -\tau- 12\alpha + 3)n}\right) \\\leq \left(1 - \frac{2(1 - 3\alpha)}{(\beta-\tau - 12\alpha + 3)n}\right).
    \end{multline*}
    Opening up the parenthesis,
    \begin{multline*}
        1 - \frac{1 - 3\alpha}{(\beta -\tau- 12\alpha + 3)n} - \frac{1 - 3\alpha}{\left(\left(1 - \frac{1 - 3\alpha}{(\beta-\tau - 12\alpha + 3)n}\right)\beta -\tau- 12\alpha + 3\right)n} \\ + \frac{(1 - 3\alpha)^2}{\left(\left(1 - \frac{1 - 3\alpha}{(\beta-\tau - 12\alpha + 3)n}\right)\beta -\tau- 12\alpha + 3\right)(\beta -\tau- 12\alpha + 3)n^2} \leq 1 - \frac{2(1 - 3\alpha)}{(\beta -\tau- 12\alpha + 3)n}.
    \end{multline*}
    Removing $1$ from each side, dividing by $1 - 3\alpha$ and multiplying by $(\beta-\tau - 12\alpha + 3)n$ gives
    \[
         -1- \frac{\beta-\tau - 12\alpha + 3}{\left(1 - \frac{1 - 3\alpha}{(\beta-\tau - 12\alpha + 3)n}\right)\beta-\tau - 12\alpha + 3} + \frac{1 - 3\alpha}{\left(\left(1 - \frac{1 - 3\alpha}{(\beta-\tau - 12\alpha + 3)n}\right)\beta-\tau - 12\alpha + 3\right)n} \leq  - 2.
    \]
    Adding $1$ to each side and multiplying by $\left(1 - \frac{1 - 3\alpha}{(\beta-\tau - 12\alpha + 3)n}\right)\beta-\tau - 12\alpha + 3$ (assuming it is positive):
    \begin{multline*}
        -(\beta -\tau- 12\alpha + 3) + \frac{1 - 3\alpha}{n} \leq  - \left(\left(1 - \frac{1 - 3\alpha}{(\beta -\tau- 12\alpha + 3)n}\right)\beta -\tau- 12\alpha + 3\right)\\
        \iff - \beta + \tau + 12\alpha - 3 + \frac{1 - 3\alpha}{n} \leq -\beta + \frac{\beta(1 - 3\alpha)}{(\beta-\tau - 12\alpha + 3)n} +\tau+ 12\alpha - 3\\
        \iff \frac{1 - 3\alpha}{n} \leq \frac{\beta(1 - 3\alpha)}{(\beta -\tau- 12\alpha + 3)n} \iff \beta -\tau- 12\alpha + 3 \leq \beta \iff \alpha \geq 1/4.
    \end{multline*}
    The last inequality always holds.
    As mentioned, this is an equivalent transformation only if 
    \begin{multline*}
        \left(1 - \frac{1 - 3\alpha}{(\beta -\tau- 12\alpha + 3)n}\right)\beta -\tau- 12\alpha + 3 > 0\\ \iff \beta-\tau - 12\alpha + 3 > \frac{\beta(1 - 3\alpha)}{(\beta -\tau- 12\alpha + 3)n} \iff (\beta-\tau - 12\alpha + 3)^2n > \beta(1 - 3\alpha).
    \end{multline*}
    Note that by assumption, $\beta - \tau \geq 3\alpha$.
    Then, $(\beta -\tau- 12\alpha + 3)^2n \geq 9(1 - 3\alpha)^2n$, and the inequality above holds if $9(1 - 3\alpha)n > \beta$.
    Since $\alpha < \rho < 3/10$, $9(1 - 3\alpha)n > 9n/10 > 1 \geq \beta$ for all $n \geq 2$.
    So, the expression above is always positive for our ranges of $\alpha$ and $\beta$.

    Second case is $\gamma_n^k, \gamma_{2n}^l \geq 3\alpha + \tau$ while $\gamma_{2n}^{l + 1} < 3\alpha + \tau$, note that still $\beta \geq 3\alpha + \tau$.
    By definition,
    \[
        \gamma_{2n}^{l + 1} = \left(1 - \frac{2(1 - 3\alpha)}{2(\gamma_{2n}^l -\tau- 12\alpha + 3)n}\right)\gamma_{2n}^l =\left(1 - \frac{1 - 3\alpha}{(\beta-\tau - 12\alpha + 3)n}\right)\beta, 
    \]
    and,
    \begin{multline*}
        \gamma_{2n}^{l + 2} = \left(1 - \frac{4\alpha}{6(\gamma_{2n}^{l + 1}-\tau -\alpha)n}\right)\gamma_{2n}^{l + 1} \\= \left(1 - \frac{2\alpha}{3\left(\left(1 - \frac{1 - 3\alpha}{(\beta -\tau- 12\alpha + 3)n}\right)\beta -\tau- \alpha\right)n}\right)\left(1 - \frac{1 - 3\alpha}{(\beta -\tau- 12\alpha + 3)n}\right)\beta,
    \end{multline*}
    as well as
    \[
        \gamma_n^{k + 1} =  \left(1 - \frac{2(1 - 3\alpha)}{(\gamma_{n}^{k}-\tau - 12\alpha + 3)n}\right)\gamma_{n}^{k} \geq \left(1 - \frac{2(1 - 3\alpha)}{(\beta -\tau- 12\alpha + 3)n}\right)\beta.
    \]
    Thus, to prove $\gamma_{2n}^{l + 2} \leq \gamma_n^{k + 1}$, it suffices to show that
    \begin{multline*}
        \left(1 - \frac{2\alpha}{3\left(\left(1 - \frac{1 - 3\alpha}{(\beta-\tau - 12\alpha + 3)n}\right)\beta-\tau - \alpha\right)n}\right)\left(1 - \frac{1 - 3\alpha}{(\beta -\tau- 12\alpha + 3)n}\right) \\\leq \left(1 - \frac{2(1 - 3\alpha)}{(\beta -\tau- 12\alpha + 3)n}\right).
    \end{multline*}
    Opening up the parenthesis and combining like terms,
    \begin{multline*}
        - \frac{2\alpha}{3\left(\left(1 - \frac{1 - 3\alpha}{(\beta -\tau- 12\alpha + 3)n}\right)\beta-\tau - \alpha\right)n} \\+ \frac{2\alpha(1 - 3\alpha)}{3\left(\left(1 - \frac{1 - 3\alpha}{(\beta -\tau- 12\alpha + 3)n}\right)\beta-\tau - \alpha\right)(\beta-\tau- 12\alpha + 3)n^2} \leq - \frac{1 - 3\alpha}{(\beta-\tau - 12\alpha + 3)n}.
    \end{multline*}
    Multiplying everything by $3(\beta-\tau - 12\alpha + 3)n$, we have
    \[
        - \frac{2\alpha(\beta -\tau- 12\alpha + 3)}{\left(1 - \frac{1 - 3\alpha}{(\beta -\tau- 12\alpha + 3)n}\right)\beta -\tau- \alpha} + \frac{2\alpha(1 - 3\alpha)}{\left(\left(1 - \frac{1 - 3\alpha}{(\beta -\tau- 12\alpha + 3)n}\right)\beta -\tau- \alpha\right)n} \leq - 3(1 - 3\alpha).
    \]
    Multiplying by $\left(1 - \frac{1 - 3\alpha}{(\beta-\tau - 12\alpha + 3)n}\right)\beta-\tau - \alpha$ (assuming it is positive):
    \[
        - 2\alpha(\beta -\tau- 12\alpha + 3) + \frac{2\alpha(1 - 3\alpha)}{n} \leq - 3(1 - 3\alpha)\left(\left(1 - \frac{1 - 3\alpha}{(\beta -\tau- 12\alpha + 3)n}\right)\beta-\tau - \alpha\right),
    \]
    which is equivalent to
    \[
        -2\alpha(\beta - \tau) -6\alpha(1 - 4\alpha) + \frac{2\alpha(1 - 3\alpha)}{n} \leq -3(1-3\alpha)(\beta - \tau) + \frac{3\beta(1 - 3\alpha)^2}{(\beta -\tau- 12\alpha + 3)n} + 3\alpha(1 - 3\alpha),
    \]
    which holds if and only if 
    \[
        - 3\alpha(3 - 11\alpha) + \frac{2\alpha(1 - 3\alpha)}{n} \leq (\beta- \tau)(11\alpha - 3) +   \frac{3\beta(1 - 3\alpha)^2}{(\beta - \tau- 12\alpha + 3)n}.
    \]
    Since $1 \geq \beta - \tau \geq 3\alpha$ and $\alpha \geq 3/11$, it holds 
    \begin{multline*}
        (\beta- \tau)(11\alpha - 3) +   \frac{3\beta(1 - 3\alpha)^2}{(\beta - \tau- 12\alpha + 3)n} \\
        \geq 3\alpha(11\alpha - 3) + \frac{9\alpha(1 - 3\alpha)^2}{(1 - 12\alpha + 3)n} = 3\alpha(11\alpha - 3) + \frac{9\alpha(1 - 3\alpha)}{4n},
    \end{multline*}
    so it suffices to show that
    \begin{multline*}
        - 3\alpha(3 - 11\alpha) + \frac{2\alpha(1 - 3\alpha)}{n} \leq 3\alpha(11\alpha - 3) + \frac{9\alpha(1 - 3\alpha)}{4n}\\
        \iff  \frac{2\alpha(1 - 3\alpha)}{n}\leq  \frac{9\alpha(1 - 3\alpha)}{4n}\iff \left(\frac{9}{4} - 2\right)(1- 3\alpha) \geq 0,
    \end{multline*}
    which is always true for $\alpha < 1/3$.
    So, the inequality holds as long as
    \begin{multline*}
        \left(1 - \frac{1 - 3\alpha}{(\beta- \tau - 12\alpha + 3)n}\right)\beta - \tau- \alpha > 0\\
        \iff \beta - \tau- \alpha > \frac{\beta(1 - 3\alpha)}{(\beta - \tau- 12\alpha + 3)n} \\\iff (\beta- \tau - \alpha)(\beta - \tau- 12\alpha + 3)n > \beta(1 - 3\alpha).
    \end{multline*}
    Once again, since $\beta-\tau \geq 3\alpha$, we have $(\beta - \tau- \alpha)(\beta - \tau- 12\alpha + 3)n \geq 6\alpha(1 - 3\alpha)n$, and the inequality above holds if $6\alpha n > \beta$.
    As $\alpha \geq 3/11$, inequality holds if $\frac{18}{11}n > \beta$, which is true for $n \geq 1$.

    The third case is $\gamma_n^k \geq 3\alpha + \tau$, while $\gamma_{2n}^l, \gamma_{2n}^{l + 1} < 3\alpha + \tau$, and in this case $\beta < 3\alpha + \tau$.
    Observe that when $\alpha \geq 3/11$, for any $x \geq 3\alpha$ we have
    \[
        \frac{2(1 - 3\alpha)}{(x - 12\alpha + 3)n} \leq \frac{4\alpha}{3(x - \alpha)n} \iff \left(1 - \frac{2(1 - 3\alpha)}{(x - 12\alpha + 3)n} \right) \geq \left(1 - \frac{4\alpha}{3(x - \alpha)n}\right).
    \]
    Then, by definition
    \[
        \gamma_n^{k + 1} =  \left(1 - \frac{2(1 - 3\alpha)}{(\gamma_{n}^{k} - \tau- 12\alpha + 3)n}\right)\gamma_{n}^{k} \geq\left(1 - \frac{4\alpha}{3(\gamma_{n}^{k} - \tau- \alpha)n}\right)\gamma_{n}^{k},
    \]
    which would be a formula for $\gamma_{n}^{k + 1}$ if $\gamma_n^k < 3\alpha + \tau$.
    So, we can reduce the case of  $\gamma_n^k \geq 3\alpha + \tau$ and $\gamma_{2n}^l, \gamma_{2n}^{l + 1} < 3\alpha + \tau$ to the final case, where all $\gamma_n^k, \gamma_{2n}^l, \gamma_{2n}^{l + 1} < 3\alpha + \tau$.
    Once again, by definition
    \[
        \gamma_{2n}^{l + 1} = \left(1 - \frac{4\alpha}{6(\gamma_{2n}^l- \tau - \alpha)n}\right)\gamma_{2n}^l = \left(1 - \frac{2\alpha}{3(\beta - \tau- \alpha)n}\right)\beta,
    \]
    and
    \begin{multline*}
        \gamma_{2n}^{l + 2} = \left(1 - \frac{4\alpha}{6(\gamma_{2n}^{l + 1}- \tau - \alpha)n}\right)\gamma_{2n}^{l + 1}\\
        = \left(1 - \frac{2\alpha}{3\left(\left(1 - \frac{2\alpha}{3(\beta- \tau - \alpha)n}\right)\beta- \tau - \alpha\right)n}\right)\left(1 - \frac{2\alpha}{3(\beta - \tau- \alpha)n}\right)\beta,
    \end{multline*}
    as well as
    \[
        \gamma_n^{k + 1} =  \left(1 - \frac{4\alpha}{3(\gamma_{n}^{k} - \tau- \alpha)n}\right)\gamma_{n}^{k} \geq \left(1 - \frac{4\alpha}{3(\beta- \tau - \alpha)n}\right)\beta.
    \]
    Thus, to prove $\gamma_{2n}^{l + 2} \leq \gamma_n^{k + 1}$, it suffices to show that
    \[
        \left(1 - \frac{2\alpha}{3\left(\left(1 - \frac{2\alpha}{3(\beta- \tau - \alpha)n}\right)\beta- \tau - \alpha\right)n}\right)\left(1 - \frac{2\alpha}{3(\beta - \tau- \alpha)n}\right) \leq \left(1 - \frac{4\alpha}{3(\beta- \tau - \alpha)n}\right).
    \]
    Opening up the parenthesis and combining like terms,
    \begin{multline*}
        - \frac{2\alpha}{3\left(\left(1 - \frac{2\alpha}{3(\beta - \tau- \alpha)n}\right)\beta- \tau - \alpha\right)n} \\+ \frac{4\alpha^2}{9\left(\left(1 - \frac{2\alpha}{3(\beta- \tau - \alpha)n}\right)\beta - \tau- \alpha\right)(\beta - \tau- \alpha)n^2} \leq - \frac{2\alpha}{3(\beta - \tau- \alpha)n}.
    \end{multline*}
    Diving everything by $2\alpha$ and multiplying $3(\beta - \tau- \alpha)n$, we get
    \[
        - \frac{\beta - \tau- \alpha}{\left(1 - \frac{2\alpha}{3(\beta- \tau - \alpha)n}\right)\beta - \tau- \alpha} + \frac{2\alpha}{3\left(\left(1 - \frac{2\alpha}{3(\beta- \tau - \alpha)n}\right)\beta - \tau- \alpha\right)n} \leq -1.
    \]
    Multiplying by $\left(1 - \frac{2\alpha}{3(\beta - \tau- \alpha)n}\right)\beta - \tau- \alpha$ (assuming it is positive):
    \begin{multline*}
        - (\beta - \tau- \alpha) + \frac{2\alpha}{3n} \leq -\left(\left(1 - \frac{2\alpha}{3(\beta- \tau - \alpha)n}\right)\beta- \tau - \alpha\right)
        \\\iff  \frac{2\alpha}{3n} \leq  \frac{2\alpha \beta }{3(\beta - \tau- \alpha)n} \iff \beta- \tau - \alpha \leq \beta.
    \end{multline*}
    The last inequality always holds.
    As mentioned, this is an equivalent transformation only if
    \[
        \left(1 - \frac{2\alpha}{3(\beta - \tau- \alpha)n}\right)\beta - \tau- \alpha > 0 \iff \beta - \tau- \alpha > \frac{2\alpha\beta}{3(\beta - \tau- \alpha)n} \iff 3(\beta - \tau- \alpha)^2n > 2\alpha \beta.
    \]
    Solving the equation for $\beta$, we have
    \begin{multline*}
        3n\beta^2 - 6n(\alpha + \tau)\beta + 3n(\alpha + \tau)^2 > 2\alpha \beta \iff 3n\beta^2 - 2((\alpha+\tau)3n + \alpha)\beta + 3n(\alpha + \tau)^2 > 0\\
        \iff \beta > \frac{2((\alpha + \tau)3n + \alpha) + \sqrt{4((\alpha + \tau)3n + \alpha)^2 - 36n^2(\alpha + \tau)^2}}{6n}\\
        = \alpha + \tau  + \frac{2\alpha + \sqrt{4(6(\alpha + \tau)\alpha n + \alpha^2)}}{6n} = \alpha + \tau +  \frac{\alpha + \sqrt{6(\alpha + \tau)\alpha n + \alpha^2}}{3n}.
    \end{multline*}
\end{proof}

\end{document}